    \pgfplotsset{
    label style={anchor=near ticklabel},
    xlabel style={yshift=0cm},
    ylabel style={yshift=-.8cm}}
\newtheorem{theorem}{Theorem}
\newtheorem{lemma}{Lemma}
\newtheorem{remark}{Remark}
\newtheorem{proposition}{Proposition}
\newcommand{\X}{\boldsymbol{x}}
\newcommand{\Y}{\boldsymbol{y}}
\newcommand{\bS}{\boldsymbol{S}}
\newcommand{\U}{\boldsymbol{u}}
\newcommand{\WIone}{WI-1}
\newcommand{\WItwo}{WI-2}
\newcommand{\WIthree}{WI-3}
\newcommand{\WIThreea}{WI3a}
\newcommand{\WIThreeb}{WI3b}
\begin{document}
%\newgeometry{top=72pt,bottom=54pt,right=54pt,left=54pt}
\IEEEoverridecommandlockouts
%\addtolength{\topmargin}{4mm}
\title{{Cooperation for interference management: A GDoF perspective}}
\author{
\IEEEauthorblockN{Soheyl Gherekhloo, Anas Chaaban, and Aydin Sezgin}\\
\IEEEauthorblockA{Institute of Digital Communication Systems\\
Ruhr-Universit\"at Bochum, Germany\\
Email: {soheyl.gherekhloo@rub.de, anas.chaaban@kaust.edu.sa, aydin.sezgin@rub.de}}
\thanks{ This work was presented in part at IEEE
Allerton 2013 \cite{GherekhlooChaabanSezgin}.}}

\maketitle
\newcommand{\TheoreticalIRC}[0]{
\draw (-.5, 0.8) rectangle (-3.5,-0.8);
\node (t1) at (-.5,0) [inner sep=0] {};
\node (tx1) at (-2,0.5) [inner sep=0] {\large{Tx1}};
\node (W1) at (-5,0) [inner sep=0] {\large{$W_1$}};
\node (W1tl) at (-3.5,0) [inner sep=0] {};
\node (W1tR) at (-4.5,0) [inner sep=0] {};
\draw[->] (W1tR) to (W1tl);
\node (tx1f) at (-2,-.5) [inner sep=0] {{$X_1^n=f_1(W_1)$}};

\draw (-.5, -3.2) rectangle (-3.5,-4.8);
\node (t2) at (-0.5,-4) [inner sep=0] {};
\node (tx2) at (-2,-3.5) [inner sep=0] {\large{Tx2}};
\node (W2) at (-5,-4) [inner sep=0] {\large{$W_2$}};
\node (W2tl) at (-3.5,-4) [inner sep=0] {};
\node (W2tR) at (-4.5,-4) [inner sep=0] {};
\draw[->] (W2tR) to (W2tl);
\node (tx2f) at (-2,-4.5) [inner sep=0] {{$X_2^n=f_2(W_2)$}};

\draw (4.2, -1.2) rectangle (0.6,-2.8);
\node (Re) at (4.2,-2) [inner sep=0]{};
\node (relay) at (2.4,-1.6) [inner sep=0] {\large{Relay}};
\node (tx2f) at (2.4,-2.5) [inner sep=0] {$X_r[k]=f_{rk}(y_r^{k-1})$};
\node (relayin) at (0.6,-2) [inner sep=0] {};
\node (relayout) at (4.2,-2) [inner sep=0] {};

\draw (11.5, 0.8) rectangle (8.5,-0.8);
\node (r1) at (8.5,0) [inner sep=0] {};
\node (rx1) at (10,0.5) [inner sep=0] {\large{Rx1}};
\node (W1) at (13,0) [inner sep=0] {\large{$\hat W_1$}};
\node (W1Rl) at (11.5,0) [inner sep=0] {};
\node (W1RR) at (12.5,0) [inner sep=0] {};
\draw[->] (W1Rl) to (W1RR);
\node (tx1f) at (10,-.5) [inner sep=0] {{$\hat W_1=g_1(Y_1^{n})$}};

\draw (11.5, -3.2) rectangle (8.5,-4.8);
\node (r2) at (8.5,-4) [inner sep=0] {};
\node (rx2) at (10,-3.5) [inner sep=0] {\large{Rx2}};
\node (W2) at (13,-4) [inner sep=0] {\large{$\hat W_2$}};
\node (W2Rl) at (11.5,-4) [inner sep=0] {};
\node (W2RR) at (12.5,-4) [inner sep=0] {};
\draw[->] (W2Rl) to (W2RR);
\node (tx1f) at (10,-4.5) [inner sep=0] {{$\hat W_2=g_2(Y_2^{n})$}};

\draw[->] (t1) to (r1);
\draw[->] (t2) to (r2);
\draw[->] (relayout) to (r1);
\draw[->] (relayout) to (r2);
\draw[->] (t1) to (relayin);
\draw[->] (t2) to (relayin);
\draw[->, thick] (t1) to[in=120, out=0] (r2);
\draw[->, thick] (t2) to[in=-120, out=0] (r1);
}

\newcommand{\GaussianIRC}[0]{
\draw (-2.5, 0.5) rectangle (-1.5,-0.5);
\node (tx1) at (-2,0) [inner sep=0] {\large{Tx1}};

\node (t1) at (-.5,0) [inner sep=0] {};
\draw[->] (-1.5,0) to (t1);
\node (W1) at (-1,0.2) [inner sep=0] {{\large{$ x_1$}}};

%\node (t2) at (-.5,-4) [inner sep=0] {};
%\node (W1) at (-1.5,-4) [inner sep=0] {\large{$ x_2[k]$}};

\draw (-2.5, 0.5-4) rectangle (-1.5,-0.5-4);
\node (tx2) at (-2,-4) [inner sep=0] {\large{Tx2}};

\node (t2) at (-.5,0-4) [inner sep=0] {};
\draw[->] (-1.5,0-4) to (t2);
\node (W1) at (-1,0.2-4) [inner sep=0] {{\large{$ x_2$}}};

\node (relayin) at (0.6,-2) [inner sep=0] {\large{$\oplus$}};
\draw[->] (.8,-2) to (1.8,-2);
\draw (1.8, 0.5-2) rectangle (3.2,-0.5-2);
\node (relay) at (2.5,-2) [inner sep=0] {\large{Relay}};
\node (yr) at (1.3,-1.8) [inner sep=0] {\large{$y_r$}};
\draw[->] (3.2,-2) to (4.25,-2);
\node (yr) at (3.7,-1.8) [inner sep=0] {\large{$x_r$}};
\draw[->] (-.1,-2) to (.4,-2);
\node (zr) at (-.5,-2) [inner sep=0] {\large{$z_r$}};

%\node (relayin2) at (2.5,-2) [inner sep=0] {\large{$\rightarrow y_r[k] \text{ : }x_r[k]$}};
%\node (relayin3) at (-.5,-2) [inner sep=0] {\large{$z_r[k]\rightarrow $}};
\node (relayout) at (4.2,-2) [inner sep=0] {};

\node (r1) at (8.5,0) [inner sep=0] {};
\node (rx1) at (8.7,0) [inner sep=0] {\large{$\oplus$}};
\draw[->] (8.9,0) to (9.9,0);
\node (y1) at (9.4,0.2) [inner sep=0] {\large{$y_1$}};
\draw (9.9, 0.5) rectangle (10.9,-0.5);
\node (rx1) at (10.4,0) [inner sep=0] {\large{Rx1}};

\draw[->] (8.7,-1) to (8.7,-.2);
\node (z1) at (8.9,-1.2) [inner sep=0] {\large{$ z_1[k]$}};

%\node (r2) at (8.5,-4) [inner sep=0] {};
%\node (rx2) at (9.7,-4) [inner sep=0] {\large{$\oplus \rightarrow y_2[k]$}};

\node (r2) at (8.5,0-4) [inner sep=0] {};
\node (rx2) at (8.7,0-4) [inner sep=0] {\large{$\oplus$}};
\draw[->] (8.9,0-4) to (9.9,0-4);
\node (y2) at (9.4,0.2-4) [inner sep=0] {\large{$y_2$}};
\draw (9.9, 0.5-4) rectangle (10.9,-0.5-4);
\node (rx2) at (10.4,0-4) [inner sep=0] {\large{Rx2}};

\draw[->] (8.8,-3) to (8.8,-3.8);
\node (rx1) at (9,-2.7) [inner sep=0] {\large{$ z_2[k]$}};

\draw[->] (t1) to (r1);
\draw[->] (t2) to (r2);
\draw[->] (relayout) to (r1);
\draw[->] (relayout) to (r2);
\draw[->] (t1) to (relayin);
\draw[->] (t2) to (relayin);
\draw[->, thick] (t1) to[in=120, out=0] (r2);
\draw[->, thick] (t2) to[in=-120, out=0] (r1);

\draw [draw=white,fill=white,opacity=1,draw opacity=1] (4.6, 0.3) rectangle (5.3,-4.3);
\node (hd1) at (5,0) [inner sep=0] {\large{$ h_d$}};
\node (hd2) at (5,-4) [inner sep=0] {\large{$ h_d$}};
\node (hc1) at (5,-.8) [inner sep=0] {\large{$ h_c$}};
\node (hc2) at (5,-3.2) [inner sep=0] {\large{$ h_c$}};
\node (hr1) at (5,-1.7) [inner sep=0] {\large{$ h_r$}};
\node (hr2) at (5,-2.3) [inner sep=0] {\large{$ h_r$}};
\draw [draw=white,fill=white,opacity=1,draw opacity=1] (-1, -.5) rectangle (0,-1);
\node (hs1) at (0,-.7) [inner sep=0] {\large{$ h_s$}};
\draw [draw=white,fill=white,opacity=1,draw opacity=1] (-1, -3) rectangle (0,-3.5);
\node (hs1) at (0,-3.3) [inner sep=0] {\large{$ h_s$}};
}

\newcommand{\SCHEMEWIone}[0]{
%%%% horisonatal lines

\node (Zero) at (-4.5,-4) [inner sep=0] {$0$};
\draw[line width=1,-] (-4,-4) to (4,-4);

%%%%%%%%%%
\node (zerotx1rx1) at (-3,-4.3) [inner sep=0] {$\boldsymbol S^{q-n_d} \boldsymbol x_1[k]$};
\node (zerotx2rx1) at (0,-4.3) [inner sep=0] {$\boldsymbol S^{q-n_c}\boldsymbol x_2[k]$};
\node (zerotx3rx1) at (3.2,-4.3) [inner sep=0] {$\boldsymbol S^{q-n_r}\boldsymbol x_r[k]$};

%%%%%%% Relay -----> Rx1
\draw[line width=1,-] (2,3.25) to (4,3.25);
\node (zerotx1rx1) at (4.2,3.25) [inner sep=0] {$n_r$};

\draw[line width=1] (2.5,2.25) rectangle (3.5,3.25);
\node (zerotx1rx1) at (3,2.75) [inner sep=0] {$\boldsymbol{u}_{r,cf}$};

\draw[line width=1] (2.5,1.25) rectangle (3.5,2.25);
\node (zerotx1rx1) at (3,1.75) [inner sep=0] {$\boldsymbol{u}_{r,df1}$};

\draw[line width=1] (2.5,0.5) rectangle (3.5,1.25);
\node (zerotx1rx1) at (3,0.75) [inner sep=0] {$\boldsymbol{u}_{r,df2}$};

\draw[line width=1] (2.5,-0.5) rectangle (3.5,0.5);
\node (zerotx1rx1) at (3,0) [inner sep=0] {$\boldsymbol{0}$};

\draw[line width=1,<->] (3.7,-0.5) to (3.7,0.5);
\node (zerotx1rx1) at (4,0) [inner sep=0] {$\ell_2$};
\draw[line width=1,dashed,-] (0,-0.5) to (4,-0.5);

\draw[line width=1] (2.5,-2.5) rectangle (3.5,-0.5);
\node (zerotx1rx1) at (3,-1.5) [inner sep=0] {$\boldsymbol{u}_{r,cn}$};

\draw[line width=1] (2.5,-4) rectangle (3.5,-2.5);
\node (zerotx1rx1) at (3,-3) [inner sep=0] {$\boldsymbol{0}$};

\draw[line width=1,<->] (3.7,-4) to (3.7,-2.5);
\node (zerotx1rx1) at (4,-3.25) [inner sep=0] {$\ell_3$};
\draw[line width=1,dashed,-] (0,-2.5) to (4,-2.5);

%
%\draw[line width=1] (2.5,-4) rectangle (3.5,-3.5);
%\node (u1prx1) at (3,-3.75) [inner sep=0] {$\boldsymbol{u}_{2,p}$};

%%%%%%% Tx2 -----> Rx1
\draw[line width=1,-] (-1,0) to (1,0);
\node (zerotx1rx1) at (1.2,0) [inner sep=0] {$n_c$};

\draw[line width=1] (-0.5,-0.5) rectangle (0.5,0);
\node (zerotx1rx1) at (0,-0.25) [inner sep=0] {$\boldsymbol{0}$};

\draw[line width=1] (0,-1.5) rectangle (0.5,-0.5);
\node (zerotx1rx1) at (0.2,-1) [inner sep=0,rotate=90] {$\boldsymbol{u}_{2,df1}$};

\draw[line width=1] (-0.5,-2.5) rectangle (0.5,-0.5);
\node (zerotx1rx1) at (0,-2.3) [inner sep=0] {$\boldsymbol{u}_{2,cn}$};

\draw[line width=1] (-0.5,-3.5) rectangle (0.5,-2.5);
\node (zerotx1rx1) at (0,-3) [inner sep=0] {$\boldsymbol{u}_{2,cf}$};

\draw[line width=1] (-0.5,-4) rectangle (0.5,-3.5);
\node (u1prx1) at (0,-3.75) [inner sep=0] {$\boldsymbol{u}_{2,p}$};

%\draw[line width=1] (-0.5,-5) rectangle (0.5,-4.5);
%\node (u1cnf) at (0,-4.75) [inner sep=0] {$\boldsymbol{u}_{2,cnF}$};

%%%%%%% Tx1 -----> Rx1
\draw[line width=1,-] (-4,1) to (-2,1);
\node (zerotx1rx1) at (-1.8,1) [inner sep=0] {$n_d$};

\draw[line width=1,<->] (-3.7,0.5) to (-3.7,1);
\node (zerotx1rx1) at (-4,0.75) [inner sep=0] {$\ell_1$};
\draw[line width=1,dashed,-] (-4,0.5) to (4,0.5);

\draw[line width=1] (-3.5,0.5) rectangle (-2.5,1);
\node (zerotx1rx1) at (-3,0.75) [inner sep=0] {$\boldsymbol{0}$};

\draw[line width=1] (-3,-0.5) rectangle (-2.5,0.5);
\node (zerotx1rx1) at (-2.7,0) [inner sep=0,rotate=90] {$\boldsymbol{u}_{1,df1}$};

\draw[line width=1] (-3.5,-1.5) rectangle (-2.5,0.5);
\node (zerotx1rx1) at (-3,-1.3) [inner sep=0] {$\boldsymbol{u}_{1,cn}$};

\draw[line width=1] (-3.5,-2.5) rectangle (-2.5,-1.5);
\node (zerotx1rx1) at (-3,-2) [inner sep=0] {$\boldsymbol{u}_{1,cf}$};

\draw[line width=1] (-3.5,-3.5) rectangle (-2.5,-2.5);
\node (u1prx1) at (-3,-3) [inner sep=0] {$\boldsymbol{u}_{1,p}$};

\draw[line width=1] (-3.5,-4) rectangle (-2.5,-3.5);
\node (u1cnf) at (-3,-3.75) [inner sep=0] {$\boldsymbol{u}_{1,cnF}$};
}

\newcommand{\SCHEMEWIoneRelay}[0]{
%%%% horisonatal lines

\node (Zero) at (-4.5,-4) [inner sep=0] {$0$};
\draw[line width=1,-] (-4,-4) to (1,-4);

%%%%%%%%%%
\node (zerotx1rx1) at (-3,-4.3) [inner sep=0] {$\boldsymbol S^{q-n_s} \boldsymbol x_1[k]$};
\node (zerotx2rx1) at (0,-4.3) [inner sep=0] {$\boldsymbol S^{q-n_s}\boldsymbol x_2[k]$};

%%%%%%% Tx1 -----> relay
\draw[line width=1,-] (-4,1+2.25) to (-2,1+2.25);
\node (zerotx1rx1) at (-1.8,1+2.25) [inner sep=0] {$n_s$};

\draw[line width=1,<->] (-3.7,0.5+2.25) to (-3.7,1+2.25);
\node (zerotx1rx1) at (-4,0.75+2.25) [inner sep=0] {$\ell_1$};
%\draw[line width=1,dashed,-] (-4,0.5+2.25) to (4,0.5+2.25);

\draw[line width=1] (-3.5,0.5+2.25) rectangle (-2.5,1+2.25);
\node (zerotx1rx1) at (-3,0.75+2.25) [inner sep=0] {$\boldsymbol{0}$};

\draw[line width=1] (-3,-0.5+2.25) rectangle (-2.5,0.5+2.25);
\node (zerotx1rx1) at (-2.7,0+2.25) [inner sep=0,rotate=90] {$\boldsymbol{u}_{1,df1}$};

\draw[line width=1] (-3.5,-1.5+2.25) rectangle (-2.5,0.5+2.25);
\node (zerotx1rx1) at (-3,-1.3+2.25) [inner sep=0] {$\boldsymbol{u}_{1,cn}$};

\draw[line width=1] (-3.5,-2.5+2.25) rectangle (-2.5,-1.5+2.25);
\node (zerotx1rx1) at (-3,-2+2.25) [inner sep=0] {$\boldsymbol{u}_{1,cf}$};

\draw[line width=1] (-3.5,-3.5+2.25) rectangle (-2.5,-2.5+2.25);
\node (u1prx1) at (-3,-3+2.25) [inner sep=0] {$\boldsymbol{u}_{1,p}$};

\draw[line width=1] (-3.5,-4+2.25-1.5) rectangle (-2.5,-3.5+2.25);
\node (u1cnf) at (-3,-3.75+2.25-1) [inner sep=0] {$\boldsymbol{u}_{1,cnF}$};
\draw[line width=1] (-3.5,-4+2.25-1.5) rectangle (-2.5,-5.5+2.25-.75);
\node (u1df2) at (-3,-4+2.25-2) [inner sep=0] {$\boldsymbol{u}_{1,df2}$};

%%%%%%% Tx2 -----> relay
\draw[line width=1,-] (-1,1+2.25) to (1,1+2.25);
\node (zerotx1rx1) at (1.2,1+2.25) [inner sep=0] {$n_s$};

\draw[line width=1] (-.5,0.5+2.25) rectangle (.5,1+2.25);
\node (zerotx1rx1) at (0,0.75+2.25) [inner sep=0] {$\boldsymbol{0}$};

\draw[line width=1] (0,-0.5+2.25) rectangle (.5,0.5+2.25);
\node (zerotx1rx1) at (.25,0+2.25) [inner sep=0,rotate=90] {$\boldsymbol{u}_{2,df1}$};

\draw[line width=1] (-.5,-1.5+2.25) rectangle (.5,0.5+2.25);
\node (zerotx1rx1) at (0,-1.3+2.25) [inner sep=0] {$\boldsymbol{u}_{2,cn}$};

\draw[line width=1] (-.5,-2.5+2.25) rectangle (.5,-1.5+2.25);
\node (zerotx1rx1) at (0,-2+2.25) [inner sep=0] {$\boldsymbol{u}_{2,cf}$};

\draw[line width=1] (-.5,-3.5+2.25) rectangle (.5,-2.5+2.25);
\node (u1prx1) at (0,-3+2.25) [inner sep=0] {$\boldsymbol{u}_{2,p}$};

\draw[line width=1] (-.5,-4+2.25-1.5) rectangle (.5,-3.5+2.25);
\node (u1cnf) at (0,-3.75+2.25-1) [inner sep=0] {$\boldsymbol{u}_{2,cnF}$};
\draw[line width=1] (-.5,-4+2.25-1.5) rectangle (.5,-5.5+2.25-.75);
\node (u1df2) at (0,-4+2.25-2) [inner sep=0] {$\boldsymbol{u}_{2,df2}$};
}

\newcommand{\SCHEMEWItwoFigone}[0]{
\node (Zero) at (-4.5,-4) [inner sep=0] {$0$};
\draw[line width=1,-] (-4,-4) to (6,-4);

%%%%%%%%%%
\node (zerotx1rx1) at (-3,-4.3) [inner sep=0] {$\boldsymbol S^{q-n_d} \boldsymbol x_1[k]$};
\node (zerotx2rx1) at (0,-4.3) [inner sep=0] {$\boldsymbol S^{q-n_c}\boldsymbol x_2[k]$};
\node (zerotx3rx1) at (3.2,-4.3) [inner sep=0] {$\boldsymbol S^{q-n_r}\boldsymbol x_r[k]$};

%%%%%%% Relay -----> Rx1
\draw[line width=1,-] (2,3.25) to (4,3.25);
\node (zerotx1rx1) at (4.2,3.25) [inner sep=0] {$n_r$};

\draw[line width=1] (2.5,-1.5) rectangle (3.5,3.25);
\node (zerotx1rx1) at (3,0) [inner sep=0] {$\boldsymbol{0}$};

\draw[line width=1,-] (4,-1.5) to (2,-1.5);
\node (zerotx1rx1) at (4.2,-1.5) [inner sep=0] {$n_r^\prime$}; 

\draw[line width=1] (2.5,-1.5) rectangle (3.5,-4);
\node (zerotx1rx1) at (3,-3) [inner sep=0] {$\boldsymbol{x}^\prime_r[k]$};

%%%%%%% Tx2 -----> Rx1
\draw[line width=1,-] (-1,2.5) to (1,2.5);
\node (zerotx1rx1) at (1.2,2.5) [inner sep=0] {$n_c$};

\draw[line width=1] (-0.5,1.5) rectangle (0.5,2.5);
\node (zerotx1rx1) at (0,2) [inner sep=0] {$\boldsymbol{u}_{2,cm1}$};

\draw[line width=1] (-0.5,0.5) rectangle (0.5,1.5);
\node (zerotx1rx1) at (0,1) [inner sep=0] {$\boldsymbol{0}$};

\draw[line width=1] (-0.5,-0.5) rectangle (0.5,0.5);
\node (zerotx1rx1) at (0,0) [inner sep=0] {$\vdots$};

\draw[line width=1] (-0.5,-1.5) rectangle (0.5,-0.5);
\node (zerotx1rx1) at (0,-1) [inner sep=0] {$\boldsymbol{u}_{2,cmL}$};

\draw[line width=1] (-0.5,-1.5) rectangle (0.5,-2.5);
\node (zerotx1rx1) at (0,-2) [inner sep=0] {$\boldsymbol{0}$};

\draw[line width=1,-] (-1,-2.5) to (1,-2.5);
\node (zerotx1rx1) at (1.2,-2.5) [inner sep=0] {$n_c^\prime$};

\draw[line width=1] (-0.5,-1.5) rectangle (0.5,-4);
\node (zerotx1rx1) at (0,-3) [inner sep=0] {$\boldsymbol{x}_2^\prime[k]$};

%%%%%%% Tx1 -----> Rx1

%\draw[line width=1,-] (-4,1) to (-2,1);
%\node (zerotx1rx1) at (-1.8,1) [inner sep=0] {$n_d$};

\draw[line width=1,-] (-4,3.5) to (-2,3.5);
\node (zerotx1rx1) at (-1.8,3.5) [inner sep=0] {$n_d$};

\draw[line width=1] (-3.5,3.5) rectangle (-2.5,2.5);
\node (zerotx1rx1) at (-3,3) [inner sep=0] {$\boldsymbol{u}_{1,cm1}$};

\draw[line width=1] (-3.5,2.5) rectangle (-2.5,1.5);
\node (zerotx1rx1) at (-3,2) [inner sep=0] {$\boldsymbol{0}$};

\draw[line width=1] (-3.5,1.5) rectangle (-2.5,0.5);
\node (zerotx1rx1) at (-3,1) [inner sep=0] {$\vdots$};

\draw[line width=1] (-3.5,0.5) rectangle (-2.5,-0.5);
\node (zerotx1rx1) at (-3,0) [inner sep=0] {$\boldsymbol{u}_{1,cmL}$};
\draw[line width=1] (-3.5,-0.5) rectangle (-2.5,-1.5);
\node (zerotx1rx1) at (-3,-1) [inner sep=0] {$\boldsymbol{0}$};

\draw[line width=1,-] (-4,-1.5) to (-2,-1.5);
\node (zerotx1rx1) at (-1.8,-1.5) [inner sep=0] {$n_d^\prime$};

\draw[line width=1] (-3.5,0.5) rectangle (-2.5,-4);
\node (zerotx1rx1) at (-3,-2.5) [inner sep=0] {$\boldsymbol{x}_1^\prime[k]$};}

\newcommand{\SCHEMEWItwoFigoneAC}[0]{
\node (Zero) at (-4.5,-4) [inner sep=0] {$0$};
\draw[line width=1,-] (-4,-4) to (6,-4);

%%%%%%%%%%
\node (zerotx1rx1) at (-3,-4.3) [inner sep=0] {$\boldsymbol S^{q-n_d} \boldsymbol x_1[k]$};
\node (zerotx2rx1) at (0,-4.3) [inner sep=0] {$\boldsymbol S^{q-n_c}\boldsymbol x_2[k]$};
\node (zerotx3rx1) at (3.2,-4.3) [inner sep=0] {$\boldsymbol S^{q-n_r}\boldsymbol x_r[k]$};

%%%%%%% Relay -----> Rx1
\draw[line width=1,-] (2,-.75) to (4,-.75);
\node (zerotx1rx1) at (4.2,-1) [inner sep=0] {$n_r$};

\draw[line width=1] (2.5,-1.5) rectangle (3.5,-.75);
\node (zerotx1rx1) at (3,-1) [inner sep=0] {$\boldsymbol{0}$};

\draw[line width=1,-] (4,-1.5) to (2,-1.5);
\node (zerotx1rx1) at (4.2,-1.5) [inner sep=0] {$n_r^\prime$}; 

\draw[line width=1] (2.5,-1.5) rectangle (3.5,-4);
\node (zerotx1rx1) at (3,-3) [inner sep=0] {$\boldsymbol{x}^\prime_r[k]$};

%%%%%%% Tx2 -----> Rx1
\draw[line width=1,-] (-1,-1) to (1,-1);
\node (zerotx1rx1) at (1.2,-1) [inner sep=0] {$n_c$};

\draw[line width=1] (-0.5,-2.5) rectangle (0.5,-1);
\node (zerotx1rx1) at (0,-1.75) [inner sep=0] {$\boldsymbol{u}_{2,cm}$};

\draw[line width=1,-] (-1,-2.5) to (1,-2.5);
\node (zerotx1rx1) at (1.2,-2.5) [inner sep=0] {$n_c^\prime$};

\draw[line width=1] (-0.5,-2.5) rectangle (0.5,-4);
\node (zerotx1rx1) at (0,-3) [inner sep=0] {$\boldsymbol{x}_2^\prime[k]$};

%%%%%%% Tx1 -----> Rx1

\draw[line width=1,-] (-4,0) to (-2,0);
\node (zerotx1rx1) at (-1.8,0) [inner sep=0] {$n_d$};

\draw[line width=1] (-3.5,-1.5) rectangle (-2.5,0);
\node (zerotx1rx1) at (-3,-.75) [inner sep=0] {$\boldsymbol{u}_{1,cm}$};

\draw[line width=1,-] (-4,-1.5) to (-2,-1.5);
\node (zerotx1rx1) at (-1.8,-1.5) [inner sep=0] {$n_d^\prime$};

\draw[line width=1] (-3.5,-1.5) rectangle (-2.5,-4);
\node (zerotx1rx1) at (-3,-2.5) [inner sep=0] {$\boldsymbol{x}_1^\prime[k]$};}

\newcommand{\SCHEMEWItwoFigtwo}[0]{
%%%% horisonatal lines
\node (Zero) at (-4.5,-4) [inner sep=0] {$0$};
\draw[line width=1,-] (-4,-4) to (6,-4);

%%%%%%%%%%
\node (zerotx1rx1) at (-3,-4.3) [inner sep=0] {$\boldsymbol x_1^\prime[k]$};
\node (zerotx2rx1) at (0,-4.3) [inner sep=0] {$\boldsymbol x^\prime_2[k]$};
\node (zerotx3rx1) at (3.2,-4.3) [inner sep=0] {$\boldsymbol x^\prime_r[k]$};

%%%%%%% Relay -----> Rx1
\draw[line width=1,-] (2,1.5) to (4,1.5);
\node (zerotx1rx1) at (4.2,1.5) [inner sep=0] {$n_r^\prime$};

\draw[line width=1] (2.5,0) rectangle (3.5,1.5);
\node (zerotx1rx1) at (3,0.75) [inner sep=0] {$\boldsymbol{0}$};

\draw[line width=1] (2.5,-0.5) rectangle (3.5,0);
\node (zerotx1rx1) at (3,-0.25) [inner sep=0] {$\boldsymbol{u}_{r,cf1}$};

\draw[line width=1] (2.5,-1.5) rectangle (3.5,-0.5);
\node (zerotx1rx1) at (3,-1) [inner sep=0] {$\boldsymbol{0}$};

\draw[line width=1] (2.5,-2) rectangle (3.5,-1.5);
\node (zerotx1rx1) at (3,-1.75) [inner sep=0] {$\boldsymbol{u}_{r,cf2}$};

\draw[line width=1] (2.5,-2.5) rectangle (3.5,-2);
\node (zerotx1rx1) at (3,-2.25) [inner sep=0] {$\boldsymbol{u}_{r,cn}$};

\draw[line width=1] (2.5,-4) rectangle (3.5,-2.5);
\node (zerotx1rx1) at (3,-3) [inner sep=0] {$\boldsymbol{0}$};

%
%\draw[line width=1] (2.5,-4) rectangle (3.5,-3.5);
%\node (u1prx1) at (3,-3.75) [inner sep=0] {$\boldsymbol{u}_{2,p}$};

%%%%%%% Tx2 -----> Rx1
%\draw[line width=1] (-0.5,1.5) rectangle (0.5,4);
%\node (zerotx1rx1) at (-0,3.75) [inner sep=0] {$\boldsymbol{u}_{1,cm1}$};
%\node (zerotx1rx1) at (-0,3.25) [inner sep=0] {$\boldsymbol{0}$};
%\node (zerotx1rx1) at (-0,3) [inner sep=0] {$ \vdots$};
%\node (zerotx1rx1) at (-0,2.25) [inner sep=0] {$\boldsymbol{u}_{1,cmL}$};
%\node (zerotx1rx1) at (-0,1.75) [inner sep=0] {$\boldsymbol{0}$};
\draw[line width=1,-] (-1,-2) to (1,-2);
\node (zerotx1rx1) at (1.2,-2) [inner sep=0] {$n_c^\prime$};

\draw[line width=1] (-0.5,-2.5) rectangle (0.5,-2);
\node (zerotx1rx1) at (0,-2.25) [inner sep=0] {$\boldsymbol{u}_{2,cn}$};

\draw[line width=1] (-0.5,-3.5) rectangle (0.5,-2.5);
\node (zerotx1rx1) at (0,-3) [inner sep=0] {$\boldsymbol{u}_{2,cf}$};

\draw[line width=1] (-0.5,-4) rectangle (0.5,-3.5);
\node (zerotx1rx1) at (0,-3.75) [inner sep=0] {$\boldsymbol{0}$};

%%%%%%% Tx1 -----> Rx1

%\draw[line width=1] (-3.5,1.5) rectangle (-2.5,4);
%\node (zerotx1rx1) at (-3,3.75) [inner sep=0] {$\boldsymbol{u}_{1,cm1}$};
%\node (zerotx1rx1) at (-3,3.25) [inner sep=0] {$\boldsymbol{0}$};
%\node (zerotx1rx1) at (-3,3) [inner sep=0] {$ \vdots$};
%\node (zerotx1rx1) at (-3,2.25) [inner sep=0] {$\boldsymbol{u}_{1,cmL}$};
%\node (zerotx1rx1) at (-3,1.75) [inner sep=0] {$\boldsymbol{0}$};

\draw[line width=1,<->] (-3.8,1.5) to (-3.8,-2);
\node (zerotx1rx1) at (-4.1,-0.25) [inner sep=0] {$n_s^\prime$};

\draw[line width=1,-] (-4,1.5) to (-2,1.5);
\node (zerotx1rx1) at (-1.8,1.5) [inner sep=0] {$n_d^\prime$};

\draw[line width=1] (-3.5,1) rectangle (-2.5,1.5);
\node (zerotx1rx1) at (-3,1.25) [inner sep=0] {$\boldsymbol{u}_{1,cn}$};

\draw[line width=1] (-3.5,0) rectangle (-2.5,1);
\node (zerotx1rx1) at (-3,0.5) [inner sep=0] {$\boldsymbol{u}_{1,cf}$};

\draw[line width=1] (-3.5,-0.5) rectangle (-2.5,0);
\node (zerotx1rx1) at (-3,-0.3) [inner sep=0] {$\boldsymbol{0}$};

\draw[line width=1] (-3.5,-1.5) rectangle (-2.5,-0.5);
\node (zerotx1rx1) at (-3,-1) [inner sep=0] {$\boldsymbol{u}_{1,p1}$};

\draw[line width=1] (-3.5,-2) rectangle (-2.5,-1.5);
\node (u1prx1) at (-3,-1.75) [inner sep=0] {$\boldsymbol{u}_{1,cnF}$};

\draw[line width=1] (-3.5,-4) rectangle (-2.5,-2);
\node (u1prx1) at (-3,-3) [inner sep=0] {$\boldsymbol{u}_{1,p2}$};}

\newcommand{\SCHEMEWItwoFigtwoAC}[0]{
%%%% horisonatal lines
\node (Zero) at (-4.5,-4) [inner sep=0] {$0$};
\draw[line width=1,-] (-4,-4) to (6,-4);

%%%%%%%%%%
\node (zerotx1rx1) at (-3,-4.3) [inner sep=0] {$\boldsymbol x_1^\prime[k]$};
\node (zerotx2rx1) at (0,-4.3) [inner sep=0] {$\boldsymbol x^\prime_2[k]$};
\node (zerotx3rx1) at (3.2,-4.3) [inner sep=0] {$\boldsymbol x^\prime_r[k]$};

%%%%%%% Relay -----> Rx1
\draw[line width=1,-] (2,0) to (4,0);
\node (zerotx1rx1) at (4.2,0) [inner sep=0] {$n_r^\prime$};

\draw[line width=1] (2.5,-0.5) rectangle (3.5,0);
\node (zerotx1rx1) at (3,-0.25) [inner sep=0] {$\boldsymbol{u}_{r,cf1}$};

\draw[line width=1] (2.5,-1.5) rectangle (3.5,-0.5);
\node (zerotx1rx1) at (3,-1) [inner sep=0] {$\boldsymbol{0}$};
\draw[line width=1,<->] (2.25,-1.5) to (2.25,-.5);
\node (zerotx1rx1) at (2,-1) [inner sep=0] {$\ell_3$};

\draw[line width=1] (2.5,-2) rectangle (3.5,-1.5);
\node (zerotx1rx1) at (3,-1.75) [inner sep=0] {$\boldsymbol{u}_{r,cf2}$};

\draw[line width=1] (2.5,-2.5) rectangle (3.5,-2);
\node (zerotx1rx1) at (3,-2.25) [inner sep=0] {$\boldsymbol{u}_{r,cn}$};

\draw[line width=1] (2.5,-4) rectangle (3.5,-2.5);
\node (zerotx1rx1) at (3,-3) [inner sep=0] {$\boldsymbol{0}$};

%
%\draw[line width=1] (2.5,-4) rectangle (3.5,-3.5);
%\node (u1prx1) at (3,-3.75) [inner sep=0] {$\boldsymbol{u}_{2,p}$};

%%%%%%% Tx2 -----> Rx1
%\draw[line width=1] (-0.5,1.5) rectangle (0.5,4);
%\node (zerotx1rx1) at (-0,3.75) [inner sep=0] {$\boldsymbol{u}_{1,cm1}$};
%\node (zerotx1rx1) at (-0,3.25) [inner sep=0] {$\boldsymbol{0}$};
%\node (zerotx1rx1) at (-0,3) [inner sep=0] {$ \vdots$};
%\node (zerotx1rx1) at (-0,2.25) [inner sep=0] {$\boldsymbol{u}_{1,cmL}$};
%\node (zerotx1rx1) at (-0,1.75) [inner sep=0] {$\boldsymbol{0}$};
\draw[line width=1,-] (-1,-2) to (1,-2);
\node (zerotx1rx1) at (1.2,-2) [inner sep=0] {$n_c^\prime$};
% [plain data block 0: 8406 lines, 229925 chars -> data_tex |  \draw[line width=1] (-0.5,-2.5) rectangle (0.5,-2); \node (zerotx1rx1) at (0,-2.25) [inner sep=0] {...]
\newcommand{\btwentygthirty}[0]{
\begin{axis}[%
width=7cm,
height=4cm,
scale only axis,
xmin=0,
xmax=5,
xtick={0,0.5,1,2,3,4,5},
xticklabels={{0},{$\frac{1}{2}$},{$1$},{2},{3},{4},{5}},
xlabel={$\alpha$},
ymin=1,
ymax=4.5,
ytick={1,2,3},
yticklabels={{1},{2},{3}},
ylabel={$d$}
]

\draw[<->] (axis cs:1.3, 1.3) -- (axis cs:1.3,3.3 );
\node[above] at (axis cs:1.2,2){$\beta$};
\node[above] at (axis cs:2.3,3.2){$\gamma+\beta-\alpha$};
\node[above] at (axis cs:2.5,2.9){$\gamma$};
\node[above] at (axis cs:3.5,3.5){$\alpha$};
\node[above] at (axis cs:4.5,3.95){$2\beta$};
%\node[above] at (axis cs:2.2,3.2){$\max\{\alpha,\beta\}+(\gamma-\alpha)^+$};

\addplot [color=blue,solid,line width=1.0pt,forget plot]
  table[row sep=crcr]{0	3\\
0.005005005005005	3\\
0.01001001001001	3\\
0.015015015015015	3\\
0.02002002002002	3\\
0.025025025025025	3\\
0.03003003003003	3\\
0.035035035035035	3\\
0.04004004004004	3\\
0.045045045045045	3\\
0.0500500500500501	3\\
0.0550550550550551	3\\
0.0600600600600601	3\\
0.0650650650650651	3\\
0.0700700700700701	3\\
0.0750750750750751	3\\
0.0800800800800801	3\\
0.0850850850850851	3\\
0.0900900900900901	3\\
0.0950950950950951	3\\
0.1001001001001	3\\
0.105105105105105	3\\
0.11011011011011	3\\
0.115115115115115	3\\
0.12012012012012	3\\
0.125125125125125	3\\
0.13013013013013	3\\
0.135135135135135	3\\
0.14014014014014	3\\
0.145145145145145	3\\
0.15015015015015	3\\
0.155155155155155	3\\
0.16016016016016	3\\
0.165165165165165	3\\
0.17017017017017	3\\
0.175175175175175	3\\
0.18018018018018	3\\
0.185185185185185	3\\
0.19019019019019	3\\
0.195195195195195	3\\
0.2002002002002	3\\
0.205205205205205	3\\
0.21021021021021	3\\
0.215215215215215	3\\
0.22022022022022	3\\
0.225225225225225	3\\
0.23023023023023	3\\
0.235235235235235	3\\
0.24024024024024	3\\
0.245245245245245	3\\
0.25025025025025	3\\
0.255255255255255	3\\
0.26026026026026	3\\
0.265265265265265	3\\
0.27027027027027	3\\
0.275275275275275	3\\
0.28028028028028	3\\
0.285285285285285	3\\
0.29029029029029	3\\
0.295295295295295	3\\
0.3003003003003	3\\
0.305305305305305	3\\
0.31031031031031	3\\
0.315315315315315	3\\
0.32032032032032	3\\
0.325325325325325	3\\
0.33033033033033	3\\
0.335335335335335	3\\
0.34034034034034	3\\
0.345345345345345	3\\
0.35035035035035	3\\
0.355355355355355	3\\
0.36036036036036	3\\
0.365365365365365	3\\
0.37037037037037	3\\
0.375375375375375	3\\
0.38038038038038	3\\
0.385385385385385	3\\
0.39039039039039	3\\
0.395395395395395	3\\
0.4004004004004	3\\
0.405405405405405	3\\
0.41041041041041	3\\
0.415415415415415	3\\
0.42042042042042	3\\
0.425425425425425	3\\
0.43043043043043	3\\
0.435435435435435	3\\
0.44044044044044	3\\
0.445445445445445	3\\
0.45045045045045	3\\
0.455455455455455	3\\
0.46046046046046	3\\
0.465465465465465	3\\
0.47047047047047	3\\
0.475475475475475	3\\
0.48048048048048	3\\
0.485485485485485	3\\
0.49049049049049	3\\
0.495495495495495	3\\
0.500500500500501	3\\
0.505505505505506	3\\
0.510510510510511	3\\
0.515515515515516	3\\
0.520520520520521	3\\
0.525525525525526	3\\
0.530530530530531	3\\
0.535535535535536	3\\
0.540540540540541	3\\
0.545545545545546	3\\
0.550550550550551	3\\
0.555555555555556	3\\
0.560560560560561	3\\
0.565565565565566	3\\
0.570570570570571	3\\
0.575575575575576	3\\
0.580580580580581	3\\
0.585585585585586	3\\
0.590590590590591	3\\
0.595595595595596	3\\
0.600600600600601	3\\
0.605605605605606	3\\
0.610610610610611	3\\
0.615615615615616	3\\
0.620620620620621	3\\
0.625625625625626	3\\
0.630630630630631	3\\
0.635635635635636	3\\
0.640640640640641	3\\
0.645645645645646	3\\
0.650650650650651	3\\
0.655655655655656	3\\
0.660660660660661	3\\
0.665665665665666	3\\
0.670670670670671	3\\
0.675675675675676	3\\
0.680680680680681	3\\
0.685685685685686	3\\
0.690690690690691	3\\
0.695695695695696	3\\
0.700700700700701	3\\
0.705705705705706	3\\
0.710710710710711	3\\
0.715715715715716	3\\
0.720720720720721	3\\
0.725725725725726	3\\
0.730730730730731	3\\
0.735735735735736	3\\
0.740740740740741	3\\
0.745745745745746	3\\
0.750750750750751	3\\
0.755755755755756	3\\
0.760760760760761	3\\
0.765765765765766	3\\
0.770770770770771	3\\
0.775775775775776	3\\
0.780780780780781	3\\
0.785785785785786	3\\
0.790790790790791	3\\
0.795795795795796	3\\
0.800800800800801	3\\
0.805805805805806	3\\
0.810810810810811	3\\
0.815815815815816	3\\
0.820820820820821	3\\
0.825825825825826	3\\
0.830830830830831	3\\
0.835835835835836	3\\
0.840840840840841	3\\
0.845845845845846	3\\
0.850850850850851	3\\
0.855855855855856	3\\
0.860860860860861	3\\
0.865865865865866	3\\
0.870870870870871	3\\
0.875875875875876	3\\
0.880880880880881	3\\
0.885885885885886	3\\
0.890890890890891	3\\
0.895895895895896	3\\
0.900900900900901	3\\
0.905905905905906	3\\
0.910910910910911	3\\
0.915915915915916	3\\
0.920920920920921	3\\
0.925925925925926	3\\
0.930930930930931	3\\
0.935935935935936	3\\
0.940940940940941	3\\
0.945945945945946	3\\
0.950950950950951	3\\
0.955955955955956	3\\
0.960960960960961	3\\
0.965965965965966	3\\
0.970970970970971	3\\
0.975975975975976	3\\
0.980980980980981	3\\
0.985985985985986	3\\
0.990990990990991	3\\
0.995995995995996	3\\
1.001001001001	3.001001001001\\
1.00600600600601	3.00600600600601\\
1.01101101101101	3.01101101101101\\
1.01601601601602	3.01601601601602\\
1.02102102102102	3.02102102102102\\
1.02602602602603	3.02602602602603\\
1.03103103103103	3.03103103103103\\
1.03603603603604	3.03603603603604\\
1.04104104104104	3.04104104104104\\
1.04604604604605	3.04604604604605\\
1.05105105105105	3.05105105105105\\
1.05605605605606	3.05605605605606\\
1.06106106106106	3.06106106106106\\
1.06606606606607	3.06606606606607\\
1.07107107107107	3.07107107107107\\
1.07607607607608	3.07607607607608\\
1.08108108108108	3.08108108108108\\
1.08608608608609	3.08608608608609\\
1.09109109109109	3.09109109109109\\
1.0960960960961	3.0960960960961\\
1.1011011011011	3.1011011011011\\
1.10610610610611	3.10610610610611\\
1.11111111111111	3.11111111111111\\
1.11611611611612	3.11611611611612\\
1.12112112112112	3.12112112112112\\
1.12612612612613	3.12612612612613\\
1.13113113113113	3.13113113113113\\
1.13613613613614	3.13613613613614\\
1.14114114114114	3.14114114114114\\
1.14614614614615	3.14614614614615\\
1.15115115115115	3.15115115115115\\
1.15615615615616	3.15615615615616\\
1.16116116116116	3.16116116116116\\
1.16616616616617	3.16616616616617\\
1.17117117117117	3.17117117117117\\
1.17617617617618	3.17617617617618\\
1.18118118118118	3.18118118118118\\
1.18618618618619	3.18618618618619\\
1.19119119119119	3.19119119119119\\
1.1961961961962	3.1961961961962\\
1.2012012012012	3.2012012012012\\
1.20620620620621	3.20620620620621\\
1.21121121121121	3.21121121121121\\
1.21621621621622	3.21621621621622\\
1.22122122122122	3.22122122122122\\
1.22622622622623	3.22622622622623\\
1.23123123123123	3.23123123123123\\
1.23623623623624	3.23623623623624\\
1.24124124124124	3.24124124124124\\
1.24624624624625	3.24624624624625\\
1.25125125125125	3.25125125125125\\
1.25625625625626	3.25625625625626\\
1.26126126126126	3.26126126126126\\
1.26626626626627	3.26626626626627\\
1.27127127127127	3.27127127127127\\
1.27627627627628	3.27627627627628\\
1.28128128128128	3.28128128128128\\
1.28628628628629	3.28628628628629\\
1.29129129129129	3.29129129129129\\
1.2962962962963	3.2962962962963\\
1.3013013013013	3.3013013013013\\
1.30630630630631	3.30630630630631\\
1.31131131131131	3.31131131131131\\
1.31631631631632	3.31631631631632\\
1.32132132132132	3.32132132132132\\
1.32632632632633	3.32632632632633\\
1.33133133133133	3.33133133133133\\
1.33633633633634	3.33633633633634\\
1.34134134134134	3.34134134134134\\
1.34634634634635	3.34634634634635\\
1.35135135135135	3.35135135135135\\
1.35635635635636	3.35635635635636\\
1.36136136136136	3.36136136136136\\
1.36636636636637	3.36636636636637\\
1.37137137137137	3.37137137137137\\
1.37637637637638	3.37637637637638\\
1.38138138138138	3.38138138138138\\
1.38638638638639	3.38638638638639\\
1.39139139139139	3.39139139139139\\
1.3963963963964	3.3963963963964\\
1.4014014014014	3.4014014014014\\
1.40640640640641	3.40640640640641\\
1.41141141141141	3.41141141141141\\
1.41641641641642	3.41641641641642\\
1.42142142142142	3.42142142142142\\
1.42642642642643	3.42642642642643\\
1.43143143143143	3.43143143143143\\
1.43643643643644	3.43643643643644\\
1.44144144144144	3.44144144144144\\
1.44644644644645	3.44644644644645\\
1.45145145145145	3.45145145145145\\
1.45645645645646	3.45645645645646\\
1.46146146146146	3.46146146146146\\
1.46646646646647	3.46646646646647\\
1.47147147147147	3.47147147147147\\
1.47647647647648	3.47647647647648\\
1.48148148148148	3.48148148148148\\
1.48648648648649	3.48648648648649\\
1.49149149149149	3.49149149149149\\
1.4964964964965	3.4964964964965\\
1.5015015015015	3.4984984984985\\
1.50650650650651	3.49349349349349\\
1.51151151151151	3.48848848848849\\
1.51651651651652	3.48348348348348\\
1.52152152152152	3.47847847847848\\
1.52652652652653	3.47347347347347\\
1.53153153153153	3.46846846846847\\
1.53653653653654	3.46346346346346\\
1.54154154154154	3.45845845845846\\
1.54654654654655	3.45345345345345\\
1.55155155155155	3.44844844844845\\
1.55655655655656	3.44344344344344\\
1.56156156156156	3.43843843843844\\
1.56656656656657	3.43343343343343\\
1.57157157157157	3.42842842842843\\
1.57657657657658	3.42342342342342\\
1.58158158158158	3.41841841841842\\
1.58658658658659	3.41341341341341\\
1.59159159159159	3.40840840840841\\
1.5965965965966	3.4034034034034\\
1.6016016016016	3.3983983983984\\
1.60660660660661	3.39339339339339\\
1.61161161161161	3.38838838838839\\
1.61661661661662	3.38338338338338\\
1.62162162162162	3.37837837837838\\
1.62662662662663	3.37337337337337\\
1.63163163163163	3.36836836836837\\
1.63663663663664	3.36336336336336\\
1.64164164164164	3.35835835835836\\
1.64664664664665	3.35335335335335\\
1.65165165165165	3.34834834834835\\
1.65665665665666	3.34334334334334\\
1.66166166166166	3.33833833833834\\
1.66666666666667	3.33333333333333\\
1.67167167167167	3.32832832832833\\
1.67667667667668	3.32332332332332\\
1.68168168168168	3.31831831831832\\
1.68668668668669	3.31331331331331\\
1.69169169169169	3.30830830830831\\
1.6966966966967	3.3033033033033\\
1.7017017017017	3.2982982982983\\
1.70670670670671	3.29329329329329\\
1.71171171171171	3.28828828828829\\
1.71671671671672	3.28328328328328\\
1.72172172172172	3.27827827827828\\
1.72672672672673	3.27327327327327\\
1.73173173173173	3.26826826826827\\
1.73673673673674	3.26326326326326\\
1.74174174174174	3.25825825825826\\
1.74674674674675	3.25325325325325\\
1.75175175175175	3.24824824824825\\
1.75675675675676	3.24324324324324\\
1.76176176176176	3.23823823823824\\
1.76676676676677	3.23323323323323\\
1.77177177177177	3.22822822822823\\
1.77677677677678	3.22322322322322\\
1.78178178178178	3.21821821821822\\
1.78678678678679	3.21321321321321\\
1.79179179179179	3.20820820820821\\
1.7967967967968	3.2032032032032\\
1.8018018018018	3.1981981981982\\
1.80680680680681	3.19319319319319\\
1.81181181181181	3.18818818818819\\
1.81681681681682	3.18318318318318\\
1.82182182182182	3.17817817817818\\
1.82682682682683	3.17317317317317\\
1.83183183183183	3.16816816816817\\
1.83683683683684	3.16316316316316\\
1.84184184184184	3.15815815815816\\
1.84684684684685	3.15315315315315\\
1.85185185185185	3.14814814814815\\
1.85685685685686	3.14314314314314\\
1.86186186186186	3.13813813813814\\
1.86686686686687	3.13313313313313\\
1.87187187187187	3.12812812812813\\
1.87687687687688	3.12312312312312\\
1.88188188188188	3.11811811811812\\
1.88688688688689	3.11311311311311\\
1.89189189189189	3.10810810810811\\
1.8968968968969	3.1031031031031\\
1.9019019019019	3.0980980980981\\
1.90690690690691	3.09309309309309\\
1.91191191191191	3.08808808808809\\
1.91691691691692	3.08308308308308\\
1.92192192192192	3.07807807807808\\
1.92692692692693	3.07307307307307\\
1.93193193193193	3.06806806806807\\
1.93693693693694	3.06306306306306\\
1.94194194194194	3.05805805805806\\
1.94694694694695	3.05305305305305\\
1.95195195195195	3.04804804804805\\
1.95695695695696	3.04304304304304\\
1.96196196196196	3.03803803803804\\
1.96696696696697	3.03303303303303\\
1.97197197197197	3.02802802802803\\
1.97697697697698	3.02302302302302\\
1.98198198198198	3.01801801801802\\
1.98698698698699	3.01301301301301\\
1.99199199199199	3.00800800800801\\
1.996996996997	3.003003003003\\
2.002002002002	3\\
2.00700700700701	3\\
2.01201201201201	3\\
2.01701701701702	3\\
2.02202202202202	3\\
2.02702702702703	3\\
2.03203203203203	3\\
2.03703703703704	3\\
2.04204204204204	3\\
2.04704704704705	3\\
2.05205205205205	3\\
2.05705705705706	3\\
2.06206206206206	3\\
2.06706706706707	3\\
2.07207207207207	3\\
2.07707707707708	3\\
2.08208208208208	3\\
2.08708708708709	3\\
2.09209209209209	3\\
2.0970970970971	3\\
2.1021021021021	3\\
2.10710710710711	3\\
2.11211211211211	3\\
2.11711711711712	3\\
2.12212212212212	3\\
2.12712712712713	3\\
2.13213213213213	3\\
2.13713713713714	3\\
2.14214214214214	3\\
2.14714714714715	3\\
2.15215215215215	3\\
2.15715715715716	3\\
2.16216216216216	3\\
2.16716716716717	3\\
2.17217217217217	3\\
2.17717717717718	3\\
2.18218218218218	3\\
2.18718718718719	3\\
2.19219219219219	3\\
2.1971971971972	3\\
2.2022022022022	3\\
2.20720720720721	3\\
2.21221221221221	3\\
2.21721721721722	3\\
2.22222222222222	3\\
2.22722722722723	3\\
2.23223223223223	3\\
2.23723723723724	3\\
2.24224224224224	3\\
2.24724724724725	3\\
2.25225225225225	3\\
2.25725725725726	3\\
2.26226226226226	3\\
2.26726726726727	3\\
2.27227227227227	3\\
2.27727727727728	3\\
2.28228228228228	3\\
2.28728728728729	3\\
2.29229229229229	3\\
2.2972972972973	3\\
2.3023023023023	3\\
2.30730730730731	3\\
2.31231231231231	3\\
2.31731731731732	3\\
2.32232232232232	3\\
2.32732732732733	3\\
2.33233233233233	3\\
2.33733733733734	3\\
2.34234234234234	3\\
2.34734734734735	3\\
2.35235235235235	3\\
2.35735735735736	3\\
2.36236236236236	3\\
2.36736736736737	3\\
2.37237237237237	3\\
2.37737737737738	3\\
2.38238238238238	3\\
2.38738738738739	3\\
2.39239239239239	3\\
2.3973973973974	3\\
2.4024024024024	3\\
2.40740740740741	3\\
2.41241241241241	3\\
2.41741741741742	3\\
2.42242242242242	3\\
2.42742742742743	3\\
2.43243243243243	3\\
2.43743743743744	3\\
2.44244244244244	3\\
2.44744744744745	3\\
2.45245245245245	3\\
2.45745745745746	3\\
2.46246246246246	3\\
2.46746746746747	3\\
2.47247247247247	3\\
2.47747747747748	3\\
2.48248248248248	3\\
2.48748748748749	3\\
2.49249249249249	3\\
2.4974974974975	3\\
2.5025025025025	3\\
2.50750750750751	3\\
2.51251251251251	3\\
2.51751751751752	3\\
2.52252252252252	3\\
2.52752752752753	3\\
2.53253253253253	3\\
2.53753753753754	3\\
2.54254254254254	3\\
2.54754754754755	3\\
2.55255255255255	3\\
2.55755755755756	3\\
2.56256256256256	3\\
2.56756756756757	3\\
2.57257257257257	3\\
2.57757757757758	3\\
2.58258258258258	3\\
2.58758758758759	3\\
2.59259259259259	3\\
2.5975975975976	3\\
2.6026026026026	3\\
2.60760760760761	3\\
2.61261261261261	3\\
2.61761761761762	3\\
2.62262262262262	3\\
2.62762762762763	3\\
2.63263263263263	3\\
2.63763763763764	3\\
2.64264264264264	3\\
2.64764764764765	3\\
2.65265265265265	3\\
2.65765765765766	3\\
2.66266266266266	3\\
2.66766766766767	3\\
2.67267267267267	3\\
2.67767767767768	3\\
2.68268268268268	3\\
2.68768768768769	3\\
2.69269269269269	3\\
2.6976976976977	3\\
2.7027027027027	3\\
2.70770770770771	3\\
2.71271271271271	3\\
2.71771771771772	3\\
2.72272272272272	3\\
2.72772772772773	3\\
2.73273273273273	3\\
2.73773773773774	3\\
2.74274274274274	3\\
2.74774774774775	3\\
2.75275275275275	3\\
2.75775775775776	3\\
2.76276276276276	3\\
2.76776776776777	3\\
2.77277277277277	3\\
2.77777777777778	3\\
2.78278278278278	3\\
2.78778778778779	3\\
2.79279279279279	3\\
2.7977977977978	3\\
2.8028028028028	3\\
2.80780780780781	3\\
2.81281281281281	3\\
2.81781781781782	3\\
2.82282282282282	3\\
2.82782782782783	3\\
2.83283283283283	3\\
2.83783783783784	3\\
2.84284284284284	3\\
2.84784784784785	3\\
2.85285285285285	3\\
2.85785785785786	3\\
2.86286286286286	3\\
2.86786786786787	3\\
2.87287287287287	3\\
2.87787787787788	3\\
2.88288288288288	3\\
2.88788788788789	3\\
2.89289289289289	3\\
2.8978978978979	3\\
2.9029029029029	3\\
2.90790790790791	3\\
2.91291291291291	3\\
2.91791791791792	3\\
2.92292292292292	3\\
2.92792792792793	3\\
2.93293293293293	3\\
2.93793793793794	3\\
2.94294294294294	3\\
2.94794794794795	3\\
2.95295295295295	3\\
2.95795795795796	3\\
2.96296296296296	3\\
2.96796796796797	3\\
2.97297297297297	3\\
2.97797797797798	3\\
2.98298298298298	3\\
2.98798798798799	3\\
2.99299299299299	3\\
2.997997997998	3\\
3.003003003003	3.003003003003\\
3.00800800800801	3.00800800800801\\
3.01301301301301	3.01301301301301\\
3.01801801801802	3.01801801801802\\
3.02302302302302	3.02302302302302\\
3.02802802802803	3.02802802802803\\
3.03303303303303	3.03303303303303\\
3.03803803803804	3.03803803803804\\
3.04304304304304	3.04304304304304\\
3.04804804804805	3.04804804804805\\
3.05305305305305	3.05305305305305\\
3.05805805805806	3.05805805805806\\
3.06306306306306	3.06306306306306\\
3.06806806806807	3.06806806806807\\
3.07307307307307	3.07307307307307\\
3.07807807807808	3.07807807807808\\
3.08308308308308	3.08308308308308\\
3.08808808808809	3.08808808808809\\
3.09309309309309	3.09309309309309\\
3.0980980980981	3.0980980980981\\
3.1031031031031	3.1031031031031\\
3.10810810810811	3.10810810810811\\
3.11311311311311	3.11311311311311\\
3.11811811811812	3.11811811811812\\
3.12312312312312	3.12312312312312\\
3.12812812812813	3.12812812812813\\
3.13313313313313	3.13313313313313\\
3.13813813813814	3.13813813813814\\
3.14314314314314	3.14314314314314\\
3.14814814814815	3.14814814814815\\
3.15315315315315	3.15315315315315\\
3.15815815815816	3.15815815815816\\
3.16316316316316	3.16316316316316\\
3.16816816816817	3.16816816816817\\
3.17317317317317	3.17317317317317\\
3.17817817817818	3.17817817817818\\
3.18318318318318	3.18318318318318\\
3.18818818818819	3.18818818818819\\
3.19319319319319	3.19319319319319\\
3.1981981981982	3.1981981981982\\
3.2032032032032	3.2032032032032\\
3.20820820820821	3.20820820820821\\
3.21321321321321	3.21321321321321\\
3.21821821821822	3.21821821821822\\
3.22322322322322	3.22322322322322\\
3.22822822822823	3.22822822822823\\
3.23323323323323	3.23323323323323\\
3.23823823823824	3.23823823823824\\
3.24324324324324	3.24324324324324\\
3.24824824824825	3.24824824824825\\
3.25325325325325	3.25325325325325\\
3.25825825825826	3.25825825825826\\
3.26326326326326	3.26326326326326\\
3.26826826826827	3.26826826826827\\
3.27327327327327	3.27327327327327\\
3.27827827827828	3.27827827827828\\
3.28328328328328	3.28328328328328\\
3.28828828828829	3.28828828828829\\
3.29329329329329	3.29329329329329\\
3.2982982982983	3.2982982982983\\
3.3033033033033	3.3033033033033\\
3.30830830830831	3.30830830830831\\
3.31331331331331	3.31331331331331\\
3.31831831831832	3.31831831831832\\
3.32332332332332	3.32332332332332\\
3.32832832832833	3.32832832832833\\
3.33333333333333	3.33333333333333\\
3.33833833833834	3.33833833833834\\
3.34334334334334	3.34334334334334\\
3.34834834834835	3.34834834834835\\
3.35335335335335	3.35335335335335\\
3.35835835835836	3.35835835835836\\
3.36336336336336	3.36336336336336\\
3.36836836836837	3.36836836836837\\
3.37337337337337	3.37337337337337\\
3.37837837837838	3.37837837837838\\
3.38338338338338	3.38338338338338\\
3.38838838838839	3.38838838838839\\
3.39339339339339	3.39339339339339\\
3.3983983983984	3.3983983983984\\
3.4034034034034	3.4034034034034\\
3.40840840840841	3.40840840840841\\
3.41341341341341	3.41341341341341\\
3.41841841841842	3.41841841841842\\
3.42342342342342	3.42342342342342\\
3.42842842842843	3.42842842842843\\
3.43343343343343	3.43343343343343\\
3.43843843843844	3.43843843843844\\
3.44344344344344	3.44344344344344\\
3.44844844844845	3.44844844844845\\
3.45345345345345	3.45345345345345\\
3.45845845845846	3.45845845845846\\
3.46346346346346	3.46346346346346\\
3.46846846846847	3.46846846846847\\
3.47347347347347	3.47347347347347\\
3.47847847847848	3.47847847847848\\
3.48348348348348	3.48348348348348\\
3.48848848848849	3.48848848848849\\
3.49349349349349	3.49349349349349\\
3.4984984984985	3.4984984984985\\
3.5035035035035	3.5035035035035\\
3.50850850850851	3.50850850850851\\
3.51351351351351	3.51351351351351\\
3.51851851851852	3.51851851851852\\
3.52352352352352	3.52352352352352\\
3.52852852852853	3.52852852852853\\
3.53353353353353	3.53353353353353\\
3.53853853853854	3.53853853853854\\
3.54354354354354	3.54354354354354\\
3.54854854854855	3.54854854854855\\
3.55355355355355	3.55355355355355\\
3.55855855855856	3.55855855855856\\
3.56356356356356	3.56356356356356\\
3.56856856856857	3.56856856856857\\
3.57357357357357	3.57357357357357\\
3.57857857857858	3.57857857857858\\
3.58358358358358	3.58358358358358\\
3.58858858858859	3.58858858858859\\
3.59359359359359	3.59359359359359\\
3.5985985985986	3.5985985985986\\
3.6036036036036	3.6036036036036\\
3.60860860860861	3.60860860860861\\
3.61361361361361	3.61361361361361\\
3.61861861861862	3.61861861861862\\
3.62362362362362	3.62362362362362\\
3.62862862862863	3.62862862862863\\
3.63363363363363	3.63363363363363\\
3.63863863863864	3.63863863863864\\
3.64364364364364	3.64364364364364\\
3.64864864864865	3.64864864864865\\
3.65365365365365	3.65365365365365\\
3.65865865865866	3.65865865865866\\
3.66366366366366	3.66366366366366\\
3.66866866866867	3.66866866866867\\
3.67367367367367	3.67367367367367\\
3.67867867867868	3.67867867867868\\
3.68368368368368	3.68368368368368\\
3.68868868868869	3.68868868868869\\
3.69369369369369	3.69369369369369\\
3.6986986986987	3.6986986986987\\
3.7037037037037	3.7037037037037\\
3.70870870870871	3.70870870870871\\
3.71371371371371	3.71371371371371\\
3.71871871871872	3.71871871871872\\
3.72372372372372	3.72372372372372\\
3.72872872872873	3.72872872872873\\
3.73373373373373	3.73373373373373\\
3.73873873873874	3.73873873873874\\
3.74374374374374	3.74374374374374\\
3.74874874874875	3.74874874874875\\
3.75375375375375	3.75375375375375\\
3.75875875875876	3.75875875875876\\
3.76376376376376	3.76376376376376\\
3.76876876876877	3.76876876876877\\
3.77377377377377	3.77377377377377\\
3.77877877877878	3.77877877877878\\
3.78378378378378	3.78378378378378\\
3.78878878878879	3.78878878878879\\
3.79379379379379	3.79379379379379\\
3.7987987987988	3.7987987987988\\
3.8038038038038	3.8038038038038\\
3.80880880880881	3.80880880880881\\
3.81381381381381	3.81381381381381\\
3.81881881881882	3.81881881881882\\
3.82382382382382	3.82382382382382\\
3.82882882882883	3.82882882882883\\
3.83383383383383	3.83383383383383\\
3.83883883883884	3.83883883883884\\
3.84384384384384	3.84384384384384\\
3.84884884884885	3.84884884884885\\
3.85385385385385	3.85385385385385\\
3.85885885885886	3.85885885885886\\
3.86386386386386	3.86386386386386\\
3.86886886886887	3.86886886886887\\
3.87387387387387	3.87387387387387\\
3.87887887887888	3.87887887887888\\
3.88388388388388	3.88388388388388\\
3.88888888888889	3.88888888888889\\
3.89389389389389	3.89389389389389\\
3.8988988988989	3.8988988988989\\
3.9039039039039	3.9039039039039\\
3.90890890890891	3.90890890890891\\
3.91391391391391	3.91391391391391\\
3.91891891891892	3.91891891891892\\
3.92392392392392	3.92392392392392\\
3.92892892892893	3.92892892892893\\
3.93393393393393	3.93393393393393\\
3.93893893893894	3.93893893893894\\
3.94394394394394	3.94394394394394\\
3.94894894894895	3.94894894894895\\
3.95395395395395	3.95395395395395\\
3.95895895895896	3.95895895895896\\
3.96396396396396	3.96396396396396\\
3.96896896896897	3.96896896896897\\
3.97397397397397	3.97397397397397\\
3.97897897897898	3.97897897897898\\
3.98398398398398	3.98398398398398\\
3.98898898898899	3.98898898898899\\
3.99399399399399	3.99399399399399\\
3.998998998999	3.998998998999\\
4.004004004004	4\\
4.00900900900901	4\\
4.01401401401401	4\\
4.01901901901902	4\\
4.02402402402402	4\\
4.02902902902903	4\\
4.03403403403403	4\\
4.03903903903904	4\\
4.04404404404404	4\\
4.04904904904905	4\\
4.05405405405405	4\\
4.05905905905906	4\\
4.06406406406406	4\\
4.06906906906907	4\\
4.07407407407407	4\\
4.07907907907908	4\\
4.08408408408408	4\\
4.08908908908909	4\\
4.09409409409409	4\\
4.0990990990991	4\\
4.1041041041041	4\\
4.10910910910911	4\\
4.11411411411411	4\\
4.11911911911912	4\\
4.12412412412412	4\\
4.12912912912913	4\\
4.13413413413413	4\\
4.13913913913914	4\\
4.14414414414414	4\\
4.14914914914915	4\\
4.15415415415415	4\\
4.15915915915916	4\\
4.16416416416416	4\\
4.16916916916917	4\\
4.17417417417417	4\\
4.17917917917918	4\\
4.18418418418418	4\\
4.18918918918919	4\\
4.19419419419419	4\\
4.1991991991992	4\\
4.2042042042042	4\\
4.20920920920921	4\\
4.21421421421421	4\\
4.21921921921922	4\\
4.22422422422422	4\\
4.22922922922923	4\\
4.23423423423423	4\\
4.23923923923924	4\\
4.24424424424424	4\\
4.24924924924925	4\\
4.25425425425425	4\\
4.25925925925926	4\\
4.26426426426426	4\\
4.26926926926927	4\\
4.27427427427427	4\\
4.27927927927928	4\\
4.28428428428428	4\\
4.28928928928929	4\\
4.29429429429429	4\\
4.2992992992993	4\\
4.3043043043043	4\\
4.30930930930931	4\\
4.31431431431431	4\\
4.31931931931932	4\\
4.32432432432432	4\\
4.32932932932933	4\\
4.33433433433433	4\\
4.33933933933934	4\\
4.34434434434434	4\\
4.34934934934935	4\\
4.35435435435435	4\\
4.35935935935936	4\\
4.36436436436436	4\\
4.36936936936937	4\\
4.37437437437437	4\\
4.37937937937938	4\\
4.38438438438438	4\\
4.38938938938939	4\\
4.39439439439439	4\\
4.3993993993994	4\\
4.4044044044044	4\\
4.40940940940941	4\\
4.41441441441441	4\\
4.41941941941942	4\\
4.42442442442442	4\\
4.42942942942943	4\\
4.43443443443443	4\\
4.43943943943944	4\\
4.44444444444444	4\\
4.44944944944945	4\\
4.45445445445445	4\\
4.45945945945946	4\\
4.46446446446446	4\\
4.46946946946947	4\\
4.47447447447447	4\\
4.47947947947948	4\\
4.48448448448448	4\\
4.48948948948949	4\\
4.49449449449449	4\\
4.4994994994995	4\\
4.5045045045045	4\\
4.50950950950951	4\\
4.51451451451451	4\\
4.51951951951952	4\\
4.52452452452452	4\\
4.52952952952953	4\\
4.53453453453453	4\\
4.53953953953954	4\\
4.54454454454454	4\\
4.54954954954955	4\\
4.55455455455455	4\\
4.55955955955956	4\\
4.56456456456456	4\\
4.56956956956957	4\\
4.57457457457457	4\\
4.57957957957958	4\\
4.58458458458458	4\\
4.58958958958959	4\\
4.59459459459459	4\\
4.5995995995996	4\\
4.6046046046046	4\\
4.60960960960961	4\\
4.61461461461461	4\\
4.61961961961962	4\\
4.62462462462462	4\\
4.62962962962963	4\\
4.63463463463463	4\\
4.63963963963964	4\\
4.64464464464464	4\\
4.64964964964965	4\\
4.65465465465465	4\\
4.65965965965966	4\\
4.66466466466466	4\\
4.66966966966967	4\\
4.67467467467467	4\\
4.67967967967968	4\\
4.68468468468468	4\\
4.68968968968969	4\\
4.69469469469469	4\\
4.6996996996997	4\\
4.7047047047047	4\\
4.70970970970971	4\\
4.71471471471471	4\\
4.71971971971972	4\\
4.72472472472472	4\\
4.72972972972973	4\\
4.73473473473473	4\\
4.73973973973974	4\\
4.74474474474474	4\\
4.74974974974975	4\\
4.75475475475475	4\\
4.75975975975976	4\\
4.76476476476476	4\\
4.76976976976977	4\\
4.77477477477477	4\\
4.77977977977978	4\\
4.78478478478478	4\\
4.78978978978979	4\\
4.79479479479479	4\\
4.7997997997998	4\\
4.8048048048048	4\\
4.80980980980981	4\\
4.81481481481481	4\\
4.81981981981982	4\\
4.82482482482482	4\\
4.82982982982983	4\\
4.83483483483483	4\\
4.83983983983984	4\\
4.84484484484484	4\\
4.84984984984985	4\\
4.85485485485485	4\\
4.85985985985986	4\\
4.86486486486486	4\\
4.86986986986987	4\\
4.87487487487487	4\\
4.87987987987988	4\\
4.88488488488488	4\\
4.88988988988989	4\\
4.89489489489489	4\\
4.8998998998999	4\\
4.9049049049049	4\\
4.90990990990991	4\\
4.91491491491491	4\\
4.91991991991992	4\\
4.92492492492492	4\\
4.92992992992993	4\\
4.93493493493493	4\\
4.93993993993994	4\\
4.94494494494494	4\\
4.94994994994995	4\\
4.95495495495495	4\\
4.95995995995996	4\\
4.96496496496496	4\\
4.96996996996997	4\\
4.97497497497497	4\\
4.97997997997998	4\\
4.98498498498498	4\\
4.98998998998999	4\\
4.99499499499499	4\\
5	4\\
};
\addplot [color=red,dashed,line width=1.0pt,forget plot]
  table[row sep=crcr]{0	2\\
0.005005005005005	1.98998998998999\\
0.01001001001001	1.97997997997998\\
0.015015015015015	1.96996996996997\\
0.02002002002002	1.95995995995996\\
0.025025025025025	1.94994994994995\\
0.03003003003003	1.93993993993994\\
0.035035035035035	1.92992992992993\\
0.04004004004004	1.91991991991992\\
0.045045045045045	1.90990990990991\\
0.0500500500500501	1.8998998998999\\
0.0550550550550551	1.88988988988989\\
0.0600600600600601	1.87987987987988\\
0.0650650650650651	1.86986986986987\\
0.0700700700700701	1.85985985985986\\
0.0750750750750751	1.84984984984985\\
0.0800800800800801	1.83983983983984\\
0.0850850850850851	1.82982982982983\\
0.0900900900900901	1.81981981981982\\
0.0950950950950951	1.80980980980981\\
0.1001001001001	1.7997997997998\\
0.105105105105105	1.78978978978979\\
0.11011011011011	1.77977977977978\\
0.115115115115115	1.76976976976977\\
0.12012012012012	1.75975975975976\\
0.125125125125125	1.74974974974975\\
0.13013013013013	1.73973973973974\\
0.135135135135135	1.72972972972973\\
0.14014014014014	1.71971971971972\\
0.145145145145145	1.70970970970971\\
0.15015015015015	1.6996996996997\\
0.155155155155155	1.68968968968969\\
0.16016016016016	1.67967967967968\\
0.165165165165165	1.66966966966967\\
0.17017017017017	1.65965965965966\\
0.175175175175175	1.64964964964965\\
0.18018018018018	1.63963963963964\\
0.185185185185185	1.62962962962963\\
0.19019019019019	1.61961961961962\\
0.195195195195195	1.60960960960961\\
0.2002002002002	1.5995995995996\\
0.205205205205205	1.58958958958959\\
0.21021021021021	1.57957957957958\\
0.215215215215215	1.56956956956957\\
0.22022022022022	1.55955955955956\\
0.225225225225225	1.54954954954955\\
0.23023023023023	1.53953953953954\\
0.235235235235235	1.52952952952953\\
0.24024024024024	1.51951951951952\\
0.245245245245245	1.50950950950951\\
0.25025025025025	1.4994994994995\\
0.255255255255255	1.48948948948949\\
0.26026026026026	1.47947947947948\\
0.265265265265265	1.46946946946947\\
0.27027027027027	1.45945945945946\\
0.275275275275275	1.44944944944945\\
0.28028028028028	1.43943943943944\\
0.285285285285285	1.42942942942943\\
0.29029029029029	1.41941941941942\\
0.295295295295295	1.40940940940941\\
0.3003003003003	1.3993993993994\\
0.305305305305305	1.38938938938939\\
0.31031031031031	1.37937937937938\\
0.315315315315315	1.36936936936937\\
0.32032032032032	1.35935935935936\\
0.325325325325325	1.34934934934935\\
0.33033033033033	1.33933933933934\\
0.335335335335335	1.32932932932933\\
0.34034034034034	1.31931931931932\\
0.345345345345345	1.30930930930931\\
0.35035035035035	1.2992992992993\\
0.355355355355355	1.28928928928929\\
0.36036036036036	1.27927927927928\\
0.365365365365365	1.26926926926927\\
0.37037037037037	1.25925925925926\\
0.375375375375375	1.24924924924925\\
0.38038038038038	1.23923923923924\\
0.385385385385385	1.22922922922923\\
0.39039039039039	1.21921921921922\\
0.395395395395395	1.20920920920921\\
0.4004004004004	1.1991991991992\\
0.405405405405405	1.18918918918919\\
0.41041041041041	1.17917917917918\\
0.415415415415415	1.16916916916917\\
0.42042042042042	1.15915915915916\\
0.425425425425425	1.14914914914915\\
0.43043043043043	1.13913913913914\\
0.435435435435435	1.12912912912913\\
0.44044044044044	1.11911911911912\\
0.445445445445445	1.10910910910911\\
0.45045045045045	1.0990990990991\\
0.455455455455455	1.08908908908909\\
0.46046046046046	1.07907907907908\\
0.465465465465465	1.06906906906907\\
0.47047047047047	1.05905905905906\\
0.475475475475475	1.04904904904905\\
0.48048048048048	1.03903903903904\\
0.485485485485485	1.02902902902903\\
0.49049049049049	1.01901901901902\\
0.495495495495495	1.00900900900901\\
0.500500500500501	1.001001001001\\
0.505505505505506	1.01101101101101\\
0.510510510510511	1.02102102102102\\
0.515515515515516	1.03103103103103\\
0.520520520520521	1.04104104104104\\
0.525525525525526	1.05105105105105\\
0.530530530530531	1.06106106106106\\
0.535535535535536	1.07107107107107\\
0.540540540540541	1.08108108108108\\
0.545545545545546	1.09109109109109\\
0.550550550550551	1.1011011011011\\
0.555555555555556	1.11111111111111\\
0.560560560560561	1.12112112112112\\
0.565565565565566	1.13113113113113\\
0.570570570570571	1.14114114114114\\
0.575575575575576	1.15115115115115\\
0.580580580580581	1.16116116116116\\
0.585585585585586	1.17117117117117\\
0.590590590590591	1.18118118118118\\
0.595595595595596	1.19119119119119\\
0.600600600600601	1.2012012012012\\
0.605605605605606	1.21121121121121\\
0.610610610610611	1.22122122122122\\
0.615615615615616	1.23123123123123\\
0.620620620620621	1.24124124124124\\
0.625625625625626	1.25125125125125\\
0.630630630630631	1.26126126126126\\
0.635635635635636	1.27127127127127\\
0.640640640640641	1.28128128128128\\
0.645645645645646	1.29129129129129\\
0.650650650650651	1.3013013013013\\
0.655655655655656	1.31131131131131\\
0.660660660660661	1.32132132132132\\
0.665665665665666	1.33133133133133\\
0.670670670670671	1.32932932932933\\
0.675675675675676	1.32432432432432\\
0.680680680680681	1.31931931931932\\
0.685685685685686	1.31431431431431\\
0.690690690690691	1.30930930930931\\
0.695695695695696	1.3043043043043\\
0.700700700700701	1.2992992992993\\
0.705705705705706	1.29429429429429\\
0.710710710710711	1.28928928928929\\
0.715715715715716	1.28428428428428\\
0.720720720720721	1.27927927927928\\
0.725725725725726	1.27427427427427\\
0.730730730730731	1.26926926926927\\
0.735735735735736	1.26426426426426\\
0.740740740740741	1.25925925925926\\
0.745745745745746	1.25425425425425\\
0.750750750750751	1.24924924924925\\
0.755755755755756	1.24424424424424\\
0.760760760760761	1.23923923923924\\
0.765765765765766	1.23423423423423\\
0.770770770770771	1.22922922922923\\
0.775775775775776	1.22422422422422\\
0.780780780780781	1.21921921921922\\
0.785785785785786	1.21421421421421\\
0.790790790790791	1.20920920920921\\
0.795795795795796	1.2042042042042\\
0.800800800800801	1.1991991991992\\
0.805805805805806	1.19419419419419\\
0.810810810810811	1.18918918918919\\
0.815815815815816	1.18418418418418\\
0.820820820820821	1.17917917917918\\
0.825825825825826	1.17417417417417\\
0.830830830830831	1.16916916916917\\
0.835835835835836	1.16416416416416\\
0.840840840840841	1.15915915915916\\
0.845845845845846	1.15415415415415\\
0.850850850850851	1.14914914914915\\
0.855855855855856	1.14414414414414\\
0.860860860860861	1.13913913913914\\
0.865865865865866	1.13413413413413\\
0.870870870870871	1.12912912912913\\
0.875875875875876	1.12412412412412\\
0.880880880880881	1.11911911911912\\
0.885885885885886	1.11411411411411\\
0.890890890890891	1.10910910910911\\
0.895895895895896	1.1041041041041\\
0.900900900900901	1.0990990990991\\
0.905905905905906	1.09409409409409\\
0.910910910910911	1.08908908908909\\
0.915915915915916	1.08408408408408\\
0.920920920920921	1.07907907907908\\
0.925925925925926	1.07407407407407\\
0.930930930930931	1.06906906906907\\
0.935935935935936	1.06406406406406\\
0.940940940940941	1.05905905905906\\
0.945945945945946	1.05405405405405\\
0.950950950950951	1.04904904904905\\
0.955955955955956	1.04404404404404\\
0.960960960960961	1.03903903903904\\
0.965965965965966	1.03403403403403\\
0.970970970970971	1.02902902902903\\
0.975975975975976	1.02402402402402\\
0.980980980980981	1.01901901901902\\
0.985985985985986	1.01401401401401\\
0.990990990990991	1.00900900900901\\
0.995995995995996	1.004004004004\\
1.001001001001	1.001001001001\\
1.00600600600601	1.00600600600601\\
1.01101101101101	1.01101101101101\\
1.01601601601602	1.01601601601602\\
1.02102102102102	1.02102102102102\\
1.02602602602603	1.02602602602603\\
1.03103103103103	1.03103103103103\\
1.03603603603604	1.03603603603604\\
1.04104104104104	1.04104104104104\\
1.04604604604605	1.04604604604605\\
1.05105105105105	1.05105105105105\\
1.05605605605606	1.05605605605606\\
1.06106106106106	1.06106106106106\\
1.06606606606607	1.06606606606607\\
1.07107107107107	1.07107107107107\\
1.07607607607608	1.07607607607608\\
1.08108108108108	1.08108108108108\\
1.08608608608609	1.08608608608609\\
1.09109109109109	1.09109109109109\\
1.0960960960961	1.0960960960961\\
1.1011011011011	1.1011011011011\\
1.10610610610611	1.10610610610611\\
1.11111111111111	1.11111111111111\\
1.11611611611612	1.11611611611612\\
1.12112112112112	1.12112112112112\\
1.12612612612613	1.12612612612613\\
1.13113113113113	1.13113113113113\\
1.13613613613614	1.13613613613614\\
1.14114114114114	1.14114114114114\\
1.14614614614615	1.14614614614615\\
1.15115115115115	1.15115115115115\\
1.15615615615616	1.15615615615616\\
1.16116116116116	1.16116116116116\\
1.16616616616617	1.16616616616617\\
1.17117117117117	1.17117117117117\\
1.17617617617618	1.17617617617618\\
1.18118118118118	1.18118118118118\\
1.18618618618619	1.18618618618619\\
1.19119119119119	1.19119119119119\\
1.1961961961962	1.1961961961962\\
1.2012012012012	1.2012012012012\\
1.20620620620621	1.20620620620621\\
1.21121121121121	1.21121121121121\\
1.21621621621622	1.21621621621622\\
1.22122122122122	1.22122122122122\\
1.22622622622623	1.22622622622623\\
1.23123123123123	1.23123123123123\\
1.23623623623624	1.23623623623624\\
1.24124124124124	1.24124124124124\\
1.24624624624625	1.24624624624625\\
1.25125125125125	1.25125125125125\\
1.25625625625626	1.25625625625626\\
1.26126126126126	1.26126126126126\\
1.26626626626627	1.26626626626627\\
1.27127127127127	1.27127127127127\\
1.27627627627628	1.27627627627628\\
1.28128128128128	1.28128128128128\\
1.28628628628629	1.28628628628629\\
1.29129129129129	1.29129129129129\\
1.2962962962963	1.2962962962963\\
1.3013013013013	1.3013013013013\\
1.30630630630631	1.30630630630631\\
1.31131131131131	1.31131131131131\\
1.31631631631632	1.31631631631632\\
1.32132132132132	1.32132132132132\\
1.32632632632633	1.32632632632633\\
1.33133133133133	1.33133133133133\\
1.33633633633634	1.33633633633634\\
1.34134134134134	1.34134134134134\\
1.34634634634635	1.34634634634635\\
1.35135135135135	1.35135135135135\\
1.35635635635636	1.35635635635636\\
1.36136136136136	1.36136136136136\\
1.36636636636637	1.36636636636637\\
1.37137137137137	1.37137137137137\\
1.37637637637638	1.37637637637638\\
1.38138138138138	1.38138138138138\\
1.38638638638639	1.38638638638639\\
1.39139139139139	1.39139139139139\\
1.3963963963964	1.3963963963964\\
1.4014014014014	1.4014014014014\\
1.40640640640641	1.40640640640641\\
1.41141141141141	1.41141141141141\\
1.41641641641642	1.41641641641642\\
1.42142142142142	1.42142142142142\\
1.42642642642643	1.42642642642643\\
1.43143143143143	1.43143143143143\\
1.43643643643644	1.43643643643644\\
1.44144144144144	1.44144144144144\\
1.44644644644645	1.44644644644645\\
1.45145145145145	1.45145145145145\\
1.45645645645646	1.45645645645646\\
1.46146146146146	1.46146146146146\\
1.46646646646647	1.46646646646647\\
1.47147147147147	1.47147147147147\\
1.47647647647648	1.47647647647648\\
1.48148148148148	1.48148148148148\\
1.48648648648649	1.48648648648649\\
1.49149149149149	1.49149149149149\\
1.4964964964965	1.4964964964965\\
1.5015015015015	1.5015015015015\\
1.50650650650651	1.50650650650651\\
1.51151151151151	1.51151151151151\\
1.51651651651652	1.51651651651652\\
1.52152152152152	1.52152152152152\\
1.52652652652653	1.52652652652653\\
1.53153153153153	1.53153153153153\\
1.53653653653654	1.53653653653654\\
1.54154154154154	1.54154154154154\\
1.54654654654655	1.54654654654655\\
1.55155155155155	1.55155155155155\\
1.55655655655656	1.55655655655656\\
1.56156156156156	1.56156156156156\\
1.56656656656657	1.56656656656657\\
1.57157157157157	1.57157157157157\\
1.57657657657658	1.57657657657658\\
1.58158158158158	1.58158158158158\\
1.58658658658659	1.58658658658659\\
1.59159159159159	1.59159159159159\\
1.5965965965966	1.5965965965966\\
1.6016016016016	1.6016016016016\\
1.60660660660661	1.60660660660661\\
1.61161161161161	1.61161161161161\\
1.61661661661662	1.61661661661662\\
1.62162162162162	1.62162162162162\\
1.62662662662663	1.62662662662663\\
1.63163163163163	1.63163163163163\\
1.63663663663664	1.63663663663664\\
1.64164164164164	1.64164164164164\\
1.64664664664665	1.64664664664665\\
1.65165165165165	1.65165165165165\\
1.65665665665666	1.65665665665666\\
1.66166166166166	1.66166166166166\\
1.66666666666667	1.66666666666667\\
1.67167167167167	1.67167167167167\\
1.67667667667668	1.67667667667668\\
1.68168168168168	1.68168168168168\\
1.68668668668669	1.68668668668669\\
1.69169169169169	1.69169169169169\\
1.6966966966967	1.6966966966967\\
1.7017017017017	1.7017017017017\\
1.70670670670671	1.70670670670671\\
1.71171171171171	1.71171171171171\\
1.71671671671672	1.71671671671672\\
1.72172172172172	1.72172172172172\\
1.72672672672673	1.72672672672673\\
1.73173173173173	1.73173173173173\\
1.73673673673674	1.73673673673674\\
1.74174174174174	1.74174174174174\\
1.74674674674675	1.74674674674675\\
1.75175175175175	1.75175175175175\\
1.75675675675676	1.75675675675676\\
1.76176176176176	1.76176176176176\\
1.76676676676677	1.76676676676677\\
1.77177177177177	1.77177177177177\\
1.77677677677678	1.77677677677678\\
1.78178178178178	1.78178178178178\\
1.78678678678679	1.78678678678679\\
1.79179179179179	1.79179179179179\\
1.7967967967968	1.7967967967968\\
1.8018018018018	1.8018018018018\\
1.80680680680681	1.80680680680681\\
1.81181181181181	1.81181181181181\\
1.81681681681682	1.81681681681682\\
1.82182182182182	1.82182182182182\\
1.82682682682683	1.82682682682683\\
1.83183183183183	1.83183183183183\\
1.83683683683684	1.83683683683684\\
1.84184184184184	1.84184184184184\\
1.84684684684685	1.84684684684685\\
1.85185185185185	1.85185185185185\\
1.85685685685686	1.85685685685686\\
1.86186186186186	1.86186186186186\\
1.86686686686687	1.86686686686687\\
1.87187187187187	1.87187187187187\\
1.87687687687688	1.87687687687688\\
1.88188188188188	1.88188188188188\\
1.88688688688689	1.88688688688689\\
1.89189189189189	1.89189189189189\\
1.8968968968969	1.8968968968969\\
1.9019019019019	1.9019019019019\\
1.90690690690691	1.90690690690691\\
1.91191191191191	1.91191191191191\\
1.91691691691692	1.91691691691692\\
1.92192192192192	1.92192192192192\\
1.92692692692693	1.92692692692693\\
1.93193193193193	1.93193193193193\\
1.93693693693694	1.93693693693694\\
1.94194194194194	1.94194194194194\\
1.94694694694695	1.94694694694695\\
1.95195195195195	1.95195195195195\\
1.95695695695696	1.95695695695696\\
1.96196196196196	1.96196196196196\\
1.96696696696697	1.96696696696697\\
1.97197197197197	1.97197197197197\\
1.97697697697698	1.97697697697698\\
1.98198198198198	1.98198198198198\\
1.98698698698699	1.98698698698699\\
1.99199199199199	1.99199199199199\\
1.996996996997	1.996996996997\\
2.002002002002	2\\
2.00700700700701	2\\
2.01201201201201	2\\
2.01701701701702	2\\
2.02202202202202	2\\
2.02702702702703	2\\
2.03203203203203	2\\
2.03703703703704	2\\
2.04204204204204	2\\
2.04704704704705	2\\
2.05205205205205	2\\
2.05705705705706	2\\
2.06206206206206	2\\
2.06706706706707	2\\
2.07207207207207	2\\
2.07707707707708	2\\
2.08208208208208	2\\
2.08708708708709	2\\
2.09209209209209	2\\
2.0970970970971	2\\
2.1021021021021	2\\
2.10710710710711	2\\
2.11211211211211	2\\
2.11711711711712	2\\
2.12212212212212	2\\
2.12712712712713	2\\
2.13213213213213	2\\
2.13713713713714	2\\
2.14214214214214	2\\
2.14714714714715	2\\
2.15215215215215	2\\
2.15715715715716	2\\
2.16216216216216	2\\
2.16716716716717	2\\
2.17217217217217	2\\
2.17717717717718	2\\
2.18218218218218	2\\
2.18718718718719	2\\
2.19219219219219	2\\
2.1971971971972	2\\
2.2022022022022	2\\
2.20720720720721	2\\
2.21221221221221	2\\
2.21721721721722	2\\
2.22222222222222	2\\
2.22722722722723	2\\
2.23223223223223	2\\
2.23723723723724	2\\
2.24224224224224	2\\
2.24724724724725	2\\
2.25225225225225	2\\
2.25725725725726	2\\
2.26226226226226	2\\
2.26726726726727	2\\
2.27227227227227	2\\
2.27727727727728	2\\
2.28228228228228	2\\
2.28728728728729	2\\
2.29229229229229	2\\
2.2972972972973	2\\
2.3023023023023	2\\
2.30730730730731	2\\
2.31231231231231	2\\
2.31731731731732	2\\
2.32232232232232	2\\
2.32732732732733	2\\
2.33233233233233	2\\
2.33733733733734	2\\
2.34234234234234	2\\
2.34734734734735	2\\
2.35235235235235	2\\
2.35735735735736	2\\
2.36236236236236	2\\
2.36736736736737	2\\
2.37237237237237	2\\
2.37737737737738	2\\
2.38238238238238	2\\
2.38738738738739	2\\
2.39239239239239	2\\
2.3973973973974	2\\
2.4024024024024	2\\
2.40740740740741	2\\
2.41241241241241	2\\
2.41741741741742	2\\
2.42242242242242	2\\
2.42742742742743	2\\
2.43243243243243	2\\
2.43743743743744	2\\
2.44244244244244	2\\
2.44744744744745	2\\
2.45245245245245	2\\
2.45745745745746	2\\
2.46246246246246	2\\
2.46746746746747	2\\
2.47247247247247	2\\
2.47747747747748	2\\
2.48248248248248	2\\
2.48748748748749	2\\
2.49249249249249	2\\
2.4974974974975	2\\
2.5025025025025	2\\
2.50750750750751	2\\
2.51251251251251	2\\
2.51751751751752	2\\
2.52252252252252	2\\
2.52752752752753	2\\
2.53253253253253	2\\
2.53753753753754	2\\
2.54254254254254	2\\
2.54754754754755	2\\
2.55255255255255	2\\
2.55755755755756	2\\
2.56256256256256	2\\
2.56756756756757	2\\
2.57257257257257	2\\
2.57757757757758	2\\
2.58258258258258	2\\
2.58758758758759	2\\
2.59259259259259	2\\
2.5975975975976	2\\
2.6026026026026	2\\
2.60760760760761	2\\
2.61261261261261	2\\
2.61761761761762	2\\
2.62262262262262	2\\
2.62762762762763	2\\
2.63263263263263	2\\
2.63763763763764	2\\
2.64264264264264	2\\
2.64764764764765	2\\
2.65265265265265	2\\
2.65765765765766	2\\
2.66266266266266	2\\
2.66766766766767	2\\
2.67267267267267	2\\
2.67767767767768	2\\
2.68268268268268	2\\
2.68768768768769	2\\
2.69269269269269	2\\
2.6976976976977	2\\
2.7027027027027	2\\
2.70770770770771	2\\
2.71271271271271	2\\
2.71771771771772	2\\
2.72272272272272	2\\
2.72772772772773	2\\
2.73273273273273	2\\
2.73773773773774	2\\
2.74274274274274	2\\
2.74774774774775	2\\
2.75275275275275	2\\
2.75775775775776	2\\
2.76276276276276	2\\
2.76776776776777	2\\
2.77277277277277	2\\
2.77777777777778	2\\
2.78278278278278	2\\
2.78778778778779	2\\
2.79279279279279	2\\
2.7977977977978	2\\
2.8028028028028	2\\
2.80780780780781	2\\
2.81281281281281	2\\
2.81781781781782	2\\
2.82282282282282	2\\
2.82782782782783	2\\
2.83283283283283	2\\
2.83783783783784	2\\
2.84284284284284	2\\
2.84784784784785	2\\
2.85285285285285	2\\
2.85785785785786	2\\
2.86286286286286	2\\
2.86786786786787	2\\
2.87287287287287	2\\
2.87787787787788	2\\
2.88288288288288	2\\
2.88788788788789	2\\
2.89289289289289	2\\
2.8978978978979	2\\
2.9029029029029	2\\
2.90790790790791	2\\
2.91291291291291	2\\
2.91791791791792	2\\
2.92292292292292	2\\
2.92792792792793	2\\
2.93293293293293	2\\
2.93793793793794	2\\
2.94294294294294	2\\
2.94794794794795	2\\
2.95295295295295	2\\
2.95795795795796	2\\
2.96296296296296	2\\
2.96796796796797	2\\
2.97297297297297	2\\
2.97797797797798	2\\
2.98298298298298	2\\
2.98798798798799	2\\
2.99299299299299	2\\
2.997997997998	2\\
3.003003003003	2\\
3.00800800800801	2\\
3.01301301301301	2\\
3.01801801801802	2\\
3.02302302302302	2\\
3.02802802802803	2\\
3.03303303303303	2\\
3.03803803803804	2\\
3.04304304304304	2\\
3.04804804804805	2\\
3.05305305305305	2\\
3.05805805805806	2\\
3.06306306306306	2\\
3.06806806806807	2\\
3.07307307307307	2\\
3.07807807807808	2\\
3.08308308308308	2\\
3.08808808808809	2\\
3.09309309309309	2\\
3.0980980980981	2\\
3.1031031031031	2\\
3.10810810810811	2\\
3.11311311311311	2\\
3.11811811811812	2\\
3.12312312312312	2\\
3.12812812812813	2\\
3.13313313313313	2\\
3.13813813813814	2\\
3.14314314314314	2\\
3.14814814814815	2\\
3.15315315315315	2\\
3.15815815815816	2\\
3.16316316316316	2\\
3.16816816816817	2\\
3.17317317317317	2\\
3.17817817817818	2\\
3.18318318318318	2\\
3.18818818818819	2\\
3.19319319319319	2\\
3.1981981981982	2\\
3.2032032032032	2\\
3.20820820820821	2\\
3.21321321321321	2\\
3.21821821821822	2\\
3.22322322322322	2\\
3.22822822822823	2\\
3.23323323323323	2\\
3.23823823823824	2\\
3.24324324324324	2\\
3.24824824824825	2\\
3.25325325325325	2\\
3.25825825825826	2\\
3.26326326326326	2\\
3.26826826826827	2\\
3.27327327327327	2\\
3.27827827827828	2\\
3.28328328328328	2\\
3.28828828828829	2\\
3.29329329329329	2\\
3.2982982982983	2\\
3.3033033033033	2\\
3.30830830830831	2\\
3.31331331331331	2\\
3.31831831831832	2\\
3.32332332332332	2\\
3.32832832832833	2\\
3.33333333333333	2\\
3.33833833833834	2\\
3.34334334334334	2\\
3.34834834834835	2\\
3.35335335335335	2\\
3.35835835835836	2\\
3.36336336336336	2\\
3.36836836836837	2\\
3.37337337337337	2\\
3.37837837837838	2\\
3.38338338338338	2\\
3.38838838838839	2\\
3.39339339339339	2\\
3.3983983983984	2\\
3.4034034034034	2\\
3.40840840840841	2\\
3.41341341341341	2\\
3.41841841841842	2\\
3.42342342342342	2\\
3.42842842842843	2\\
3.43343343343343	2\\
3.43843843843844	2\\
3.44344344344344	2\\
3.44844844844845	2\\
3.45345345345345	2\\
3.45845845845846	2\\
3.46346346346346	2\\
3.46846846846847	2\\
3.47347347347347	2\\
3.47847847847848	2\\
3.48348348348348	2\\
3.48848848848849	2\\
3.49349349349349	2\\
3.4984984984985	2\\
3.5035035035035	2\\
3.50850850850851	2\\
3.51351351351351	2\\
3.51851851851852	2\\
3.52352352352352	2\\
3.52852852852853	2\\
3.53353353353353	2\\
3.53853853853854	2\\
3.54354354354354	2\\
3.54854854854855	2\\
3.55355355355355	2\\
3.55855855855856	2\\
3.56356356356356	2\\
3.56856856856857	2\\
3.57357357357357	2\\
3.57857857857858	2\\
3.58358358358358	2\\
3.58858858858859	2\\
3.59359359359359	2\\
3.5985985985986	2\\
3.6036036036036	2\\
3.60860860860861	2\\
3.61361361361361	2\\
3.61861861861862	2\\
3.62362362362362	2\\
3.62862862862863	2\\
3.63363363363363	2\\
3.63863863863864	2\\
3.64364364364364	2\\
3.64864864864865	2\\
3.65365365365365	2\\
3.65865865865866	2\\
3.66366366366366	2\\
3.66866866866867	2\\
3.67367367367367	2\\
3.67867867867868	2\\
3.68368368368368	2\\
3.68868868868869	2\\
3.69369369369369	2\\
3.6986986986987	2\\
3.7037037037037	2\\
3.70870870870871	2\\
3.71371371371371	2\\
3.71871871871872	2\\
3.72372372372372	2\\
3.72872872872873	2\\
3.73373373373373	2\\
3.73873873873874	2\\
3.74374374374374	2\\
3.74874874874875	2\\
3.75375375375375	2\\
3.75875875875876	2\\
3.76376376376376	2\\
3.76876876876877	2\\
3.77377377377377	2\\
3.77877877877878	2\\
3.78378378378378	2\\
3.78878878878879	2\\
3.79379379379379	2\\
3.7987987987988	2\\
3.8038038038038	2\\
3.80880880880881	2\\
3.81381381381381	2\\
3.81881881881882	2\\
3.82382382382382	2\\
3.82882882882883	2\\
3.83383383383383	2\\
3.83883883883884	2\\
3.84384384384384	2\\
3.84884884884885	2\\
3.85385385385385	2\\
3.85885885885886	2\\
3.86386386386386	2\\
3.86886886886887	2\\
3.87387387387387	2\\
3.87887887887888	2\\
3.88388388388388	2\\
3.88888888888889	2\\
3.89389389389389	2\\
3.8988988988989	2\\
3.9039039039039	2\\
3.90890890890891	2\\
3.91391391391391	2\\
3.91891891891892	2\\
3.92392392392392	2\\
3.92892892892893	2\\
3.93393393393393	2\\
3.93893893893894	2\\
3.94394394394394	2\\
3.94894894894895	2\\
3.95395395395395	2\\
3.95895895895896	2\\
3.96396396396396	2\\
3.96896896896897	2\\
3.97397397397397	2\\
3.97897897897898	2\\
3.98398398398398	2\\
3.98898898898899	2\\
3.99399399399399	2\\
3.998998998999	2\\
4.004004004004	2\\
4.00900900900901	2\\
4.01401401401401	2\\
4.01901901901902	2\\
4.02402402402402	2\\
4.02902902902903	2\\
4.03403403403403	2\\
4.03903903903904	2\\
4.04404404404404	2\\
4.04904904904905	2\\
4.05405405405405	2\\
4.05905905905906	2\\
4.06406406406406	2\\
4.06906906906907	2\\
4.07407407407407	2\\
4.07907907907908	2\\
4.08408408408408	2\\
4.08908908908909	2\\
4.09409409409409	2\\
4.0990990990991	2\\
4.1041041041041	2\\
4.10910910910911	2\\
4.11411411411411	2\\
4.11911911911912	2\\
4.12412412412412	2\\
4.12912912912913	2\\
4.13413413413413	2\\
4.13913913913914	2\\
4.14414414414414	2\\
4.14914914914915	2\\
4.15415415415415	2\\
4.15915915915916	2\\
4.16416416416416	2\\
4.16916916916917	2\\
4.17417417417417	2\\
4.17917917917918	2\\
4.18418418418418	2\\
4.18918918918919	2\\
4.19419419419419	2\\
4.1991991991992	2\\
4.2042042042042	2\\
4.20920920920921	2\\
4.21421421421421	2\\
4.21921921921922	2\\
4.22422422422422	2\\
4.22922922922923	2\\
4.23423423423423	2\\
4.23923923923924	2\\
4.24424424424424	2\\
4.24924924924925	2\\
4.25425425425425	2\\
4.25925925925926	2\\
4.26426426426426	2\\
4.26926926926927	2\\
4.27427427427427	2\\
4.27927927927928	2\\
4.28428428428428	2\\
4.28928928928929	2\\
4.29429429429429	2\\
4.2992992992993	2\\
4.3043043043043	2\\
4.30930930930931	2\\
4.31431431431431	2\\
4.31931931931932	2\\
4.32432432432432	2\\
4.32932932932933	2\\
4.33433433433433	2\\
4.33933933933934	2\\
4.34434434434434	2\\
4.34934934934935	2\\
4.35435435435435	2\\
4.35935935935936	2\\
4.36436436436436	2\\
4.36936936936937	2\\
4.37437437437437	2\\
4.37937937937938	2\\
4.38438438438438	2\\
4.38938938938939	2\\
4.39439439439439	2\\
4.3993993993994	2\\
4.4044044044044	2\\
4.40940940940941	2\\
4.41441441441441	2\\
4.41941941941942	2\\
4.42442442442442	2\\
4.42942942942943	2\\
4.43443443443443	2\\
4.43943943943944	2\\
4.44444444444444	2\\
4.44944944944945	2\\
4.45445445445445	2\\
4.45945945945946	2\\
4.46446446446446	2\\
4.46946946946947	2\\
4.47447447447447	2\\
4.47947947947948	2\\
4.48448448448448	2\\
4.48948948948949	2\\
4.49449449449449	2\\
4.4994994994995	2\\
4.5045045045045	2\\
4.50950950950951	2\\
4.51451451451451	2\\
4.51951951951952	2\\
4.52452452452452	2\\
4.52952952952953	2\\
4.53453453453453	2\\
4.53953953953954	2\\
4.54454454454454	2\\
4.54954954954955	2\\
4.55455455455455	2\\
4.55955955955956	2\\
4.56456456456456	2\\
4.56956956956957	2\\
4.57457457457457	2\\
4.57957957957958	2\\
4.58458458458458	2\\
4.58958958958959	2\\
4.59459459459459	2\\
4.5995995995996	2\\
4.6046046046046	2\\
4.60960960960961	2\\
4.61461461461461	2\\
4.61961961961962	2\\
4.62462462462462	2\\
4.62962962962963	2\\
4.63463463463463	2\\
4.63963963963964	2\\
4.64464464464464	2\\
4.64964964964965	2\\
4.65465465465465	2\\
4.65965965965966	2\\
4.66466466466466	2\\
4.66966966966967	2\\
4.67467467467467	2\\
4.67967967967968	2\\
4.68468468468468	2\\
4.68968968968969	2\\
4.69469469469469	2\\
4.6996996996997	2\\
4.7047047047047	2\\
4.70970970970971	2\\
4.71471471471471	2\\
4.71971971971972	2\\
4.72472472472472	2\\
4.72972972972973	2\\
4.73473473473473	2\\
4.73973973973974	2\\
4.74474474474474	2\\
4.74974974974975	2\\
4.75475475475475	2\\
4.75975975975976	2\\
4.76476476476476	2\\
4.76976976976977	2\\
4.77477477477477	2\\
4.77977977977978	2\\
4.78478478478478	2\\
4.78978978978979	2\\
4.79479479479479	2\\
4.7997997997998	2\\
4.8048048048048	2\\
4.80980980980981	2\\
4.81481481481481	2\\
4.81981981981982	2\\
4.82482482482482	2\\
4.82982982982983	2\\
4.83483483483483	2\\
4.83983983983984	2\\
4.84484484484484	2\\
4.84984984984985	2\\
4.85485485485485	2\\
4.85985985985986	2\\
4.86486486486486	2\\
4.86986986986987	2\\
4.87487487487487	2\\
4.87987987987988	2\\
4.88488488488488	2\\
4.88988988988989	2\\
4.89489489489489	2\\
4.8998998998999	2\\
4.9049049049049	2\\
4.90990990990991	2\\
4.91491491491491	2\\
4.91991991991992	2\\
4.92492492492492	2\\
4.92992992992993	2\\
4.93493493493493	2\\
4.93993993993994	2\\
4.94494494494494	2\\
4.94994994994995	2\\
4.95495495495495	2\\
4.95995995995996	2\\
4.96496496496496	2\\
4.96996996996997	2\\
4.97497497497497	2\\
4.97997997997998	2\\
4.98498498498498	2\\
4.98998998998999	2\\
4.99499499499499	2\\
5	2\\
};
\addplot [color=blue,line width=1.0pt,only marks,mark=o,mark options={solid},forget plot]
  table[row sep=crcr]{1	3.0005\\
};
\addplot [color=blue,line width=3.0pt,only marks,mark=o,mark options={solid},forget plot]
  table[row sep=crcr]{1	2\\
};

\end{axis}
}
%
% This file was created by matlab2tikz v0.4.7 running on MATLAB 8.3.
% Copyright (c) 2008--2014, Nico Schlömer <nico.schloemer@gmail.com>
% All rights reserved.
% Minimal pgfplots version: 1.3
% 
% The latest updates can be retrieved from
%   http://www.mathworks.com/matlabcentral/fileexchange/22022-matlab2tikz
% where you can also make suggestions and rate matlab2tikz.
% 
\newcommand{\bsixtygthirty}[0]{
\begin{axis}[%
width=7cm,
height=4cm,
scale only axis,
xmin=0,
xmax=5,
xtick={0,0.5,1,2,3,4,5},
xticklabels={{0},{$\frac{1}{2}$},{$1$},{2},{3},{4},{5}},
xlabel={$\alpha$},
ymin=1,
ymax=7,
ytick={1,2,3,4,5,6},
yticklabels={{1},{2},{3},{4},{5},{6}},
ylabel={$d$}
]

\node[above] at (axis cs:1.8,4.8){$\gamma+\alpha$};
\node[above] at (axis cs:4.5,5.9){$2\gamma$};
%\node[above] at (axis cs:2.2,3.2){$\max\{\alpha,\beta\}+(\gamma-\alpha)^+$};

\addplot [color=blue,solid,line width=1.0pt,forget plot]
  table[row sep=crcr]{0	4\\
0.005005005005005	4\\
0.01001001001001	4\\
0.015015015015015	4\\
0.02002002002002	4\\
0.025025025025025	4\\
0.03003003003003	4\\
0.035035035035035	4\\
0.04004004004004	4\\
0.045045045045045	4\\
0.0500500500500501	4\\
0.0550550550550551	4\\
0.0600600600600601	4\\
0.0650650650650651	4\\
0.0700700700700701	4\\
0.0750750750750751	4\\
0.0800800800800801	4\\
0.0850850850850851	4\\
0.0900900900900901	4\\
0.0950950950950951	4\\
0.1001001001001	4\\
0.105105105105105	4\\
0.11011011011011	4\\
0.115115115115115	4\\
0.12012012012012	4\\
0.125125125125125	4\\
0.13013013013013	4\\
0.135135135135135	4\\
0.14014014014014	4\\
0.145145145145145	4\\
0.15015015015015	4\\
0.155155155155155	4\\
0.16016016016016	4\\
0.165165165165165	4\\
0.17017017017017	4\\
0.175175175175175	4\\
0.18018018018018	4\\
0.185185185185185	4\\
0.19019019019019	4\\
0.195195195195195	4\\
0.2002002002002	4\\
0.205205205205205	4\\
0.21021021021021	4\\
0.215215215215215	4\\
0.22022022022022	4\\
0.225225225225225	4\\
0.23023023023023	4\\
0.235235235235235	4\\
0.24024024024024	4\\
0.245245245245245	4\\
0.25025025025025	4\\
0.255255255255255	4\\
0.26026026026026	4\\
0.265265265265265	4\\
0.27027027027027	4\\
0.275275275275275	4\\
0.28028028028028	4\\
0.285285285285285	4\\
0.29029029029029	4\\
0.295295295295295	4\\
0.3003003003003	4\\
0.305305305305305	4\\
0.31031031031031	4\\
0.315315315315315	4\\
0.32032032032032	4\\
0.325325325325325	4\\
0.33033033033033	4\\
0.335335335335335	4\\
0.34034034034034	4\\
0.345345345345345	4\\
0.35035035035035	4\\
0.355355355355355	4\\
0.36036036036036	4\\
0.365365365365365	4\\
0.37037037037037	4\\
0.375375375375375	4\\
0.38038038038038	4\\
0.385385385385385	4\\
0.39039039039039	4\\
0.395395395395395	4\\
0.4004004004004	4\\
0.405405405405405	4\\
0.41041041041041	4\\
0.415415415415415	4\\
0.42042042042042	4\\
0.425425425425425	4\\
0.43043043043043	4\\
0.435435435435435	4\\
0.44044044044044	4\\
0.445445445445445	4\\
0.45045045045045	4\\
0.455455455455455	4\\
0.46046046046046	4\\
0.465465465465465	4\\
0.47047047047047	4\\
0.475475475475475	4\\
0.48048048048048	4\\
0.485485485485485	4\\
0.49049049049049	4\\
0.495495495495495	4\\
0.500500500500501	4\\
0.505505505505506	4\\
0.510510510510511	4\\
0.515515515515516	4\\
0.520520520520521	4\\
0.525525525525526	4\\
0.530530530530531	4\\
0.535535535535536	4\\
0.540540540540541	4\\
0.545545545545546	4\\
0.550550550550551	4\\
0.555555555555556	4\\
0.560560560560561	4\\
0.565565565565566	4\\
0.570570570570571	4\\
0.575575575575576	4\\
0.580580580580581	4\\
0.585585585585586	4\\
0.590590590590591	4\\
0.595595595595596	4\\
0.600600600600601	4\\
0.605605605605606	4\\
0.610610610610611	4\\
0.615615615615616	4\\
0.620620620620621	4\\
0.625625625625626	4\\
0.630630630630631	4\\
0.635635635635636	4\\
0.640640640640641	4\\
0.645645645645646	4\\
0.650650650650651	4\\
0.655655655655656	4\\
0.660660660660661	4\\
0.665665665665666	4\\
0.670670670670671	4\\
0.675675675675676	4\\
0.680680680680681	4\\
0.685685685685686	4\\
0.690690690690691	4\\
0.695695695695696	4\\
0.700700700700701	4\\
0.705705705705706	4\\
0.710710710710711	4\\
0.715715715715716	4\\
0.720720720720721	4\\
0.725725725725726	4\\
0.730730730730731	4\\
0.735735735735736	4\\
0.740740740740741	4\\
0.745745745745746	4\\
0.750750750750751	4\\
0.755755755755756	4\\
0.760760760760761	4\\
0.765765765765766	4\\
0.770770770770771	4\\
0.775775775775776	4\\
0.780780780780781	4\\
0.785785785785786	4\\
0.790790790790791	4\\
0.795795795795796	4\\
0.800800800800801	4\\
0.805805805805806	4\\
0.810810810810811	4\\
0.815815815815816	4\\
0.820820820820821	4\\
0.825825825825826	4\\
0.830830830830831	4\\
0.835835835835836	4\\
0.840840840840841	4\\
0.845845845845846	4\\
0.850850850850851	4\\
0.855855855855856	4\\
0.860860860860861	4\\
0.865865865865866	4\\
0.870870870870871	4\\
0.875875875875876	4\\
0.880880880880881	4\\
0.885885885885886	4\\
0.890890890890891	4\\
0.895895895895896	4\\
0.900900900900901	4\\
0.905905905905906	4\\
0.910910910910911	4\\
0.915915915915916	4\\
0.920920920920921	4\\
0.925925925925926	4\\
0.930930930930931	4\\
0.935935935935936	4\\
0.940940940940941	4\\
0.945945945945946	4\\
0.950950950950951	4\\
0.955955955955956	4\\
0.960960960960961	4\\
0.965965965965966	4\\
0.970970970970971	4\\
0.975975975975976	4\\
0.980980980980981	4\\
0.985985985985986	4\\
0.990990990990991	4\\
0.995995995995996	4\\
1.001001001001	4.001001001001\\
1.00600600600601	4.00600600600601\\
1.01101101101101	4.01101101101101\\
1.01601601601602	4.01601601601602\\
1.02102102102102	4.02102102102102\\
1.02602602602603	4.02602602602603\\
1.03103103103103	4.03103103103103\\
1.03603603603604	4.03603603603604\\
1.04104104104104	4.04104104104104\\
1.04604604604605	4.04604604604605\\
1.05105105105105	4.05105105105105\\
1.05605605605606	4.05605605605606\\
1.06106106106106	4.06106106106106\\
1.06606606606607	4.06606606606607\\
1.07107107107107	4.07107107107107\\
1.07607607607608	4.07607607607608\\
1.08108108108108	4.08108108108108\\
1.08608608608609	4.08608608608609\\
1.09109109109109	4.09109109109109\\
1.0960960960961	4.0960960960961\\
1.1011011011011	4.1011011011011\\
1.10610610610611	4.10610610610611\\
1.11111111111111	4.11111111111111\\
1.11611611611612	4.11611611611612\\
1.12112112112112	4.12112112112112\\
1.12612612612613	4.12612612612613\\
1.13113113113113	4.13113113113113\\
1.13613613613614	4.13613613613614\\
1.14114114114114	4.14114114114114\\
1.14614614614615	4.14614614614615\\
1.15115115115115	4.15115115115115\\
1.15615615615616	4.15615615615616\\
1.16116116116116	4.16116116116116\\
1.16616616616617	4.16616616616617\\
1.17117117117117	4.17117117117117\\
1.17617617617618	4.17617617617618\\
1.18118118118118	4.18118118118118\\
1.18618618618619	4.18618618618619\\
1.19119119119119	4.19119119119119\\
1.1961961961962	4.1961961961962\\
1.2012012012012	4.2012012012012\\
1.20620620620621	4.20620620620621\\
1.21121121121121	4.21121121121121\\
1.21621621621622	4.21621621621622\\
1.22122122122122	4.22122122122122\\
1.22622622622623	4.22622622622623\\
1.23123123123123	4.23123123123123\\
1.23623623623624	4.23623623623624\\
1.24124124124124	4.24124124124124\\
1.24624624624625	4.24624624624625\\
1.25125125125125	4.25125125125125\\
1.25625625625626	4.25625625625626\\
1.26126126126126	4.26126126126126\\
1.26626626626627	4.26626626626627\\
1.27127127127127	4.27127127127127\\
1.27627627627628	4.27627627627628\\
1.28128128128128	4.28128128128128\\
1.28628628628629	4.28628628628629\\
1.29129129129129	4.29129129129129\\
1.2962962962963	4.2962962962963\\
1.3013013013013	4.3013013013013\\
1.30630630630631	4.30630630630631\\
1.31131131131131	4.31131131131131\\
1.31631631631632	4.31631631631632\\
1.32132132132132	4.32132132132132\\
1.32632632632633	4.32632632632633\\
1.33133133133133	4.33133133133133\\
1.33633633633634	4.33633633633634\\
1.34134134134134	4.34134134134134\\
1.34634634634635	4.34634634634635\\
1.35135135135135	4.35135135135135\\
1.35635635635636	4.35635635635636\\
1.36136136136136	4.36136136136136\\
1.36636636636637	4.36636636636637\\
1.37137137137137	4.37137137137137\\
1.37637637637638	4.37637637637638\\
1.38138138138138	4.38138138138138\\
1.38638638638639	4.38638638638639\\
1.39139139139139	4.39139139139139\\
1.3963963963964	4.3963963963964\\
1.4014014014014	4.4014014014014\\
1.40640640640641	4.40640640640641\\
1.41141141141141	4.41141141141141\\
1.41641641641642	4.41641641641642\\
1.42142142142142	4.42142142142142\\
1.42642642642643	4.42642642642643\\
1.43143143143143	4.43143143143143\\
1.43643643643644	4.43643643643644\\
1.44144144144144	4.44144144144144\\
1.44644644644645	4.44644644644645\\
1.45145145145145	4.45145145145145\\
1.45645645645646	4.45645645645646\\
1.46146146146146	4.46146146146146\\
1.46646646646647	4.46646646646647\\
1.47147147147147	4.47147147147147\\
1.47647647647648	4.47647647647648\\
1.48148148148148	4.48148148148148\\
1.48648648648649	4.48648648648649\\
1.49149149149149	4.49149149149149\\
1.4964964964965	4.4964964964965\\
1.5015015015015	4.5015015015015\\
1.50650650650651	4.50650650650651\\
1.51151151151151	4.51151151151151\\
1.51651651651652	4.51651651651652\\
1.52152152152152	4.52152152152152\\
1.52652652652653	4.52652652652653\\
1.53153153153153	4.53153153153153\\
1.53653653653654	4.53653653653654\\
1.54154154154154	4.54154154154154\\
1.54654654654655	4.54654654654655\\
1.55155155155155	4.55155155155155\\
1.55655655655656	4.55655655655656\\
1.56156156156156	4.56156156156156\\
1.56656656656657	4.56656656656657\\
1.57157157157157	4.57157157157157\\
1.57657657657658	4.57657657657658\\
1.58158158158158	4.58158158158158\\
1.58658658658659	4.58658658658659\\
1.59159159159159	4.59159159159159\\
1.5965965965966	4.5965965965966\\
1.6016016016016	4.6016016016016\\
1.60660660660661	4.60660660660661\\
1.61161161161161	4.61161161161161\\
1.61661661661662	4.61661661661662\\
1.62162162162162	4.62162162162162\\
1.62662662662663	4.62662662662663\\
1.63163163163163	4.63163163163163\\
1.63663663663664	4.63663663663664\\
1.64164164164164	4.64164164164164\\
1.64664664664665	4.64664664664665\\
1.65165165165165	4.65165165165165\\
1.65665665665666	4.65665665665666\\
1.66166166166166	4.66166166166166\\
1.66666666666667	4.66666666666667\\
1.67167167167167	4.67167167167167\\
1.67667667667668	4.67667667667668\\
1.68168168168168	4.68168168168168\\
1.68668668668669	4.68668668668669\\
1.69169169169169	4.69169169169169\\
1.6966966966967	4.6966966966967\\
1.7017017017017	4.7017017017017\\
1.70670670670671	4.70670670670671\\
1.71171171171171	4.71171171171171\\
1.71671671671672	4.71671671671672\\
1.72172172172172	4.72172172172172\\
1.72672672672673	4.72672672672673\\
1.73173173173173	4.73173173173173\\
1.73673673673674	4.73673673673674\\
1.74174174174174	4.74174174174174\\
1.74674674674675	4.74674674674675\\
1.75175175175175	4.75175175175175\\
1.75675675675676	4.75675675675676\\
1.76176176176176	4.76176176176176\\
1.76676676676677	4.76676676676677\\
1.77177177177177	4.77177177177177\\
1.77677677677678	4.77677677677678\\
1.78178178178178	4.78178178178178\\
1.78678678678679	4.78678678678679\\
1.79179179179179	4.79179179179179\\
1.7967967967968	4.7967967967968\\
1.8018018018018	4.8018018018018\\
1.80680680680681	4.80680680680681\\
1.81181181181181	4.81181181181181\\
1.81681681681682	4.81681681681682\\
1.82182182182182	4.82182182182182\\
1.82682682682683	4.82682682682683\\
1.83183183183183	4.83183183183183\\
1.83683683683684	4.83683683683684\\
1.84184184184184	4.84184184184184\\
1.84684684684685	4.84684684684685\\
1.85185185185185	4.85185185185185\\
1.85685685685686	4.85685685685686\\
1.86186186186186	4.86186186186186\\
1.86686686686687	4.86686686686687\\
1.87187187187187	4.87187187187187\\
1.87687687687688	4.87687687687688\\
1.88188188188188	4.88188188188188\\
1.88688688688689	4.88688688688689\\
1.89189189189189	4.89189189189189\\
1.8968968968969	4.8968968968969\\
1.9019019019019	4.9019019019019\\
1.90690690690691	4.90690690690691\\
1.91191191191191	4.91191191191191\\
1.91691691691692	4.91691691691692\\
1.92192192192192	4.92192192192192\\
1.92692692692693	4.92692692692693\\
1.93193193193193	4.93193193193193\\
1.93693693693694	4.93693693693694\\
1.94194194194194	4.94194194194194\\
1.94694694694695	4.94694694694695\\
1.95195195195195	4.95195195195195\\
1.95695695695696	4.95695695695696\\
1.96196196196196	4.96196196196196\\
1.96696696696697	4.96696696696697\\
1.97197197197197	4.97197197197197\\
1.97697697697698	4.97697697697698\\
1.98198198198198	4.98198198198198\\
1.98698698698699	4.98698698698699\\
1.99199199199199	4.99199199199199\\
1.996996996997	4.996996996997\\
2.002002002002	5.002002002002\\
2.00700700700701	5.00700700700701\\
2.01201201201201	5.01201201201201\\
2.01701701701702	5.01701701701702\\
2.02202202202202	5.02202202202202\\
2.02702702702703	5.02702702702703\\
2.03203203203203	5.03203203203203\\
2.03703703703704	5.03703703703704\\
2.04204204204204	5.04204204204204\\
2.04704704704705	5.04704704704705\\
2.05205205205205	5.05205205205205\\
2.05705705705706	5.05705705705706\\
2.06206206206206	5.06206206206206\\
2.06706706706707	5.06706706706707\\
2.07207207207207	5.07207207207207\\
2.07707707707708	5.07707707707708\\
2.08208208208208	5.08208208208208\\
2.08708708708709	5.08708708708709\\
2.09209209209209	5.09209209209209\\
2.0970970970971	5.0970970970971\\
2.1021021021021	5.1021021021021\\
2.10710710710711	5.10710710710711\\
2.11211211211211	5.11211211211211\\
2.11711711711712	5.11711711711712\\
2.12212212212212	5.12212212212212\\
2.12712712712713	5.12712712712713\\
2.13213213213213	5.13213213213213\\
2.13713713713714	5.13713713713714\\
2.14214214214214	5.14214214214214\\
2.14714714714715	5.14714714714715\\
2.15215215215215	5.15215215215215\\
2.15715715715716	5.15715715715716\\
2.16216216216216	5.16216216216216\\
2.16716716716717	5.16716716716717\\
2.17217217217217	5.17217217217217\\
2.17717717717718	5.17717717717718\\
2.18218218218218	5.18218218218218\\
2.18718718718719	5.18718718718719\\
2.19219219219219	5.19219219219219\\
2.1971971971972	5.1971971971972\\
2.2022022022022	5.2022022022022\\
2.20720720720721	5.20720720720721\\
2.21221221221221	5.21221221221221\\
2.21721721721722	5.21721721721722\\
2.22222222222222	5.22222222222222\\
2.22722722722723	5.22722722722723\\
2.23223223223223	5.23223223223223\\
2.23723723723724	5.23723723723724\\
2.24224224224224	5.24224224224224\\
2.24724724724725	5.24724724724725\\
2.25225225225225	5.25225225225225\\
2.25725725725726	5.25725725725726\\
2.26226226226226	5.26226226226226\\
2.26726726726727	5.26726726726727\\
2.27227227227227	5.27227227227227\\
2.27727727727728	5.27727727727728\\
2.28228228228228	5.28228228228228\\
2.28728728728729	5.28728728728729\\
2.29229229229229	5.29229229229229\\
2.2972972972973	5.2972972972973\\
2.3023023023023	5.3023023023023\\
2.30730730730731	5.30730730730731\\
2.31231231231231	5.31231231231231\\
2.31731731731732	5.31731731731732\\
2.32232232232232	5.32232232232232\\
2.32732732732733	5.32732732732733\\
2.33233233233233	5.33233233233233\\
2.33733733733734	5.33733733733734\\
2.34234234234234	5.34234234234234\\
2.34734734734735	5.34734734734735\\
2.35235235235235	5.35235235235235\\
2.35735735735736	5.35735735735736\\
2.36236236236236	5.36236236236236\\
2.36736736736737	5.36736736736737\\
2.37237237237237	5.37237237237237\\
2.37737737737738	5.37737737737738\\
2.38238238238238	5.38238238238238\\
2.38738738738739	5.38738738738739\\
2.39239239239239	5.39239239239239\\
2.3973973973974	5.3973973973974\\
2.4024024024024	5.4024024024024\\
2.40740740740741	5.40740740740741\\
2.41241241241241	5.41241241241241\\
2.41741741741742	5.41741741741742\\
2.42242242242242	5.42242242242242\\
2.42742742742743	5.42742742742743\\
2.43243243243243	5.43243243243243\\
2.43743743743744	5.43743743743744\\
2.44244244244244	5.44244244244244\\
2.44744744744745	5.44744744744745\\
2.45245245245245	5.45245245245245\\
2.45745745745746	5.45745745745746\\
2.46246246246246	5.46246246246246\\
2.46746746746747	5.46746746746747\\
2.47247247247247	5.47247247247247\\
2.47747747747748	5.47747747747748\\
2.48248248248248	5.48248248248248\\
2.48748748748749	5.48748748748749\\
2.49249249249249	5.49249249249249\\
2.4974974974975	5.4974974974975\\
2.5025025025025	5.5025025025025\\
2.50750750750751	5.50750750750751\\
2.51251251251251	5.51251251251251\\
2.51751751751752	5.51751751751752\\
2.52252252252252	5.52252252252252\\
2.52752752752753	5.52752752752753\\
2.53253253253253	5.53253253253253\\
2.53753753753754	5.53753753753754\\
2.54254254254254	5.54254254254254\\
2.54754754754755	5.54754754754755\\
2.55255255255255	5.55255255255255\\
2.55755755755756	5.55755755755756\\
2.56256256256256	5.56256256256256\\
2.56756756756757	5.56756756756757\\
2.57257257257257	5.57257257257257\\
2.57757757757758	5.57757757757758\\
2.58258258258258	5.58258258258258\\
2.58758758758759	5.58758758758759\\
2.59259259259259	5.59259259259259\\
2.5975975975976	5.5975975975976\\
2.6026026026026	5.6026026026026\\
2.60760760760761	5.60760760760761\\
2.61261261261261	5.61261261261261\\
2.61761761761762	5.61761761761762\\
2.62262262262262	5.62262262262262\\
2.62762762762763	5.62762762762763\\
2.63263263263263	5.63263263263263\\
2.63763763763764	5.63763763763764\\
2.64264264264264	5.64264264264264\\
2.64764764764765	5.64764764764765\\
2.65265265265265	5.65265265265265\\
2.65765765765766	5.65765765765766\\
2.66266266266266	5.66266266266266\\
2.66766766766767	5.66766766766767\\
2.67267267267267	5.67267267267267\\
2.67767767767768	5.67767767767768\\
2.68268268268268	5.68268268268268\\
2.68768768768769	5.68768768768769\\
2.69269269269269	5.69269269269269\\
2.6976976976977	5.6976976976977\\
2.7027027027027	5.7027027027027\\
2.70770770770771	5.70770770770771\\
2.71271271271271	5.71271271271271\\
2.71771771771772	5.71771771771772\\
2.72272272272272	5.72272272272272\\
2.72772772772773	5.72772772772773\\
2.73273273273273	5.73273273273273\\
2.73773773773774	5.73773773773774\\
2.74274274274274	5.74274274274274\\
2.74774774774775	5.74774774774775\\
2.75275275275275	5.75275275275275\\
2.75775775775776	5.75775775775776\\
2.76276276276276	5.76276276276276\\
2.76776776776777	5.76776776776777\\
2.77277277277277	5.77277277277277\\
2.77777777777778	5.77777777777778\\
2.78278278278278	5.78278278278278\\
2.78778778778779	5.78778778778779\\
2.79279279279279	5.79279279279279\\
2.7977977977978	5.7977977977978\\
2.8028028028028	5.8028028028028\\
2.80780780780781	5.80780780780781\\
2.81281281281281	5.81281281281281\\
2.81781781781782	5.81781781781782\\
2.82282282282282	5.82282282282282\\
2.82782782782783	5.82782782782783\\
2.83283283283283	5.83283283283283\\
2.83783783783784	5.83783783783784\\
2.84284284284284	5.84284284284284\\
2.84784784784785	5.84784784784785\\
2.85285285285285	5.85285285285285\\
2.85785785785786	5.85785785785786\\
2.86286286286286	5.86286286286286\\
2.86786786786787	5.86786786786787\\
2.87287287287287	5.87287287287287\\
2.87787787787788	5.87787787787788\\
2.88288288288288	5.88288288288288\\
2.88788788788789	5.88788788788789\\
2.89289289289289	5.89289289289289\\
2.8978978978979	5.8978978978979\\
2.9029029029029	5.9029029029029\\
2.90790790790791	5.90790790790791\\
2.91291291291291	5.91291291291291\\
2.91791791791792	5.91791791791792\\
2.92292292292292	5.92292292292292\\
2.92792792792793	5.92792792792793\\
2.93293293293293	5.93293293293293\\
2.93793793793794	5.93793793793794\\
2.94294294294294	5.94294294294294\\
2.94794794794795	5.94794794794795\\
2.95295295295295	5.95295295295295\\
2.95795795795796	5.95795795795796\\
2.96296296296296	5.96296296296296\\
2.96796796796797	5.96796796796797\\
2.97297297297297	5.97297297297297\\
2.97797797797798	5.97797797797798\\
2.98298298298298	5.98298298298298\\
2.98798798798799	5.98798798798799\\
2.99299299299299	5.99299299299299\\
2.997997997998	5.997997997998\\
3.003003003003	6\\
3.00800800800801	6\\
3.01301301301301	6\\
3.01801801801802	6\\
3.02302302302302	6\\
3.02802802802803	6\\
3.03303303303303	6\\
3.03803803803804	6\\
3.04304304304304	6\\
3.04804804804805	6\\
3.05305305305305	6\\
3.05805805805806	6\\
3.06306306306306	6\\
3.06806806806807	6\\
3.07307307307307	6\\
3.07807807807808	6\\
3.08308308308308	6\\
3.08808808808809	6\\
3.09309309309309	6\\
3.0980980980981	6\\
3.1031031031031	6\\
3.10810810810811	6\\
3.11311311311311	6\\
3.11811811811812	6\\
3.12312312312312	6\\
3.12812812812813	6\\
3.13313313313313	6\\
3.13813813813814	6\\
3.14314314314314	6\\
3.14814814814815	6\\
3.15315315315315	6\\
3.15815815815816	6\\
3.16316316316316	6\\
3.16816816816817	6\\
3.17317317317317	6\\
3.17817817817818	6\\
3.18318318318318	6\\
3.18818818818819	6\\
3.19319319319319	6\\
3.1981981981982	6\\
3.2032032032032	6\\
3.20820820820821	6\\
3.21321321321321	6\\
3.21821821821822	6\\
3.22322322322322	6\\
3.22822822822823	6\\
3.23323323323323	6\\
3.23823823823824	6\\
3.24324324324324	6\\
3.24824824824825	6\\
3.25325325325325	6\\
3.25825825825826	6\\
3.26326326326326	6\\
3.26826826826827	6\\
3.27327327327327	6\\
3.27827827827828	6\\
3.28328328328328	6\\
3.28828828828829	6\\
3.29329329329329	6\\
3.2982982982983	6\\
3.3033033033033	6\\
3.30830830830831	6\\
3.31331331331331	6\\
3.31831831831832	6\\
3.32332332332332	6\\
3.32832832832833	6\\
3.33333333333333	6\\
3.33833833833834	6\\
3.34334334334334	6\\
3.34834834834835	6\\
3.35335335335335	6\\
3.35835835835836	6\\
3.36336336336336	6\\
3.36836836836837	6\\
3.37337337337337	6\\
3.37837837837838	6\\
3.38338338338338	6\\
3.38838838838839	6\\
3.39339339339339	6\\
3.3983983983984	6\\
3.4034034034034	6\\
3.40840840840841	6\\
3.41341341341341	6\\
3.41841841841842	6\\
3.42342342342342	6\\
3.42842842842843	6\\
3.43343343343343	6\\
3.43843843843844	6\\
3.44344344344344	6\\
3.44844844844845	6\\
3.45345345345345	6\\
3.45845845845846	6\\
3.46346346346346	6\\
3.46846846846847	6\\
3.47347347347347	6\\
3.47847847847848	6\\
3.48348348348348	6\\
3.48848848848849	6\\
3.49349349349349	6\\
3.4984984984985	6\\
3.5035035035035	6\\
3.50850850850851	6\\
3.51351351351351	6\\
3.51851851851852	6\\
3.52352352352352	6\\
3.52852852852853	6\\
3.53353353353353	6\\
3.53853853853854	6\\
3.54354354354354	6\\
3.54854854854855	6\\
3.55355355355355	6\\
3.55855855855856	6\\
3.56356356356356	6\\
3.56856856856857	6\\
3.57357357357357	6\\
3.57857857857858	6\\
3.58358358358358	6\\
3.58858858858859	6\\
3.59359359359359	6\\
3.5985985985986	6\\
3.6036036036036	6\\
3.60860860860861	6\\
3.61361361361361	6\\
3.61861861861862	6\\
3.62362362362362	6\\
3.62862862862863	6\\
3.63363363363363	6\\
3.63863863863864	6\\
3.64364364364364	6\\
3.64864864864865	6\\
3.65365365365365	6\\
3.65865865865866	6\\
3.66366366366366	6\\
3.66866866866867	6\\
3.67367367367367	6\\
3.67867867867868	6\\
3.68368368368368	6\\
3.68868868868869	6\\
3.69369369369369	6\\
3.6986986986987	6\\
3.7037037037037	6\\
3.70870870870871	6\\
3.71371371371371	6\\
3.71871871871872	6\\
3.72372372372372	6\\
3.72872872872873	6\\
3.73373373373373	6\\
3.73873873873874	6\\
3.74374374374374	6\\
3.74874874874875	6\\
3.75375375375375	6\\
3.75875875875876	6\\
3.76376376376376	6\\
3.76876876876877	6\\
3.77377377377377	6\\
3.77877877877878	6\\
3.78378378378378	6\\
3.78878878878879	6\\
3.79379379379379	6\\
3.7987987987988	6\\
3.8038038038038	6\\
3.80880880880881	6\\
3.81381381381381	6\\
3.81881881881882	6\\
3.82382382382382	6\\
3.82882882882883	6\\
3.83383383383383	6\\
3.83883883883884	6\\
3.84384384384384	6\\
3.84884884884885	6\\
3.85385385385385	6\\
3.85885885885886	6\\
3.86386386386386	6\\
3.86886886886887	6\\
3.87387387387387	6\\
3.87887887887888	6\\
3.88388388388388	6\\
3.88888888888889	6\\
3.89389389389389	6\\
3.8988988988989	6\\
3.9039039039039	6\\
3.90890890890891	6\\
3.91391391391391	6\\
3.91891891891892	6\\
3.92392392392392	6\\
3.92892892892893	6\\
3.93393393393393	6\\
3.93893893893894	6\\
3.94394394394394	6\\
3.94894894894895	6\\
3.95395395395395	6\\
3.95895895895896	6\\
3.96396396396396	6\\
3.96896896896897	6\\
3.97397397397397	6\\
3.97897897897898	6\\
3.98398398398398	6\\
3.98898898898899	6\\
3.99399399399399	6\\
3.998998998999	6\\
4.004004004004	6\\
4.00900900900901	6\\
4.01401401401401	6\\
4.01901901901902	6\\
4.02402402402402	6\\
4.02902902902903	6\\
4.03403403403403	6\\
4.03903903903904	6\\
4.04404404404404	6\\
4.04904904904905	6\\
4.05405405405405	6\\
4.05905905905906	6\\
4.06406406406406	6\\
4.06906906906907	6\\
4.07407407407407	6\\
4.07907907907908	6\\
4.08408408408408	6\\
4.08908908908909	6\\
4.09409409409409	6\\
4.0990990990991	6\\
4.1041041041041	6\\
4.10910910910911	6\\
4.11411411411411	6\\
4.11911911911912	6\\
4.12412412412412	6\\
4.12912912912913	6\\
4.13413413413413	6\\
4.13913913913914	6\\
4.14414414414414	6\\
4.14914914914915	6\\
4.15415415415415	6\\
4.15915915915916	6\\
4.16416416416416	6\\
4.16916916916917	6\\
4.17417417417417	6\\
4.17917917917918	6\\
4.18418418418418	6\\
4.18918918918919	6\\
4.19419419419419	6\\
4.1991991991992	6\\
4.2042042042042	6\\
4.20920920920921	6\\
4.21421421421421	6\\
4.21921921921922	6\\
4.22422422422422	6\\
4.22922922922923	6\\
4.23423423423423	6\\
4.23923923923924	6\\
4.24424424424424	6\\
4.24924924924925	6\\
4.25425425425425	6\\
4.25925925925926	6\\
4.26426426426426	6\\
4.26926926926927	6\\
4.27427427427427	6\\
4.27927927927928	6\\
4.28428428428428	6\\
4.28928928928929	6\\
4.29429429429429	6\\
4.2992992992993	6\\
4.3043043043043	6\\
4.30930930930931	6\\
4.31431431431431	6\\
4.31931931931932	6\\
4.32432432432432	6\\
4.32932932932933	6\\
4.33433433433433	6\\
4.33933933933934	6\\
4.34434434434434	6\\
4.34934934934935	6\\
4.35435435435435	6\\
4.35935935935936	6\\
4.36436436436436	6\\
4.36936936936937	6\\
4.37437437437437	6\\
4.37937937937938	6\\
4.38438438438438	6\\
4.38938938938939	6\\
4.39439439439439	6\\
4.3993993993994	6\\
4.4044044044044	6\\
4.40940940940941	6\\
4.41441441441441	6\\
4.41941941941942	6\\
4.42442442442442	6\\
4.42942942942943	6\\
4.43443443443443	6\\
4.43943943943944	6\\
4.44444444444444	6\\
4.44944944944945	6\\
4.45445445445445	6\\
4.45945945945946	6\\
4.46446446446446	6\\
4.46946946946947	6\\
4.47447447447447	6\\
4.47947947947948	6\\
4.48448448448448	6\\
4.48948948948949	6\\
4.49449449449449	6\\
4.4994994994995	6\\
4.5045045045045	6\\
4.50950950950951	6\\
4.51451451451451	6\\
4.51951951951952	6\\
4.52452452452452	6\\
4.52952952952953	6\\
4.53453453453453	6\\
4.53953953953954	6\\
4.54454454454454	6\\
4.54954954954955	6\\
4.55455455455455	6\\
4.55955955955956	6\\
4.56456456456456	6\\
4.56956956956957	6\\
4.57457457457457	6\\
4.57957957957958	6\\
4.58458458458458	6\\
4.58958958958959	6\\
4.59459459459459	6\\
4.5995995995996	6\\
4.6046046046046	6\\
4.60960960960961	6\\
4.61461461461461	6\\
4.61961961961962	6\\
4.62462462462462	6\\
4.62962962962963	6\\
4.63463463463463	6\\
4.63963963963964	6\\
4.64464464464464	6\\
4.64964964964965	6\\
4.65465465465465	6\\
4.65965965965966	6\\
4.66466466466466	6\\
4.66966966966967	6\\
4.67467467467467	6\\
4.67967967967968	6\\
4.68468468468468	6\\
4.68968968968969	6\\
4.69469469469469	6\\
4.6996996996997	6\\
4.7047047047047	6\\
4.70970970970971	6\\
4.71471471471471	6\\
4.71971971971972	6\\
4.72472472472472	6\\
4.72972972972973	6\\
4.73473473473473	6\\
4.73973973973974	6\\
4.74474474474474	6\\
4.74974974974975	6\\
4.75475475475475	6\\
4.75975975975976	6\\
4.76476476476476	6\\
4.76976976976977	6\\
4.77477477477477	6\\
4.77977977977978	6\\
4.78478478478478	6\\
4.78978978978979	6\\
4.79479479479479	6\\
4.7997997997998	6\\
4.8048048048048	6\\
4.80980980980981	6\\
4.81481481481481	6\\
4.81981981981982	6\\
4.82482482482482	6\\
4.82982982982983	6\\
4.83483483483483	6\\
4.83983983983984	6\\
4.84484484484484	6\\
4.84984984984985	6\\
4.85485485485485	6\\
4.85985985985986	6\\
4.86486486486486	6\\
4.86986986986987	6\\
4.87487487487487	6\\
4.87987987987988	6\\
4.88488488488488	6\\
4.88988988988989	6\\
4.89489489489489	6\\
4.8998998998999	6\\
4.9049049049049	6\\
4.90990990990991	6\\
4.91491491491491	6\\
4.91991991991992	6\\
4.92492492492492	6\\
4.92992992992993	6\\
4.93493493493493	6\\
4.93993993993994	6\\
4.94494494494494	6\\
4.94994994994995	6\\
4.95495495495495	6\\
4.95995995995996	6\\
4.96496496496496	6\\
4.96996996996997	6\\
4.97497497497497	6\\
4.97997997997998	6\\
4.98498498498498	6\\
4.98998998998999	6\\
4.99499499499499	6\\
5	6\\
};
\addplot [color=red,dashed,line width=1.0pt,forget plot]
  table[row sep=crcr]{0	2\\
0.005005005005005	1.98998998998999\\
0.01001001001001	1.97997997997998\\
0.015015015015015	1.96996996996997\\
0.02002002002002	1.95995995995996\\
0.025025025025025	1.94994994994995\\
0.03003003003003	1.93993993993994\\
0.035035035035035	1.92992992992993\\
0.04004004004004	1.91991991991992\\
0.045045045045045	1.90990990990991\\
0.0500500500500501	1.8998998998999\\
0.0550550550550551	1.88988988988989\\
0.0600600600600601	1.87987987987988\\
0.0650650650650651	1.86986986986987\\
0.0700700700700701	1.85985985985986\\
0.0750750750750751	1.84984984984985\\
0.0800800800800801	1.83983983983984\\
0.0850850850850851	1.82982982982983\\
0.0900900900900901	1.81981981981982\\
0.0950950950950951	1.80980980980981\\
0.1001001001001	1.7997997997998\\
0.105105105105105	1.78978978978979\\
0.11011011011011	1.77977977977978\\
0.115115115115115	1.76976976976977\\
0.12012012012012	1.75975975975976\\
0.125125125125125	1.74974974974975\\
0.13013013013013	1.73973973973974\\
0.135135135135135	1.72972972972973\\
0.14014014014014	1.71971971971972\\
0.145145145145145	1.70970970970971\\
0.15015015015015	1.6996996996997\\
0.155155155155155	1.68968968968969\\
0.16016016016016	1.67967967967968\\
0.165165165165165	1.66966966966967\\
0.17017017017017	1.65965965965966\\
0.175175175175175	1.64964964964965\\
0.18018018018018	1.63963963963964\\
0.185185185185185	1.62962962962963\\
0.19019019019019	1.61961961961962\\
0.195195195195195	1.60960960960961\\
0.2002002002002	1.5995995995996\\
0.205205205205205	1.58958958958959\\
0.21021021021021	1.57957957957958\\
0.215215215215215	1.56956956956957\\
0.22022022022022	1.55955955955956\\
0.225225225225225	1.54954954954955\\
0.23023023023023	1.53953953953954\\
0.235235235235235	1.52952952952953\\
0.24024024024024	1.51951951951952\\
0.245245245245245	1.50950950950951\\
0.25025025025025	1.4994994994995\\
0.255255255255255	1.48948948948949\\
0.26026026026026	1.47947947947948\\
0.265265265265265	1.46946946946947\\
0.27027027027027	1.45945945945946\\
0.275275275275275	1.44944944944945\\
0.28028028028028	1.43943943943944\\
0.285285285285285	1.42942942942943\\
0.29029029029029	1.41941941941942\\
0.295295295295295	1.40940940940941\\
0.3003003003003	1.3993993993994\\
0.305305305305305	1.38938938938939\\
0.31031031031031	1.37937937937938\\
0.315315315315315	1.36936936936937\\
0.32032032032032	1.35935935935936\\
0.325325325325325	1.34934934934935\\
0.33033033033033	1.33933933933934\\
0.335335335335335	1.32932932932933\\
0.34034034034034	1.31931931931932\\
0.345345345345345	1.30930930930931\\
0.35035035035035	1.2992992992993\\
0.355355355355355	1.28928928928929\\
0.36036036036036	1.27927927927928\\
0.365365365365365	1.26926926926927\\
0.37037037037037	1.25925925925926\\
0.375375375375375	1.24924924924925\\
0.38038038038038	1.23923923923924\\
0.385385385385385	1.22922922922923\\
0.39039039039039	1.21921921921922\\
0.395395395395395	1.20920920920921\\
0.4004004004004	1.1991991991992\\
0.405405405405405	1.18918918918919\\
0.41041041041041	1.17917917917918\\
0.415415415415415	1.16916916916917\\
0.42042042042042	1.15915915915916\\
0.425425425425425	1.14914914914915\\
0.43043043043043	1.13913913913914\\
0.435435435435435	1.12912912912913\\
0.44044044044044	1.11911911911912\\
0.445445445445445	1.10910910910911\\
0.45045045045045	1.0990990990991\\
0.455455455455455	1.08908908908909\\
0.46046046046046	1.07907907907908\\
0.465465465465465	1.06906906906907\\
0.47047047047047	1.05905905905906\\
0.475475475475475	1.04904904904905\\
0.48048048048048	1.03903903903904\\
0.485485485485485	1.02902902902903\\
0.49049049049049	1.01901901901902\\
0.495495495495495	1.00900900900901\\
0.500500500500501	1.001001001001\\
0.505505505505506	1.01101101101101\\
0.510510510510511	1.02102102102102\\
0.515515515515516	1.03103103103103\\
0.520520520520521	1.04104104104104\\
0.525525525525526	1.05105105105105\\
0.530530530530531	1.06106106106106\\
0.535535535535536	1.07107107107107\\
0.540540540540541	1.08108108108108\\
0.545545545545546	1.09109109109109\\
0.550550550550551	1.1011011011011\\
0.555555555555556	1.11111111111111\\
0.560560560560561	1.12112112112112\\
0.565565565565566	1.13113113113113\\
0.570570570570571	1.14114114114114\\
0.575575575575576	1.15115115115115\\
0.580580580580581	1.16116116116116\\
0.585585585585586	1.17117117117117\\
0.590590590590591	1.18118118118118\\
0.595595595595596	1.19119119119119\\
0.600600600600601	1.2012012012012\\
0.605605605605606	1.21121121121121\\
0.610610610610611	1.22122122122122\\
0.615615615615616	1.23123123123123\\
0.620620620620621	1.24124124124124\\
0.625625625625626	1.25125125125125\\
0.630630630630631	1.26126126126126\\
0.635635635635636	1.27127127127127\\
0.640640640640641	1.28128128128128\\
0.645645645645646	1.29129129129129\\
0.650650650650651	1.3013013013013\\
0.655655655655656	1.31131131131131\\
0.660660660660661	1.32132132132132\\
0.665665665665666	1.33133133133133\\
0.670670670670671	1.32932932932933\\
0.675675675675676	1.32432432432432\\
0.680680680680681	1.31931931931932\\
0.685685685685686	1.31431431431431\\
0.690690690690691	1.30930930930931\\
0.695695695695696	1.3043043043043\\
0.700700700700701	1.2992992992993\\
0.705705705705706	1.29429429429429\\
0.710710710710711	1.28928928928929\\
0.715715715715716	1.28428428428428\\
0.720720720720721	1.27927927927928\\
0.725725725725726	1.27427427427427\\
0.730730730730731	1.26926926926927\\
0.735735735735736	1.26426426426426\\
0.740740740740741	1.25925925925926\\
0.745745745745746	1.25425425425425\\
0.750750750750751	1.24924924924925\\
0.755755755755756	1.24424424424424\\
0.760760760760761	1.23923923923924\\
0.765765765765766	1.23423423423423\\
0.770770770770771	1.22922922922923\\
0.775775775775776	1.22422422422422\\
0.780780780780781	1.21921921921922\\
0.785785785785786	1.21421421421421\\
0.790790790790791	1.20920920920921\\
0.795795795795796	1.2042042042042\\
0.800800800800801	1.1991991991992\\
0.805805805805806	1.19419419419419\\
0.810810810810811	1.18918918918919\\
0.815815815815816	1.18418418418418\\
0.820820820820821	1.17917917917918\\
0.825825825825826	1.17417417417417\\
0.830830830830831	1.16916916916917\\
0.835835835835836	1.16416416416416\\
0.840840840840841	1.15915915915916\\
0.845845845845846	1.15415415415415\\
0.850850850850851	1.14914914914915\\
0.855855855855856	1.14414414414414\\
0.860860860860861	1.13913913913914\\
0.865865865865866	1.13413413413413\\
0.870870870870871	1.12912912912913\\
0.875875875875876	1.12412412412412\\
0.880880880880881	1.11911911911912\\
0.885885885885886	1.11411411411411\\
0.890890890890891	1.10910910910911\\
0.895895895895896	1.1041041041041\\
0.900900900900901	1.0990990990991\\
0.905905905905906	1.09409409409409\\
0.910910910910911	1.08908908908909\\
0.915915915915916	1.08408408408408\\
0.920920920920921	1.07907907907908\\
0.925925925925926	1.07407407407407\\
0.930930930930931	1.06906906906907\\
0.935935935935936	1.06406406406406\\
0.940940940940941	1.05905905905906\\
0.945945945945946	1.05405405405405\\
0.950950950950951	1.04904904904905\\
0.955955955955956	1.04404404404404\\
0.960960960960961	1.03903903903904\\
0.965965965965966	1.03403403403403\\
0.970970970970971	1.02902902902903\\
0.975975975975976	1.02402402402402\\
0.980980980980981	1.01901901901902\\
0.985985985985986	1.01401401401401\\
0.990990990990991	1.00900900900901\\
0.995995995995996	1.004004004004\\
1.001001001001	1.001001001001\\
1.00600600600601	1.00600600600601\\
1.01101101101101	1.01101101101101\\
1.01601601601602	1.01601601601602\\
1.02102102102102	1.02102102102102\\
1.02602602602603	1.02602602602603\\
1.03103103103103	1.03103103103103\\
1.03603603603604	1.03603603603604\\
1.04104104104104	1.04104104104104\\
1.04604604604605	1.04604604604605\\
1.05105105105105	1.05105105105105\\
1.05605605605606	1.05605605605606\\
1.06106106106106	1.06106106106106\\
1.06606606606607	1.06606606606607\\
1.07107107107107	1.07107107107107\\
1.07607607607608	1.07607607607608\\
1.08108108108108	1.08108108108108\\
1.08608608608609	1.08608608608609\\
1.09109109109109	1.09109109109109\\
1.0960960960961	1.0960960960961\\
1.1011011011011	1.1011011011011\\
1.10610610610611	1.10610610610611\\
1.11111111111111	1.11111111111111\\
1.11611611611612	1.11611611611612\\
1.12112112112112	1.12112112112112\\
1.12612612612613	1.12612612612613\\
1.13113113113113	1.13113113113113\\
1.13613613613614	1.13613613613614\\
1.14114114114114	1.14114114114114\\
1.14614614614615	1.14614614614615\\
1.15115115115115	1.15115115115115\\
1.15615615615616	1.15615615615616\\
1.16116116116116	1.16116116116116\\
1.16616616616617	1.16616616616617\\
1.17117117117117	1.17117117117117\\
1.17617617617618	1.17617617617618\\
1.18118118118118	1.18118118118118\\
1.18618618618619	1.18618618618619\\
1.19119119119119	1.19119119119119\\
1.1961961961962	1.1961961961962\\
1.2012012012012	1.2012012012012\\
1.20620620620621	1.20620620620621\\
1.21121121121121	1.21121121121121\\
1.21621621621622	1.21621621621622\\
1.22122122122122	1.22122122122122\\
1.22622622622623	1.22622622622623\\
1.23123123123123	1.23123123123123\\
1.23623623623624	1.23623623623624\\
1.24124124124124	1.24124124124124\\
1.24624624624625	1.24624624624625\\
1.25125125125125	1.25125125125125\\
1.25625625625626	1.25625625625626\\
1.26126126126126	1.26126126126126\\
1.26626626626627	1.26626626626627\\
1.27127127127127	1.27127127127127\\
1.27627627627628	1.27627627627628\\
1.28128128128128	1.28128128128128\\
1.28628628628629	1.28628628628629\\
1.29129129129129	1.29129129129129\\
1.2962962962963	1.2962962962963\\
1.3013013013013	1.3013013013013\\
1.30630630630631	1.30630630630631\\
1.31131131131131	1.31131131131131\\
1.31631631631632	1.31631631631632\\
1.32132132132132	1.32132132132132\\
1.32632632632633	1.32632632632633\\
1.33133133133133	1.33133133133133\\
1.33633633633634	1.33633633633634\\
1.34134134134134	1.34134134134134\\
1.34634634634635	1.34634634634635\\
1.35135135135135	1.35135135135135\\
1.35635635635636	1.35635635635636\\
1.36136136136136	1.36136136136136\\
1.36636636636637	1.36636636636637\\
1.37137137137137	1.37137137137137\\
1.37637637637638	1.37637637637638\\
1.38138138138138	1.38138138138138\\
1.38638638638639	1.38638638638639\\
1.39139139139139	1.39139139139139\\
1.3963963963964	1.3963963963964\\
1.4014014014014	1.4014014014014\\
1.40640640640641	1.40640640640641\\
1.41141141141141	1.41141141141141\\
1.41641641641642	1.41641641641642\\
1.42142142142142	1.42142142142142\\
1.42642642642643	1.42642642642643\\
1.43143143143143	1.43143143143143\\
1.43643643643644	1.43643643643644\\
1.44144144144144	1.44144144144144\\
1.44644644644645	1.44644644644645\\
1.45145145145145	1.45145145145145\\
1.45645645645646	1.45645645645646\\
1.46146146146146	1.46146146146146\\
1.46646646646647	1.46646646646647\\
1.47147147147147	1.47147147147147\\
1.47647647647648	1.47647647647648\\
1.48148148148148	1.48148148148148\\
1.48648648648649	1.48648648648649\\
1.49149149149149	1.49149149149149\\
1.4964964964965	1.4964964964965\\
1.5015015015015	1.5015015015015\\
1.50650650650651	1.50650650650651\\
1.51151151151151	1.51151151151151\\
1.51651651651652	1.51651651651652\\
1.52152152152152	1.52152152152152\\
1.52652652652653	1.52652652652653\\
1.53153153153153	1.53153153153153\\
1.53653653653654	1.53653653653654\\
1.54154154154154	1.54154154154154\\
1.54654654654655	1.54654654654655\\
1.55155155155155	1.55155155155155\\
1.55655655655656	1.55655655655656\\
1.56156156156156	1.56156156156156\\
1.56656656656657	1.56656656656657\\
1.57157157157157	1.57157157157157\\
1.57657657657658	1.57657657657658\\
1.58158158158158	1.58158158158158\\
1.58658658658659	1.58658658658659\\
1.59159159159159	1.59159159159159\\
1.5965965965966	1.5965965965966\\
1.6016016016016	1.6016016016016\\
1.60660660660661	1.60660660660661\\
1.61161161161161	1.61161161161161\\
1.61661661661662	1.61661661661662\\
1.62162162162162	1.62162162162162\\
1.62662662662663	1.62662662662663\\
1.63163163163163	1.63163163163163\\
1.63663663663664	1.63663663663664\\
1.64164164164164	1.64164164164164\\
1.64664664664665	1.64664664664665\\
1.65165165165165	1.65165165165165\\
1.65665665665666	1.65665665665666\\
1.66166166166166	1.66166166166166\\
1.66666666666667	1.66666666666667\\
1.67167167167167	1.67167167167167\\
1.67667667667668	1.67667667667668\\
1.68168168168168	1.68168168168168\\
1.68668668668669	1.68668668668669\\
1.69169169169169	1.69169169169169\\
1.6966966966967	1.6966966966967\\
1.7017017017017	1.7017017017017\\
1.70670670670671	1.70670670670671\\
1.71171171171171	1.71171171171171\\
1.71671671671672	1.71671671671672\\
1.72172172172172	1.72172172172172\\
1.72672672672673	1.72672672672673\\
1.73173173173173	1.73173173173173\\
1.73673673673674	1.73673673673674\\
1.74174174174174	1.74174174174174\\
1.74674674674675	1.74674674674675\\
1.75175175175175	1.75175175175175\\
1.75675675675676	1.75675675675676\\
1.76176176176176	1.76176176176176\\
1.76676676676677	1.76676676676677\\
1.77177177177177	1.77177177177177\\
1.77677677677678	1.77677677677678\\
1.78178178178178	1.78178178178178\\
1.78678678678679	1.78678678678679\\
1.79179179179179	1.79179179179179\\
1.7967967967968	1.7967967967968\\
1.8018018018018	1.8018018018018\\
1.80680680680681	1.80680680680681\\
1.81181181181181	1.81181181181181\\
1.81681681681682	1.81681681681682\\
1.82182182182182	1.82182182182182\\
1.82682682682683	1.82682682682683\\
1.83183183183183	1.83183183183183\\
1.83683683683684	1.83683683683684\\
1.84184184184184	1.84184184184184\\
1.84684684684685	1.84684684684685\\
1.85185185185185	1.85185185185185\\
1.85685685685686	1.85685685685686\\
1.86186186186186	1.86186186186186\\
1.86686686686687	1.86686686686687\\
1.87187187187187	1.87187187187187\\
1.87687687687688	1.87687687687688\\
1.88188188188188	1.88188188188188\\
1.88688688688689	1.88688688688689\\
1.89189189189189	1.89189189189189\\
1.8968968968969	1.8968968968969\\
1.9019019019019	1.9019019019019\\
1.90690690690691	1.90690690690691\\
1.91191191191191	1.91191191191191\\
1.91691691691692	1.91691691691692\\
1.92192192192192	1.92192192192192\\
1.92692692692693	1.92692692692693\\
1.93193193193193	1.93193193193193\\
1.93693693693694	1.93693693693694\\
1.94194194194194	1.94194194194194\\
1.94694694694695	1.94694694694695\\
1.95195195195195	1.95195195195195\\
1.95695695695696	1.95695695695696\\
1.96196196196196	1.96196196196196\\
1.96696696696697	1.96696696696697\\
1.97197197197197	1.97197197197197\\
1.97697697697698	1.97697697697698\\
1.98198198198198	1.98198198198198\\
1.98698698698699	1.98698698698699\\
1.99199199199199	1.99199199199199\\
1.996996996997	1.996996996997\\
2.002002002002	2\\
2.00700700700701	2\\
2.01201201201201	2\\
2.01701701701702	2\\
2.02202202202202	2\\
2.02702702702703	2\\
2.03203203203203	2\\
2.03703703703704	2\\
2.04204204204204	2\\
2.04704704704705	2\\
2.05205205205205	2\\
2.05705705705706	2\\
2.06206206206206	2\\
2.06706706706707	2\\
2.07207207207207	2\\
2.07707707707708	2\\
2.08208208208208	2\\
2.08708708708709	2\\
2.09209209209209	2\\
2.0970970970971	2\\
2.1021021021021	2\\
2.10710710710711	2\\
2.11211211211211	2\\
2.11711711711712	2\\
2.12212212212212	2\\
2.12712712712713	2\\
2.13213213213213	2\\
2.13713713713714	2\\
2.14214214214214	2\\
2.14714714714715	2\\
2.15215215215215	2\\
2.15715715715716	2\\
2.16216216216216	2\\
2.16716716716717	2\\
2.17217217217217	2\\
2.17717717717718	2\\
2.18218218218218	2\\
2.18718718718719	2\\
2.19219219219219	2\\
2.1971971971972	2\\
2.2022022022022	2\\
2.20720720720721	2\\
2.21221221221221	2\\
2.21721721721722	2\\
2.22222222222222	2\\
2.22722722722723	2\\
2.23223223223223	2\\
2.23723723723724	2\\
2.24224224224224	2\\
2.24724724724725	2\\
2.25225225225225	2\\
2.25725725725726	2\\
2.26226226226226	2\\
2.26726726726727	2\\
2.27227227227227	2\\
2.27727727727728	2\\
2.28228228228228	2\\
2.28728728728729	2\\
2.29229229229229	2\\
2.2972972972973	2\\
2.3023023023023	2\\
2.30730730730731	2\\
2.31231231231231	2\\
2.31731731731732	2\\
2.32232232232232	2\\
2.32732732732733	2\\
2.33233233233233	2\\
2.33733733733734	2\\
2.34234234234234	2\\
2.34734734734735	2\\
2.35235235235235	2\\
2.35735735735736	2\\
2.36236236236236	2\\
2.36736736736737	2\\
2.37237237237237	2\\
2.37737737737738	2\\
2.38238238238238	2\\
2.38738738738739	2\\
2.39239239239239	2\\
2.3973973973974	2\\
2.4024024024024	2\\
2.40740740740741	2\\
2.41241241241241	2\\
2.41741741741742	2\\
2.42242242242242	2\\
2.42742742742743	2\\
2.43243243243243	2\\
2.43743743743744	2\\
2.44244244244244	2\\
2.44744744744745	2\\
2.45245245245245	2\\
2.45745745745746	2\\
2.46246246246246	2\\
2.46746746746747	2\\
2.47247247247247	2\\
2.47747747747748	2\\
2.48248248248248	2\\
2.48748748748749	2\\
2.49249249249249	2\\
2.4974974974975	2\\
2.5025025025025	2\\
2.50750750750751	2\\
2.51251251251251	2\\
2.51751751751752	2\\
2.52252252252252	2\\
2.52752752752753	2\\
2.53253253253253	2\\
2.53753753753754	2\\
2.54254254254254	2\\
2.54754754754755	2\\
2.55255255255255	2\\
2.55755755755756	2\\
2.56256256256256	2\\
2.56756756756757	2\\
2.57257257257257	2\\
2.57757757757758	2\\
2.58258258258258	2\\
2.58758758758759	2\\
2.59259259259259	2\\
2.5975975975976	2\\
2.6026026026026	2\\
2.60760760760761	2\\
2.61261261261261	2\\
2.61761761761762	2\\
2.62262262262262	2\\
2.62762762762763	2\\
2.63263263263263	2\\
2.63763763763764	2\\
2.64264264264264	2\\
2.64764764764765	2\\
2.65265265265265	2\\
2.65765765765766	2\\
2.66266266266266	2\\
2.66766766766767	2\\
2.67267267267267	2\\
2.67767767767768	2\\
2.68268268268268	2\\
2.68768768768769	2\\
2.69269269269269	2\\
2.6976976976977	2\\
2.7027027027027	2\\
2.70770770770771	2\\
2.71271271271271	2\\
2.71771771771772	2\\
2.72272272272272	2\\
2.72772772772773	2\\
2.73273273273273	2\\
2.73773773773774	2\\
2.74274274274274	2\\
2.74774774774775	2\\
2.75275275275275	2\\
2.75775775775776	2\\
2.76276276276276	2\\
2.76776776776777	2\\
2.77277277277277	2\\
2.77777777777778	2\\
2.78278278278278	2\\
2.78778778778779	2\\
2.79279279279279	2\\
2.7977977977978	2\\
2.8028028028028	2\\
2.80780780780781	2\\
2.81281281281281	2\\
2.81781781781782	2\\
2.82282282282282	2\\
2.82782782782783	2\\
2.83283283283283	2\\
2.83783783783784	2\\
2.84284284284284	2\\
2.84784784784785	2\\
2.85285285285285	2\\
2.85785785785786	2\\
2.86286286286286	2\\
2.86786786786787	2\\
2.87287287287287	2\\
2.87787787787788	2\\
2.88288288288288	2\\
2.88788788788789	2\\
2.89289289289289	2\\
2.8978978978979	2\\
2.9029029029029	2\\
2.90790790790791	2\\
2.91291291291291	2\\
2.91791791791792	2\\
2.92292292292292	2\\
2.92792792792793	2\\
2.93293293293293	2\\
2.93793793793794	2\\
2.94294294294294	2\\
2.94794794794795	2\\
2.95295295295295	2\\
2.95795795795796	2\\
2.96296296296296	2\\
2.96796796796797	2\\
2.97297297297297	2\\
2.97797797797798	2\\
2.98298298298298	2\\
2.98798798798799	2\\
2.99299299299299	2\\
2.997997997998	2\\
3.003003003003	2\\
3.00800800800801	2\\
3.01301301301301	2\\
3.01801801801802	2\\
3.02302302302302	2\\
3.02802802802803	2\\
3.03303303303303	2\\
3.03803803803804	2\\
3.04304304304304	2\\
3.04804804804805	2\\
3.05305305305305	2\\
3.05805805805806	2\\
3.06306306306306	2\\
3.06806806806807	2\\
3.07307307307307	2\\
3.07807807807808	2\\
3.08308308308308	2\\
3.08808808808809	2\\
3.09309309309309	2\\
3.0980980980981	2\\
3.1031031031031	2\\
3.10810810810811	2\\
3.11311311311311	2\\
3.11811811811812	2\\
3.12312312312312	2\\
3.12812812812813	2\\
3.13313313313313	2\\
3.13813813813814	2\\
3.14314314314314	2\\
3.14814814814815	2\\
3.15315315315315	2\\
3.15815815815816	2\\
3.16316316316316	2\\
3.16816816816817	2\\
3.17317317317317	2\\
3.17817817817818	2\\
3.18318318318318	2\\
3.18818818818819	2\\
3.19319319319319	2\\
3.1981981981982	2\\
3.2032032032032	2\\
3.20820820820821	2\\
3.21321321321321	2\\
3.21821821821822	2\\
3.22322322322322	2\\
3.22822822822823	2\\
3.23323323323323	2\\
3.23823823823824	2\\
3.24324324324324	2\\
3.24824824824825	2\\
3.25325325325325	2\\
3.25825825825826	2\\
3.26326326326326	2\\
3.26826826826827	2\\
3.27327327327327	2\\
3.27827827827828	2\\
3.28328328328328	2\\
3.28828828828829	2\\
3.29329329329329	2\\
3.2982982982983	2\\
3.3033033033033	2\\
3.30830830830831	2\\
3.31331331331331	2\\
3.31831831831832	2\\
3.32332332332332	2\\
3.32832832832833	2\\
3.33333333333333	2\\
3.33833833833834	2\\
3.34334334334334	2\\
3.34834834834835	2\\
3.35335335335335	2\\
3.35835835835836	2\\
3.36336336336336	2\\
3.36836836836837	2\\
3.37337337337337	2\\
3.37837837837838	2\\
3.38338338338338	2\\
3.38838838838839	2\\
3.39339339339339	2\\
3.3983983983984	2\\
3.4034034034034	2\\
3.40840840840841	2\\
3.41341341341341	2\\
3.41841841841842	2\\
3.42342342342342	2\\
3.42842842842843	2\\
3.43343343343343	2\\
3.43843843843844	2\\
3.44344344344344	2\\
3.44844844844845	2\\
3.45345345345345	2\\
3.45845845845846	2\\
3.46346346346346	2\\
3.46846846846847	2\\
3.47347347347347	2\\
3.47847847847848	2\\
3.48348348348348	2\\
3.48848848848849	2\\
3.49349349349349	2\\
3.4984984984985	2\\
3.5035035035035	2\\
3.50850850850851	2\\
3.51351351351351	2\\
3.51851851851852	2\\
3.52352352352352	2\\
3.52852852852853	2\\
3.53353353353353	2\\
3.53853853853854	2\\
3.54354354354354	2\\
3.54854854854855	2\\
3.55355355355355	2\\
3.55855855855856	2\\
3.56356356356356	2\\
3.56856856856857	2\\
3.57357357357357	2\\
3.57857857857858	2\\
3.58358358358358	2\\
3.58858858858859	2\\
3.59359359359359	2\\
3.5985985985986	2\\
3.6036036036036	2\\
3.60860860860861	2\\
3.61361361361361	2\\
3.61861861861862	2\\
3.62362362362362	2\\
3.62862862862863	2\\
3.63363363363363	2\\
3.63863863863864	2\\
3.64364364364364	2\\
3.64864864864865	2\\
3.65365365365365	2\\
3.65865865865866	2\\
3.66366366366366	2\\
3.66866866866867	2\\
3.67367367367367	2\\
3.67867867867868	2\\
3.68368368368368	2\\
3.68868868868869	2\\
3.69369369369369	2\\
3.6986986986987	2\\
3.7037037037037	2\\
3.70870870870871	2\\
3.71371371371371	2\\
3.71871871871872	2\\
3.72372372372372	2\\
3.72872872872873	2\\
3.73373373373373	2\\
3.73873873873874	2\\
3.74374374374374	2\\
3.74874874874875	2\\
3.75375375375375	2\\
3.75875875875876	2\\
3.76376376376376	2\\
3.76876876876877	2\\
3.77377377377377	2\\
3.77877877877878	2\\
3.78378378378378	2\\
3.78878878878879	2\\
3.79379379379379	2\\
3.7987987987988	2\\
3.8038038038038	2\\
3.80880880880881	2\\
3.81381381381381	2\\
3.81881881881882	2\\
3.82382382382382	2\\
3.82882882882883	2\\
3.83383383383383	2\\
3.83883883883884	2\\
3.84384384384384	2\\
3.84884884884885	2\\
3.85385385385385	2\\
3.85885885885886	2\\
3.86386386386386	2\\
3.86886886886887	2\\
3.87387387387387	2\\
3.87887887887888	2\\
3.88388388388388	2\\
3.88888888888889	2\\
3.89389389389389	2\\
3.8988988988989	2\\
3.9039039039039	2\\
3.90890890890891	2\\
3.91391391391391	2\\
3.91891891891892	2\\
3.92392392392392	2\\
3.92892892892893	2\\
3.93393393393393	2\\
3.93893893893894	2\\
3.94394394394394	2\\
3.94894894894895	2\\
3.95395395395395	2\\
3.95895895895896	2\\
3.96396396396396	2\\
3.96896896896897	2\\
3.97397397397397	2\\
3.97897897897898	2\\
3.98398398398398	2\\
3.98898898898899	2\\
3.99399399399399	2\\
3.998998998999	2\\
4.004004004004	2\\
4.00900900900901	2\\
4.01401401401401	2\\
4.01901901901902	2\\
4.02402402402402	2\\
4.02902902902903	2\\
4.03403403403403	2\\
4.03903903903904	2\\
4.04404404404404	2\\
4.04904904904905	2\\
4.05405405405405	2\\
4.05905905905906	2\\
4.06406406406406	2\\
4.06906906906907	2\\
4.07407407407407	2\\
4.07907907907908	2\\
4.08408408408408	2\\
4.08908908908909	2\\
4.09409409409409	2\\
4.0990990990991	2\\
4.1041041041041	2\\
4.10910910910911	2\\
4.11411411411411	2\\
4.11911911911912	2\\
4.12412412412412	2\\
4.12912912912913	2\\
4.13413413413413	2\\
4.13913913913914	2\\
4.14414414414414	2\\
4.14914914914915	2\\
4.15415415415415	2\\
4.15915915915916	2\\
4.16416416416416	2\\
4.16916916916917	2\\
4.17417417417417	2\\
4.17917917917918	2\\
4.18418418418418	2\\
4.18918918918919	2\\
4.19419419419419	2\\
4.1991991991992	2\\
4.2042042042042	2\\
4.20920920920921	2\\
4.21421421421421	2\\
4.21921921921922	2\\
4.22422422422422	2\\
4.22922922922923	2\\
4.23423423423423	2\\
4.23923923923924	2\\
4.24424424424424	2\\
4.24924924924925	2\\
4.25425425425425	2\\
4.25925925925926	2\\
4.26426426426426	2\\
4.26926926926927	2\\
4.27427427427427	2\\
4.27927927927928	2\\
4.28428428428428	2\\
4.28928928928929	2\\
4.29429429429429	2\\
4.2992992992993	2\\
4.3043043043043	2\\
4.30930930930931	2\\
4.31431431431431	2\\
4.31931931931932	2\\
4.32432432432432	2\\
4.32932932932933	2\\
4.33433433433433	2\\
4.33933933933934	2\\
4.34434434434434	2\\
4.34934934934935	2\\
4.35435435435435	2\\
4.35935935935936	2\\
4.36436436436436	2\\
4.36936936936937	2\\
4.37437437437437	2\\
4.37937937937938	2\\
4.38438438438438	2\\
4.38938938938939	2\\
4.39439439439439	2\\
4.3993993993994	2\\
4.4044044044044	2\\
4.40940940940941	2\\
4.41441441441441	2\\
4.41941941941942	2\\
4.42442442442442	2\\
4.42942942942943	2\\
4.43443443443443	2\\
4.43943943943944	2\\
4.44444444444444	2\\
4.44944944944945	2\\
4.45445445445445	2\\
4.45945945945946	2\\
4.46446446446446	2\\
4.46946946946947	2\\
4.47447447447447	2\\
4.47947947947948	2\\
4.48448448448448	2\\
4.48948948948949	2\\
4.49449449449449	2\\
4.4994994994995	2\\
4.5045045045045	2\\
4.50950950950951	2\\
4.51451451451451	2\\
4.51951951951952	2\\
4.52452452452452	2\\
4.52952952952953	2\\
4.53453453453453	2\\
4.53953953953954	2\\
4.54454454454454	2\\
4.54954954954955	2\\
4.55455455455455	2\\
4.55955955955956	2\\
4.56456456456456	2\\
4.56956956956957	2\\
4.57457457457457	2\\
4.57957957957958	2\\
4.58458458458458	2\\
4.58958958958959	2\\
4.59459459459459	2\\
4.5995995995996	2\\
4.6046046046046	2\\
4.60960960960961	2\\
4.61461461461461	2\\
4.61961961961962	2\\
4.62462462462462	2\\
4.62962962962963	2\\
4.63463463463463	2\\
4.63963963963964	2\\
4.64464464464464	2\\
4.64964964964965	2\\
4.65465465465465	2\\
4.65965965965966	2\\
4.66466466466466	2\\
4.66966966966967	2\\
4.67467467467467	2\\
4.67967967967968	2\\
4.68468468468468	2\\
4.68968968968969	2\\
4.69469469469469	2\\
4.6996996996997	2\\
4.7047047047047	2\\
4.70970970970971	2\\
4.71471471471471	2\\
4.71971971971972	2\\
4.72472472472472	2\\
4.72972972972973	2\\
4.73473473473473	2\\
4.73973973973974	2\\
4.74474474474474	2\\
4.74974974974975	2\\
4.75475475475475	2\\
4.75975975975976	2\\
4.76476476476476	2\\
4.76976976976977	2\\
4.77477477477477	2\\
4.77977977977978	2\\
4.78478478478478	2\\
4.78978978978979	2\\
4.79479479479479	2\\
4.7997997997998	2\\
4.8048048048048	2\\
4.80980980980981	2\\
4.81481481481481	2\\
4.81981981981982	2\\
4.82482482482482	2\\
4.82982982982983	2\\
4.83483483483483	2\\
4.83983983983984	2\\
4.84484484484484	2\\
4.84984984984985	2\\
4.85485485485485	2\\
4.85985985985986	2\\
4.86486486486486	2\\
4.86986986986987	2\\
4.87487487487487	2\\
4.87987987987988	2\\
4.88488488488488	2\\
4.88988988988989	2\\
4.89489489489489	2\\
4.8998998998999	2\\
4.9049049049049	2\\
4.90990990990991	2\\
4.91491491491491	2\\
4.91991991991992	2\\
4.92492492492492	2\\
4.92992992992993	2\\
4.93493493493493	2\\
4.93993993993994	2\\
4.94494494494494	2\\
4.94994994994995	2\\
4.95495495495495	2\\
4.95995995995996	2\\
4.96496496496496	2\\
4.96996996996997	2\\
4.97497497497497	2\\
4.97997997997998	2\\
4.98498498498498	2\\
4.98998998998999	2\\
4.99499499499499	2\\
5	2\\
};
\addplot [color=blue,line width=1.0pt,only marks,mark=o,mark options={solid},forget plot]
  table[row sep=crcr]{1	4.0005\\
};
\addplot [color=blue,line width=3.0pt,only marks,mark=o,mark options={solid},forget plot]
  table[row sep=crcr]{1	3\\
};

\end{axis}
}
\begin{abstract}
The impact of cooperation on interference management is investigated by studying an elemental wireless
network, the so called symmetric interference relay channel (IRC), from a generalized degrees of
freedom (GDoF) perspective. 
This is motivated by the fact that the deployment of relays is considered as
a remedy to overcome the bottleneck of current systems in terms of achievable rates. 
The focus of this work is on the regime in which the interference link is weaker than the source-relay link in the IRC.
Our approach towards studying the GDoF goes
through the capacity analysis of the linear deterministic IRC (LD-IRC). 
New upper bounds on the sum-capacity of the LD-IRC based on genie-aided approaches are established. 
These upper bounds together with some existing upper
bounds are achieved by using four novel transmission schemes. 
Extending the upper bounds and the transmission schemes to the Gaussian case, the GDoF of
the Gaussian IRC is characterized for the aforementioned regime. This completes the GDoF results
available in the literature for the symmetric GDoF.
It is shown that in the strong interference regime, in contrast to the IC, the GDoF is not a monotonically
increasing function of the interference level. 
\end{abstract}

\section{Introduction} 
The demand for higher rates already exceeds today the supply of rates and thus resulting in a spectrum deficit. This is mainly due to an exclusive bandwidth allocation to each user. Hence, each user communicates over its own bandwidth without neither causing interference to other users nor be interfered by the other users. This is referred to as interference avoidance. 
In general, this approach is suboptimal and one possible solution is indeed to eliminate the constraint of exclusive bandwidth usage. Doing so results in the flexibility of spectrum usage, however it does not come for free. In more details, we obtain a system in which in addition to the additive Gaussian noise, users have to cope with the interference due to the concurrent transmissions and receptions.
Thus, there is a need for sophisticated interference management, i.e., the handling of the communication traffic in a way such that the burden caused by the interference is limited.
A key element to achieve this is to deploy additional nodes, referred to as relay nodes. 
The relay(s) then cooperate(s) with the transmitters and receivers and coordinate(s) in a distributed way the communication. The goal of this paper is to determine the fundamental limits of such a distributed coordination and cooperation. 

To be more concrete, we add a causal, full-duplex relay node to a two user interference channel (IC) in which two transmitters want to communicate with their desired receiver. The obtained setup is known as an interference relay channel (IRC), which has been first introduced in \cite{SahinErkip}. 
Obviously, the achievable sum-rate of the IRC cannot be worse than that of the IC. However, in order to benefit from the relay, sophisticated relaying strategies, which mitigate the impact of interference, are required. 
In \cite{TianYener_PotentRelayJournal}, an achievable sum-rate for the IRC based on compress-and-forward relaying strategy is studied. 
Moreover, special cases of IRC in which some channels are absent are studied in \cite{MaricDaboraGoldsmith_IT, BassiPiantanidaYang_ISIT,
ZahaviDabora_ISIT2014}. 
%Recently, the diversity-multiplexing tradeoff region based on different relaying strategies has been derived in \cite{ZahaviZhangMaricDaboraGoldsmithCui_IT_Trans}.
%%%%%%%%%%%%%%%%%%%%%%%%%%%%%%%%%%%%%%%%%%%%%%%%55
%In \cite{MaricDaboraGoldsmith_IT}, the benefit of interference forwarding is studied for the special case of IRC when the relay can only observe one of the transmitters. 
%Moreover, the constant gap optimality result for this special IRC is determined in \cite{BassiPiantanidaYang_ISIT}. 
The IRC has been analyzed for the variant with a cognitive relay in \cite{RiniTuninettiDevroyeGoldsmith, RiniTuninettiDevroye_ISIT,
RiniTuninettiDevroye_ITW,
SridharanVishwanathJafarShamai}. 
While in \cite{RiniTuninettiDevroyeGoldsmith}, the capacity of the IRC with a cognitive relay has been derived for very strong interference, in \cite{RiniTuninettiDevroye_ISIT} the capacity of this setup has been characterized within a constant gap for the case when no interference link is present. However, the capacity of IRC is still an open problem.

In order to obtain some progress on this research frontier, we study the so-called generalized degrees of freedom (GDoF) for the symmetric Gaussian IRC. This metric serves as an approximation of the capacity and has been used in several works. Essentially, it determines the number of available interference free streams in each channel use, where each stream has a capacity of a reference point-to-point channel (P2P). One of the most fundamental GDoF result is given in \cite{EtkinTseWang} in which the authors completed the GDoF characterization for the IC. 

The GDoF of the symmetric Gaussian IRC has been characterized in \cite{ChaabanSezgin_IT_IRC} for the case when the source-relay channel is weaker than the interference channel. It has been shown that relaying can increase the GDoF performance of the IRC, although it was shown in \cite{CadambeJafar_ImpactOfRelays} that relaying cannot increase the degrees of freedom (DoF). 
However, the GDOF of remaining regime, i.e., source-relay channel is stronger than the interference channel, was an open
problem until now. 
In this work, we study the GDoF of the IRC for the complementary regime and settle this problem. Hence, this work completes the characterization of the GDoF for the symmetric Gaussian IRC. 

To characterize the GDoF for the IRC, we consider first the deterministic model \cite{AvestimehrDiggaviTse_IT}.
Essentially, the Gaussian IRC is modeled as a linear deterministic interference relay channel (LD-IRC) in which the relationship between inputs and outputs of the channel is deterministic. This makes analyzing the LD-IRC simpler than the Gaussian IRC. Generally, solving the capacity for the deterministic channel can lead to the GDoF for the original Gaussian channel. This has been shown for several setups such as IC \cite{EtkinTseWang} and X-channel \cite{HuangJafarCadambe}. In this work, first we characterize the capacity of the LD-IRC. Next, we extend the result to the Gaussian IRC and derive the GDoF of the Gaussian IRC. 

To characterize the capacity of the LD-IRC, upper bounds and lower bounds for the capacity of the LD-IRC are needed. Besides two new upper bounds borrowed from \cite{ChaabanSezgin_IT_IRC}, two cut-set bounds and four new upper bounds based on genie aided approach are used. All bounds are capacity-tight and required for the characterization of the capacity of the LD-IRC. 
On the other hand, four transmission schemes are introduced. 
The proposed schemes are based on rate splitting. The optimal rate splitting which achieves the upper bound is provided for each scheme. Hence, these schemes achieve the sum-capacity upper bounds in the whole regime in which the source-relay link is stronger than the interference link. 
The proposed schemes are combination of common and private signaling with different relaying strategies. While in previous work \cite{ChaabanSezgin_IT_IRC}, only classical decode-and-forward \cite{Cover} and compute-and-forward \cite{NazerGastpar} were required to characterize the GDoF, in this work we need also cooperative interference neutralization \cite{ChaabanSezginTuninetti_Butterfly_Network} in addition to those relaying strategies. This is mainly due to the fact that in this work the source-relay link is stronger than the interference link and hence by providing some additional information to the relay which is not received at the destinations, the relay is able to neutralize the interference partially. Roughly speaking, using this relaying strategy, the causal relay is operating, to some extent, like a cognitive (Non-causal) relay.

As aforementioned, the upper bounds established in the LD-IRC are extended to the Gaussian model.
They provide us upper bounds on the GDoF performance for the IRC. Next, we apply the proposed schemes for the LD-IRC and obtain lower bounds for the GDoF of Gaussian IRC.
%channel to obtain the GDOF result of the Gaussian case
%Moreover, by splitting each link in the Gaussian IRC into several sub-channels with specific rate, the transmission schemes for the LD-IRC can be extended to the Gaussian IRC. Notice that each sub-channel in the Gaussian IRC can be used as a bit level in the LD-IRC. Hence, by using this approach, we need to only find the optimal sub-channel allocation instead of solving a power allocation problem. The solution of this problem can be directly obtained from the LD-IRC. 

The rest of the paper is organized as follows. In Section \ref{section:system_model}, we introduce the system model. 
The main results, i.e., the capacity of the LD-IRC and the GDoF of the Gaussian IRC, are presented in Section \ref{sec:Main_result}. 
The remainder of the paper is devoted to prove the results.
The upper bounds on the capacity of the LD-IRC are presented in Section
\ref{sec:UBLD-IRC}. In Section \ref{Sec:AchievabilityLDIRC}, the transmission schemes are presented for the LD-IRC. The upper bounds on the GDoF of the IRC are presented in Section \ref{sec:UBGaussian-IRC}. Finally, in Section \ref{sec:GDoFAchievingSchemes}, the extension of the transmission schemes to the Gaussian case is explained.

\emph{Notation}: Throughout the paper, we use $\mathbb{F}_2$ to denote the binary field and $\oplus$ to denote the modulo 2 addition. We use normal lower-case, normal upper-case, boldface lower-case, and boldface upper-case letters to denote scalars, scalar random variables, vectors, and matrices, respectively. $\boldsymbol{X}_{[a:b]}$ denotes the matrix formed by the $a$-th to $b$-th rows of a matrix $\mathbf{X}$, and the vector $\X_{[a:b]}$ is defined similarly. We write $X\sim\mathcal{N}(0,P)$ to indicate that the random variable $X$ is normal distributed with zero mean and variance $P$. Bern$(a)$ is a Bernoulli distribution with probability $a$. Furthermore, we define $x^{n}$ as the length-$n$ sequence $(x[1],\cdots,x[n])$. The vector $\boldsymbol{0}_{q}$ denotes the zero-vector of length $q$, the matrix $\boldsymbol{I}_q$ is the $q\times q$ identity matrix, {the matrix $\boldsymbol{0}_{l , m }$ represents the $l\times m$ zero matrix,} and $\boldsymbol{x}^{T}$ denotes the transposition of a {vector} $\boldsymbol{x}$. 
Moreover, we define the functions $C(x)$ and $C^+(x)$ as
\begin{align}
%C(x) &= \frac{1}{2}\log(1+x)\notag \\ 
%C^+(x) &= \max\lbrace0,C(x)	\rbrace. \notag
C(x) = \nicefrac{1}{2}\log(1+x), \quad
C^+(x) = \left(C(x)\right)^+, %\notag
\end{align}
where $(x)^+ = \max\lbrace0,x	\rbrace$. In this work, we suppose that all logarithms are binary unless the base of logarithm is given.
We say that a set of random variables is i.i.d if its components are independent and identically distributed.

\section{System Model}
\label{section:system_model}
We consider a network which consists of a two user interference channel with a causal and full-duplex relay as shown in Fig.~\ref{fig:GaussianIRC}. The transmission is performed in $n$ channel uses. While the transmitters are active in the all channel uses, the relay is active in the last $n-1$ channel uses due to the causality constraint.

%\begin{figure}[h]
%\centering
%\begin{tikzpicture}[scale=0.8]
%\TheoreticalIRC
%\end{tikzpicture}
%\caption{Information theoretic model of IRC.}
%\label{fig:SI_Model_1}
%\end{figure}

Transmitter $i$ (TX$i$), $i \in \{1,2\}$, has a message $w_i$, which is a realization of a random variable $W_i$ uniformly distributed over the set $\mathcal W_i \triangleq \lbrace 1,\ldots,\left\lfloor 2^{nR_i}\right\rfloor \rbrace$ for its respective receiver (RX$i$). The messages of the Tx's are assumed to be independent from each other. Using an encoding function $f_i$, the message $w_i$ is encoded into a length $n$ codeword $x_i^{n} \in \mathbb R^{n}$, satisfying a power constraint
\begin{align}
\frac{1}{n}\sum_{k=1}^{n}\mathbb{E}[x_{i}[k]^2] = P_i \leq P, \quad i \in \{1,2\}.
\end{align}
Then, the $k$th symbol $x_i[k]$, $k=1,\ldots,n$, is transmitted in the $k$th channel use.

At the end of the $k$th channel use, the relay has $y_r[k]$ which is given by
\begin{align}
y_{r}[k] &= h_{s} x_{1}[k] + h_{s} x_{2}[k] + z_{r}[k],\label{eq1:rec_relay_Ga}
\end{align}
where $h_s$ represents the real-positive channel gain value of the source-relay channel, and $z_{r}[k]$ is a realization of an i.i.d. $\mathcal{N}(0,1)$ random variable which represents the additive white Gaussian noise (AWGN) at the relay. {The relay re-encodes $y_r[1], \ldots, y_r[k]$ using a function $f_{rk}$ into $x_r[k+1]$, and sends the symbol $x_r[k+1]$ in the $(k+1)$th channel use.} Since the transmit signal of the relay is generated from its received signal in previous channel uses, the relay is inactive in channel use $k=1$, i.e., $x_r[1]=0$. Furthermore, the relay signal satisfies the power constraint
\begin{align}
\frac{1}{n}\sum_{k= 1}^{n}\mathbb{E}[x_r[k]^2] = P_r \leq P.
\end{align}

The destinations wait until end of the $n$th channel use, at which Rx$j$ has received the sequence $y_{j}^{n}$, where $y_j[k]$ is given by 
\begin{align}
y_{j}[k] &=  h_d x_{j}[k] + h_c x_{l}[k] + h_{r}  x_{r}[k] + z_{j}[k], 
\label{eq1:rec_rx_Ga}
\end{align}
where $l$ is the index of the undesired Tx ($j\neq l$) and $h_d$, $h_c$ and $h_r$ represent the real-positive channel gain value of the desired, interference and relay-destination channels, respectively, and the noise $z_j$ is a realization of an $\mathcal N(0,1)$ random variable. The AWGN at the receivers and the relay are independent of each other. 
Moreover, it is assumed that all channel values are perfectly known at all nodes. By using a decoding function $g_j$, Rx$j$ decodes $\hat w_j$, i.e., $\hat w_j=g_j(y_j^{n})$. The messages sets, encoding functions, and decoding functions constitute a code for the channel which is denoted as a $(n; 2^{nR_1} ; 2^{nR_2})$ code. Such a code induces an average error probability $\mathbb{P}^{(n)}$ defined as    
\begin{align}
\label{ERROR}
      \mathbb{P}^{(n)}=\dfrac{1}
      {2^{nR_\Sigma}}\sum_{\boldsymbol{w}\in\mathcal{W}_1
      \times\mathcal{W}_2}\mathrm{Prob}
      \left(E\right),
\end{align}
where $R_\Sigma=R_1+R_2$, $\boldsymbol{w}=(w_1,w_2)$, and $E$ is the error event $\hat w_i\neq w_i$ for some $i\in\{1,2\}$. Reliable communication is said to take place if this error probability can be made {arbitrarily} small by increasing $n$. The achievability of a rate tuple $(R_1,R_2)$ is defined as the existence of a coding scheme which achieves reliable communication with these rates. In other words, a rate tuple $(R_1,R_2)$ is said to be achievable if there exists a sequence of $(n,2^{nR_1},2^{nR_2})$ codes such that $\mathbb{P}^{(n)}\to0$ as $n\to\infty$. The set of all achievable rate tuples is the capacity region of the IRC denoted by $\mathcal{C}$. In this paper, we focus on the sum-capacity defined as the maximum achievable sum-rate, i.e.,
   \begin{align}\label{sumcap}
      {C}_{\Sigma}=\max_{(R_1,R_2)\in\mathcal{C}}R_\Sigma.
   \end{align}

\begin{figure}
\centering
\begin{tikzpicture}[scale=0.8]
\GaussianIRC 
%\TheoreticalIRC
\end{tikzpicture}
\caption{System model of the symmetric Gaussian IRC.}
\label{fig:GaussianIRC}
\end{figure}
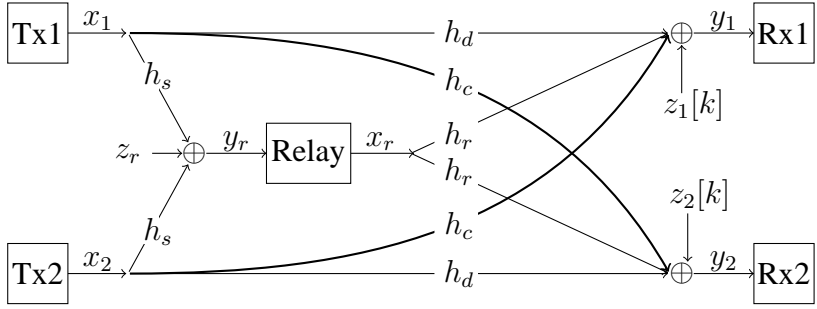

We consider the interference limited scenario, and hence, we assume that the received signal power from each node is larger than the noise variance, i.e.,
\begin{align}
\label{InterferenceLimited}
\min\{h_d^2,h_c^2,h_s^2,h_r^2\}P>1.
\end{align}
For readability, we use the following parameters
\begin{align}
m_d=\frac{1}{2}\log(Ph_d^2), \quad m_c=\frac{1}{2}\log(Ph_c^2), \quad m_r=\frac{1}{2}\log(Ph_r^2), \quad m_s=\frac{1}{2}\log(Ph_s^2).
\end{align}
Furthermore, since the focus of the paper is on the GDoF of the IRC, for a fixed $h_d^2$ we define $\alpha$, $\beta$, and $\gamma$ as
\begin{align}
\alpha = \frac{m_c}{m_d}, \quad \beta  = \frac{m_r}{m_d},  \quad \gamma = \frac{m_s}{m_d}. \label{eq:param_def}
\end{align} 
Then, the GDoF of the Gaussian IRC, $d(\alpha, \beta , \gamma)$ is defined as
\begin{align}
d(\alpha, \beta , \gamma) = \lim_{m_d\rightarrow \infty} \frac{C_{\mathrm{G},\Sigma}(\alpha,\beta,\gamma)}{m_d}, \label{eq:2_1}
\end{align}
where $C_{\mathrm{G},\Sigma}$ represents the sum-capacity of the Gaussian IRC\footnote{For a fixed $h_d^2$, we can write $m_d\to\infty$ is equivalent to $P\to \infty$.}. Our approach towards the GDoF analysis of Gaussian IRC starts with the \textit{linear-deterministic (LD) approximation} of the wireless network introduced by Avestimehr  \textit{et al.} in~\cite{AvestimehrDiggaviTse_IT}. Next, we introduce the linear deterministic IRC (LD-IRC).

\subsection{Linear Deterministic Model}
The Gaussian IRC shown in Fig.~\ref{fig:GaussianIRC} can be approximated by the LD model as follows. An input symbol at Tx$i$ is given by a binary vector $\X_i\in\mathbb{F}_2^q$ where $q=\max\{n_d,n_c,n_r,n_s\}$ and the integers $n_d$, $n_c$, $n_r$, and $n_s$ represent the channel strength and they are defined as follows
\begin{align}
      n_d=\left\lfloor m_d\right\rfloor,\quad n_c=\left\lfloor m_c\right\rfloor,\quad n_r=\left\lfloor m_r\right\rfloor,\quad n_s=\left\lfloor  m_s\right\rfloor.
      \label{eq:LD_ITC_channel_def}
\end{align}
In the $k$th channel use, where $k=1,\ldots,n$, the output signal vector $\Y_r[k]$ at the relay is given by the following deterministic function of the inputs 
\begin{align}
\Y_r[k]=\bS^{q-n_s}\X_1[k] \oplus \bS^{q-n_s}\X_2[k], \label{eq1:rec_relay_LD}
\end{align}
where $\bS\in\mathbb{F}_2^{q\times q}$ is a down-shift matrix defined as 
\begin{align}\label{Sm}
\bS=\begin{pmatrix}
\boldsymbol{0}_{q-1}^{T} & 0\\ 
\boldsymbol{I}_{q-1} & \boldsymbol{0}_{q-1}
\end{pmatrix}.
\end{align}
The relay generates a signal vector $\X_r[k+1]$ at the end of the $k$th channel use from $\Y_r[1],\ldots,\Y_r[k]$ ($\X_r[1]=\boldsymbol 0_q$). The output signal vector $\Y_j[k]$ at Rx$j$ is given by the following deterministic functions of the inputs
\begin{align}
\Y_{j}[k] &= \bS^{q-n_d}  \X_{j}[k] \oplus \bS^{q-n_c} \X_{l}[k] \oplus \bS^{q-n_r}  \X_{r}[k],
\label{eq1:rec_rx_LD}
\end{align}
where $j\neq l$. The input-output equations \eqref{eq1:rec_relay_LD} and \eqref{eq1:rec_rx_LD} approximate the input-output equations of the Gaussian IRC given in \eqref{eq1:rec_relay_Ga} and \eqref{eq1:rec_rx_Ga} in the high SNR regime, respectively. We denote the sum-capacity of the LD-IRC by $C_{\mathrm{det},\Sigma}$. 
%Next, we study the sum-capacity of the LD-IRC in the regime of channel parameters where $n_c<n_s$, whereas the other channel parameters are arbitrary.

In the next section, we present the main results of the paper, which are a complete characterization of the sum-capacity of the LD-IRC, and the GDoF of the Gaussian one.

\section{Summary of the Main Results}
\label{sec:Main_result}
{The main results of the paper are the characterization of the sum-capacity of the LD-IRC, and the approximate sum-capacity of the Gaussian IRC given in terms of GDoF. The sum-capacity of the LD-IRC serves as a stepping stone towards the GDoF of the Gaussian counterpart. Therefore, in this paper we start by studying the LD-IRC, and in the process, we gather insights that are later used in the Gaussian case. The sum-capacity of the LD-IRC is summarized in the following subsection.}

\subsection{Sum-Capacity of the LD-IRC}
\label{sec:main_result_LD-IRC}
The channel parameter space of the IRC can be split into two regimes, $n_s\leq n_c$ and $n_s>n_c$. The first regime corresponds to cases where the source-relay channels are weaker than the cross channels which has been studied in \cite{ChaabanThesis}. We summarize the result of \cite{ChaabanThesis} here for completeness.

\begin{theorem}[weaker source-relay channels LD-IRC \cite{ChaabanThesis}]
The sum-capacity of the LD-IRC with $n_s\leq n_c$ is given by
\begin{align}
\label{SumCapLDIRC}
C_{\mathrm{det},\Sigma}=\min\left\{\begin{array}{c}
2\max\{n_d,n_r\}\\
2\max\{n_d,n_s\}\\
\max\{n_d,n_c,n_r\}+\max\{n_d,n_c\}-n_c\\
2\max\{n_d,n_c\}-n_c+n_s\\
2\max\{n_c,n_r,n_d-n_c\}\\
2\max\{n_c,n_d+n_s-n_c\}\end{array}\right\}.
\end{align}
\end{theorem}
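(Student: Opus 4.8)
The plan is to prove Theorem~1 by the standard two–sided strategy: a converse showing that $C_{\mathrm{det},\Sigma}$ is upper bounded by each of the six expressions inside the $\min$, and a matching achievability result that, for every $(n_d,n_c,n_r,n_s)$ with $n_s\le n_c$, exhibits a transmission scheme whose sum–rate equals the minimum of the six. A guiding simplification is that in this regime the source–relay link is no stronger than the cross link, so — unlike in the complementary regime treated in the rest of the paper — the relay never holds information about one transmitter that the destinations do not also receive, and cooperative interference neutralization is not needed; classical decode--and--forward and compute--and--forward suffice.

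For the converse, the first two bounds, $2\max\{n_d,n_r\}$ and $2\max\{n_d,n_s\}$, are cut-set/flow bounds. To get $2\max\{n_d,n_r\}$ I would hand each receiver a genie revealing the interfering message $W_l$, so that (after subtracting the now-known interference, with $\X_r^n$ a deterministic function of the two codewords) Rx$i$ observes a single-user channel from $W_i$ with effective structure $\bS^{q-n_d}\X_i^n\oplus\bS^{q-n_r}\X_r^n$, hence at most $\max\{n_d,n_r\}$ decodable levels; Fano and summing give the bound. For $2\max\{n_d,n_s\}$ I would additionally give each receiver the relay observation $\Y_r^n$: since $\X_r^n$ is a deterministic function of $\Y_r^{1:k-1}$, conditioning on $\Y_r^n$ makes the relay's transmission known, and after also cancelling the interference the channel to Rx$i$ retains at most $\max\{n_d,n_s\}$ levels of $\X_i^n$. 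The remaining four bounds are genie-aided bounds in the Etkin--Tse--Wang spirit. For the two ``weak-interference-like'' bounds, $\max\{n_d,n_c,n_r\}+\max\{n_d,n_c\}-n_c$ and $2\max\{n_d,n_c\}-n_c+n_s$, I would give Rx$1$ the part of Tx$2$'s signal visible to it, $\bS^{q-n_c}\X_2^n$ (and, for the second bound, also $\Y_r^n$), then run the usual single-letterization $nR_1\le I(W_1;\Y_1^n,\bS^{q-n_c}\X_2^n)$, $nR_2\le I(W_2;\Y_2^n)$, add the two, and cancel $H(\bS^{q-n_c}\X_2^n\mid\cdots)$ against a matching term inside $H(\Y_2^n)$; the relay then enters only through $H(\Y_r^n)\le n\,n_s$, which is precisely why $n_s$ rather than $n_r$ appears in the second bound. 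For the two ``strong-interference-like'' bounds, $2\max\{n_c,n_r,n_d-n_c\}$ and $2\max\{n_c,n_d+n_s-n_c\}$, I would instead use that when interference is strong a receiver can decode its own message, subtract it, and thereby obtain a clean view of the other transmitter through the cross link ($n_c$ levels) and of the relay ($n_r$, or through the source--relay link $n_s$); summing the two reciprocal single-user bounds and accounting for level overlap produces the $\max\{n_c,\,n_d-n_c,\,\ldots\}$ structure. In all four cases the non-routine move is choosing the genie so that the relay contributions telescope cleanly and the genie-aided receiver genuinely has only the advertised number of levels.

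For achievability I would partition the region $\{n_s\le n_c\}$ into a handful of sub-regimes according to the ordering of $n_d,n_c,n_r,n_s$ (equivalently of $\alpha,\beta,\gamma$), organised by which of the six upper bounds is active. In each sub-regime I would specify (i) a rate-splitting of each message into common and private parts whose bit levels are aligned with the down-shift structure $\bS$ of the channel, and (ii) a relay operation: decode--and--forward of (part of) the common messages when the relay's $n_s$ top levels let it resolve the two codewords, and compute--and--forward of a modulo-$2$ sum codeword when they do not, with the forwarded signal placed on the relay's $n_r$ top levels so as either to reinforce the desired signal or to pre-cancel interference at the destinations (respecting the one-channel-use relay delay). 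One then checks, by the usual deterministic-model counting of decodable levels at each receiver and at the relay, that the achieved sum-rate meets the active bound; in sub-regimes where $n_r,n_s$ are small relative to $n_d,n_c$ the scheme degenerates to the Etkin--Tse--Wang IC scheme and the active bound reduces to the corresponding IC bound.

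The main obstacle is the achievability case analysis: getting the common/private split and the decode- versus compute-and-forward choice exactly right in every sub-regime so that the scheme is tight against whichever of the six bounds is minimal there, and verifying that the six converse bounds leave no slack (their minimum is always achieved and no seventh bound is missing). A secondary difficulty, on both sides, is the careful bookkeeping of the relay's causality/one-symbol delay when counting the levels it can forward, which shifts the relay's contribution by one channel use throughout the argument.
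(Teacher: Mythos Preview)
The paper does not itself prove Theorem~1; it is quoted from~\cite{ChaabanThesis} (with related bounds in~\cite{ChaabanSezgin_IT_IRC}), and the paper's technical contribution is the complementary regime $n_s>n_c$. So there is no in-paper proof to compare against directly, though the appendix proofs for Lemma~\ref{Lemma:new_bounds} illustrate the style of argument the cited work uses.

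Your global plan---cut-set plus genie-aided bounds for the converse, Han--Kobayashi rate splitting with DF/CF (and no CN) for achievability---is correct and matches what the cited references do. But two of your converse sketches would not go through as written. For bounds~3 and~4 you hand Rx$1$ the side information $\bS^{q-n_c}\X_2^n$ and claim a cancellation against a term inside $H(\Y_2^n)$; since $W_1$ and $W_2$ are independent, $I(W_1;\bS^{q-n_c}\X_2^n)=0$ and the entropy $H(\bS^{q-n_c}\X_2^n)$ cancels \emph{internally} within the $R_1$ bound, leaving residual negative terms $-H(\bS^{q-n_r}\X_r^n\mid\bS^{q-n_c}\X_2^n,W_1)$ and $-H(\bS^{q-n_c}\X_1^n\oplus\bS^{q-n_r}\X_r^n\mid W_2)$ that do not pair up. These are Z-channel-type bounds and the genie must be asymmetric: give Rx$2$ the message $W_1$ (and, for the $n_s$ variant, also $\Y_r^n$), so that the interference-plus-relay entropy appearing in $H(\Y_1^n\mid W_1)$ telescopes against the one in $I(W_2;\Y_2^n\mid W_1)$; compare the genies used in the proofs of~\eqref{eq:new_UB1_IRC_det} and~\eqref{eq:new_UB3_IRC_det}. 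For bounds~5 and~6, ``decode own message, subtract, view the other transmitter'' is an achievability heuristic, not a converse technique, and these bounds are not specific to strong interference (note the $n_d-n_c$ term). They are the symmetric ETW-type bounds, obtained by giving Rx$i$ the cross-link image of its \emph{own} signal, $\s_i=\bS^{q-n_c}\X_i^n$ (augmented with relay side information to produce the $n_r$ or $n_s$ terms), so that $H(\s_i^n)$ cancels against $H(\Y_j^n\mid W_j)$; see the proof of~\eqref{eq:new_UB5_IRC_det} for the mechanics in the relay setting.
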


{The remaining regime was left open in \cite{ChaabanThesis}. In this paper, we provide the complete solution  of the problem by closing the whole regime where the source-relay channels are stronger than the cross channels, i.e., $n_s>n_c$. The sum-capacity in this regime is given in the following theorem.}

\begin{theorem}[stronger source-relay channels LD-IRC]
\label{Theorem:capacity_LD_IRC}
The sum-capacity of the LD-IRC with $n_s>n_c$ is given by
\begin{align}
C_{\mathrm{det},\Sigma} = \min \begin{Bmatrix} 
2\max\{n_d,n_r\} \\
\max\{n_d,n_c,n_r\} + \max\{n_d,n_c\}-\left(n_c-[n_s-\max\{n_d,n_c\}]^+\right)^+\\
n_r+2\max\{n_d,n_c\}-n_c\\
2\max\{n_s,n_r+n_s-n_c,n_d-n_c\}\\
\max\{n_d,n_c\}+\max\{n_d,n_s\}\\
2\max\{n_c,n_r+(n_d-n_c)^+\}\\
\end{Bmatrix}
\label{eq:capacity_LD_IRC}
\end{align}
if $n_c\neq n_d$, and by
\begin{align}
\label{eq:capacity_LD_IRC_nceqnd}
C_{\mathrm{det},\Sigma} = \max\{n_d,\min\{n_r,n_s\}\}
\end{align}
otherwise.
\end{theorem}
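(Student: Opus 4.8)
\emph{Proof strategy.} The plan is to prove \eqref{eq:capacity_LD_IRC} by matching upper and lower bounds on $C_{\mathrm{det},\Sigma}$ throughout the regime $n_s>n_c$, and then to dispatch the degenerate case $n_c=n_d$ separately to obtain \eqref{eq:capacity_LD_IRC_nceqnd}. The six terms of the minimum come in three flavours. Two of them are cut-set-type bounds: $2\max\{n_d,n_r\}$ arises by handing the interfering message to each receiver, so that it is left with a point-to-point link of strength $\max\{n_d,n_r\}$ fed by its own transmitter and the relay; and $2\max\{n_c,n_r+(n_d-n_c)^+\}$ arises from a $Z$-channel-type argument in which each receiver first recovers and cancels its own message and retains a residual link of strength $\max\{n_c,n_r+(n_d-n_c)^+\}$ carrying the other user's message. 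Two further terms, namely $n_r+2\max\{n_d,n_c\}-n_c$ and (up to the new $n_s$-dependent correction) $\max\{n_d,n_c,n_r\}+\max\{n_d,n_c\}-(\cdot)$, are adapted from the converse of \cite{ChaabanSezgin_IT_IRC}; one checks that those genie choices remain valid when $n_s>n_c$, the only change being how much of the cross-link contribution can be cancelled. The remaining two terms — the corrected bound carrying $\bigl(n_c-[n_s-\max\{n_d,n_c\}]^+\bigr)^+$ and the bound $2\max\{n_s,n_r+n_s-n_c,n_d-n_c\}$ — are new, and are the technical core of the converse; they encode the fact that in this regime the relay observes $[n_s-\max\{n_d,n_c\}]^+$ bit levels that reach neither destination, which lets it behave, to that extent, like a cognitive relay and neutralize interference. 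Finally, a dedicated bound for the case $n_c=n_d$, exploiting that there each receiver and the relay see only the aligned sum $\bS^{q-n_d}(\X_1\oplus\X_2)$ modulo the relay contribution, caps the total flow at $\max\{n_d,\min\{n_r,n_s\}\}$.

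For the new genie-aided converse bounds I would, for receiver $1$ (receiver $2$ following by symmetry), construct a side-information signal out of a carefully chosen set of bit levels of $\Y_r^n$, $\X_2^n$, and $\X_r^n$, possibly together with $W_2$, designed so that (i) from $\Y_1^n$ plus the genie, receiver $1$ can reconstruct the relay's relevant view and thereby ``undo'' its contribution, and (ii) in the resulting chain $nR_1\le H(\Y_1^n,\text{genie}\mid W_1)-H(\text{genie}\mid W_1)+n\epsilon_n$ the surviving conditional entropies collapse to exactly the claimed level counts. The role of $[n_s-\max\{n_d,n_c\}]^+$ is precisely to count the bit levels seen by the relay but by no receiver; in the direct channel those levels give the relay its interference-neutralization capability, and dually in the converse they are the amount by which the cross-link ``penalty'' $n_c$ is discounted. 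I would also supply the two routine cut-set single-letterizations and verify the two imported bounds, and then prove the extra $n_c=n_d$ bound by tracking, along the lines of a genie-aided MAC argument, that the combined information about $(\X_1^n,\X_2^n)$ available to the receivers and the relay never exceeds $\max\{n_d,\min\{n_r,n_s\}\}$ per channel use.

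For achievability I would present the four rate-split transmission schemes of Section~\ref{Sec:AchievabilityLDIRC}. In each, Tx$i$ splits its data into private, common-near, common-far, and a special ``$\boldsymbol{u}_{i,cnF}$'' stream placed on the bit levels that the relay sees but the destinations do not; the relay then runs a mixture of decode-and-forward, compute-and-forward on the aligned common parts, and \emph{cooperative interference neutralization} fed by the $\boldsymbol{u}_{i,cnF}$ streams, retransmitting the result on its $n_r$ levels. For each scheme I would (a) fix the signal-level allocation at every node by a diagram of the type already used in the paper, (b) write the decodability constraints at the relay and at both receivers, solve the resulting small linear program for the optimal stream lengths, and (c) check that the attained sum-rate equals the active term of \eqref{eq:capacity_LD_IRC} on the relevant sub-region. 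Partitioning $\{n_s>n_c\}$ into the finitely many polyhedral pieces on which a single term of \eqref{eq:capacity_LD_IRC} is the minimum, and attaching to each piece whichever of the four schemes is tight there, completes the LD-IRC characterization; the case $n_c=n_d$ is closed by a simple time-sharing-style relaying scheme matched to the dedicated bound, yielding \eqref{eq:capacity_LD_IRC_nceqnd}.

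The main obstacle I expect is the converse for the two genuinely new bounds, above all the one with the discount $\bigl(n_c-[n_s-\max\{n_d,n_c\}]^+\bigr)^+$: the genie must be chosen so that all ``spurious'' entropy terms cancel and only the intended level counts remain, and the same choice must work uniformly across the sub-cases $n_d>n_c$ versus $n_d<n_c$, $n_r$ large versus small, and $n_s$ above or below $\max\{n_d,n_c\}$. Once the right genies, the right rate splits, and the right role of the neutralization stream are in hand, the remaining work — the cut-set and imported bounds, the linear programs for each scheme, and the region partition — is lengthy but essentially mechanical.
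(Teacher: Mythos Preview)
Your converse bookkeeping has several concrete errors. First, you have lost one of the six terms: $\max\{n_d,n_c\}+\max\{n_d,n_s\}$ never appears in your accounting (you list terms 1, 3, 6, then 2 and 4 twice over, but never 5); in the paper this is a standalone genie-aided bound obtained by giving $\Y_r^n$ to both receivers and additionally $W_1$ to Rx$2$. Second, the term you flag as ``the technical core'' --- the one carrying the discount $\bigl(n_c-[n_s-\max\{n_d,n_c\}]^+\bigr)^+$ --- does \emph{not} require any new genie: it is simply the minimum of the cut-set bound $\max\{n_d,n_c,n_r\}+\max\{n_d,n_c\}$ and the \emph{existing} genie bound $\max\{n_d,n_c,n_r\}+\max\{n_d,n_c\}-n_c+(n_s-\max\{n_d,n_c\})^+$ from \cite{ChaabanSezgin_IT_IRC}. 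Third, the bound $2\max\{n_s,n_r+n_s-n_c,n_d-n_c\}$ is not obtained by a direct genie either; the paper derives it by \emph{channel enhancement} --- enhancing every receiver by $(n_s-n_c)^+$ bit levels and re-invoking the old bound $2\max\{n_c,n_r,n_d-n_c\}+2(n_s-n_c)^+$ on the enhanced channel. Fourth, your ``$Z$-channel / each receiver cancels its own message'' argument for $2\max\{n_c,n_r+(n_d-n_c)^+\}$ is circular as stated and is not how the paper proceeds: that bound is an ETW-style genie where Rx$i$ is handed $\bS^{q-(n_c-n_r)}\X_i^n$, valid only under $n_r\le n_c\le n_d$, and then shown to be redundant (hence droppable) outside that sub-regime. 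Finally, $n_r+2\max\{n_d,n_c\}-n_c$ is a \emph{new} genie-aided bound in this paper (give $\bS^{q-n_r}\X_r^n$ to Rx$1$ and $(\bS^{q-n_r}\X_r^n,\Y_1^n,W_1)$ to Rx$2$), not one imported from \cite{ChaabanSezgin_IT_IRC}.

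Your achievability outline is in the right spirit (rate splitting with private/common streams plus CF, DF, and CN at the relay, block-Markov encoding and backward decoding), and the paper indeed uses four schemes for the WI/SI regimes and a simple one-user-active DF scheme for $n_c=n_d$. But be aware that the four schemes are not interchangeable: WI-1 uses CN+CF+DF+private, WI-2 replaces DF by common signalling, WI-3 drops CF, and SI combines all of them; the paper partitions the parameter space explicitly and solves a small linear program per piece. Your description of a single ``$\U_{i,cnF}$'' neutralization stream is an oversimplification --- several schemes layer multiple CN streams at different levels --- so the linear programs are more delicate than you suggest, though still routine once the level diagrams are fixed.
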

The proof of this theorem for the special case $n_c=n_d$ is simple. In fact, the converse for this case follows from cut-set bounds, while its achievability follows by either ignoring the relay to achieve $n_d$, or using decode-forward at the relay to achieve $\min\{n_s,n_r\}$. The converse and achievability of the remaining case ($n_c\neq n_d$) are more involved, as they require genie-aided upper bounds, and transmission schemes which are based on different relay strategies. Details of the converse and achievability are presented in Sections \ref{sec:UBLD-IRC} and \ref{Sec:AchievabilityLDIRC}, respectively.

\subsection{GDoF Analysis of the Gaussian IRC}
\label{sec:main_result_Gaussian-IRC}
Using the capacity result to the LD-IRC and extending this for the Gaussian case, we obtain the GDoF characterization of the Gaussian IRC. {Similar to above, the GDoF of the case where the source-relay channels are weaker than the cross channels was characterized in \cite{ChaabanSezgin_IT_IRC}. In the language of GDoF, this corresponds to $\gamma\leq \alpha$. Basically, the GDoF in this case can be obtained from \eqref{SumCapLDIRC} by replacing $n_d$, $n_c$, $n_r$, and $n_s$ by $1$, $\alpha$, $\beta$, and $\gamma$, respectively. Using similar replacement in the statement of Theorem \ref{Theorem:capacity_LD_IRC} gives the GDoF of the remaining case $\gamma >\alpha$.} The following theorem presents this result.

\begin{theorem}[stronger source-relay channels Gaussian IRC]
\label{Theorem:GDoF_IRC}
The GDoF of the IRC with $\gamma>\alpha $ is given by
\begin{align}
d = \min \begin{Bmatrix} 
2\max\{1,\beta\} \\
\max\{1,\alpha,\beta\}+\max\{1,\alpha\}-\left(\alpha-[\gamma-\max\{1,\alpha\}]^+\right)^+ \\
\beta+2\max\{1,\alpha\}-\alpha\\
2\max\{\gamma,\beta+\gamma-\alpha,1-\alpha\}\\
\max\{1,\alpha\}+\max\{1,\gamma\}\\
2\max\{\alpha,\beta+(1-\alpha)^+\}
\end{Bmatrix}.
\end{align}
if $\alpha\neq1$, and by
\begin{align}
d = \max\{1,\min\{\beta,\gamma\}\}
\end{align}
otherwise.
\end{theorem}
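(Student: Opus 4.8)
<br>

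The proof of Theorem~\ref{Theorem:GDoF_IRC} follows the standard deterministic-to-Gaussian translation pipeline. The plan is to obtain both the converse and the achievability for the Gaussian IRC in the regime $\gamma>\alpha$ by leveraging the linear-deterministic result in Theorem~\ref{Theorem:capacity_LD_IRC} as a blueprint, replacing the channel-strength integers $(n_d,n_c,n_r,n_s)$ by their normalized real counterparts $(1,\alpha,\beta,\gamma)$. Concretely, I would proceed in two halves. First, the converse: the six genie-aided and cut-set upper bounds on $C_{\mathrm{det},\Sigma}$ derived in Section~\ref{sec:UBLD-IRC} each have a Gaussian analogue obtained by the same genie/side-information choice; one carries out the entropy-bounding steps for the Gaussian channel, using that each linear combination of Gaussian inputs has differential entropy bounded above by that of a Gaussian of the appropriate variance (yielding terms of the form $C(\,\cdot\,)$ and $C^+(\,\cdot\,)$), and that the noise terms contribute only $O(1)$ constants. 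Dividing by $m_d$ and letting $m_d\to\infty$ via \eqref{eq:2_1}, the $O(1)$ terms vanish and each bound collapses to the corresponding term in the $\min\{\cdots\}$ of the theorem. This is exactly the program announced in Section~\ref{sec:UBGaussian-IRC}.

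Second, the achievability: the four transmission schemes of Section~\ref{Sec:AchievabilityLDIRC} (rate-splitting into common/private parts combined with decode-and-forward, compute-and-forward, and cooperative interference neutralization) must be ported to the Gaussian setting. The natural approach is to mimic the deterministic bit-allocation with superposition of Gaussian codebooks at power levels matched to the ``heights'' of the sub-signals in the deterministic picture (private signals at the noise floor, common signals at levels visible at the cross-link, neutralization signals fed to the relay, etc.), and to use the same decoding order. Each receiver then decodes its intended common and private streams treating the rest as noise; the rate achieved by each stream is within a constant of the number of deterministic bits it carried, so the achieved sum-rate differs from $C_{\mathrm{det},\Sigma}$ (with the substitution above) by $O(1)$, which again disappears in the GDoF limit. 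Using the optimal rate-splitting parameters already identified for the deterministic schemes gives matching lower bounds, so the converse and achievability meet and the GDoF equals the claimed $\min\{\cdots\}$ when $\alpha\neq 1$.

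For the degenerate case $\alpha=1$, I would argue directly as in the remark after Theorem~\ref{Theorem:capacity_LD_IRC}: the upper bound $d\le\max\{1,\min\{\beta,\gamma\}\}$ follows from the cut-set bound around the receivers (ignoring cooperation) and the bound that limits the relay's contribution to $\min\{\beta,\gamma\}$ per user, while achievability follows either by silencing the relay and running the $\alpha=1$ interference channel (whose symmetric GDoF is $1$) or by full decode-and-forward at the relay to reach $\min\{\beta,\gamma\}$, and taking the better of the two.

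The main obstacle I anticipate is \emph{not} the converse, which is a fairly mechanical Gaussianization of the deterministic genie-aided bounds, but the achievability: making the Gaussian power-allocation rigorous so that the superposed common/private/neutralization codewords are simultaneously (i) decodable in the prescribed order at each receiver with a vanishing constant gap, and (ii) compatible with the relay's causal compute-and-forward / neutralization operation, which requires careful lattice-based or nested-codebook arguments to realize the ``cooperative interference neutralization'' that has no analog in \cite{ChaabanSezgin_IT_IRC}. Ensuring that all residual interference-plus-noise terms stay at $O(1)$ height after the substitution $(n_d,n_c,n_r,n_s)\to(1,\alpha,\beta,\gamma)$ across every sub-case of the piecewise-defined $\min$ is where the bookkeeping is heaviest; this is deferred to Section~\ref{sec:GDoFAchievingSchemes}.
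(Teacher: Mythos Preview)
Your proposal is correct and follows essentially the same two-part program as the paper: the converse Gaussianizes the same six genie-aided/cut-set bounds (proved in Appendix~\ref{app:known_UB_Gauss_proof} and summarized in Lemma~\ref{Lemma:UB_Gauss}) and takes the GDoF limit, while the achievability ports the LD schemes via power-level superposition with lattice codes; the $\alpha=1$ case is handled exactly as you describe. The one place where the paper is more concrete than your sketch is the achievability mechanism: rather than a loose ``Gaussian codebooks at matched heights with $O(1)$ bookkeeping,'' Section~\ref{sec:GDoFAchievingSchemes} explicitly decomposes the channel into $N$ sub-channels with geometric power levels $\delta^\ell-\delta^{\ell-1}$, assigns a single nested-lattice codeword per sub-channel, and shows each sub-channel supports rate $\tfrac{1}{2}\log(\delta/4)$ under successive decoding; this reduces the Gaussian rate-allocation problem to the \emph{same} integer sub-channel allocation already solved for the LD-IRC, and Appendix~\ref{app:Example_LD-Gaussian} then shows the translation $(n_d,n_c,n_r,n_s)\mapsto(1,\alpha,\beta,\gamma)$ is exact in the GDoF limit by letting $N\to\infty$ after $P\to\infty$.
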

{The proofs of the converse and achievability of this theorem are given in Sections \ref{sec:UBGaussian-IRC} and \ref{sec:GDoFAchievingSchemes}. The proofs are based on the insights obtained from the LD-IRC.}

Before we continue, we recall the GDoF for the IC given in \cite{EtkinTseWang}. 
\begin{lemma}[ETW \cite{EtkinTseWang}]
The GDoF for the IC is given by
\begin{align}
d&= \min\{\max\{2-2\alpha,2\alpha\},2-\alpha\}, \quad  \alpha \leq 1 \\
d&= \min\{\alpha,2\}, \phantom{\{2-2\alpha,2\alpha\},2-\alpha\} \quad}  \alpha > 1.
\end{align}
\end{lemma}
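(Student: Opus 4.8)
The plan is to recognise this Lemma as the symmetric case of the Etkin--Tse--Wang ``one-bit'' characterisation \cite{EtkinTseWang}, so the honest first step is simply to invoke that reference. For a self-contained alternative, though, I would derive it as the relay-free specialisation of the LD-IRC result already recorded in \eqref{SumCapLDIRC}: a two-user IC is nothing but an IRC whose relay hears and says nothing, i.e.\ $n_s=n_r=0$ (equivalently, in the GDoF scaling, $\gamma=\beta=0$; this point is inside the model since \eqref{InterferenceLimited} only forces the gains to be positive, and $\lfloor m_r\rfloor=\lfloor m_s\rfloor=0$ is allowed).

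Carrying this out, I would substitute $n_s=n_r=0$ into \eqref{SumCapLDIRC}. Four of the six terms become copies of the others, leaving
\begin{align}
C_{\mathrm{det},\Sigma}\big|_{n_s=n_r=0}=\min\left\{2n_d,\ 2\max\{n_d,n_c\}-n_c,\ 2\max\{n_c,n_d-n_c\}\right\}.
\end{align}
Dividing by $n_d$ and letting $n_d\to\infty$ with $n_c/n_d\to\alpha$ gives $d=\min\{2,\ 2\max\{1,\alpha\}-\alpha,\ 2\max\{\alpha,1-\alpha\}\}$. When $\alpha\le1$ the middle term is $2-\alpha$, which already beats the $2$, so $d=\min\{2-\alpha,\ \max\{2-2\alpha,2\alpha\}\}$; when $\alpha>1$ the middle term is $\alpha$ and $d=\min\{\alpha,2\}$. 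These are precisely the two branches in the statement, and since the LD and Gaussian IC have the same GDoF \cite{AvestimehrDiggaviTse_IT,EtkinTseWang}, this closes the argument.

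If one prefers a direct proof, I would run the standard ETW programme. Converse: (i) handing receiver $i$ the interfering codeword reduces its channel to a point-to-point link, so $d_i\le1$ and $d\le2$; (ii) for $\alpha>1$ each receiver can decode the stronger interference, so each receiver faces a multiple-access channel and $R_1+R_2\le I(X_1,X_2;Y_1)$, giving $d\le\alpha$; (iii) for $\alpha\le1$ the genie-aided bounds---give receiver~$1$ the side signal it induces at receiver~$2$ and symmetrically---yield $d\le2-\alpha$, while a weak-interference (``Z-channel'') bound gives $d\le2\max\{\alpha,1-\alpha\}$. Achievability: for $\alpha\le1$, Han--Kobayashi with the private power tuned so each private stream arrives at the noise floor of the unintended receiver, then optimising the common/private split, traces out the whole curve; for $1<\alpha<2$ an all-common Han--Kobayashi (compound-MAC) scheme attains $d=\alpha$; for $\alpha\ge2$ each user transmits at full rate while the other's signal is decoded and cancelled, attaining $d=2$.

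The result is classical, so there is no genuine obstacle; the only points requiring care are bookkeeping ones---checking that the nested $\min/\max$ really reproduces the ``W-curve'' with its kink at $\alpha=2/3$ and its dips to $1$ at $\alpha\in\{1/2,1\}$, and, in the specialisation route, reading $\gamma=\beta=0$ as the limit of vanishing relay gains (or, equivalently, just citing \cite{EtkinTseWang}). The one step that is not entirely mechanical, if a from-scratch converse is wanted, is the genie choice behind $d\le2-\alpha$, which is the heart of the ETW argument.
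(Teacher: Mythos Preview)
Your proposal is correct, and your primary approach---simply invoking \cite{EtkinTseWang}---is exactly what the paper does: the lemma is stated with attribution and no proof, since it is a recalled result used only for comparison with the IRC GDoF. Your alternative derivations (specialising \eqref{SumCapLDIRC} at $n_s=n_r=0$, and the direct ETW sketch) are correct and go beyond what the paper provides, but they are not needed here.
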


\subsection{Discussion}
Before going into details of the proof of Theorems \ref{Theorem:capacity_LD_IRC} and \ref{Theorem:GDoF_IRC}, we discuss the GDoF result and the obtained gain by using the relay. To this end, we study the GDoF for the IRC for different ranges of source-relay channel strength.
%%%%%%
\begin{figure}
\centering
\subfigure [] {
\begin{tikzpicture}[scale=1]
\bonegseven
\end{tikzpicture}
\label{Fig:bonegseven}
}
\subfigure [] {
\begin{tikzpicture}[scale=1]
\bfourgseven
\end{tikzpicture}
\label{Fig:bfourgseven2}
}
\subfigure [] {
\begin{tikzpicture}[scale=1]
\bsevengseven
\end{tikzpicture}
\label{Fig:bsevengseven}
}

\caption{The GDoF of the IRC (solid-line) for different values of $\beta$ while $\gamma= 0.7$. The dashed line represents the GDoF of the IC. (a) $\beta=0.1$, (b) $\beta=0.4$, (c) $\beta = 0.7$. }
\label{Fig:gseven}
\end{figure}
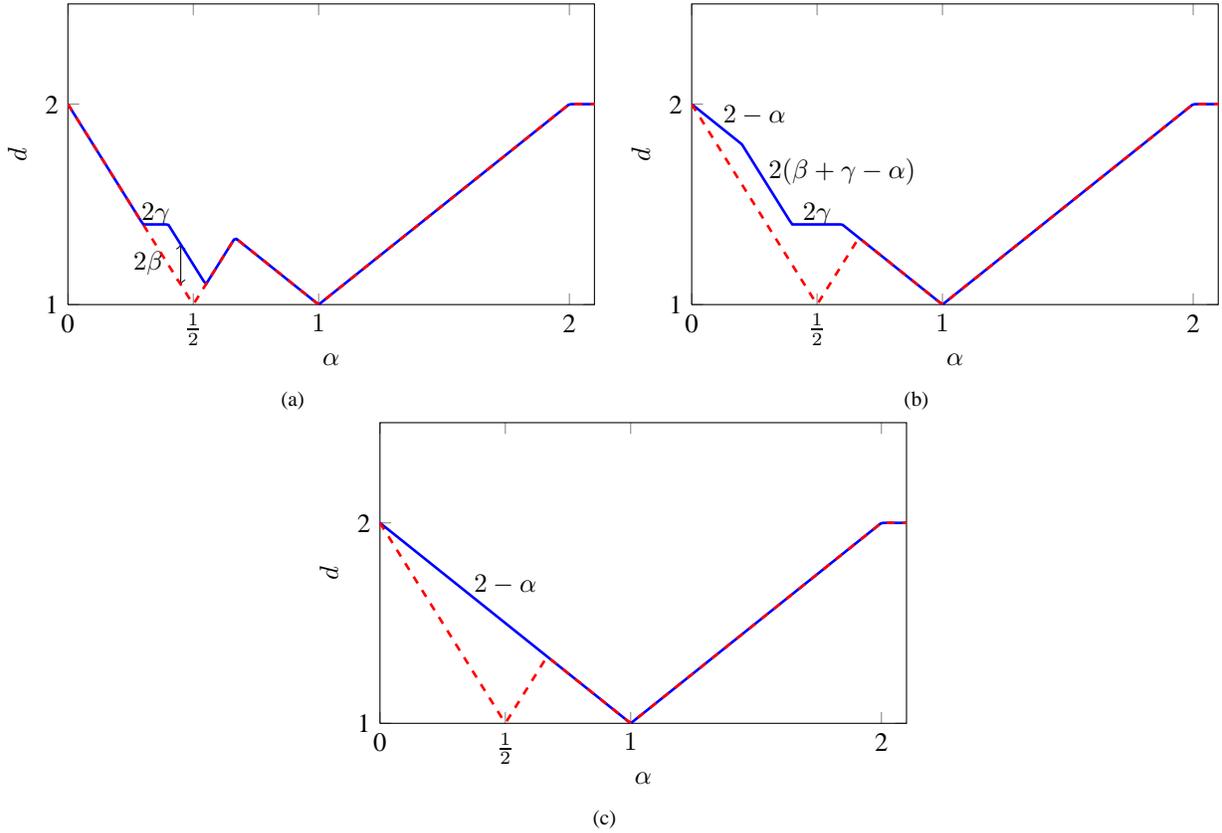
First, consider the case where the source-relay channel is weak ($\gamma<1$) and the relay-destination channel is weaker than the source-relay channel ($\beta<\gamma$).
The GDoF for the IRC with $\gamma=0.7$ and different relay-destination channel strength is illustrated in Fig. \ref{Fig:gseven}. 
As it is shown in this figure, the gain obtained by the relay is evident in the weak interference regime ($\alpha<1$). Interestingly, despite the weak ingoing and outgoing links of the relay, the relay can increase the GDoF. 
While in Fig. \ref{Fig:bonegseven}, we cannot benefit from relay for $\alpha\leq 1-\gamma$, in Fig. \ref{Fig:bfourgseven2} and \ref{Fig:bsevengseven}, relay can increase the GDoF even for very small values of $\alpha$. 
It is worth noting that in the case shown in Fig. \ref{Fig:bsevengseven}, the slop of GDoF is $-1$  with respect to $\alpha$ for $\alpha<\frac{2}{3}$, while in Fig. \ref{Fig:bonegseven} the slop changes from $-2$ to $0$ and then the GDoF of the IRC and that of the IC are parallel to each other. Now, consider the case shown in Fig. \ref{Fig:bfourgseven2} for values of $\alpha<\frac{2}{3}$. In this case, the slop of GDoF is $-1$ for $\alpha$'s in the beginning and the end of the interval $[0,\frac{2}{3}]$. However, in between, the GDoF behaviour is similar to that of shown in Fig. \ref{Fig:bonegseven}. This is due to the fact that in this case, we use a scheme which is a combination of the schemes used in other two extreme cases.
To understand the GDoF behaviour shown in Fig. \ref{Fig:gseven}, we explain briefly the main idea of the transmission scheme. 
To benefit from the relay, we need to use superposition block Markov encoding \cite{CoverElgamal}, \cite{Willems}. 
Using this encoding, some ``future" signals are provided to the relay. These future signals can be used at the relay in the next channel uses. Hence, the relay can operate partially as a cognitive relay. 
For the case in which $\gamma<1$, the future signal sent by Tx$i$ is received at relay and Rx$i$. 
However, by using backward decoding at the receiver side (details are provided in the following sections), Rx$i$ knows this future signal sent by Tx$i$ and thus, it removes the interference caused by that future signal. 
Doing this, some interference free dimension will be available at the Rx's. 
%Notice that in very weak interference regime, the size of the interference free dimension increases versus $\alpha$.
This interference free dimension can be either used by the undesired Tx to send some common signal as in Han-Kobayashi scheme \cite{HanKobayashi} or by the relay to provide some additional information to the receivers. 
%Regardless of how this interference free dimension is used, the additional signals provided to the receivers have to be received in this interference free dimension. 
While for the first choice (when common signals are provided), the interference channel has to be strong enough, for the second choice, the relay-destination link needs to be sufficiently strong. In the case shown in Fig. \ref{Fig:bonegseven}, the relay-destination channel is weak. Hence, relay cannot use this interference free dimension. 
However, for sufficiently strong interference, the interference free dimension will be accessible for common signaling. Thus, first for $1-\gamma<\alpha$, we can benefit from the interference free dimension. On the other hand, in Fig. \ref{Fig:bsevengseven}, the relay-destination channel is sufficiently strong. Hence, in this case, the interference free dimension will be used by the relay for providing some additional signal to the receivers. In this case, $\beta$ is so large that the relay can forward more information when the interference gets stronger as long as $\alpha<\frac{2}{3}$. Therefore, in whole regime $\alpha<\frac{2}{3}$, the GDoF performance of the IRC is higher than that of the IC. 
Now, consider the case shown in Fig. \ref{Fig:bfourgseven2}. 
In this case, the relay-destination channel is weaker than the previous case. Hence, the relay cannot benefit from increasing the interference channel in whole regime $\alpha<\frac{2}{3}$. Due to this, for very low values of $\alpha$ ($\alpha<2(\beta+\gamma-1)$), the GDoF behaviour is similar to the case shown in Fig. \ref{Fig:bsevengseven}. On the other hand, when interference channel is strong enough ($\beta<\alpha$), we benefit from common signaling.

Generally, for each setup, there is a regime where the GDoF optimal scheme is to send only private signals. 
As long as the setup operates in this regime, the interference has to be completely ignored and the stronger the interference, the worse is the GDoF performance. This regime has been characterized for the IC in \cite{EtkinTseWang} and it is given by $\alpha\leq\frac{1}{2}$. As it is shown in Fig. \ref{Fig:gseven}, relaying shrinks this regime. In the extreme case, when relay-destination channel is sufficiently strong, this regime is completely vanished (cf. Fig. \ref{Fig:bsevengseven}).

\begin{figure}
\centering
\begin{tikzpicture}[scale=1]
\bfiftheengseven
\end{tikzpicture}
\caption{The GDoF of the IRC (solid-line) for $\beta = 1.5$ and $\gamma= 0.7$. The dashed line represents the GDoF of the IC.}
\label{Fig:gsevenbfiftheen}
\end{figure}
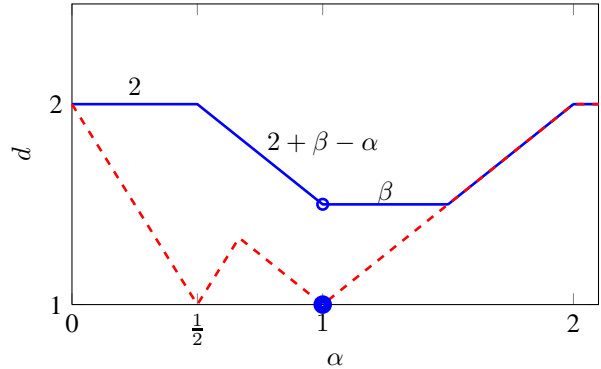
Now, we consider the case that the source-relay channel is weaker than the direct channel ($\gamma<1$) and the relay-destination channel is stronger than the direct channel ($1<\beta$). The GDoF result for this case is shown in Fig. \ref{Fig:gsevenbfiftheen}. In this case, the relay is so strong that it can forward some additional signals which are received at the receivers without any interference from other users as long as $\alpha<\beta$. This is shown in Fig. \ref{Fig:gsevenbfiftheen} where $\beta =1.5$.
\begin{figure}
\centering
\subfigure [] {
\begin{tikzpicture}[scale=1]
\btwogthirty
\end{tikzpicture}
\label{Fig:btwogthirty}
}
\subfigure [] {
\begin{tikzpicture}[scale=1]
\bfiftheengthirty
\end{tikzpicture}
\label{Fig:bfiftheengthirty}
}
\subfigure [] {
\begin{tikzpicture}[scale=1]
\btwentygthirty
\end{tikzpicture}
\label{Fig:btwentygthirty}
}
\subfigure [] {
\begin{tikzpicture}[scale=1]
\bsixtygthirty
\end{tikzpicture}
\label{Fig:bsixtygthirty}
}
\caption{The GDoF of the IRC (solid-line) for $\gamma=3$. The dashed line represents the GDoF of the IC. (a) $\beta = 0.2$, (b) $\beta = 1.5$, (c) $\beta = 2$, (d) $\beta=6$.}
\label{Fig:gthirty}
\end{figure}
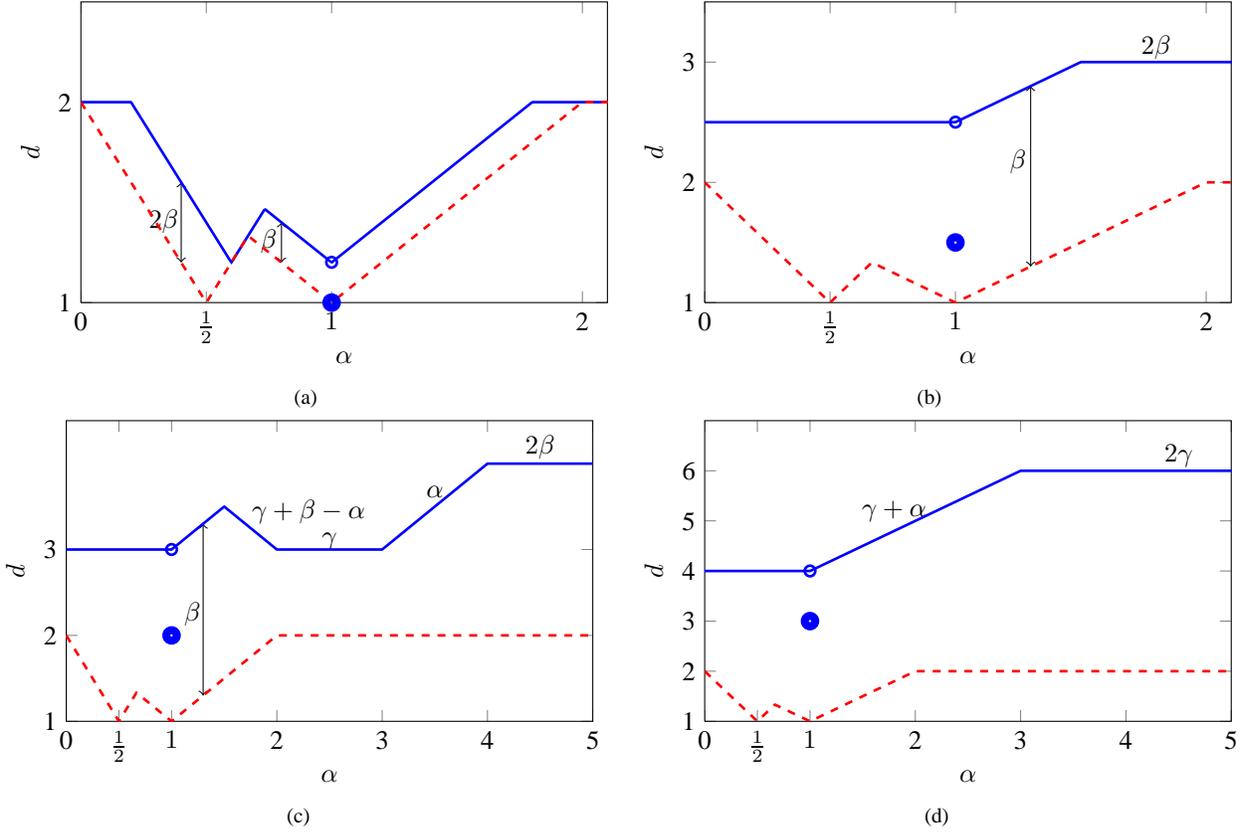

Finally, consider the case that the source-relay link $\gamma$ is strong. In Fig. \ref{Fig:gthirty}, the GDoF for the IRC with $\gamma=3$ is illustrated. In this case, the observation at the relay is so good that the relay performs as a cognitive relay. Therefore, the larger is the relay-destination channel, the more capable is the relay to neutralize the interference on air (see. Figures \ref{Fig:btwogthirty}, \ref{Fig:bfiftheengthirty}). 
In Fig. \ref{Fig:btwentygthirty}, the relay-destination link is large enough to neutralize the interference. However, at point $\alpha = \gamma/2$, the interference channel becomes so strong that the capacity of the source-relay channel is not sufficiently large for providing enough information to the relay. Hence, at this point the increase of the GDoF versus $\alpha$ stops.  
An interesting observation in Fig. \ref{Fig:btwentygthirty} is that the GDoF can decrease versus the interference channel strength in strong interference regime. This is in contrast to the IC \cite{EtkinTseWang} and X-channel \cite{HuangJafarCadambe} with strong interference where the GDoF is a nondecreasing function of $\alpha$.
The GDoF for the IRC for the case that the relay-destination channel is very strong is illustrated in Fig. \ref{Fig:bsixtygthirty}.
In this case, the bottleneck of the transmission will be the source-relay channel. This is shown in Fig. \ref{Fig:bsixtygthirty}, where $\beta=6$. 
In this case, the relay-destination channel is so strong that it is able to forward all its observation to the destinations without overlapping with other signals. In other words, we have a complete cooperation between relay and destinations since all received signal at the relay is available at the receivers. Due to this, the receiver is able to cancel the interference completely as long as the capacity of the source-relay channel is larger than the capacity of the interference channel ($\alpha <\gamma$).

{Next, we discuss the LD-IRC. We start by presenting the upper bounds which provide the converse of Theorem \ref{Theorem:capacity_LD_IRC}.}

\section{Upper Bounds for the LD-IRC}
\label{sec:UBLD-IRC}
{A standard bounding approach that can be applied for the IRC is the cut-set bound \cite{CoverThomas}. In addition to the cut-set bounds, further bounds that can be tighter than the cut-set bounds in some cases can be derived by using genie-aided methods. Those techniques are used in this paper for developing upper bounds that coincide with the statement of Theorem \ref{Theorem:capacity_LD_IRC}. These upper bounds are given in the following lemma.}

\begin{lemma}
\label{Lemma:new_bounds}
The sum-capacity of the LD-IRC is upper bounded by
\begin{align}
\label{eq:new_UB6_IRC_det}
C_{\mathrm{det},\Sigma} \leq & \max\{n_d,\min\{n_r,n_s\}\}  \quad \text{ for } n_d = n_c\\
\label{eq:new_UB4_IRC_det}
C_{\mathrm{det},\Sigma} \leq & \max\{n_d,n_c,n_r\} + \max\{n_d,n_c\}\\
\label{eq:new_UB1_IRC_det}
C_{\mathrm{det},\Sigma} \leq & n_r+2\max\{n_d,n_c\}-n_c\\
\label{eq:new_UB3_IRC_det}
C_{\mathrm{det},\Sigma} \leq & \max\{n_d,n_c\}+\max\{n_d,n_s\}\\
\label{eq:new_UB2_IRC_det}
C_{\mathrm{det},\Sigma} \leq & 2\max\{n_c,n_r,n_d-\max\{n_c,n_s\}\}+2(n_s-n_c)^+\\
\label{eq:new_UB5_IRC_det}
C_{\mathrm{det},\Sigma} \leq & 2\max\{n_c,n_r+n_d-n_c\}  \quad \text{ for } n_r\leq n_c\leq n_d.
\end{align}
\end{lemma}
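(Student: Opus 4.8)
The plan is to prove all six bounds in Lemma~\ref{Lemma:new_bounds} by genie-aided converses built on Fano's inequality, $nR_i \le I(W_i;\Y_i^n)+n\epsilon_n$, exploiting the two structural features of the linear-deterministic model: given both messages the entire network (including the relay) is a deterministic map, so $H(\,\cdot\mid W_1,W_2)=0$; and the entropy per channel use of any down-shifted input seen through a link of strength $m$ is at most $m$, so after a chain-rule expansion every entropy term collapses to a level count. The bound \eqref{eq:new_UB6_IRC_det} is the atypical one and I would treat it first: when $n_c=n_d$ the two receiver observations coincide, $\Y_1[k]=\bS^{q-n_d}(\X_1[k]\oplus\X_2[k])\oplus\bS^{q-n_r}\X_r[k]=\Y_2[k]$, so $(W_1,W_2)$ must be decoded from a \emph{single} output while the relay hears $\bS^{q-n_s}(\X_1[k]\oplus\X_2[k])$. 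This is a point-to-point relay channel with combined input $U[k]=\X_1[k]\oplus\X_2[k]$, and the cut-set bound on it gives $\min\{\max\{n_d,n_s\},\max\{n_d,n_r\}\}=\max\{n_d,\min\{n_r,n_s\}\}$, which is \eqref{eq:new_UB6_IRC_det}.

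For the other five bounds I would work from a small menu of genies handed to one or both receivers: the relay transmit sequence $\X_r^n$, the relay receive sequence $\Y_r^n=\bS^{q-n_s}(\X_1^n\oplus\X_2^n)$, and the sub-vector of a source input supported on the levels $[n_c+1:n_s]$ that the relay sees but the receivers do not. The recurring tool is the chain-rule inequality $I(W_1;\Y_1^n)+I(W_2;\Y_2^n)\le I(W_1;\Y_1^n,G_1)+I(W_2;\Y_2^n,G_2\mid W_1)$, valid because $W_1\perp W_2$, after which the conditional term equals $H(\Y_2^n,G_2\mid W_1)$ since, given $W_1$, everything is a deterministic function of $W_2$. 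Taking $G_2=\X_r^n$ yields \eqref{eq:new_UB4_IRC_det} and \eqref{eq:new_UB1_IRC_det}: one receiver keeps all $\max\{n_d,n_c,n_r\}$ of its levels (or, after subtracting the genie, an $n_r$-plus-private count), while the other, with $W_1$ and $\X_r^n$ stripped off, faces only a clean $\max\{n_d,n_c\}$-level link; the genie cost $I(W_2;\X_r^n\mid W_1)=H(\X_r^n\mid W_1)$ is controlled because, conditioned on $W_1$, $\X_r^n$ is a deterministic function of the top $n_s$ levels of $\X_2^n$ carried inside $\Y_r^{n-1}$ and so is absorbed into the remaining conditional entropies. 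Choosing instead $G=\Y_r^n$ caps the "transmitter-side" contribution at $\max\{n_d,n_s\}$ and the "receiver-side" one at $\max\{n_d,n_c\}$, which is \eqref{eq:new_UB3_IRC_det}.

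Bound \eqref{eq:new_UB2_IRC_det} is the genuinely new one, and its $2(n_s-n_c)^+$ term is exactly what records the regime $n_s>n_c$. Here I would run a symmetric argument: give Rx$1$ the genie $\Y_r^n$, so that after decoding $W_1$ it reconstructs $\X_1^n$, hence from $\Y_r^n$ the top $n_s$ levels of $\X_2^n$, hence $\X_r^n$, and hence, subtracting from $\Y_1^n$, the top $n_c$ levels of $\X_2^n$; the genie-aided Rx$1$ therefore recovers $W_1$ plus everything about $W_2$ visible on $\max\{n_c,n_s\}=n_s$ levels, giving $n(R_1+R_2)\le H(\Y_1^n)+H(\text{the }(n_s-n_c)\text{ extra levels of }\X_2^n)\le n\bigl(\max\{n_c,n_r,n_d-n_s\}+(n_s-n_c)\bigr)$. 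Doing the mirror-image argument at Rx$2$ and adding the two inequalities produces the leading factor $2$ in \eqref{eq:new_UB2_IRC_det}. Finally, \eqref{eq:new_UB5_IRC_det} (in the sub-regime $n_r\le n_c\le n_d$) is the relay-enhanced analogue of the Etkin--Tse--Wang one-sided bound $2\max\{n_c,n_d-n_c\}$: feeding each receiver the interference its partner causes turns each user's link into a deterministic relay channel over its private $n_d-n_c$ levels into which the relay injects $n_r$ more, and summing the two single-user bounds gives $2\max\{n_c,n_r+n_d-n_c\}$.

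The step I expect to be the main obstacle is the genie bookkeeping tied to the relay in \eqref{eq:new_UB4_IRC_det}, \eqref{eq:new_UB1_IRC_det} and especially \eqref{eq:new_UB2_IRC_det}: because $\X_r^n$ and $\Y_r^n$ depend causally on \emph{both} messages, handing them to a receiver introduces an $I(W_i;\cdot)$ term that is not obviously negligible. The resolution is to always supply the relay-related genie \emph{conditioned on the message of the same receiver}, so that the fact that $\X_r^n$ is deterministic given $(W_1,W_2)$ kills the residual entropy, and to pick, in \eqref{eq:new_UB2_IRC_det}, precisely the levels $[n_c+1:n_s]$ of $\X_\ell^n$ as side information so that the cost is exactly $(n_s-n_c)^+$ per receiver with no level double-counted. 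Making these level-counting steps land on \eqref{eq:new_UB2_IRC_det} exactly, rather than on a looser surrogate, is where the care is concentrated.
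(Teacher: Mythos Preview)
Your overall shape is right, and your treatments of \eqref{eq:new_UB6_IRC_det}, \eqref{eq:new_UB1_IRC_det}, and \eqref{eq:new_UB3_IRC_det} match the paper's (cut-set bounds for the first; genies $\bS^{q-n_r}\X_r^n$ and $\Y_r^n$, paired with $W_1$ and $\Y_1^n$ at the second receiver, for the other two). One small correction: \eqref{eq:new_UB4_IRC_det} needs no genie. It is just the broadcast cut $R_\Sigma\le H(\Y_1)+H(\Y_2\mid\Y_1)\le H(\Y_1)+H(\Y_1\oplus\Y_2)$; the relay contribution cancels in $\Y_1\oplus\Y_2$, so the second term is at most $\max\{n_d,n_c\}$.

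There are genuine gaps in your plans for \eqref{eq:new_UB2_IRC_det} and \eqref{eq:new_UB5_IRC_det}. For \eqref{eq:new_UB2_IRC_det} your sketch is not coherent: you write a sum-rate inequality $n(R_1+R_2)\le n(\max\{n_c,n_r,n_d-n_s\}+(n_s-n_c))$ from a single genie-aided receiver and then say a mirror argument plus summing produces the factor~$2$, but if each inequality already bounds $R_1+R_2$, summing buys nothing; and the implicit step $H(\Y_1^n)\le n\max\{n_c,n_r,n_d-n_s\}$ has no justification, since $\Y_1$ has $\max\{n_d,n_c,n_r\}$ non-trivial levels irrespective of $n_s$. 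The paper does not attack \eqref{eq:new_UB2_IRC_det} by a direct genie at all. It takes the \emph{already-established} bound $C_{\mathrm{det},\Sigma}\le 2\max\{n_c,n_r,n_d-n_c\}+2(n_s-n_c)^+$ and applies it to an \emph{enhanced} LD-IRC in which every destination-side link is raised by $(n_s-n_c)^+$: $\bar n_d=n_d+(n_s-n_c)^+$, $\bar n_c=n_c+(n_s-n_c)^+$, $\bar n_r=n_r+(n_s-n_c)^+$, $\bar n_s=n_s$. In the enhanced channel $\bar n_s\le\bar n_c$, so the trailing $(\bar n_s-\bar n_c)^+$ term vanishes, and substituting back yields \eqref{eq:new_UB2_IRC_det}. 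This channel-enhancement (equivalently, noise reduction at both receivers) is the actual idea, and nothing in your proposal points to it.

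For \eqref{eq:new_UB5_IRC_det}, ``feeding each receiver the interference its partner causes'' does not yield the needed cancellation once the relay is present. In the IC, the ETW genie $\boldsymbol{s}_i=\bS^{q-n_c}\X_i^n$ works because the negative term $-H(\Y_1^n\mid W_1)=-H(\bS^{q-n_c}\X_2^n)$ exactly matches $+H(\boldsymbol{s}_2)$ from user~$2$'s bound. In the IRC that negative term is $-H(\bS^{q-n_c}\X_2^n\oplus\bS^{q-n_r}\X_r^n\mid W_1)$, and the relay mixture blocks the match. The paper's fix is to calibrate the genie to the relay strength: take $\boldsymbol{s}_i=\bS^{q-(n_c-n_r)}\X_i^n$, only the top $n_c-n_r$ bits of user~$i$'s \emph{own} input. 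Then $\bS^{q-n_c}\X_2^n\oplus\bS^{q-n_r}\X_r^n$ splits into its relay-free top $n_c-n_r$ bits (which are exactly $\boldsymbol{s}_2$) and a remainder; the $-H(\boldsymbol{s}_2)$ piece cancels against $+H(\boldsymbol{s}_2)$, the remainder is dropped by non-negativity, and one is left with $n(R_1+R_2)\le H(\Y_1^n\mid\boldsymbol{s}_1)+H(\Y_2^n\mid\boldsymbol{s}_2)$. Conditioning on $\boldsymbol{s}_i$ kills the top $n_c-n_r$ levels of $\X_i$ inside $\Y_i$, so each term is at most $\max\{n_d-(n_c-n_r),n_c,n_r\}=\max\{n_c,n_r+n_d-n_c\}$ under $n_r\le n_c\le n_d$. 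The precise count of $n_c-n_r$ giveaway levels, not $n_c$, is the missing ingredient in your plan.
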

\begin{proof}
Details of the proof of this lemma are given in Appendix \ref{app:UB_LD_IRC_proof}. Shortly, the first and second bounds are derived from the cut-set bounds. The remaining bounds are derived using genie-aided methods. The bounds \eqref{eq:new_UB3_IRC_det} and \eqref{eq:new_UB2_IRC_det} are tightened versions of the upper bounds given in~\cite[Theorems 3 and 4]{ChaabanSezgin_IT_IRC}, tightened for the case where $n_c<n_s$. Finally, the bound \eqref{eq:new_UB5_IRC_det} is inspired by a similar upper bound obtained for the IC\footnote{The bound given by $C_{\mathrm{det-IC},\Sigma}\leq 2\max\{n_d-n_c,n_c\}$ for the LD-IC.} \cite{EtkinTseWang, BreslerTse}.
\end{proof}

Note that similar to the X channel \cite{HuangJafarCadambe} and the $K$-user IC \cite{CadambeJafar_KUserIC}, the capacity of the LD-IRC has a discontinuity at $n_c=n_d$, i.e., if the cross channel is equal in strength to the direct channel. 

{In addition to these new bounds,} some upper bounds are borrowed from \cite{ChaabanSezgin_IT_IRC,ChaabanThesis}. These bounds are given in the following lemma.

\begin{lemma}(\cite{ChaabanSezgin_IT_IRC})
\label{Lemma:known_UB1_det}
The sum capacity of the LD-IRC is upper bounded as follows
\begin{align}
\label{eq:old_UB1_IRC_det}
C_{\mathrm{det},\Sigma} &\leq 2\max\{n_d,n_r\} \\
\label{eq:old_UB2_IRC_det}
C_{\mathrm{det},\Sigma} &\leq \max\{n_d,n_r,n_c\}+\max\{n_d,n_c\}-n_c+(n_s-\max\{n_d,n_c\})^+.
\end{align}
\end{lemma}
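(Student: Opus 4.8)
The plan is to recap the genie-aided single-letterizations of \cite{ChaabanSezgin_IT_IRC, ChaabanThesis}; both inequalities in the lemma are obtained in this way, so no new machinery is needed beyond the deterministic input-output relations \eqref{eq1:rec_relay_LD}--\eqref{eq1:rec_rx_LD} and Fano's inequality.

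For \eqref{eq:old_UB1_IRC_det} I would argue receiver by receiver. Fix $j\in\{1,2\}$ and $l\neq j$. Fano gives $nR_j\le I(W_j;\Y_j^n)+n\epsilon_n$. Handing $\mathrm{Rx}_j$ the genie $W_l$ and using independence of $W_1,W_2$ yields $I(W_j;\Y_j^n)\le I(W_j;\Y_j^n\mid W_l)\le H(\Y_j^n\mid W_l)\le H(\Y_j^n\mid \X_l^n)$, the last step because $\X_l^n$ is a deterministic function of $W_l$. Writing $\Y_j[k]=\bS^{q-n_d}\X_j[k]\oplus\bS^{q-n_c}\X_l[k]\oplus\bS^{q-n_r}\X_r[k]$, conditioning on $\X_l[k]$ removes the middle term, and $\bS^{q-n_d}\X_j[k]\oplus\bS^{q-n_r}\X_r[k]$ is supported on at most $\max\{n_d,n_r\}$ coordinates, so $H(\Y_j[k]\mid\X_l^n,\Y_j^{k-1})\le\max\{n_d,n_r\}$. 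Telescoping over $k$ gives $R_j\le\max\{n_d,n_r\}+\epsilon_n$, and summing over $j=1,2$ and letting $n\to\infty$ proves the bound.

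For \eqref{eq:old_UB2_IRC_det} I would follow the ETW-type genie argument of \cite{ChaabanSezgin_IT_IRC}, with the relay accounted for in two places: the genie handed to $\mathrm{Rx}_1$ contains, besides the classical ``interference-level'' portion of $\X_1^n$, also the relay's received sequence $\Y_r^n$ (an upper bound on $R_1+R_2$ must credit $\mathrm{Rx}_1$ with everything the relay can pass to $\mathrm{Rx}_2$), and the term $H(\Y_2^n\mid W_2)$ now retains the relay contribution $\bS^{q-n_r}\X_r^n$, which is why the first summand is $\max\{n_d,n_r,n_c\}$ rather than $\max\{n_d,n_c\}$. Starting from $n(R_1+R_2)\le I(W_1;\Y_1^n)+I(W_2;\Y_2^n)+n\epsilon_n$, I would (i) add the genie to the $W_1$ term, (ii) split off the genie entropy by the chain rule, (iii) observe that $H(\Y_2^n\mid W_2)$ is the entropy of the interference-plus-relay signal seen at $\mathrm{Rx}_2$, which cancels the classical part of the genie entropy and leaves the usual $(n_d-n_c)^+=\max\{n_d,n_c\}-n_c$ gap term, and (iv) bound the residual $\Y_r^n$ part of the genie by comparing the relay's observation window ($n_s$ most significant levels of $\X_1\oplus\X_2$) with a destination's window ($\max\{n_d,n_c\}$ levels), which contributes exactly $(n_s-\max\{n_d,n_c\})^+$ per channel use. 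Since $\X_r[k]$ is a causal function of $\Y_r^{k-1}$, it is already determined by the genie, so the per-symbol bounds telescope cleanly to the stated inequality.

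The main obstacle is the bookkeeping in \eqref{eq:old_UB2_IRC_det}: the genie and the order of conditioning must be chosen so that the relay's causal processing leaks no extra entropy — its time-$k$ output depends on \emph{both} users' past inputs, so it cannot be discarded naively — and so that the cancellations yield precisely $\max\{n_d,n_r,n_c\}+(n_d-n_c)^+ +(n_s-\max\{n_d,n_c\})^+$ rather than a looser expression; in particular the $(\cdot)^+$ in the last term forces a short case distinction according to whether $n_s$ exceeds $\max\{n_d,n_c\}$.
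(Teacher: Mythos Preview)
The paper does not reproduce a proof of this lemma; it only states that \eqref{eq:old_UB1_IRC_det} is a cut-set bound and \eqref{eq:old_UB2_IRC_det} a genie-aided bound, and refers to \cite{ChaabanSezgin_IT_IRC} for the details. Your sketch is therefore being compared to that characterisation rather than to an explicit argument in the paper.

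Your argument for \eqref{eq:old_UB1_IRC_det} is correct and is exactly the cut-set reasoning the paper alludes to: handing $W_l$ to $\mathrm{Rx}_j$ is the cut $\{\mathrm{Tx}_j,\text{Relay}\}\ \text{vs.}\ \{\mathrm{Tx}_l,\mathrm{Rx}_1,\mathrm{Rx}_2\}$ specialised to the single output $\Y_j$, and the per-symbol bound $H(\bS^{q-n_d}\X_j[k]\oplus\bS^{q-n_r}\X_r[k])\le\max\{n_d,n_r\}$ holds regardless of the dependence of $\X_r$ on both messages.

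For \eqref{eq:old_UB2_IRC_det} your outline is in the right spirit (ETW-type genie augmented by the relay observation $\Y_r^n$), but two points in the write-up are imprecise. First, the term $\max\{n_d,n_r,n_c\}$ does not come from $H(\Y_2^n\mid W_2)$ as you state; that conditional entropy is at most $n\max\{n_c,n_r\}$ per block. The $\max\{n_d,n_r,n_c\}$ arises instead from an unconditioned $H(\Y_j^n)$ term after the chain-rule split. Second, the ``classical'' cancellation between $H(\Y_2^n\mid W_2)=H(\bS^{q-n_c}\X_1^n\oplus\bS^{q-n_r}\X_r^n\mid W_2)$ and the genie entropy $H(\bS^{q-n_c}\X_1^n)$ is not immediate, because $\X_r^n$ depends on $\X_1^n$ through $\Y_r^{n-1}$; you cannot simply drop $\X_r^n$ to lower-bound this term. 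This is precisely why $\Y_r^n$ must appear in the genie: once $\Y_r^n$ is part of the conditioning, $\X_r^n$ becomes deterministic and the ETW cancellation goes through, with the residual $(n_s-\max\{n_d,n_c\})^+$ coming from $H(\Y_r^n\mid\cdot)$ conditioned on what a destination already sees. Your final ``obstacle'' paragraph identifies this issue correctly, but step~(iii) as written glosses over it.
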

{The first of these bounds is in fact a cut-set bound, while the second is a genie-aided bound. The proof of these bounds can be found in \cite{ChaabanSezgin_IT_IRC}.}

Now, we need to show that the upper bounds \eqref{eq:new_UB4_IRC_det}-\eqref{eq:old_UB2_IRC_det} coincide with the capacity given in Theorem \ref{Theorem:capacity_LD_IRC}. First, it is clear that an upper bound is obtained by taking the minimum of all available upper bounds. At this point, in order to allow a direct comparison between the bounds, we need to get rid of the condition associated with the bound \eqref{eq:new_UB5_IRC_det}. First, we note that the first condition given by $n_r\leq n_c$ can be dropped since if this condition is not satisfied, i.e., if $n_r>n_c$ with $n_c\leq n_d$, then we have
\begin{align*}
2\max\{n_c,n_r+n_d-n_c\} &= 2(n_r+n_d-n_c) \\
&> n_r+2n_d-n_c\\
&= n_r+2\max\{n_d,n_c\}-n_c,
\end{align*}
%$$2\max\{n_c,n_r+n_d-n_c\}\geq n_r+2\max\{n_d,n_c\}-n_c,$$ 
which makes the bound \eqref{eq:new_UB5_IRC_det} redundant given \eqref{eq:new_UB1_IRC_det}. Thus, we can write  \eqref{eq:new_UB5_IRC_det} as
\begin{align}
C_{\mathrm{det},\Sigma} \leq & 2\max\{n_c,n_r+n_d-n_c\}  \quad \text{ if } n_c\leq n_d.
\end{align}
Furthermore, if $n_c>n_d$, then by replacing $n_r+n_d-n_c$ in this bound by $n_r+(n_d-n_c)^+$, we get the bound 
\begin{align}
\label{eq:new_UB5_IRC_det_2}
C_{\mathrm{det},\Sigma} \leq & 2\max\{n_c,n_r+(n_d-n_c)^+\}.
\end{align}
This bound holds since
\begin{align*}
2\max\{n_c,n_r+(n_d-n_c)^+\}  &= 2n_r  \\
&> n_r+n_c \\ 
&= n_r+ 2\max\{n_d,n_c\}-n_c,
\end{align*}
which shows that the bound \eqref{eq:new_UB5_IRC_det} is redundant given \eqref{eq:new_UB1_IRC_det} in this case. Thus, we can replace the bound \eqref{eq:new_UB5_IRC_det} by \eqref{eq:new_UB5_IRC_det_2}.

{Now, we can compare the upper bounds with the bounds in \eqref{eq:capacity_LD_IRC}. The first term in \eqref{eq:capacity_LD_IRC} is \eqref{eq:old_UB1_IRC_det}. The second term in \eqref{eq:capacity_LD_IRC} is the minimum between \eqref{eq:new_UB4_IRC_det} and \eqref{eq:old_UB2_IRC_det}. The third term is \eqref{eq:new_UB1_IRC_det}. The fourth term is \eqref{eq:new_UB2_IRC_det} evaluated for $n_c<n_s$. The fifth term is \eqref{eq:new_UB3_IRC_det}, and the last term is \eqref{eq:new_UB5_IRC_det_2}. Finally, the upper bound for the special case $n_c=n_d$ \eqref{eq:capacity_LD_IRC_nceqnd} is given by \eqref{eq:new_UB6_IRC_det}.}

This completes the proof of the converse of Theorem \ref{Theorem:capacity_LD_IRC}. Next, we propose a transmission scheme that achieves the sum-capacity of the LD-IRC.

\section{Sum-capacity Achieving Schemes}
\label{Sec:AchievabilityLDIRC}
{The goal of this section is to introduce transmission schemes which achieve the sum-capacity in Theorem \ref{Theorem:capacity_LD_IRC}.} To this end, different schemes are proposed to cover different operating regimes of the IRC. While only one scheme is enough to achieve the sum-capacity in the strong interference (SI) regime ($n_c>n_d$), three schemes are required to complete the characterization of the capacity in the weak interference (WI) regime ($n_c<n_d$). In addition, one simple scheme is required for the special case where the cross channels and the desired ones are equally strong ($n_d=n_c$). {These schemes are described in the following paragraphs. But before we describe the schemes in detail, we describe the building blocks of these schemes.}

\subsection{Building blocks}
\label{Sec:BuildingBlocks}
The transmission schemes we propose are based on the 
\begin{itemize}
\item private and common signaling approach 
of the Han-Kobayashi scheme \cite{HanKobayashi}, \cite{EtkinTseWang}, \cite{BreslerTse} 
\end{itemize}
in addition to three relaying strategies 
\begin{itemize}
\item compute-forward~(CF)~\cite{NazerGastpar}
\item decode-forward~(DF)~\cite{Cover}
\item cooperative interference neutralization~(CN)~\cite{ChaabanSezginTuninetti}.
\end{itemize}
%A simple description of common and private signaling can be found in \cite{EtkinTseWang, BreslerTse}.
Next, we introduce the three relaying schemes. 

\subsubsection{Compute-Forward (CF)}
In CF, Tx$i$ sends $u_{i,cf}[k]$ in the $k$th channel use ($k=1,\ldots,n-1$). At the end of the $k$th channel use, the relay decodes $u_{r,cf}[k+1] = u_{1,cf}[k]\oplus u_{2,cf}[k]$, and sends it in channel use $k+1$. Now, consider the decoding at the receiver side. We explain the decoding only for Rx$1$, since Rx$2$ does it similarly. Rx$1$ waits until the end of transmission block $n$. Then it performs backward decoding starting with the $n$th channel use, where only the relay is active. Rx$1$ decodes $u_{r,cf}[n]=u_{1,cf}[n-1]\oplus u_{2,cf}[n-1]$ in the $n$th channel use. Then, in channel use $(n-1)$, depending on the channel parameters, Rx$1$ has two possibilities:
\begin{itemize}
\item If interference is weak, i.e., $n_c<n_d$, Rx$1$ decodes $u_{1,cf}[n-1]$, {then it adds it to $u_{r,cf}[n]$ to extract $u_{2,cf}[n-1]$. Next, it subtracts the contribution of $u_{2,cf}[n-1]$ from the received signal, and decodes $u_{r,cf}[n-1]$.}
\item If interference is strong, i.e., $n_d<n_c$, Rx$1$ decodes $u_{2,cf}[n-1]$ first, {then it adds it to $u_{r,cf}[n]$ to extract $u_{1,cf}[n-1]$. Next, it subtracts the contribution of $u_{1,cf}[n-1]$ from the received signal, and decodes $u_{r,cf}[n-1]$.}
\end{itemize}
Next, Rx$1$ proceeds backwards until reaching the first channel use. Since the relay is silent in the first channel use, Rx$1$ decodes
\begin{itemize}
\item $u_{1,cf}[1]$ if $n_c<n_d$,
\item $u_{2,cf}[1]$ if $n_d<n_c$.
\end{itemize}
Note that in both cases, each receiver obtains the CF signals sent by both Tx's, and thus, the CF signals can be interpreted as common relayed signals.

\subsubsection{Decode-Forward (DF)}
In DF, Tx$i$ sends $u_{i,df}[k]$ in the $k$th channel use ($k=1,\ldots,n-1$).
The relay decodes both $u_{1,df}[k]$ and $u_{2,df}[k]$ as in a multiple access channel (MAC) in the $k$th channel use. Then, the relay forwards these two signals in channel use $k+1$. The sent signal by the relay is $u_{r,df}[k+1]$.
%Let $u_{r,df}[k+1]$ denote the sum $u_{1,df}[k]\oplus u_{2,df}[k]$. 
Similar to CF, the Rx's wait until the end of the $n$th channel use and then they start with backward decoding. First, Rx$i$ processes the received signal in the $n$th channel use. As the Tx's are silent in the $n$th channel use, Rx$i$ decodes only the relay signal, i.e. $u_{r,df}[n]$. Thus, each Rx decodes $u_{1,df}[n-1]$ and $u_{2,df}[n-1]$ as in the MAC channel. Then, Rx$i$ starts processing the received signal in channel use $(n-1)$. It removes the interference caused by $u_{1,df}[n-1]$ and $u_{2,df}[n-1]$, since they are both already known at Rx$i$ (decoded in the $n$th channel use). Then, it decodes the relay signal $u_{r,df}[n-1]$ and obtains $u_{1,df}[n-2]$ and $u_{2,df}[n-2]$. Rx$i$ proceeds backwards until the second channel use, where $u_{r,df}[2]$ is decoded, and $u_{1,df}[1]$ and $u_{2,df}[1]$ are extracted. As a result, each receiver has $u_{i,df}[k]$ for $i=1,2$ and $k=1,\ldots,n-1$.

\subsubsection{Cooperative interference neutralization (CN)}
In this scheme, the relay transmits its signal in such a way that the interference is completely neutralized at the Rx's. To do this, Tx$i$ uses block-Markov encoding \cite{Willems} by sending both $u_{i,cn}[k]$ and $u_{i,cn}[k-1]$ in the $k$th channel use ($k=2,\ldots,n-1$). In the first and the $n$th channel use, Tx$i$ sends $u_{i,cn}[1]$ and $u_{i,cn}[n-1]$, respectively. Note that while in CF and DF, the Tx's are silent in $n$th channel use, in CN, the Tx's are active in $n$th channel use. 
Similar to CF, the relay decodes the sum of the signals as follows. At the end of the first channel use, the relay decodes $u_{1,cn}[1]\oplus u_{2,cn}[1]$. Note that in the second channel use, this sum is received again at the relay due to the block-Markov encoding. Therefore, the relay removes the interference caused by this sum, and then decodes $u_{1,cn}[2]\oplus u_{2,cn}[2]$. Proceeding this way, the relay knows $u_{1,cn}[k]\oplus u_{2,cn}[k]$ at the end of the $k$th channel use, $k=1,\ldots,n-1$. This known sum can be used in the next channel use for interference neutralization as follows. The relay sends $u_{r,cn}[k] = u_{1,cn}[k-1]\oplus u_{2,cn}[k-1]$ in the $k$th channel use such that it overlaps with the transmitters' signal $u_{1,cn}[k-1]$ and $u_{2,cn}[k-1]$ at Rx2 and Rx1, respectively. Similar to CF and DF, the receivers use backward decoding. Rx$1$ first processes the received signal in channel use $n$, where it receives $u_{r,cn}[n]\oplus u_{2,cn}[n-1]=u_{1,cn}[n-1]$ as a superposition of the signals from Tx$2$ and the relay. This superposition of the relay signal and undesired signal neutralizes the interference and provides the receiver the desired signal as the aggregate.
In addition to this, Rx$1$ receives another copy of $u_{1,cn}[n-1]$ which is sent by Tx$1$. Depending on whether the IRC operates in the weak interference regime or the strong interference regime, Rx$1$ decodes $u_{1,cn}[n-1]$ either from Tx$1$ or from the superposition of the signals from relay and Tx$2$. Thus, Rx$1$ gets $u_{1,cn}[n-1]$. The contribution of the other received instance of $u_{1,cn}[n-1]$ can be removed from the received signal afterwards if needed. Then Rx$1$ proceeds to channel use $n-1$ where it removes the contribution of $u_{1,cn}[n-1]$, and then decodes $u_{1,cn}[n-2]$ similar to channel use $n$. Rx$1$ proceeds similarly until the first channel use, and hence gets $u_{1,cn}[k]$ for $k=1,\ldots,n-1$.

Next, we introduce the achievable schemes for the LD-IRC which are combinations of private and common signaling with the three relaying strategies presented above.

\subsection{Scheme \WIone{}} 
\label{Sec:WI1}
The first scheme is developed for the WI regime {$n_c<n_d$}, and it performs optimally in several cases in this regime.
{We summarize the performance of the scheme in terms of achievalbe sum-rate in the following proposition.
\begin{proposition}\label{prop:RateWI1}
The achievable sum-rate with the scheme \WIone{} for the IRC is given by
\begin{align}\label{eq:R_sum_UB_A1}
R_{\Sigma,\text{\WIone{}}} = 
\begin{cases}
 \min\{n_s+n_d,n_r+n_s-n_c\}  & n_c< n_d\leq n_s\leq n_r \\ % \label{eq:R_sum_UB_A1} \\
\min\{2n_d,n_r+n_d-n_c\}  & n_c\leq n_s\leq n_d\leq n_r \\ %\label{eq:R_sum_UB_A2} \\
\min\{n_r+n_d,n_r+n_s-n_c\}  & n_c< n_d\leq n_r\leq n_s \\ % \label{eq:R_sum_UB_B1} \\
\min\{2n_d,n_s+n_d-n_c\}  & n_c\leq n_r\leq n_d\leq n_s  %\label{eq:R_sum_UB_B2} 
\end{cases}.
\end{align}
{This proves the achievability of Theorem \ref{Theorem:capacity_LD_IRC} within the four regimes.}
Note that these sum-rate expressions coincide with the upper bounds given in Lemmas \ref{Lemma:new_bounds} and \ref{Lemma:known_UB1_det}. 
\end{proposition}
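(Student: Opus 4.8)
The plan is to establish Proposition~\ref{prop:RateWI1} constructively: I will specify the encoding at the two transmitters and at the relay, fix the decoding order at the relay and at the two receivers, read off the resulting per-block rate constraints, and then optimize the rate split so that the achievable sum-rate meets the upper bounds of Lemmas~\ref{Lemma:new_bounds} and~\ref{Lemma:known_UB1_det} in each of the four parameter ranges, thereby proving the stated achievability of Theorem~\ref{Theorem:capacity_LD_IRC}.

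First I would fix the signal architecture suggested by the level diagram of scheme~\WIone{}. Transmitter~$i$ splits $W_i$ into a private stream $u_{i,p}$ occupying the lowest levels (those that fall below the interference floor at the unintended receiver), together with several common streams that are helped by the relay: a compute-forward stream $u_{i,cf}$, decode-forward streams $u_{i,df1}$ (and $u_{i,df2}$ when present), and a cooperative-neutralization stream $u_{i,cn}$ with its block-Markov "future" copy $u_{i,cnF}$. These are stacked on the $q$-level input according to the widths $\ell_1,\ell_2,\ell_3$ and the channel gaps $n_d-n_c$, $n_r-n_d$ (or $n_s-n_d$), as in the diagram. The relay, with the clean observation $\bS^{q-n_s}(\X_1\oplus\X_2)$, peels off the block-Markov-repeated sums to recover $u_{1,cf}[k]\oplus u_{2,cf}[k]$ and $u_{1,cn}[k]\oplus u_{2,cn}[k]$, MAC-decodes the DF layers, and in the next channel use forwards $u_{r,cf},u_{r,df1},u_{r,df2},u_{r,cn}$, placing $u_{r,cn}$ so that it aligns on air with the undesired $u_{i,cn}$ at each receiver.

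Next I would carry out the backward decoding at Rx$1$ (Rx$2$ being symmetric): starting from block~$n$, where only the relay is active, decode the relayed common symbols; then in each earlier block subtract the already-known future parts, use the relay-plus-Tx$2$ superposition to neutralize the $u_{2,cn}$ interference and expose $u_{1,cn}$, decode the remaining common layers $u_{1,cf},u_{1,df1},u_{1,df2},u_{1,cn}$ and finally the private layer $u_{1,p}$, invoking the fact (already argued in Section~\ref{Sec:BuildingBlocks}) that all CF and DF signals are common, i.e.\ available at both receivers. Collecting the per-block level counts yields a small system of inequalities: a relay-side constraint (all relayed stream widths must fit below $n_s$, producing terms like $n_r+n_s-n_c$ and $n_s+n_d-n_c$), a receiver-side constraint (the decoded levels must fit in $n_d$ dimensions at each Rx, producing $2n_d$, $n_r+n_d-n_c$, $n_r+n_d$, $n_s+n_d$), plus the MAC constraints for DF. I would then exhibit closed-form $\ell_1,\ell_2,\ell_3$ in each of the ranges $n_c<n_d\le n_s\le n_r$, $n_c\le n_s\le n_d\le n_r$, $n_c<n_d\le n_r\le n_s$, $n_c\le n_r\le n_d\le n_s$ so that the binding pair of constraints coincides with \eqref{eq:R_sum_UB_A1}, and note that the $1/n$ loss from block-Markov and backward decoding vanishes as $n\to\infty$.

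The main obstacle I anticipate is the level bookkeeping: showing that with the chosen stacking the only collisions at the receivers are the intended ones (relay $u_{r,cn}$ on top of the undesired $u_{i,cn}$), that the relay's clean observation is wide enough to host all the relayed sums after the block-Markov cancellations, and that in each of the four regimes the optimal $(\ell_1,\ell_2,\ell_3)$ actually drives the achievable region to the corner point dictated by the converse. Making the second DF layer $u_{i,df2}$ and the neutralization future stream $u_{i,cnF}$ coexist without overbooking either the relay's or the receivers' dimensions in the regimes where both are active is the delicate part; I would handle it by casting the dimension accounting as a small linear program in $\ell_1,\ell_2,\ell_3$ and displaying the optimal vertex in each case.
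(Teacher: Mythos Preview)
Your plan is essentially the paper's own proof: the same signal architecture (private, CF, DF with two layers, CN with block-Markov repetition), the same relay processing and backward decoding, and the same strategy of collecting level-count constraints and optimizing the split case by case. One small correction: the free variables in the optimization are the stream widths $R_{cn},R_{df1},R_{df2},R_{cf},R_p$ together with $\ell_1$, not the spacings $\ell_2,\ell_3$, which are fixed by the alignment requirement at the receivers; the paper solves the resulting linear program and tabulates the optimal rates regime by regime (Table~\ref{Tab:WI1}).
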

}
The rest of the subsection is devoted to the proof of the proposition~\ref{prop:RateWI1}.

 In this transmission scheme, in addition to private signaling, CN, DF, and CF relaying strategies are used.
\subsubsection*{Encoding at transmitters} In the $k$th channel use, Tx$1$ transmits $\boldsymbol{x}_1[k]$ given by
\begin{align}
\boldsymbol{x}_1[k] = 
\begin{bmatrix}
\boldsymbol{0}_{\ell_1} \\ \boldsymbol{u}_{1,cn1}[k-1] \oplus \boldsymbol{u}_{1,df1}[k] \\ \boldsymbol{u}_{1,cn2}[k-1] \\ \boldsymbol{u}_{1,cf}[k] \\ \boldsymbol{u}_{1,p}[k]\\
\boldsymbol{u}_{1,cn}[k] \\\boldsymbol{u}_{1,df2}[k]\\ \boldsymbol{0}_{s}
\end{bmatrix}, k=1,\ldots,n,
\end{align}
where the subscript $p$ refers to the private signal vector, {and $\boldsymbol{0}_{s}$ is used to complete the length of $\X_1[k]$ to $q$ which is equal to $\max\{n_r,n_s\}$ in these cases {\eqref{eq:R_sum_UB_A1}}.} The vectors $\boldsymbol{u}_{i,cn1}[0]$, $\boldsymbol{u}_{i,cn2}[0]$, $\boldsymbol{u}_{i,df1}[n]$, $\boldsymbol{u}_{i,cf}[n]$, $\boldsymbol{u}_{i,p}[n]$, $\boldsymbol{u}_{i,cn1}[n]$, $\boldsymbol{u}_{i,cn2}[n]$, and $\boldsymbol{u}_{i,df2}[n]$ are zero vectors for $i\in\{1,2\}$.
%where $n$ represents the number of transmission slots and the subscripts $p$, $cn$, $df$, and $cf$ refer to private, CN, DF, and CF signal vectors, respectively.
Moreover, the length of the vectors $\boldsymbol{u}_{i,df1}$, $\boldsymbol{u}_{i,df2}$, $\boldsymbol{u}_{i,cf}$, and $\boldsymbol{u}_{i,p}$ are $2R_{df1}$, $2R_{df2}$, $R_{cf}$, and $R_{p}$, respectively.

Note that the length of DF signal vector is twice as many information bits it contains. In particular, if $\tilde \U_{i,df1}$ denotes the information bits of $df1$ from Tx$i$ with length $R_{df1}$, then $ \U_{1,df1}$ is constructed as follows $\U_{1,df1}=\begin{bmatrix}
\tilde \U_{1,df1}^T & \boldsymbol{0}_{R_{df1}}^T \end{bmatrix}^T$. Similarly, $\U_{2,df1}=\begin{bmatrix}
\boldsymbol{0}_{R_{df1}}^T & \tilde \U_{2,df1}^T \end{bmatrix}^T$. Therefore, in $\U_{1,df1} \oplus \U_{2,df1}$, the information bits of the DF signal vectors from both Tx's do not overlap.

It is worth mentioning that in $\X_1[k]$, the CN signal vector with time index $[k]$ is desired at the relay and the CN signal vector with time index $[k-1]$ is neutralized at undesired Rx. 
Moreover, we define $\boldsymbol{u}_{1,cn}$ as follows $\boldsymbol{u}_{1,cn} = [\boldsymbol{u}_{1,cn1}^T,\boldsymbol{u}_{1,cn2}^T]^T$, where $\boldsymbol{u}_{1,cn1}$ is added to $\boldsymbol{u}_{1,df1}[k]$ before transmission. 
The length of the vectors $\boldsymbol{u}_{1,cn1}$ and $\boldsymbol{u}_{1,cn2}$ are $R_{cn1} = 2R_{df1}$ and $R_{cn2}$, respectively. Therefore, the length of the vector $\boldsymbol{u}_{1,cn}$ which is $R_{cn}$ is larger than the length of $\boldsymbol{u}_{1,df1}[k]$. Therefore, we need to keep in mind that 
\begin{align}
2R_{df1} \leq R_{cn}. \label{eq:rate_cons_Tx_A1_A2}
\end{align}
Obviously, $\boldsymbol{x}_1[k]$ can be generated as long as
\begin{align}
2R_{cn} + R_{cf} + R_{p} + 2R_{df2} + \ell_1 \leq q
\end{align}
\iffalse
It is worth to mention that the time slot $k=1$ is used as the initialization phase, in which Tx$1$ sends
\begin{align}
\boldsymbol{x}_1[1] = 
\begin{bmatrix}
\boldsymbol{0}_{\ell_1} \\ \boldsymbol{u}_{1,df1}[1] \\ \boldsymbol{0}_{R_{cn2}} \\ \boldsymbol{u}_{1,cf}[1] \\ \boldsymbol{u}_{1,p}[1]\\
\boldsymbol{u}_{1,cn}[1] \\\boldsymbol{u}_{1,df2}[1]
\end{bmatrix}.
\end{align}
Moreover, in the time slot $n$, Tx$1$ sends
\begin{align}
\boldsymbol{x}_1[n] = 
\begin{bmatrix}
\boldsymbol{0}_{\ell_1} \\ \boldsymbol{u}_{1,cn1}[n-1]  \\ \boldsymbol{u}_{1,cn2}[n-1] \\ \boldsymbol{0}_{R_{cf}} \\ \boldsymbol{0}_{R_p}\\
\boldsymbol{0}_{R_{cn}} \\\boldsymbol{0}_{R_{df2}}
\end{bmatrix}.
\end{align}
\fi
Similarly, Tx'2 transmits $\X_2[k]$.

\subsubsection*{Decoding at the relay} 
The relay receives in $k$th channel use the superposition of the signal vectors transmitted from both Tx's as shown in Fig. \ref{Fig:RE:A_1_2_Rx1_a}.
Notice that in this figure the time index of all signals is $[k]$ except $\boldsymbol{u}_{i,cn}$ which has a time index of $[k-1]$.
Supposing that the decoding in time slot $k-1$ is done successfully at the relay, the relay knows $\U_{1,cn}[k-1]\oplus \U_{2,cn}[k-1]$ in time slot $k$. Hence, it removes the interference caused by $\U_{1,cn}[k-1]\oplus \U_{2,cn}[k-1]$ before decoding process in time slot $k$.
In time slot $k$, the relay wants to decode the sum of the CN signal vectors $\U_{1,cn}[k] \oplus \U_{2,cn}[k]$, the sum of the CF signal vectors $\U_{1,cf}[k] \oplus \U_{2,cf}[k]$, and the DF signal vectors $\U_{1,df1}[k]$, $\U_{1,df2}[k]$, $\U_{2,df1}[k]$, and $\U_{2,df2}[k]$. For successful decoding at the relay, the following constraints must be satisfied
\begin{align}
\ell_1 + R_{cf} + R_p + 2(R_{cn} + R_{df2}) \leq n_s &\quad \text{ if } \max\{R_{cn},R_{df2}\}>0 
\label{eq:rate_cons_relay_A1_A2_1}\\
\ell_1 + R_{cf}  \leq n_s &\quad \text{ if } \max\{R_{cn},R_{df2}\}=0.
 \label{eq:rate_cons_relay_A1_A2}
\end{align}
Note that the case distinction in \eqref{eq:rate_cons_relay_A1_A2_1} and \eqref{eq:rate_cons_relay_A1_A2} is due to the fact that the relay does not need to decode the sum of private signal vectors if no CN and DF signal vectors are transmitted.

\subsubsection*{Encoding at the relay}
As the previous (with time index $k-1$) CN, CF, and DF signal vectors are available at the relay in the $k$th channel use, the relay constructs the following signal vector at the time slot $k=2,\ldots,n$
\begin{align*}
\boldsymbol{x}_r[k] = 
\begin{bmatrix}
\boldsymbol u_{r,cf}[k]\\
\boldsymbol u_{r,df1}[k]\\
\boldsymbol u_{r,df2}[k]\\
\boldsymbol{0}_{\ell_2}\\
\boldsymbol u_{r,cn}[k]\\
\boldsymbol{0}_{\ell_3}\\
\boldsymbol{0}_{r}
\end{bmatrix},
\end{align*}
where we define $\U_{r,R}[k] = \U_{1,R}[k-1] \oplus \U_{2,R}[k-1]$, with $R\in\{cf,df1,df2,cn\}$ for readability, and where $\ell_2$ {is chosen so that $\U_{r,df2}$ does not overlap with $\U_{1,cn}$, and} $\ell_3$ is chosen so that at the receiver side, the CN signal vector from the undesired transmitter is aligned with the CN signal vector sent by the relay. {Moreover, $\boldsymbol{0}_{r}$ is used to complete the length of $\X_r[k]$ to $q$.}

\subsubsection*{Decoding at the receiver side} 
We explain the decoding at Rx$1$ since the decoding at Rx$2$ is similar. Rx$1$ waits until end of the $n$th channel use. Then it starts with the backward decoding. Supposing that Rx$1$ decodes $\Y_1[n]$ successfully, it knows
\begin{itemize}
\item $\U_{r,cf}[n]=\U_{1,cf}[n-1]\oplus \U_{2,cf}[n-1]$,
\item $\U_{r,df1}[n]\rightarrow \tilde \U_{1,df1}[n-1], \tilde \U_{2,df1}[n-1]$,
\item $\U_{r,df2}[n]\rightarrow \tilde \U_{1,df2}[n-1], \tilde \U_{2,df2}[n-1]$, 
\item $\U_{1,cn}[n-1]$.
\end{itemize}
Next, it starts decoding $\Y_1[n-1]$. The received signal vector at Rx$1$ in time slot $2\leq k\leq n-1$ is shown in Fig.~\ref{Fig:RE:A_1_2_Rx1_b}. 
\begin{figure}
%\vspace{-2.5%cm}
\centering
\subfigure [] {
%\resizebox{0.4\textwidth}{!}{%
\begin{tikzpicture}[->,>=stealth',shorten >=1pt,auto,node distance=3cm,thick,scale=1]
 \SCHEMEWIoneRelay
\end{tikzpicture}                
\label{Fig:RE:A_1_2_Rx1_a}
%}
}
\subfigure [] {
\begin{tikzpicture}[->,>=stealth',shorten >=1pt,auto,node distance=3cm,thick,scale=1]
\SCHEMEWIone 
\end{tikzpicture} 
\label{Fig:RE:A_1_2_Rx1_b}               
}
\caption{(a) The received signal vector at the relay. (b) The received signal at Rx$1$. Both plots are based on the proposed transmission scheme \WIone{} in time slot $k$, where $k=2,\ldots,n-1$. Here, $\U_{i,cnF}$ denotes $\U_{i,cn}[k]$, while $\U_{i,cn}$ represents $\U_{i,cn}[k-1]$. The time index of all remaining signals is $[k]$.}
\label{Fig:RE:A_1_2_Rx1}
\end{figure}

Since Rx$1$ knows $\tilde\U_{1,df1}[n-1]$ and $\tilde\U_{2,df1}[n-1]$, it can remove the interference caused by $\U_{1,df1}[n-1]$, $\U_{2,df1}[n-1]$ completely. Next, it decodes the interference free received bits from the relay. In order to avoid an overlap between the CF and DF signals from relay with the signal transmitted by the desired Tx, the top-most $\ell_1$ bits of the signal vectors from Tx's are set to zero. 
Rx$1$ can decode $\U_{r,cf}[n-1]$, $\U_{r,df1}[n-1]$, and $\U_{r,df2}[n-1]$ as long as
\begin{align}
R_{cf} + 2(R_{df1} + R_{df2}) \leq (n_r - n_d + \ell_1)^+. \label{eq:rate_const_Rx_Re_A1_A2}
\end{align} 
Since this scheme is proposed for the WI regime ($n_c<n_d$), Rx1 decodes next the top-most $n_d-n_c$ desired CN bits (i.e., $\U_{1,cn}[n-2]$) interference free. Moreover, the relay signal vector neutralizes the undesired CN signal vector, and replaces it by $ \U_{1,cn}[n-2]$, i.e., $\U_{r,cn}[n-1] \oplus \U_{2,cn}[n-2] = \U_{1,cn}[n-2]$. Notice that this neutralization is possible if
\begin{align}
n_c-\ell_1 \leq n_r \quad \text{if } 0<R_{cn}. \label{eq:rate_const4_Rx_Re_A1_A2}
\end{align}
Note that the overlap between the two received versions of $\U_{1,cn}[n-2]$ (from TX$1$ and the relay) can be removed by decoding the bits successively as long as $n_c\neq n_d$, {which is satisfied in the WI regime where $n_c<n_d$)}. After decoding the CN signal vector, Rx$1$ observes the top most $n_d-n_c$ bits of the desired CF signal vector (i.e., $\U_{1,cf}[n-1]$) interference free. Moreover, due to the backward decoding, the sum of the CF signal vectors (i.e., $\U_{r,cf}[n] = \U_{1,cf}[n-1]\oplus \U_{2,cf}[n-1]$) is known at Rx1. Therefore, Rx1 can reconstruct the top-most $n_d-n_c$ bits of the undesired CF bits and remove their interference. Similar to decoding the CN signal vector, decoding of the CF signal vector is also accomplished in a successive manner, which is possible as long as $n_c\neq n_d$. After decoding the CF signal vector, Rx1 decodes the private signal vector. It can be done reliably as long as 
\begin{align}
R_{p} \leq n_d -n_c.  \label{eq:rate_const2_Rx_Re_A1_A2}
\end{align}
The other required rate constraint is that the desired CN, CF, and private signal vectors ($\U_{1,cn}[n-2]$, $\U_{1,cf}[n-1]$, $\U_{1,p}[n-1]$) need to be observed at Rx$1$ and without any overlap with the CN signal vector corresponding to the next time slot (denoted by subscript $cnF$ in Fig. \ref{Fig:RE:A_1_2_Rx1}). Therefore, we write
\begin{align}
\ell_1 + R_{cn}+R_{cf} +R_p + (R_{cn} + 2R_{df2}-(n_s-n_d)^+)^+ \leq n_d. \label{eq:rate_const3_Rx_Re_A1_A2} 
\end{align}
%{(Isn't $\ell_1 + R_{cn}+R_{cf} +R_p\leq n_d$ enough?)}.
Note that the expression $(R_{cn} + 2R_{df2}-(n_s-n_d)^+)^+$ is the length of the signal vector which is received at Rx's but it is not decoded at the receiver side (e.g., part of $\U_{1,cnF}$ in Fig.~\ref{Fig:RE:A_1_2_Rx1_b} which is received at Rx$1$.).

Using this scheme, we achieve the sum-rate
\begin{align}
nR_{\Sigma,\text{\WIone{}}} = 2(n-1)(R_{cn}+R_{df1}+R_{df2}+R_{cf}+R_p) \label{eq:R_Sigma_Scheme1}
\end{align}
Note that the term $(n-1)$ is due to the fact that the last channel use is used for neutralization. In other words, no additional information bits are transmitted in this channel use.
Now, by dividing the expression \eqref{eq:R_Sigma_Scheme1} by $n$ and letting $n\to\infty$, we obtain the following sum-rate
\begin{align}
R_{\Sigma,\text{\WIone{}}} = 2(R_{cn}+R_{df1}+R_{df2}+R_{cf}+R_p).
\end{align}
The next goal is to maximize the sum-rate under the conditions in \eqref{eq:rate_cons_Tx_A1_A2}-\eqref{eq:rate_const3_Rx_Re_A1_A2}. Hence, we obtain the following optimization problem.
\begin{align}
\max \quad & R_{\Sigma,\text{\WIone{}}}\notag \\
\text{s.t.} \quad & \text{\eqref{eq:rate_cons_Tx_A1_A2}-\eqref{eq:rate_const3_Rx_Re_A1_A2} are satisfied}\\ \notag 
& R_{cn},R_{df1},R_{df2},R_{cf},R_p,\ell_1\geq 0\notag
\end{align}
%The solution of this optimization problem is presented in Appendix \ref{app:Rate_alloc_WI_S1}.
{The solution of this optimization problem is given in Table \ref{Tab:WI1}, with an achievable sum-rate as given in {\eqref{eq:R_sum_UB_A1}}. This proves the achievability of Theorem \ref{Theorem:capacity_LD_IRC} for the four regimes corresponding to this scheme.}

\begin{table*}
\centering
\begin{tabular}{|c||c|c|c|c|}
\hline
\footnotesize Regime & \footnotesize$n_c< n_d\leq n_s\leq n_r$ & \footnotesize$n_c\leq n_s \leq n_d \leq n_r$ & \footnotesize$n_c< n_d \leq n_r \leq n_s$ & \footnotesize$n_c \leq n_r \leq n_d \leq n_s$\\\hline
$\ell_1$ & 0 & $[R_{cf} - (n_r-n_d)]^+$& 0 & 0\\\hline
$R_{cn}$ & $\frac{n_s-R_{cf}- R_p-2R_{df2}}{2}$ & 0 & \scriptsize $\frac{\min\{2n_c,\max\{n_s+n_c-n_r,2n_s-2n_d\}\}}{2}$ & $\frac{\min\{2n_c,n_s+n_c-n_d\}}{2}$ \\\hline
$R_{df1}$ & $\frac{\min\{R_{cn},n_r-n_d-R_{cf}-2R_{df2}\}}{2}$ & 0 & $\frac{(n_r-n_d -|n_s-n_d-n_c|)^+}{2}$ & 0 \\\hline
$R_{df2}$ & $\frac{[n_s-n_d-n_c]^+}{2}$ & 0 & $\frac{\min\{(n_s-n_d-n_c)^+,n_r-n_d\}}{2}$ & 0 \\\hline
$R_{cf}$ & \scriptsize $[(n_c+n_d-n_s)^+-2(n_d+n_c-n_r)^+]^+$ & $\frac{\min\{2n_c,n_r-n_d+n_c\}}{2}$& \footnotesize $\min\{(n_d+n_c-n_s)^+,n_r-n_d\}$ & 0\\\hline
$R_p$ & $n_d-n_c$ & $n_d-n_c$& $n_d-n_c$ & $n_d-n_c$\\\hline
%$R_\Sigma$ & \footnotesize $\min\{n_s+n_d,n_r+n_s-n_c\}$ & \footnotesize $\min\{2n_d,n_r+n_d-n_c\}$&  \footnotesize $\min\{n_r+n_d,n_r+n_s-n_c\}$ & \footnotesize $\min\{2n_d,n_s+n_d-n_c\}$\\\hline
\end{tabular}
\caption{Rate allocation parameters for the scheme \WIone{}.}
\label{Tab:WI1}
\end{table*}

\subsection{Scheme \WItwo{}} 
\label{sec:SchemeWI2}
Scheme \WIone{}  above does not achieve the sum-capacity of the LD-IRC for the whole WI regime. In what follows, we introduce another scheme for the WI regime which achieves the sum-capacity of the LD-IRC in parts of the WI regime that are not characterized by scheme \WIone{}. This scheme is called scheme \WItwo{}. The performance of this scheme is summarized in the following proposition.
\begin{proposition}\label{prop:RateWI2}
The scheme \WItwo{} achieves the following sum-rate for the IRC
\begin{align}
R_{\Sigma,\text{\WItwo{}}} &= \min\begin{Bmatrix}
2n_d-n_c, 2\max\{n_r+n_s-n_c,n_d-n_c\}
\end{Bmatrix},  
\label{eq:R_sum_LB_A3}
\end{align}
in two regimes, the first of which is described by
\begin{align}
n_c\leq n_s\leq n_r \leq n_d, \label{eq:regimeWI2_opt1}
\end{align}
and the second by
\begin{align}
n_c\leq n_r\leq n_s \leq n_d-\frac{n_c}{2}. \label{eq:regimeWI2_opt2}
\end{align}
This proves the achievability of Theorem \ref{Theorem:capacity_LD_IRC} within these two regimes.
Notice that this sum-rate expression coincides with the upper bounds given in Lemmas \ref{Lemma:new_bounds} and \ref{Lemma:known_UB1_det}. 
\end{proposition}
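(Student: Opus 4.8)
The plan is to follow the same template that was used for scheme \WIone{}: first fix the level allocation of the input vectors at the transmitters and the relay, then describe the relay's decoding and re-encoding, then run backward decoding at the receivers to collect the rate constraints, and finally solve the induced linear program and verify that its optimum equals the right-hand side of \eqref{eq:R_sum_LB_A3} exactly on the two regimes \eqref{eq:regimeWI2_opt1}--\eqref{eq:regimeWI2_opt2}. Since this scheme targets the WI regime with a comparatively strong source--relay link, the transmitters will use block-Markov superposition of a private part split into two sub-levels $\U_{i,p1},\U_{i,p2}$, a Han--Kobayashi-type common part $\U_{i,cm}$, a compute-forward part split as $\U_{i,cf1},\U_{i,cf2}$, and a cooperative-neutralization part $\U_{i,cn}$; in channel use $k$ transmitter~$i$ stacks $\U_{i,cn}[k-1]$ together with the fresh symbols $\U_{i,cm}[k],\U_{i,cf}[k],\U_{i,p}[k],\U_{i,cn}[k]$ at the levels shown in the accompanying figures, and as in \WIone{} the last channel use is reserved for neutralization so that the pre-log loss $(n-1)/n$ vanishes as $n\to\infty$.

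First I would write down the relay's MAC-type decoding: after cancelling the previously decoded CN sum, the relay must recover $\U_{1,cn}[k]\oplus\U_{2,cn}[k]$, the two CF sums $\U_{1,cf\cdot}[k]\oplus\U_{2,cf\cdot}[k]$, and the common level it needs to forward, which gives a constraint of the form ``sum of decoded lengths $\le n_s$''. The relay then sends $\U_{r,cf1},\U_{r,cf2},\U_{r,cn}$ (and a common level) with the vertical offsets $\ell_1,\ell_3$ chosen so that at each receiver the relayed CN level aligns exactly with the undesired transmitter's CN level while the relayed CF levels fall into the interference-free top window, of size $n_r-n_d$ when $n_r>n_d$ or $n_d-n_c$ otherwise. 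This is exactly where the alignment identity $\U_{r,cn}[k]\oplus\U_{2,cn}[k-1]=\U_{1,cn}[k-1]$ is used, and it requires $n_c\neq n_d$, which holds throughout the WI regime.

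Next I would carry out the receivers' backward decoding in the manner of \WIone{}: in channel use $n$ only the relay is heard, yielding the previous CF and CN data; descending, each receiver cancels the now-known undesired CN and CF contributions, decodes the interference-free relayed bits, then its desired common level, then the two private levels, reading off the constraints (i) the relayed bits fit in the top window, (ii) the desired common $+$ CF $+$ private levels fit below $n_d$ without colliding with the next-slot CN level, and (iii) the private levels fit below the cross-channel level $n_c$. Together with the transmitter packing bound and the relay decoding bound, this yields a small linear program in $R_{cm},R_{cf1},R_{cf2},R_{cn},R_{p1},R_{p2},\ell_1,\ell_3$, whose maximization objective is $2(R_{cm}+R_{cf1}+R_{cf2}+R_{cn}+R_{p1}+R_{p2})$.

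The hard part will be the last step: showing the LP optimum is \emph{precisely} $\min\{2n_d-n_c,\,2\max\{n_r+n_s-n_c,n_d-n_c\}\}$ on exactly the two regimes in the statement. The delicate point is the condition $n_s\le n_d-\tfrac{n_c}{2}$ in \eqref{eq:regimeWI2_opt2}: it is exactly the threshold below which a full cross-channel's worth of CN (size $\approx n_c$) can be simultaneously decoded at the relay (consuming part of $n_s$) and placed at the receivers (consuming the window of size $n_d-n_c$) without violating the transmitter packing constraint; above this threshold the relay cannot both decode and neutralize that much CN, and one of the other schemes of Theorem \ref{Theorem:capacity_LD_IRC} takes over. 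I would exhibit the explicit rate allocation — a table analogous to Table~\ref{Tab:WI1} — substitute it into every constraint to confirm feasibility, and then match the achieved value against \eqref{eq:new_UB1_IRC_det}, \eqref{eq:new_UB2_IRC_det} (evaluated for $n_c<n_s$) and \eqref{eq:old_UB1_IRC_det} of Lemmas~\ref{Lemma:new_bounds} and \ref{Lemma:known_UB1_det} to conclude that the scheme is sum-capacity optimal in these regimes, which proves the proposition.
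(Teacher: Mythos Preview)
Your plan follows the paper's approach closely: same building blocks (common, CF, CN, two private sub-levels), same block-Markov structure with backward decoding, same reduction to a linear program whose solution is tabulated and then matched against the upper bounds. Two points in your description diverge from the actual scheme and would lead you to the wrong constraints if taken literally.

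First, the relay in \WItwo{} does \emph{not} decode or forward any common level. The common signals $\U_{i,cm}$ are pure Han--Kobayashi common messages: each receiver decodes both $\U_{1,cm}$ and $\U_{2,cm}$ directly in a MAC step (treating everything below as noise), and the relay plays no role for them. The relay only decodes the CF sum and the CN sum and forwards only $\U_{r,cf1},\U_{r,cf2},\U_{r,cn}$. Relatedly, the CF split into $\U_{r,cf1},\U_{r,cf2}$ is done at the \emph{relay} (the top $\ell_4$ bits versus the rest of $\U_{r,cf}$), not at the transmitters; each Tx sends a single CF block $\U_{i,cf}$.

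Second, in both regimes \eqref{eq:regimeWI2_opt1}--\eqref{eq:regimeWI2_opt2} one has $n_r\le n_d$ (and $q=n_d$), so there is no ``top window of size $n_r-n_d$'' above the desired signal. The relay's $\U_{r,cf1}$ is placed so that, after the common signals are stripped, it lands in the zero-padding slot $\ell_1$ inside the desired column, and $\U_{r,cf2}$ is placed to overlap the \emph{future} CN signal $\U_{1,cn}[k]$, which the receiver already knows from backward decoding and can subtract. This placement (constraints \eqref{Scheme2Cond7p}, \eqref{Scheme2Cond10}--\eqref{Scheme2Cond12} in the paper) is what makes the scheme work when $n_r\le n_d$, and it is a bit more delicate than ``CF goes in an interference-free window.'' Apart from these two corrections, your outline is the paper's proof.
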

In what follows the proof of this proposition is presented.
Notice that in both regimes where \WItwo{} is optimal (given in \eqref{eq:regimeWI2_opt1} and \eqref{eq:regimeWI2_opt2}), we have $q=n_d$. While both the \WIone{} and \WItwo{} schemes use CN and CF and private signaling, the difference between the two schemes is that \WIone{} uses DF while \WItwo{} uses common signaling instead. In what follows, we present this scheme in detail. 

\subsubsection*{Encoding at transmitters}
Tx$1$ constructs its transmit signal as follows
\begin{align}
\boldsymbol{x}_1[k] = 
\begin{bmatrix}
\boldsymbol{u}_{1,cm}[k] \\
\boldsymbol{u}_{1,cn}[k-1] \\
\boldsymbol{u}_{1,cf}[k] \\
\boldsymbol{0}_{\ell_1} \\
\boldsymbol{u}_{1,p1}[k] \\
\boldsymbol{u}_{1,cn}[k] \\
\boldsymbol{u}_{1,p2}[k]
\end{bmatrix},   k=1,\ldots,n,
\end{align}
where $\boldsymbol{u}_{1,cm}$ represents the common signal vector with length $2R_{cm}$, and where $\ell_1$ will be specified later. Here, the private signals $\boldsymbol{u}_{1,p1}$, $\boldsymbol{u}_{1,p2}$, the CN signal $\boldsymbol{u}_{1,cn}$, and the CF signal $\boldsymbol{u}_{1,cf}$ have lengths $R_{p1}$, $R_{p2}$, $R_{cn}$, and $R_{cf}$, respectively. As described in Section \ref{Sec:BuildingBlocks}, the signals $\boldsymbol{u}_{1,cn}[0]$, $\boldsymbol{u}_{1,cm}[n]$, $\boldsymbol{u}_{1,cf}[n]$, $\boldsymbol{u}_{1,p1}[n]$, $\boldsymbol{u}_{1,cn}[n]$, and $\boldsymbol{u}_{1,p2}[n]$ are zero vectors. Tx$2$ generates $\X_2[k]$ similarly. Clearly, this works only if
\begin{align}
\label{Scheme2Cond1}
2R_{cm}+2R_{cn}+R_{cf}+\ell_1+R_{p1}+R_{p2}\leq q=n_d.
\end{align}

\subsubsection*{Decoding at the relay}
The relay receives the sum of the top-most $n_s$ bits transmitted by Tx's. In channel use $k$, the relay wants to decode the following sums of signals $\U_{1,cf}[k]\oplus \U_{2,cf}[k]$ and $\U_{1,cn}[k]\oplus \U_{2,cn}[k]$. This is possible if the relay observes the CF and CN signals, which requires following constraint
\begin{align}
\label{Scheme2Cond2}
2R_{cm} + 2R_{cn}+ R_{cf} + \ell_1+R_{p1} \leq n_s.
\end{align}
Therefore, at the end of the $k$th channel use, the relay knows $\U_{r,cf}[k+1]=\U_{1,cf}[k]\oplus \U_{2,cf}[k]$ and $\U_{r,cn}[k+1]=\U_{1,cn}[k]\oplus \U_{2,cn}[k]$.

\subsubsection*{Encoding at the relay} 
In channel use $k$, the relay sends the following signal vector
\begin{align*}
\boldsymbol{x}_r[k] = 
\begin{bmatrix}
\boldsymbol 0_{\ell_2} \\
\U_{r,cf1}[k]\\
\boldsymbol{0}_{\ell_3}\\
\U_{r,cf2}[k]\\ 
\boldsymbol 0_{r_1}
\end{bmatrix}\oplus
\begin{bmatrix}
\boldsymbol 0_{\ell_5} \\
\boldsymbol{u}_{r,cn1}[k]\\
\boldsymbol 0_{r_2}
\end{bmatrix},
\end{align*}
where $\boldsymbol 0_{r_1}$ and $\boldsymbol 0_{r_2}$ are zero vectors which insure that the length of $\X_r$ is $q$ (which equals $n_d$ in this regime), $\U_{r,cf1}[k]$ consists of the top-most $\ell_4$ bits ($\ell_4$ is to be determined) of $\boldsymbol{u}_{r,cf}[k]$ and $\U_{r,cf2}[k]$ consists of remaining bits, and $\U_{r,cn1}[k]$ consists of the top-most $\ell_6$ bits of $\boldsymbol{u}_{r,cn}[k]$, where $\ell_6$ is the number of bits of $\U_{2,cn}[k]$ that appear as interference at Rx$1$, i.e., $\ell_6=\min\{R_{cn},(n_c-2R_{cm})^+\}$. Thus, $\U_{r,cf1}[k]=(\boldsymbol{u}_{r,cf}[k])_{[1:\ell_4]}$, $\U_{r,cf2}[k]=(\boldsymbol{u}_{r,cf}[k])_{[\ell_4+1:R_{cf}]}$, and $\U_{r,cn1}[k]=(\boldsymbol{u}_{r,cn}[k])_{[1:\ell_6]}$. The signals above ($\U_{r,cf1}$, $\U_{r,cf2}$, and $\boldsymbol{u}_{r,cn1}$) will be received by receivers if 
\begin{align}
\label{Scheme2Cond3}
\ell_2+R_{cf}+\ell_3\leq n_r,\\
\label{Scheme2Cond4}
\ell_5+\ell_6\leq n_r.
\end{align}
The parameters $\ell_2$ to $\ell_6$ should be chosen in such a way that facilitates the decoding at the destinations, as we shall see next.

\subsubsection*{Decoding at the receiver side} 
Now, we describe decoding at Rx$1$. The decoding at Rx$2$ is done similarly. Similar to scheme \WIone{}, in this scheme, the destinations use backward decoding. Rx$1$ waits until the end of $n$th channel use. Assuming that decoding $\Y_{1}[n]$ is done successfully, Rx$1$ knows
\begin{itemize}
\item $\U_{1,cn}[n-1]$
\item $\U_{r,cf1}[n]=\U_{1,cf1}[n-1]\oplus \U_{2,cf1}[n-1]$
\item $\U_{r,cf2}[n]=\U_{1,cf2}[n-1]\oplus \U_{2,cf2}[n-1]$.
\end{itemize}
Next, Rx$1$ start with processing $\Y_{1}[n-1]$ which is shown in Fig. \ref{Fig:RE:A_3_Rx1}. First, Rx$1$ decodes the common signals as in an IC \cite{BreslerTse}. Thus, we treat the IRC at this stage as an IC, by treating the CN, CF, P1, and P2 signals as noise. Furthermore, to make sure that $\U_{r,cf1}[n-1]$ and $\U_{r,cf1}[n-2]$ do not interfere with the desired common signal, we require
\begin{align}
\label{Scheme2Cond5}
n_r-\ell_2 \leq n_d-2R_{cm} \quad \text{if } R_{cf}>0.
\end{align}
\begin{remark}
\label{remark:WI_2_1}
As we discussed earlier, the CN signal sent by the relay needs to be received aligned with the CN signal sent by Tx$2$ to facilitate interference neutralization at Rx$1$. Moreover, since in weak interference regime $n_c<n_d$, $\U_{2,cn}[n-1]$ is received on lower level than $\U_{1,cn}[n-1]$. Now, due to \eqref{Scheme2Cond1}, $\U_{1,cn}[n-1]$ cannot overlap the common signal $\U_{1,cm}[n-1]$ and hence, $\U_{2,cn}[n-1]$ and $\U_{r,cn1}[n-1]$ cannot do either. 
\end{remark}
Under this condition, we get an IC with $n_{d,IC}=2R_{cm}$ and $n_{c,IC}=(n_c-n_d+2R_{cm})^+$ which is an IC with weak interference. The decoding of both common signals at the receivers in this IC is done in a MAC fashion, achieving a rate of~\cite{BreslerTse} $$\min\left\{\frac{n_{d,IC}}{2},n_{c,IC}\right\}.$$
Therefore, under the aforementioned conditions \eqref{Scheme2Cond1},\eqref{Scheme2Cond5}, the common rate $\min\left\{R_{cm},(n_c-n_d+2R_{cm})^+\right\}$ is achieved. 

\begin{figure}
\centering
\begin{tikzpicture}[->,>=stealth',shorten >=1pt,auto,node distance=3cm,thick,scale=1]
%%%% horisonatal lines
\SCHEMEWItwoFigoneAC
\end{tikzpicture}
\caption{The received signal at Rx$1$ in the $k$th time slot ($2\leq k \leq n-1$) based on scheme \WItwo{}. The time index of common signals is $[k]$.}
\label{Fig:RE:A_3_Rx1}
\end{figure}

After removing the common signal vectors from the received signal, Rx$1$ observes a superposition of $\boldsymbol{x}_1^\prime$, $\boldsymbol{x}_2^\prime$, and $\boldsymbol{x}_r^\prime$ shown in Fig. \ref{Fig:RE:A_3_Rx1_after_remov_Comm}, where we define 
\begin{align*}
%\label{eq:channel_after_com_remov_1}
n_d^\prime &= n_d-2R_{cm} \\ 
%\label{eq:channel_after_com_remov_2} 
n_c^\prime &= n_c-2R_{cm} \\
%\label{eq:channel_after_com_remov_3} \\
% \label{eq:channel_after_com_remov_4}
n_r^\prime &= n_r- \ell_2.
\end{align*}
At this point, we need to make sure that $n_d^\prime$, $n_c^\prime$, $n_r^\prime$, and $n_s^\prime$ are all non-negative. The terms $n_d'$, $n_s'$, and $n_r'$ are clearly non-negative due to \eqref{Scheme2Cond1}, \eqref{Scheme2Cond2}, \eqref{Scheme2Cond3}, and \eqref{Scheme2Cond4}. To guarantee that $n_c'$ is non-negative, we require
\begin{align}
\label{Scheme2Cond6}
2R_{cm}\leq n_c.
\end{align}

\begin{figure}
\centering
\begin{tikzpicture}[->,>=stealth',shorten >=1pt,auto,node distance=3cm,thick,scale=1]
\SCHEMEWItwoFigtwoAC
\end{tikzpicture}
\caption{The received signal at Rx$1$ in the $k$th channel use ($2\leq k\leq n-1$) after removing the common signal vectors based on scheme \WItwo{}. Here, $\U_{i,cnF}$ denotes $\U_{i,cn}[k]$, while $\U_{i,cn}$ represents $\U_{i,cn}[k-1]$. The time index of all remaining signals is $[k]$.}
\label{Fig:RE:A_3_Rx1_after_remov_Comm}
\end{figure}
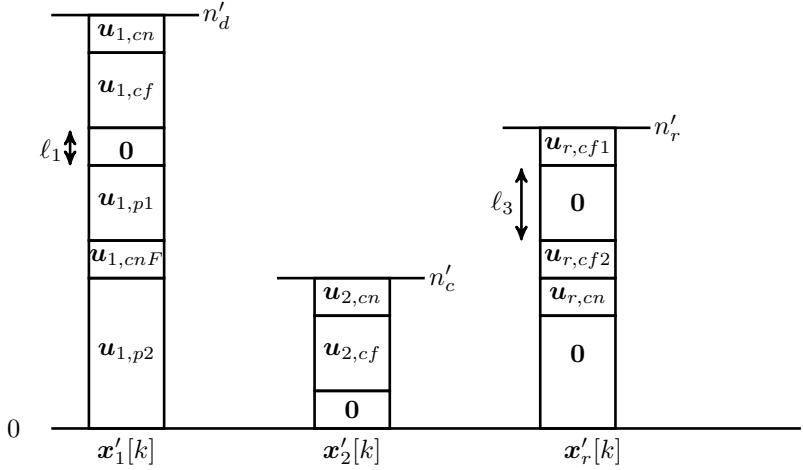

At this step, we are able to specify $\ell_1$, $\ell_2$, $\ell_3$, $\ell_4$, and $\ell_5$. We do this while describing the decoding at Rx$1$ step by step. First, we need to avoid interference between the relay signal $\X_r'$ and $\U_{1,cn}$ and $\U_{1,cf}$. 
Thus, we need $n_r'\leq n_d'-R_{cn}-R_{cf}$ leading to
\begin{align}
n_r-\ell_2 \leq n_d-2R_{cm}-R_{cn}-R_{cf} \quad \text{ if } 0<R_{cf}. \label{Scheme2Cond7}
%\min\{\ell_2,\ell_5\}\geq n_r-n_d+2R_{cm}+R_{cn}+R_{cf}.
\end{align}
\begin{remark}
Notice that if $R_{cf}=0$, $\X_r'$ contains only CN signal. As we discussed earlier, CN relaying strategy can neutralize the interference at Rx$1$ as long as the relay CN signal is aligned with the CN signal from Tx$2$. Moreover, due to the condition of WI regime ($n_c<n_d$), the CN signal from Tx$1$ is received at Rx$1$ on higher level than that of Tx$1$. Thus, by using successive decoding (discussed below), the interference caused by the CN signals from Tx$2$ and relay are removed. Hence, the interference caused by the CN signal in $\X_r'$ does not need to be considered for reliable decoding of $\U_{1,cn}$ and $\U_{1,cf}$.
\end{remark}
Notice that \eqref{Scheme2Cond7} is stricter than \eqref{Scheme2Cond5}. Thus, we set
\begin{align}
\label{Scheme2Cond7p}
\begin{cases}
\ell_2 & = (n_r-n_d+2R_{cm}+R_{cn}+R_{cf})^+ \quad \text{ if } 0<R_{cf1} \\ 
\ell_2 &= (n_r-n_d+2R_{cm}+R_{cn} + R_{cf} + \ell_1 +R_{p1})^+ \quad \text{ if } R_{cf1}=0 \text{ and } 0<R_{cf2}  \\ 
\ell_2 &= n_r \quad \text{otherwise,}
\end{cases}
\end{align}
%Thus, we set
%\begin{align}
%\label{Scheme2Cond7p}
%\begin{cases}
%\ell_2 & = n_r-n_d+2R_{cm}+R_{cn}+R_{cf} \quad \text{ if } 0<R_{cf,} \\ 
%\ell_2 &= 0\quad \text{otherwise,}
%\end{cases}
%\end{align}
which satisfies both \eqref{Scheme2Cond5} and \eqref{Scheme2Cond7}. Since $n_c<n_d$, then Rx$1$ can decode the first bit of $\U_{1,cn}[n-2]$. If we align $\U_{2,cn}$ and $\U_{r,cn}$ at Rx$1$, then the interference of $\U_{2,cn}$ and $\U_{r,cn}$ is neutralized (as in Section \ref{Sec:BuildingBlocks}) since Rx$1$ receives $\U_{r,cn}[n-1]\oplus \U_{2,cn}[n-2]=\U_{1,cn}[n-2]$. This alignment is possible if the following two conditions are satisfied. Firstly, Rx$1$ has to receive $\U_{r,cn}$ which requires condition \eqref{Scheme2Cond4}. Secondly, the CN signal from the relay and Tx$2$ has to be aligned at Rx$1$. This is possible as long as $n_r-\ell_5=n_c'$ and hence
\begin{align}
\label{Scheme2Cond8}
\ell_5=n_r-n_c+2R_{cm}.
\end{align}
Note that $\ell_5$ is always positive since $n_c\leq n_r$ in this regime. After decoding the first bit of $\U_{1,cn}[n-2]$, Rx$1$ removes the contribution of the first bit of $\U_{r,cn}[n-1]\oplus \U_{2,cn}[n-2]$. Then, Rx$1$ decodes the second bit of $\U_{1,cn}[n-1]$ received from Tx$1$, and cancels the interference of the second bit of $\U_{r,cn}[n-1]\oplus \U_{2,cn}[n-2]$. It continues this way until all bits of $\U_{1,cn}[n-1]$ are decoded, and all bits of $\U_{r,cn}[n-1]\oplus \U_{2,cn}[n-2]$ are cancelled. 

Next, Rx$1$ decodes the first bit of $\U_{1,cf}[n-1]$ interference free. 
Then, it uses this bit in combination with $\U_{r,cf1}[n]$ and $\U_{r,cf2}[n]$ (decoded in the $n$th channel use) to extract the first bit of the interference signal $\U_{2,cf}[n-1]$ and subtract its contribution from the received signal. Then it proceeds to decode the second bit of $\U_{1,cf}[n-1]$. Decoding proceeds this way until all bits of $\U_{1,cf}[n-1]$ are decoded and all bits of $\U_{2,cf}[n-1]$ are cancelled. Note that at this point there is no interference left from Tx$2$ if the signals $\U_{2,p1}[n-1]$, $\U_{2,cn}[n-1]$, and $\U_{2,p2}$ are received below the noise floor at Rx$1$, i.e.,
\begin{align}
\label{Scheme2Cond9}
n_c-2R_{cm}-R_{cn}-R_{cf}-\ell_1\leq 0.
\end{align}
Under this condition, Rx$1$ decodes $\U_{r,cf1}$ which is received by Rx$1$ if \eqref{Scheme2Cond3} holds, and is interference free if
\begin{align}
\label{Scheme2Cond10}
\ell_1\geq \ell_4.
\end{align}
Then it decodes $\U_{1,p1}$ which is also received interference free since
\begin{align}
\label{Scheme2Cond11}
\begin{cases}
\ell_3= R_{p1} &\quad \text{ if } 0<R_{cf1} \\ 
\ell_3=0  &\quad \text{ if } 0=R_{cf1}
\end{cases}.
\end{align}
Notice that for the case in which $R_{cf1}=0$, \eqref{Scheme2Cond7p} guarantees that $\U_{1,p1}$ is received interference free.
Now, Rx$1$ wants to decode $\U_{r,cf2}[n-1]$. To do this, it first removes the contribution of $\U_{1,cn}[n-1]$ (denoted by $\U_{1,cnF}$ in Fig. \ref{Fig:RE:A_3_Rx1_after_remov_Comm}) from the received signal, which is possible since Rx$1$ has decoded $\U_{1,cn}[n-1]$ in channel use $n$. After removing $\U_{1,cn}[n-1]$, Rx$1$ observes $\U_{r,cf2}[n-1]$ interference free if
\begin{align}
\label{Scheme2Cond12}
\ell_4=(R_{cf}-R_{cn})^+.
\end{align}
Under this condition, Rx$1$ can decode $\U_{r,cf2}[n-1]$. Finally, Rx$1$ decodes $\U_{1,p2}$ interference free. 

As a result, this scheme achieves
\begin{align}
nR_{\Sigma,\text{\WItwo{}}} = 2(n-1) \left( \min\{R_{cm},(n_c-n_d+2R_{cm})^+\}+R_{cn} + R_{cf} + R_{p1} + R_{p2}\right).
\end{align}
Dividing this expression by $n$ and letting $n\to \infty$, we obtain the sum-rate 
\begin{align}
R_{\Sigma,\text{\WItwo{}}} = 2\left(\min\{R_{cm},n_c-n_d+2R_{cm}\}+R_{cn} + R_{cf} + R_{p1} + R_{p2}\right).
\end{align}
Therefore, the optimal achievable sum-rate of scheme 2 is obtained by solving the following optimization problem 
\begin{align*}
\max  \quad &R_{\Sigma,\text{\WItwo{}}}\\
\text{s.t.} \quad & \text{\eqref{Scheme2Cond1}-\eqref{Scheme2Cond12}}\\
& R_{cm},R_{cn},R_{cf},R_{p1},R_{p2},\ell_1,\ell_2,\ell_3,\ell_4,\ell_5,\ell_6\geq 0.
\end{align*}
The solution of this optimization problem is presented in Tables \ref{Tab:WI21} and Tables \ref{Tab:WI22}, with $$\ell_1=n_d-2R_{cm}-2R_{cn}-R_{cf}-R_{p1}-R_{p2}.$$

\begin{table*}
\centering
\begin{tabular}{|c||c|c|c|c|}
\hline
 Regime &  $n_r+n_s\leq n_d$ & $n_d< n_r+n_s \leq n_d+\frac{n_c}{2}$ & $n_d<\min\{n_r+n_s-\frac{n_c}{2},\frac{3}{2}n_c\}$ & $\frac{3}{2}n_c\leq n_d < n_r+n_s-\frac{n_c}{2}$ \\\hline
$R_{cm}$ & 0 & 0 & $\frac{n_c}{2}$ & 0 \\\hline
$R_{cn}$ & 0 & $n_s-\max\{n_c,n_d-n_r\}$ & 0 & $n_s-n_c- (n_s-\frac{3}{2}n_c)^+$ \\\hline
$R_{cf}$ & 0 & $n_r-n_d+n_s$ & 0 & $\frac{n_c}{2}$ \\\hline
$R_{p1}$ & 0 & $\max\{n_c,n_d-n_r\}-n_c$ & $n_s-n_c$&$(n_s-\frac{3}{2}n_c)^+$ \\\hline
$R_{p2}$ & $n_d-n_c$ & $n_d-n_s$ & $n_d-n_s$&$n_d-n_s$ \\\hline
\end{tabular}
\caption{Rate allocation parameters for the scheme \WItwo{} in the regime where $n_c\leq n_s\leq n_r \leq n_d$.}
\label{Tab:WI21}
\end{table*}

\begin{table*}
\centering
\begin{tabular}{|c||c|c|c|}
\hline
 Regime &  $n_r+n_s \leq n_d$ & $n_d< n_r+n_s\leq n_d+\frac{n_c}{2}$ & $n_d+\frac{n_c}{2}<n_r+n_s$ \\\hline
$R_{cm}$ & 0 & 0 & 0 \\\hline
$R_{cn}$ & 0 & $n_s-n_c-\max\{0,2n_s-2n_c+n_r-n_d\}$ & $n_s-n_c-(n_s-\frac{3}{2}n_c)^+$ \\\hline
$R_{cf}$ & 0 & $n_r-n_d+n_s$ & $\frac{n_c}{2}$ \\\hline
$R_{p1}$ & 0 & $\max\{0,2n_s-2n_c+n_r-n_d\}$ & $(n_s-\frac{3}{2}n_c)^+$ \\\hline
$R_{p2}$ & $n_d-n_c$ & $n_d-n_s$ & $n_d-n_s$ \\\hline
\end{tabular}
\caption{Rate allocation parameters for the scheme \WItwo{} in the regime where $n_c\leq n_r \leq n_s\leq n_d-\frac{n_c}{2}$.}
\label{Tab:WI22}
\end{table*}

The given rate allocation satisfies constraints \eqref{Scheme2Cond1}-\eqref{Scheme2Cond12} and achieves the sum-rate in \eqref{eq:R_sum_LB_A3}. As a result, this proves the achievability of Theorem \ref{Theorem:capacity_LD_IRC} for the two regimes $n_c\leq n_s\leq n_r \leq n_d$ and $n_c\leq n_r \leq n_s\leq n_d-\frac{n_c}{2}$.

\subsection{Scheme \WIthree{}} 
Note that scheme \WIone{} and \WItwo{} do not cover all possible regimes with weak interference $n_c<n_d$ and with a source-relay channel stronger than the cross channel $n_c<n_s$ (condition of Theorem \ref{Theorem:capacity_LD_IRC}). In particular, three possibilities remain given by (i) $n_c\leq n_r\leq n_s\leq n_d< n_s+\frac{n_c}{2}$, (ii) $n_r\leq n_c< n_d\leq n_s$, and (iii) $n_r\leq n_c\leq n_s\leq n_d$. Next, we present the last scheme which covers these three cases and completes the proof of the achievability of Theorem \ref{Theorem:capacity_LD_IRC} for the WI regime. This scheme is called \WIthree{}. Its achievable sum-rate is presented in the following proposition. 

\begin{proposition}\label{prop:RateWI3}
The achievable sum-rate with the scheme \WIthree{} for the IRC is given by
\begin{align}\label{eq:R_sum_UB_A1}
R_{\Sigma,\text{\WIthree{}}} = 
\begin{cases}
\min\{n_s+n_d-n_c,2n_d+n_r-n_c,2\max\{n_c,n_r+n_d-n_c\}\}  & \text{ if }n_r\leq n_c < n_d \leq n_s \\
\min\{2n_d-n_c,2\max\{n_s,n_d-n_c\},2\max\{n_c,n_r+n_d-n_c\}\}  & \text{ if }n_r\leq n_c \leq n_s \leq n_d  \\
\min\{2n_d-n_c,2\max\{n_r+n_s-n_c,n_d-n_c\}\}  & \text{ if }n_c\leq n_r \leq n_s \leq n_d < n_s+\frac{n_c}{2}.
\end{cases}
\end{align}
{This proves the achievability of Theorem \ref{Theorem:capacity_LD_IRC} within the three regimes.}
Note that these sum-rate expressions coincide with the upper bounds given in Lemmas \ref{Lemma:new_bounds} and \ref{Lemma:known_UB1_det}. 
\end{proposition}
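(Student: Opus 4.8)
The plan is to prove Proposition~\ref{prop:RateWI3} by constructing scheme \WIthree{} from the building blocks of Section~\ref{Sec:BuildingBlocks} — private and common signaling combined with CN, CF, and (possibly) DF relaying — then setting up and solving a linear program over the rate-splitting parameters, exactly as was done for schemes \WIone{} and \WItwo{}. First I would fix the transmit-signal structure at Tx$1$ (and symmetrically Tx$2$) as a stacked binary vector whose sub-blocks are the common signal $\boldsymbol{u}_{1,cm}$, the CN signals $\boldsymbol{u}_{1,cn}$ with time indices $[k]$ and $[k-1]$ (block-Markov), the CF signal $\boldsymbol{u}_{1,cf}$, the private signals, and appropriate zero-padding blocks $\boldsymbol{0}_{\ell_i}$. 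The qualitative guide here is the discussion in Section~\ref{sec:Main_result}: in these three regimes $n_r\leq n_c$ in two of the cases, so the relay-destination link is comparatively weak and CN/CF must be the workhorses; in the third case $n_c\leq n_r\leq n_s$ with $n_d$ close to $n_s$, so common signaling plays a bigger role. I would draw the received-signal diagrams at the relay and at Rx$1$ (analogous to Figures~\ref{Fig:RE:A_1_2_Rx1} and~\ref{Fig:RE:A_3_Rx1}) to read off the level alignments.

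Next I would impose the feasibility constraints in the usual three layers: (i) each transmit vector must fit in $q$ levels; (ii) the relay must be able to decode the sums $\boldsymbol{u}_{1,cf}[k]\oplus\boldsymbol{u}_{2,cf}[k]$ and $\boldsymbol{u}_{1,cn}[k]\oplus\boldsymbol{u}_{2,cn}[k]$ after stripping the already-known $[k-1]$ CN sum from block-Markov encoding, which gives a constraint of the form $(\text{sum of rates seen above noise at relay})\leq n_s$; (iii) at Rx$1$, using backward decoding, the common signals are decoded first as in an LD-IC à la~\cite{BreslerTse}, then the CN interference from Tx$2$ is neutralized by the aligned relay signal $\boldsymbol{u}_{r,cn}[k]\oplus\boldsymbol{u}_{2,cn}[k-1]=\boldsymbol{u}_{1,cn}[k-1]$, then the desired CF bits are decoded and the interfering CF bits reconstructed from the relay's CF sum, and finally the private bits. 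Each of these decoding steps, plus the requirement that the various relay sub-blocks land interference-free at the receivers, yields a linear inequality in the rates $R_{cm},R_{cn},R_{cf},R_{p1},R_{p2}$ and the offsets $\ell_i$. The achievable sum-rate is $R_{\Sigma,\text{\WIthree{}}}=2(\min\{R_{cm},(n_c-n_d+2R_{cm})^+\}+R_{cn}+R_{cf}+R_{p1}+R_{p2})$ after the standard $n\to\infty$ argument that absorbs the $(n-1)/n$ factor from the last channel use being spent on neutralization.

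Then I would solve the resulting LP in each of the three regimes and tabulate the optimal rate allocation (one or two tables, as in Tables~\ref{Tab:WI1}--\ref{Tab:WI22}). The final step is verification: I would substitute the tabulated values back into every constraint to confirm feasibility, and then check that the resulting sum-rate equals the right-hand side of~\eqref{eq:R_sum_UB_A1} in each regime — e.g.\ $\min\{n_s+n_d-n_c,\,2n_d+n_r-n_c,\,2\max\{n_c,n_r+n_d-n_c\}\}$ in case (i) — and that this value coincides with the relevant terms of the upper bounds in Lemmas~\ref{Lemma:new_bounds} and~\ref{Lemma:known_UB1_det} (specifically the fifth term $\max\{n_d,n_c\}+\max\{n_d,n_s\}$, the third term $n_r+2\max\{n_d,n_c\}-n_c$, and the last term $2\max\{n_c,n_r+(n_d-n_c)^+\}$ of~\eqref{eq:capacity_LD_IRC}). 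Matching these closes the achievability for the WI regime.

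The main obstacle I anticipate is case-handling in the LP, not any single clever idea. The constraints contain nested $(\cdot)^+$ operators and $\min$/$\max$ terms (coming from the CF sub-split into $\U_{r,cf1}$, $\U_{r,cf2}$ and the CN alignment offset $\ell_6=\min\{R_{cn},(n_c-2R_{cm})^+\}$), so the feasible region is piecewise-linear and the optimum changes combinatorial face as one moves within each regime — which is why Proposition~\ref{prop:RateWI2} already needed two tables for scheme \WItwo{}. Concretely, within case (i) the behavior of $\min\{R_{cm},(n_c-n_d+2R_{cm})^+\}$ forces a split on whether $2R_{cm}\lessgtr n_d-n_c$, and within case (iii) the placement of $\U_{r,cf1}$ versus $\U_{r,cf2}$ relative to $\U_{1,cn}$ forces a split on the sign of $R_{cf}-R_{cn}$ (cf.\ $\ell_4=(R_{cf}-R_{cn})^+$). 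I would organize the proof by first reducing each regime to its relevant sub-cases via these sign conditions, then solving a genuinely linear program on each sub-case, and finally observing that the optimal value is continuous across the sub-case boundaries so the stated closed form holds throughout. The rest is bookkeeping of the kind already carried out for schemes \WIone{} and \WItwo{}.
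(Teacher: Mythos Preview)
Your proposal has a genuine structural gap. You are essentially porting scheme \WItwo{} (single common layer, single CN stream, CF with the $\U_{r,cf1}/\U_{r,cf2}$ split and $\ell_4=(R_{cf}-R_{cn})^+$) into the three remaining regimes. The paper's scheme \WIthree{} is \emph{not} a reparametrization of \WItwo{}: it drops CF entirely and instead uses \emph{three} distinct CN streams $\U_{i,cn1},\U_{i,cn2},\U_{i,cn3}$ placed at different levels, together with \emph{two} layers of common signals $\U_{i,cm1},\U_{i,cm2}$ (the first decoded as a MAC, the second decoded only after CN neutralization). The relay signal in \WIthree{} carries only $\U_{r,cn1},\U_{r,cn2},\U_{r,cn3}$; there is no $\U_{r,cf}$ term at all. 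The case split is accordingly \WIThreea{} ($R_{cn3}=0$) versus \WIThreeb{} ($R_{cn3}>0$, $R_{cn1}=R_{p1}=0$), which has nothing to do with the CF-based sign conditions you describe.

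The reason your \WItwo{}-style plan fails concretely in regimes (i) and (ii) is that both have $n_r\leq n_c$. In \WItwo{} the CN alignment offset is $\ell_5=n_r-n_c+2R_{cm}$ (eq.~\eqref{Scheme2Cond8}), which the paper notes is nonnegative \emph{because} $n_c\leq n_r$ in the \WItwo{} regimes. With $n_r\leq n_c$ you are forced into $2R_{cm}\geq n_c-n_r$, and together with $2R_{cm}\leq n_c$ and the common-rate formula $\min\{R_{cm},(n_c-n_d+2R_{cm})^+\}$ this does not leave enough degrees of freedom to hit, e.g., $\min\{n_s+n_d-n_c,\,2n_d+n_r-n_c,\,2\max\{n_c,n_r+n_d-n_c\}\}$ across the whole of regime~(i). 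The multi-stream CN architecture of \WIthree{} is precisely what circumvents this: different CN sub-vectors are neutralized at different receiver levels, and the second common layer $\U_{i,cm2}$ is decoded \emph{after} CN interference has been stripped, which is what makes the scheme feasible when the relay link is weaker than the cross link. Your LP-and-tables workflow is the right outer shell, but you need the correct transmit/relay signal structure inside it; the constraints and the optimization in Tables~\ref{Tab:WI3_1}--\ref{Tab:WI3_3} are over $(\ell_{cm1},R_{cm2},R_{cn1},R_{cn2},R_{cn3},R_{p1},R_{p2},\ell_1^u,\ell_1^d)$, not over $(R_{cm},R_{cn},R_{cf},R_{p1},R_{p2})$.
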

%Scheme \WIthree{} achieves the following sum-rates 
%\begin{align}
%R_{\Sigma,\text{\WIthree{}}} &= \min\{n_s+n_d-n_c,2n_d+n_r-n_c,2\max\{n_c,n_r+n_d-n_c\}\}  & \text{ if }n_r\leq n_c \leq n_d \leq n_s & \label{eq:R_sum_UB_B3} \\
%R_{\Sigma,\text{\WIthree{}}} &= \min\{2n_d-n_c,2\max\{n_s,n_d-n_c\},2\max\{n_c,n_r+n_d-n_c\}\}  & \text{ if }n_r\leq n_c \leq n_s \leq n_d & \label{eq:R_sum_UB_B4} \\
%R_{\Sigma,\text{\WIthree{}}} &= \min\{2n_d-n_c,2\max\{n_r+n_s-n_c,n_d-n_c\}\}  & \text{ if }n_c\leq n_r \leq n_s \leq n_d < n_s+\frac{n_c}{2} \label{eq:R_sum_UB_B5_2} 
%\end{align}
%It can be easily verified that the achievable sum-rates in \eqref{eq:R_sum_UB_B3}-\eqref{eq:R_sum_UB_B5_2} coincide the upper bounds in Lemmas \ref{Lemma:new_bounds} and \ref{Lemma:known_UB1_det}.
In what follows, we present scheme \WIthree{} in details. This transmission scheme is a combination of common and private signaling in addition to CN. 

\subsubsection*{Encoding at transmitters}
In $k$th channel use, Tx$1$ constructs the following signal vectors
\begin{align}
\boldsymbol{x}_1[k] = 
\begin{bmatrix}
\boldsymbol{u}_{1,cm1}[k] \\
\boldsymbol{u}_{1,cm2}[k] \\
\boldsymbol{u}_{1,cn1}[k-1] \\
\boldsymbol{u}_{1,cn2}[k-1] \\
\boldsymbol{u}_{1,p1}[k] \\
\boldsymbol{0}_{\ell_1^u} \\
\boldsymbol{u}_{1,cn1}[k] \\
\boldsymbol{0}_{\ell_1^d} \\
\boldsymbol{u}_{1,cn3}[k-1] \\
\boldsymbol{u}_{1,p2}[k] \\
\boldsymbol{u}_{1,cn2}[k] \\
\boldsymbol{u}_{1,cn3}[k]\\
\boldsymbol{0}_{s}
\end{bmatrix},  k=1,\ldots,n,
\end{align}
where  $s$ is chosen so that the length of $\X_1$ is $q$.  The length of vectors $\boldsymbol{u}_{1,a}$ is $R_a$, where $a\in\{cm2,cn1,cn2,cn3,p1,p2\}$. 
The length of the zero vector, i.e., $\ell_1^u$ and  $\ell_1^d$, will be chosen later in such a way that facilitates reliable decoding. We define $$\ell_1 = \ell_1^u+\ell_1^d. $$
The vector $\boldsymbol{u}_{1,cm1}$ with rate $R_{cm1}$ has a length of $\ell_{cm1}$. 
%\begin{remark}
%In this scheme, $\boldsymbol{u}_{i,cn2^d}$ does not appear with $\boldsymbol{u}_{i,cn1}$ or $\boldsymbol{u}_{i,p1}$ at the same time. Therefore, we write
%\begin{align}
%R_{cn2^d} = 0 \quad \text{ if } R_{cn1} > 0  \text{ or } R_{p1} >0. \label{eq:Cond_assump_Scheme3}
%\end{align}
%\end{remark}
We further set $\boldsymbol{u}_{1,cn1}[0]$, $\boldsymbol{u}_{1,cn2}[0]$, $\boldsymbol{u}_{1,cn3}[0]$, $\boldsymbol{u}_{1,cm1}[n]$ , $\boldsymbol{u}_{1,cm2}[n]$, $\boldsymbol{u}_{1,cn1}[n]$, $\boldsymbol{u}_{1,cn2}[n]$, $\boldsymbol{u}_{1,cn3}[n]$, $\boldsymbol{u}_{1,p1}[n]$, and $\boldsymbol{u}_{1,p2}[n]$ to be zero vectors. Similarly, Tx$2$ constructs $\X_2[k]$. This construction requires
\begin{align}
\ell_{cm1} + R_{cm2} + 2R_{cn1} + 2R_{cn2} + 2R_{cn3} + R_{p1} + R_{p2}+ \ell_1  \leq q. \label{Scheme1Condnd}
\end{align}
%\begin{remark}
%In this scheme, the signal vector $\boldsymbol{u}_{i,cn3}$ does not appear with $\boldsymbol{u}_{i,cn1}$ or $\boldsymbol{u}_{i,p1}$ at the same time. Therefore, we have following condition
%\begin{align}
%R_{cn3} = 0 \qquad \text{ if } R_{cn1} > 0 \text{ or }  R_{p1} > 0
%\end{align}
%\end{remark}

\subsubsection*{Decoding at the relay}
The relay receives the top-most $n_s$ bits of the transmitted signal vectors. In channel use $k$, the relay decodes $\U_{r,c}[k+1]=\U_{1,c}[k]\oplus \U_{2,c}[k]$, where $c\in\{cn1,cn2,cn3\}$. To enable this decoding, it is required to set the length of the transmitted signals in such a way that the relay is able to observe the desired signals $\U_{r,c}[k+1]$. To write the constraint for reliability decoding of $\U_{r,c}[k+1]$, we need to distinguish between several cases. 
This case distinction is necessary since the relay does not need to decode the sum of the private and common signal vectors when there is no CN signal vector in the lower level. The necessary constraint for decoding the CN signal vectors at the relay is given as follows 
\begin{align}
\begin{cases}
\ell_{cm1} + R_{cm2} + 2R_{cn1} + 2R_{cn2} + 2R_{cn3} + R_{p1} + R_{p2}+ \ell_1  \leq n_s & \text{ if } R_{cn2} \neq 0 \text{ or } R_{cn3} \neq 0 \\ 
\ell_{cm1} + R_{cm2} + 2R_{cn1} +  R_{p1} + \ell_1^u \leq n_s & \text{ if } R_{cn2} = R_{cn3} = 0  \text{ and } 0<R_{cn1}
%\\
%0 \leq n_s & \text{ if } R_{cn2} = R_{cn3} = R_{cn1} =0
\end{cases}.
\label{eq:rate_const_Re_B3_1}
\end{align}

%{Notice that this condition is tighter than \eqref{Scheme1Condnd} for regimes \eqref{eq:R_sum_UB_B3}-\eqref{eq:R_sum_UB_B5_2}.}

\subsubsection*{Encoding at the relay} After decoding the CN signals, the relay generates 
\begin{align*}
\boldsymbol{x}_r[k] = 
\begin{bmatrix}
\boldsymbol{0}_{\ell_2} \\
\boldsymbol{u}_{r,cn1}[k]\\
\boldsymbol{u}_{r,cn2}[k]\\
\boldsymbol{0}_{\ell_3}\\
\boldsymbol{u}_{r,cn3}[k]\\
\boldsymbol{0}_{r}
\end{bmatrix}
\end{align*}
in channel use $k\in\{2,\cdots, n\}$, where $\ell_2$ and $\ell_3$ will be chosen later, and $r$ is chosen so that the length of $\X_r$ is $q$. The signal $\boldsymbol{u}_{r,cn1}$, $\boldsymbol{u}_{r,cn2}$, and $\boldsymbol{u}_{r,cn3}$ will be received by Rx$1$ if
\begin{align}
\label{Scheme3Cond1}
R_{cn1}+R_{cn2}+R_{cn3}+\ell_2+\ell_3\leq n_r.
\end{align}

\subsubsection*{Decoding at the receiver side} 
Here, we only discuss the decoding process at Rx$1$ since decoding at Rx$2$ is done similarly. Similar to previous schemes, receivers use backward decoding. Hence, Rx$1$ waits until the end of $n$th channel use. Assuming that decoding $\Y_1[n]$ is done successfully, Rx$1$ knows
\begin{itemize}
\item $\U_{1,cn1}[n-1]$
\item $\U_{1,cn2}[n-1]$
\item $\U_{1,cn3}[n-1]$.
\end{itemize}
Next, it starts with processing $\Y_{1}[n-1]$. To do this, first, it decodes the first common signals of both transmitters $\boldsymbol{u}_{1,cm1}[n-1]$ and $\boldsymbol{u}_{2,cm1}[n-1]$ in a MAC fashion, while treating the remaining signals as noise. Decoding these signals requires that $\X_r$ does not interfere with the desired common signal, thus
\begin{align}
\label{Scheme3Cond2}
n_r-\ell_2\leq n_d-\ell_{cm1} \quad \text{ if } \ell_{cm1}>0.
\end{align} 
Similar to decoding the common signals in the \WItwo{} scheme (cf. Section \ref{sec:SchemeWI2}), this decoding is done as in a symmetric IC with channels $n_{d,IC}=\ell_{cm1}$ and $n_{c,IC}=n_c-n_d+\ell_{cm1}$. Decoding the two common signals can be done at a rate of $\min\{\frac{n_{d,IC}}{2},n_{c,IC}\}$ leading to the achievability of the following common rate
\begin{align}
R_{cm1}  = \min\left\{\frac{\ell_{cm1}}{2},(n_c-n_d+\ell_{cm1})^+\right\} \label{eq:Common_cond_3}
\end{align}
After removing the common signal vectors $\U_{1,cm1}[n-1]$ and $\U_{2,cm1}[n-1]$ and the top-most $\ell_2$ zeros sent by the relay, Rx$1$ observes the superposition of $n_d'$ bits from Tx$1$, $n_c'$ bits from Tx$2$ and $n_r'$ bits from the relay. 
%Similarly, the relay observation after removing the sum of the common signal vector $\U_{1,cm1}[n-1]\oplus \U_{2,cm1}[n-1]$ contains $n_s'$ bits. 
These parameters are defined as follows
\begin{align*}
n_d^\prime &= n_d - \ell_{cm1} \\ 
n_c^\prime &= (n_c - \ell_{cm1})^+ \\ 
%n_s^\prime &= (n_s - \ell_{cm1})^+ \\ 
n_r^\prime &= n_r-\ell_2.
\end{align*}
%Moreover, we define a further parameter $q'=\max\{n_d',n_c',n_r',n_s'\}$ which is equal to $\max\{n_d',n_s'\}$
%due to $n_c<n_d$, and \eqref{Scheme3Cond2}.

Now, for the sake of simplicity, we distinguish between two cases and explain the decoding for each case separately. 
\begin{itemize}
\item Scheme \WIThreea{}: In this case the signal vector $\U_{i,cn3}$ does not appear. Hence, we have $R_{cn3} = 0$ and we set $\ell_3=0$. The received signal vector at Rx$1$ after removing $\U_{i,cm1}[n-1]$ is illustrated for this case in Fig. \ref{Fig:RE:B_3_Rx1_after_remov_Comm}.
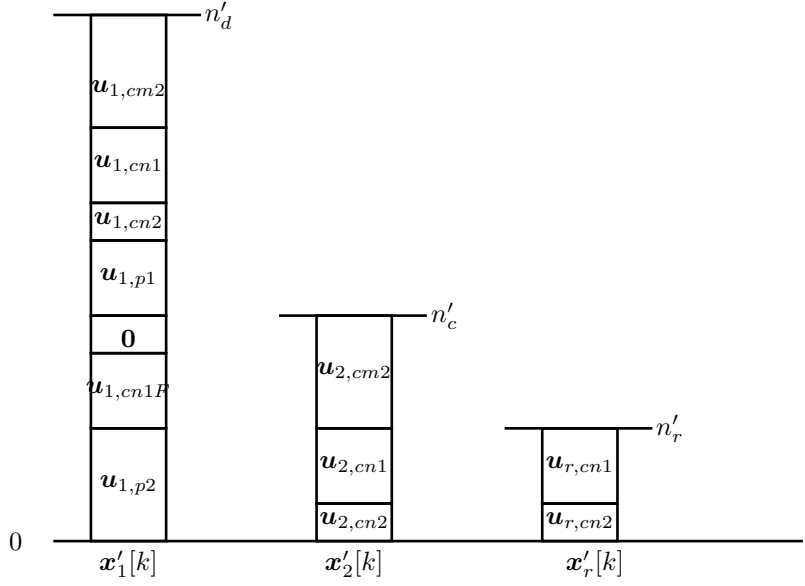
\begin{figure}[t]
\centering
\begin{tikzpicture}[->,>=stealth',shorten >=1pt,auto,node distance=3cm,thick,scale=1]
\YprimeSchemeThreeTwoa
\end{tikzpicture}
\caption{The received signal at Rx1 based on scheme \WIThreea{} in the $k$th channel use ($2\leq k\leq n-1$) after removing the common signal vectors ${\U}_{1,cm1}[k]$ and ${\U}_{2,cm1}[k]$. Note that $\U_{1,cn1F} = \U_{1,cn1}[k]$, $\U_{i,cn1} = \U_{i,cn1}[k-1]$, and $\U_{i,cn2} = \U_{i,cn2}[k-1]$ for $i\in\{1,2\}$. The time index of all other signal vectors is $[k]$.}
\label{Fig:RE:B_3_Rx1_after_remov_Comm}
\end{figure}
%First, Rx$1$ removes the interference causes by $\U_{1,cn1}[n-1]$ ($\U_{1,cnF}$ in Fig. \ref{Fig:RE:B_3_Rx1_after_remov_Comm}) since this signal vector is already known at Rx$1$ from decoding in channel use $n$.
To guarantee that Rx$1$ receives the common signal vector $\U_{1,cm2}[n-1]$ the CN signal vectors $\U_{1,cn1}[n-2]$, $\U_{1,cn2}[n-2]$, and the private signal vectors $\U_{1,p1}[n-1]$, $\U_{1,p2}[n-1]$ without any overlap with each other, we write
\begin{align}
\ell_{cm1}+R_{cm2} + 2R_{cn1} + R_{cn2} + R_{p1} + \ell_1 + R_{p2}\leq n_d.
\end{align}
First, Rx$1$ decodes $\boldsymbol{u}_{1,cm2}[n-1]$, $\boldsymbol{u}_{1,cn1}[n-2]$, $\boldsymbol{u}_{1,cn2}[n-2]$, and $\boldsymbol{u}_{1,p1}[n-1]$. These signals are received interference free as long as
\begin{align}
\label{eq:cond_scheme3_rx_2}
R_{cm2} + R_{cn1} + R_{cn2} + R_{p1} &\leq n_d^\prime - n_c^\prime, \\
\label{Scheme3Cond4}
R_{cm2} + R_{cn1} + R_{cn2} + R_{p1} &\leq n_d^\prime - n_r^\prime. 
\end{align}
{Notice that condition \eqref{Scheme3Cond4} is tighter than \eqref{Scheme3Cond2}.} 
{Now, Rx$1$ is ready to remove the contribution of the $\U_{2,cn1}[n-2]$ and $\U_{2,cn2}[n-2]$. To enable this, we require that the CN signal vectors of Tx$2$ and the relay are aligned at Rx$1$. This alignment is possible if}
\begin{align}
\label{Scheme3Cond5}
n_c^\prime - R_{cm2} = n_r^\prime \text{ if } R_{cn1} \neq 0 \text{ or } R_{cn2} \neq 0.
\end{align} 
From this condition, we obtain 
\begin{align}
\ell_2 = \begin{cases}
 n_r - n_c'+R_{cm2} & \text{ if } R_{cn1} \neq 0 \text{ or } R_{cn2} \neq 0 \\
 n_r & \text{ if } R_{cn1} = 0 \text{ and } R_{cn2} = 0
\end{cases}.
\end{align}
Under the condition in \eqref{Scheme3Cond5}, interference neutralization takes place as shown in Section \ref{Sec:BuildingBlocks}, and Rx$1$ receives $\U_{1,cn1}[n-2]$ and  $\U_{1,cn2}[n-2]$ (or parts thereof) as an aggregate of the CN signals from Tx$2$ and the relay. Since Rx$1$ has already decoded $\U_{1,cn1}[n-2]$ and $\U_{1,cn2}[n-2]$, aggregate interference from these signals can be removed. This solves the problem of the CN interference. 

Since Rx$1$ has decoded $\U_{1,cn1}[n-1]$ in channel use $n$, it removes the contribution of this signal ($\U_{1,cn1F}$ in Fig. \ref{Fig:RE:B_3_Rx1_after_remov_Comm}) from its received signal. 
Next, Rx$1$ decodes $\U_{2,cm2}[n-1]$. This can be done reliably as long as 
\begin{align}
n_d' - R_{cm2} - 2R_{cn1}-R_{cn2}-R_{p1}-\ell_1 \leq n_c' - R_{cm2} \quad \text{if} \quad R_{cm2}> 0 .
\end{align}

Finally, Rx$1$ decodes $\U_{1,p2}[n-1]$. To guarantee that this signal vector is received interference free, following condition needs to be satisfied
%\begin{align}
%\label{Scheme3Cond7}
%R_{cm2} +R_{p2} &\leq n_c^\prime \quad \text{if} \quad R_{cm2}> 0 .
%\end{align}
%To avoid the interference from the signals $\U_{2,p1}[n-1]$, $\U_{2,p2}[n-1]$, $\U_{2,cn1}[n-1]$, and $\U_{2,cn2}[n-1]$, we set 
\begin{align}
\begin{cases}
n_c' -R_{cm2}-R_{cn1} - R_{cn2} = 0 & \text{ if } 0<R_{p1} \\ 
n_c' -R_{cm2}-R_{cn1} - R_{cn2} - \ell_1^u \leq 0 & \text{ if } 0=R_{p1} \\ 
\end{cases}
\label{Scheme3Cond6}
\end{align}
%Moreover, to avoid interference from $\U_{1,cn2}[n-1]$, we set
%\begin{align}
%\ell_1 = n_d-\ell_{cm1} -2R_{cn1}-R_{cn2}-R_{p1}-n_c'.\label{Scheme3Cond8}
%\end{align}
As a result this scheme achieves
\begin{align}
nR_{\Sigma,\text{\WIThreea{}}} = 2(n-1)\left(R_{cm1}+ R_{cm2}+R_{cn1}+R_{cn2}+R_{p1}+R_{p2}\right).
\end{align}
By dividing this expression by $n$ and letting $n\to\infty$, we obtain
\begin{align}
R_{\Sigma,\text{\WIThreea{}}} = 2\left(R_{cm1}+ R_{cm2}+R_{cn1}+R_{cn2}+R_{p1}+R_{p2}\right).
\end{align}
This sum-rate has to be maximized under the conditions in 
\eqref{eq:rate_const_Re_B3_1}-\eqref{Scheme3Cond6}, which can be formulated as the following optimization problem
\begin{align}
\max\quad & R_{\Sigma,\text{\WIThreea{}}} \label{eq:Scheme_Opt_B3a} \\
\text{s.t.}\quad &\text{\eqref{eq:rate_const_Re_B3_1}-\eqref{Scheme3Cond6} are satisfied} \notag\\
& \ell_{cm1},R_{cm2},R_{cn1},R_{cn2},R_{p1},R_{p2},\ell_1^u,\ell_1^d\geq 0. \notag
\end{align}
\item Scheme \WIThreeb{}: Compared to previous case, here $0<R_{cn3}$ while $R_{cn1} = R_{p1} = 0$. The received signal vector after removing the $\U_{1,cm1}$ and $\U_{2,cm1}$ is illustrated in Fig. \ref{Fig:RE:B_3b_Rx1_after_remov_Comm}. To guarantee that Rx$1$ receives the common signal vector $\U_{1,cm2}$, the CN signal vectors $\U_{1,cn2}$, $\U_{1,cn3}$, and private signal vector $\U_{1,p2}$, we write
\begin{align}
\ell_{cm1}+R_{cm2} + R_{cn2} + \ell_1 + R_{cn3} + R_{p2} \leq n_d.
\label{eq:rate_const_Re_B3b_1}
\end{align}
 
First, Rx$1$ decodes $\U_{1,cm2}[n-1]$ and $\U_{1,cn2}[n-2]$. To guarantee that these signals are received interference free, we write
\begin{align}
R_{cm2} + R_{cn2} \leq n_d' -n_c' \\
R_{cm2} + R_{cn2} \leq n_d' -n_r'. 
\end{align}
\begin{figure}[t]
\centering
\begin{tikzpicture}[->,>=stealth',shorten >=1pt,auto,node distance=3cm,thick,scale=1]
\YprimeSchemeThreeTwob
\end{tikzpicture}
\caption{The received signal at Rx1 based on scheme \WIThreeb{} in the $k$th channel use ($2\leq k\leq n-1$) after removing the common signal vectors ${\U}_{1,cm1}[k]$ and ${\U}_{2,cm1}[k]$. Note that $\U_{i,cn2} = \U_{i,cn2}[k-1]$, and $\U_{i,cn3} = \U_{i,cn3}[k-1]$ for $i\in\{1,2\}$. The time index of all other signal vectors is $[k]$.}
\label{Fig:RE:B_3b_Rx1_after_remov_Comm}
\end{figure}
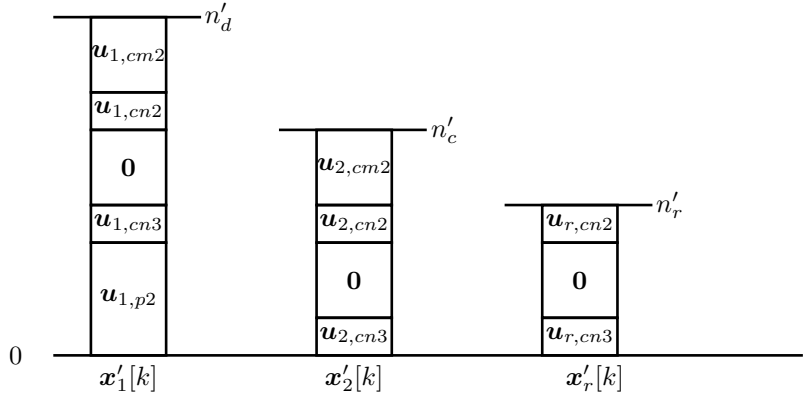
Next Rx$1$ decodes $\U_{2,cm2}[n-1]$. This can be done reliably as long as 
\begin{align}
n_d' -R_{cm2}-R_{cn2}-\ell_1 \leq n_c' -R_{cm2} \quad \text{ if } R_{cm2}\neq 0.\label{eq:rate_const_Re_B3b_7} 
\end{align}

Next, Rx$1$ removes the interference of $\U_{1,cn2}[n-2]$ caused by Tx$2$ and the relay (i.e., $\U_{2,cn2}[n-2]\oplus \U_{r,cn2}[n-1]$). This can be done as long as the CN signal vectors $\U_{2,cn2}[n-2]$ and $\U_{r,cn2}[n-1]$ are aligned. Therefore, we write
\begin{align}
n_r' = n_c' - R_{cm2}  \quad \text{ if } R_{cn2}\neq 0.
\label{eq:rate_const_Re_B3b_5}
\end{align}

Next, Rx$1$ decodes the top-most bit of $\U_{1,cn3}[n-2]$ (sent by Tx$1$) without any overlap with $\U_{2,cn3}[n-2]\oplus \U_{r,cn3}[n-1]=\U_{1,cn3}[n-2]$. This is possible since $n_c<n_d$.
After that Rx$1$ removes the interference caused by the top-most bit of $\U_{2,cn3}[n-2]\oplus \U_{r,cn3}[n-1]$ and then it decodes the second bit of $\U_{1,cn3}[n-2]$. It continues this decoding until the whole vector $\U_{1,cn3}[n-2]$ is decoded. This can be done reliably as long as the interference caused by $\U_{2,cn3}[n-2]$ can be neutralized by the relay CN signal $\U_{r,cn3}[n-1]$. Hence, we write
\begin{align}
n_r' = R_{cn3}\quad & \text{ if } R_{cn2} = 0. 
\label{eq:rate_const_Re_B3b_8} 
\end{align}
From the conditions in \eqref{eq:rate_const_Re_B3b_5} and \eqref{eq:rate_const_Re_B3b_8}, we can fix $\ell_2$ and $\ell_3$ as follows
\begin{align}
\begin{cases}
\ell_2 = n_r- n_c'+R_{cm2}  &\text{ and } \ell_3 = \ell_1 \quad \text{ if } R_{cn2} \neq 0 \\ 
\ell_2 = n_r- R_{cn3} & \text{ and } \ell_3 = 0 \quad \text{ if } R_{cn2} = 0 \text{ and } R_{cn3}\neq 0 \\
\ell_2 = n_r & \text{ and } \ell_3 = 0 \quad \text{otherwise}.
\end{cases}
\end{align}
Finally, Rx$1$ decodes $\U_{1p2}[n-1]$. To guarantee that $\U_{1p2}[n-1]$ is received interference free, following condition is required
\begin{align}
R_{p2} \leq n_d' -n_c'. \label{eq:rate_const_Re_B3b_9} 
\end{align}

Now, Rx$1$ has completed the decoding its desired signals in channel use $n-1$. It proceeds backwards till channel use 1. Therefore, the sum-rate is given as follows
\begin{align}
nR_{\Sigma,\text{\WIThreeb{}}} = 2(n-1)\left(R_{cm1}+ R_{cm2}+R_{cn2}+R_{cn3}+R_{p2}\right).
\end{align}
By dividing the expression by $n$ and letting $n\to\infty$, we obtain
\begin{align}
R_{\Sigma,\text{\WIThreeb{}}} = 2\left(R_{cm1}+ R_{cm2}+R_{cn2}+R_{cn3}+R_{p2}\right).
\end{align}
This sum-rate has to be maximized under the conditions in \eqref{eq:rate_const_Re_B3_1}-\eqref{eq:Common_cond_3},  \eqref{eq:rate_const_Re_B3b_1}-\eqref{eq:rate_const_Re_B3b_7}, and \eqref{eq:rate_const_Re_B3b_9} which can be formulated as following optimization problem
\begin{align}
\max\quad & R_{\Sigma,\text{\WIThreeb{}}} \label{eq:Scheme_Opt_B3b} \\
\text{s.t.}\quad &\text{\eqref{eq:rate_const_Re_B3_1}-\eqref{eq:Common_cond_3},  \eqref{eq:rate_const_Re_B3b_1}-\eqref{eq:rate_const_Re_B3b_7}, and \eqref{eq:rate_const_Re_B3b_9} are satisfied} \notag \\
& \ell_{cm1},R_{cm2},R_{cn2},R_{cn3},R_{p1},R_{p2},\ell_1^u,\ell_1^d,\ell_2,\ell_3\geq 0. \notag
\end{align}
\end{itemize}
The optimal parameters for the optimization problems \eqref{eq:Scheme_Opt_B3a} and \eqref{eq:Scheme_Opt_B3b} (with $\ell_1^u$ and $\ell_1^d$) are given in Table \ref{Tab:WI3_1}-\ref{Tab:WI3_3}. Using these optimal values, the sum-rate in \eqref{eq:R_sum_UB_A1} is achieved.
As a result, this scheme together with Schemes \WIone{} and \WItwo{} proves the achievability of Theorem \ref{Theorem:capacity_LD_IRC} for the WI regime.

\begin{table*}
\centering
\begin{tabular}{|c||c|c|c|c|}
\hline
\multirow{2}{*}{Regime} & \multicolumn{2}{c|}{$n_s+n_c\leq n_d+2n_r$}  &  \multicolumn{2}{c|}{$n_s+n_c>n_d+2n_r$}  \\ \cline{2-5}
&$3n_c \leq n_s+n_d $ &  $3n_c>n_s+n_d$ & $n_d\leq 2n_c$ & $n_d>2n_c$  \\ \hline
$\ell_{cm1}$ & $0$ & $n_c+n_d-n_s$ &  0 & 0 \\ \hline
$R_{cm2}$    & $\frac{n_c+n_d-n_s}{2}$& $0$ & $n_c-n_r$ & $n_c-n_r$ \\ \hline
$R_{cn1}$    &  {$\frac{1}{2}\min\{(3n_d-3n_c-n_s)^+,2R_{cm2}\}$} & 0 & $(n_r-2n_c+2n_d-n_s)^+$ & \footnotesize{$\min\{n_r,n_c-n_r\}$}  \\ \hline
$R_{cn2}$    & \footnotesize{$\min\{(n_d-n_c-R_{cm2}-R_{cn1})^+,n_s-n_d \}$} & \footnotesize{$\min\{n_d-n_c,n_s-n_d\}$} & \footnotesize{$\min\{n_s-n_d,n_d-2n_c+n_r\}$} & $(2n_r-n_c)^+$ \\ \hline
$R_{p1}$     &  $(n_d-2n_c)^+$ & $(2n_d-n_c-n_s)^+$ & 0 & $n_d-2n_c$   \\ \hline
$R_{cn3}$    &  $(n_s-n_d-R_{cn2})$ & $(n_c+n_s-2n_d)^+$ & $(n_c-n_d+n_r)^+$& 0 \\ \hline
$R_{p2}$     &  \multicolumn{2}{c|}{$\min\{n_c-R_{cm2}-R_{cn3},n_d-n_c-R_{p1}-R_{cn1}\}$} & $\min\{n_d-n_c,n_r\}$ & $n_r$   \\ \hline
%$R_\Sigma$ &  \multicolumn{2}{c|}{$n_s+n_d-n_c$}  &  \multicolumn{2}{c|}{$2(n_r+n_d-n_c)$}  \\ \hline
\end{tabular}
\caption{Rate allocation parameters for the scheme \WIthree{} in the regime where $n_r\leq n_c< n_d \leq n_s$ and $n_r+n_d-n_c>n_c$. In these regimes, $\ell_1^u     =R_{cm2}-R_{cn1}$, $\ell_1^d =0$ and $\ell_1 = \ell_1^u$.}
\label{Tab:WI3_1}
\end{table*}

\begin{table*}
\centering
\begin{tabular}{|c||c|c|c|}
\hline
Regime & $n_s< \min\{n_r+n_d,3n_c-n_d\}$ & \footnotesize{$3n_c \leq \min\{n_s+n_d,2n_d+n_r\} $} & $n_d\leq \min\{n_s-n_r,\frac{3n_c-n_r}{2}\}$ \\ \hline
$\ell_{cm1}$ & $n_c+n_d-n_s$ & 0 & $n_c-n_r$ \\ \hline
$R_{cm2}$    &  $0$ & $n_d-n_c$ & $0$ \\ \hline
$R_{cn1}$    &  0 & 0 & $0$\\ \hline
$R_{cn2}$    &  $\min\{n_d-n_c,n_s-n_d\}$ & 0 & $n_r-(n_c+n_r-n_d)^+$\\ \hline
$R_{p1}$     &  $(2n_d-n_c-n_s)^+$ & 0 & $(n_d-n_r-n_c)^+$  \\ \hline
$R_{cn3}$    &  $(n_c+n_s-2n_d)^+$ & $(3n_c-2n_d)^+$ & $(n_c+n_r-n_d)^+$  \\ \hline
$R_{p2}$     & \footnotesize{$\min\{n_c-R_{cm2}-R_{cn3},n_d-n_c-R_{p1}-R_{cn1}\}$} & \footnotesize{$(2n_c-n_d)-(3n_c-2n_d)^+$} & $(n_d-n_c)-R_{p1}$  \\ \hline
%$R_\Sigma$ & $n_s+n_d-n_c$  & $2n_c$ & $2n_d+n_r-n_c$ \\ \hline
\end{tabular}
\caption{Rate allocation parameters for the scheme \WIthree{} in the regime where $n_r\leq n_c\leq n_d \leq n_s$ and $n_r+n_d-n_c\leq n_c$. In these regimes, $\ell_1^u     =R_{cm2}-R_{cn1}$, $\ell_1^d =0$ and $\ell_1 = \ell_1^u$.}
\label{Tab:WI3_2}
\end{table*}
\begin{table*}
\centering
\begin{tabular}{|c||c|c|c|c|c|}
\hline
\multirow{3}{*}{Regime} & \multirow{2}{*}{$n_s\leq n_d-n_c$} & \multicolumn{4}{c|}{$n_c-n_r \leq n_d-n_c< n_s$} \\ \cline{3-6}
& &\multicolumn{2}{c|}{\footnotesize{$2(n_d-n_s) \leq n_c\leq 2n_r$}} &\multirow{2}{*}{\footnotesize{$n_s\leq \min \{n_d-\frac{n_c}{2},n_r+n_d-n_c\}$}}& \multirow{2}{*}{\footnotesize{$\max\{n_r+n_d-n_s,2n_r\} \leq n_c$}}  \\ \cline{3-4}
 &  & $2n_d\leq 3n_c$  & $2n_d>3n_c$  & &  \\ 
 \hline
$\ell_{cm1}$ & 0 &$n_c$ & 0  & 0 & 0 \\ \hline
$R_{cm2}$    & 0 & 0 & $\frac{n_c}{2}$ & $n_d-n_s$ & $n_c-n_r$ \\ \hline
$R_{cn1}$    & 0 & 0 & \footnotesize{$n_d-\frac{3n_c}{2}-R_{p1}$} & {$\min\{n_s-n_c,n_c+n_s-n_d\}$} & $n_r-(2n_c-n_d)^+$ \\ \hline
$R_{cn2}$    & 0 & 0 & 0  & 0 & 0\\ \hline
$R_{p1}$     & 0 &$n_d-n_c$& $(n_d-2n_c)^+$  & $(n_d-2n_c)^+$   & $(n_d-2n_c)^+$ \\ \hline
%$R_{cn3}$    & 0 & 0 & 0 & 0 & 0 \\ \hline
$R_{p2}$     & $n_d-n_c$ &0 & $\frac{n_c}{2}$& $n_c+n_s-n_d$ &  $n_r$  \\ \hline
$\ell_1^u$     & 0 &\multicolumn{4}{c|}{$(2n_c-n_d)^+$} \\ \hline
$\ell_1^d$     & $n_c$ &\multicolumn{4}{c|}{$R_{cm2}-R_{cn1}-\ell_1^u$} \\ \hline 
%$R_\Sigma$   & $2(n_d-n_c)$ & \multicolumn{2}{c|}{$2n_d-n_c$}& $2n_s$ &  $2(n_r+n_d-n_c)$ \\ \hline
\end{tabular}
\caption{Rate allocation parameters for the scheme \WIthree{} in the regime where $n_r\leq n_c\leq n_s \leq n_d$ and $ n_c \leq n_r+n_d-n_c$ or $n_s \leq n_d-n_c$. In this cases, $R_{cn3} = 0$. Hence, only scheme \WIThreea{} is used in this case.}
\label{Tab:WI3_3}
\end{table*}

\begin{table*}
\centering
\begin{tabular}{|c||c|c|c|c|}
\hline
\multirow{2}{*}{Regime} & \multicolumn{2}{c|}{{$n_r \leq n_c\leq n_s \leq n_d<\min\{2n_c-n_r,n_s+n_c\}$}}&\multicolumn{2}{c|}{$n_c \leq n_r\leq n_s \leq n_d \leq n_s+ \frac{n_c}{2}$}  \\ \cline{2-5}
& $2n_d\leq 3n_c$ & $2n_d>3n_c$ & $2n_d\leq 3n_c$ & $2n_d>3n_c$ \\ \hline
$\ell_{cm1}$ & $n_c$ & $0$ & $n_c$ & 0 \\ \hline
$R_{cm2}$    & 0 & $2n_c-n_d$ & 0 & $\frac{n_c}{2}$ \\ \hline
$R_{cn1}$    & 0 & $0$ & 0 & $\min\{n_d-\frac{3n_c}{2}, \frac{n_c}{2}\}$ \\ \hline
%$R_{cn2}$    & 0 & $0$ & 0 & 0 \\ \hline
$R_{p1}$     & $n_d-n_c$ & 0 & $(n_d-n_c)$ & $(n_d-2n_c)^+$ \\ \hline
%$R_{cn3}$    & 0 & $0$ & 0 & 0 \\ \hline
$R_{p2}$     & 0 & $n_d-n_c$ & 0 & $\frac{n_c}{2}$ \\ \hline
$\ell_1^u$     &\multicolumn{1}{c|}{$R_{cm2}-R_{cn1}$} & $n_d-n_c$  & $0$ & $(2n_c-n_d)^+$ \\ \hline 
$\ell_1^d$     &\multicolumn{1}{c|}{$0$} & $0$ & \multicolumn{2}{c|}{$R_{cm2}-R_{cn1}-\ell_1^u$} \\ \hline 
%$R_\Sigma$   & $2n_d-n_c$ & $2n_c$ & \multicolumn{2}{c|}{$2n_d-n_c$} \\ \hline
\end{tabular}
\caption{Rate allocation parameters for the scheme \WIthree{}. In this case, $R_{cn2}=0 = R_{cn3} = 0$.}
\label{Tab:WI3_3}
\end{table*}

\subsection{Scheme SI} 
Finally, we present the scheme which is optimal is strong interference regime $n_d<n_c$. The achievable sum-rate of this scheme is summarized in following proposition. 

\begin{proposition}\label{prop:RateSI}
The achievable sum-rate with the scheme SI for the IRC is given by
\begin{align}
R_\Sigma = \min\{2\max\{n_d,n_r\},\max\{n_r,n_c\}+(n_s-n_c),n_s+n_c,n_c+n_r\}, \label{eq:sum_rate_SI}
\end{align}
This proves the achievability of Theorem \ref{Theorem:capacity_LD_IRC} within strong interference regime.
Note that this sum-rate expression coincides with the upper bounds given in Lemmas \ref{Lemma:new_bounds} and \ref{Lemma:known_UB1_det}. 
\end{proposition}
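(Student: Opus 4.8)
The plan is to prove Proposition \ref{prop:RateSI} by constructing a single transmission scheme, call it \emph{Scheme SI}, valid throughout the strong interference regime $n_d<n_c$ (together with the standing assumption $n_s>n_c$ of Theorem \ref{Theorem:capacity_LD_IRC}), and in the same style as Schemes \WIone{}, \WItwo{}, and \WIthree{}. Since here the interference link dominates the direct link, the scheme relies on common signaling in the Han--Kobayashi spirit combined with cooperative interference neutralization (CN) and compute--forward (CF) at the relay; no private signaling is needed when $n_d<n_c$. Concretely, Tx$1$ transmits a layered block-Markov vector $\X_1[k]$ stacking, from the top level downward, a common vector $\U_{1,cm}[k]$, a neutralization vector $\U_{1,cn}[k-1]$ (the copy to be cancelled at Rx$2$), a CF vector $\U_{1,cf}[k]$, a fresh neutralization vector $\U_{1,cn}[k]$ (the copy to be learned by the relay), and zero padding up to length $q$; Tx$2$ does the same, and the boundary blocks $k=1$ and $k=n$ are zeroed out exactly as in Section \ref{Sec:BuildingBlocks}. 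The construction is feasible as long as the sum of the layer lengths does not exceed $q$.

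First I would analyze the relay. Over the source-relay channel the relay observes the modulo-two sum of the top $n_s$ bits of the two transmit vectors, and because $n_s>n_c>n_d$ it can peel off the common layer and recover both $\U_{1,cn}[k]\oplus\U_{2,cn}[k]$ and $\U_{1,cf}[k]\oplus\U_{2,cf}[k]$; this imposes a constraint of the form ``sum of transmitted layer lengths $\le n_s$'' (possibly with a case distinction on whether the CN/CF layers are present, as in the earlier schemes). In channel use $k+1$ the relay forwards $\U_{r,cn}[k+1]=\U_{1,cn}[k]\oplus\U_{2,cn}[k]$ and $\U_{r,cf}[k+1]=\U_{1,cf}[k]\oplus\U_{2,cf}[k]$, placed at alignment offsets $\ell_2,\ell_3,\dots$ chosen so that at each receiver the relayed CN layer lands exactly on top of the undesired transmitter's CN layer; this works provided $n_r$ is large enough, giving constraints of the type ``$\ell_j+(\text{layer lengths})\le n_r$''.

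Next I would treat backward decoding at Rx$1$ (Rx$2$ being symmetric). In the last block only the relay is active, so Rx$1$ recovers the sums $\U_{r,cf}[n]$, $\U_{r,cn}[n]$ and its own $\U_{1,cn}[n-1]$; in a generic block $2\le k\le n-1$ it first decodes the two common vectors in a multiple-access fashion as in the linear-deterministic IC \cite{BreslerTse}, then — using that in the strong interference regime the undesired CN copy is received above the desired one, and that overlapping copies can be separated by successive decoding because $n_c\neq n_d$ — it uses $\U_{r,cn}[k]$ to neutralize the interference $\U_{2,cn}[k-1]$ and recover $\U_{1,cn}[k-1]$, and finally uses $\U_{r,cf}[k]$ together with the desired CF bits to cancel $\U_{2,cf}[k-1]$ and recover $\U_{1,cf}[k-1]$. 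Collecting all feasibility conditions (relay decoding, relay--destination reception, interference-free reception and CN alignment at Rx$1$, and the initial length budget) yields a linear program with objective $R_\Sigma=2\,(R_{cm,\mathrm{eff}}+R_{cn}+R_{cf})$, where $R_{cm,\mathrm{eff}}$ is the effective MAC common rate of the form $\min\{R_{cm},(n_c-n_d+\dots)^+\}$, and the factor $2$ together with the $(n-1)/n\to 1$ reduction appears exactly as in \eqref{eq:R_Sigma_Scheme1}.

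The remaining step is to solve this linear program: I would exhibit an explicit optimal rate allocation, presented in a table as for the earlier schemes, verify that every constraint is met, and check that its value equals $\min\{2\max\{n_d,n_r\},\ \max\{n_r,n_c\}+(n_s-n_c),\ n_s+n_c,\ n_c+n_r\}$, i.e.\ \eqref{eq:sum_rate_SI}. The converse then follows from the bounds already established: in the regime $n_d<n_c<n_s$ one has $2\max\{n_d,n_r\}$ from \eqref{eq:old_UB1_IRC_det}, $\max\{n_r,n_c\}+(n_s-n_c)$ from \eqref{eq:old_UB2_IRC_det}, $n_s+n_c$ from \eqref{eq:new_UB3_IRC_det}, and $n_c+n_r$ from \eqref{eq:new_UB1_IRC_det}, and a short case check shows that the remaining terms of \eqref{eq:capacity_LD_IRC} are never smaller than the minimum of these four, so the scheme is sum-capacity optimal. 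I expect the main obstacle to be not any single inequality but the combinatorial bookkeeping: choosing the offsets $\ell_j$ so that CN neutralization at the receivers is simultaneously compatible with the relay being able to decode the CN and CF sums, and then confirming that the linear-program optimum collapses to the claimed four-term minimum across all sub-regimes of $n_d<n_c<n_s$ — in particular across the splits $n_r<n_c$ versus $n_r\ge n_c$ and $n_s<2n_c$ versus $n_s\ge 2n_c$.
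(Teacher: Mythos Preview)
Your overall framework (block-Markov encoding, backward decoding, layering, and reducing to a linear program) matches the paper, but your scheme is missing a building block that the paper's Scheme SI relies on: decode--forward. You propose only common + CN + CF, whereas the paper's transmit vector also carries DF layers $\U_{i,df1}$ and $\U_{i,df2}$, and the relay forwards the individual DF signals (not just sums). In the paper's rate tables, DF carries strictly positive rate in every sub-regime with $n_c\le n_r$; for instance, when $\max\{2n_c,n_r\}\le n_s$ the paper sets $R_{cm}=R_{cf}=0$ and achieves the target $n_r+n_c$ entirely through DF and CN.

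The reason your scheme cannot compensate for the missing DF when $n_r>n_c$ is structural. With CF, the receiver must still decode $\U_{2,cf}$ from the cross link before it can use the relayed sum, so CF rate is bottlenecked by the $n_c$ levels. CN is similarly tied to the cross link because the relayed sum must align with the interfering CN copy at level $\le n_c$. Common signaling is bounded by the MAC rate $\min\{\ell_{cm}/2,(n_d-n_c+\ell_{cm})^+\}$, again controlled by $n_d,n_c$. Only DF lets the relay place \emph{individual} messages on the levels between $n_c$ and $n_r$, from which the receiver decodes them directly and then cancels their interference below. Without DF your per-user rate is capped near $n_c$, so your scheme cannot reach $n_r+n_c$ when $n_r>n_c$, and the claimed minimum in \eqref{eq:sum_rate_SI} is not attained. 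A second, smaller point: the paper also splits the common, CF, and CN layers into two pieces each (cm1/cm2, cf1/cf2, cn1/cn2) to gain the alignment flexibility needed across the sub-regimes of Tables~\ref{Tab:SI_1}--\ref{Tab:SI_2}; a single unsplit layer per type is unlikely to satisfy all the alignment constraints simultaneously even after you add DF.
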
 
In what follows we present scheme SI. In this scheme, we use CN, CF, and DF relaying strategies in addition to private and common signaling. 
\subsubsection*{Encoding at transmitters} Suppose that Tx$1$ constructs in the $k$th channel use $\X_1[k]$ as follows
\begin{align}
\X_1[k] = \begin{bmatrix}
\U_{1,cm1}[k] \\ \U_{1,cm2}[k] \\
\U_{1,cf1}[k]\\ \boldsymbol{0}_{\ell_1} \\ \U_{1,cf2}[k]\\ \U_{1,cn1}[k-1]\oplus \U_{1,df1}[k]\\ \U_{1,cn2}[k-1]  \\ \U_{1,df2}[k] \\  \U_{1,cn}[k] \\ \boldsymbol{0}_s
\end{bmatrix},k=1,\ldots,n,
\end{align}
the signal vectors $\U_{1,cn1}[0]$, $\U_{1,cn2}[0]$, ${\U}_{1,cm1}[n]$, $\U_{1,cm2}[n]$, $\U_{1,cf1}$, $\U_{1,cf2}[n]$, $\U_{1,df1}[n]$, $\U_{1,df2}[n]$, and $\U_{1,cn}[n]$ are zero vectors. 
The signal vector $\U_{1,cn}$ is defined as $\begin{bmatrix}
\U_{1,cn1}^T & \U_{1,cn2}^T
\end{bmatrix}^T$.
Similar to previous schemes, subscripts $cm$, $cn$, $cf$, and $df$ represent common, CN, CF, and DF signal vectors, respectively. Moreover, the length of the zero vector, i.e., $\ell_1$ is set later in such a way that facilitates reliable decoding. 
Note that the length of signal vector $\U_{1,a}$ is $R_a$, where $a\in\{cm2,cf1,cf2,cn1,cn2\}$ and the length of signal vectors $\U_{1,df1}$, and $\U_{1,df2}$ are $2R_{df1}$, and $2R_{df2}$, respectively. The signal vector $\U_{1,cm1}$ with rate $R_{cm1}$ has a length of $\ell_{cm1}$. 
The DF signal vectors are generated in a similar way as in Scheme \WIone{}. Therefore, $\U_{1,df1}=\begin{bmatrix}
\tilde{\U}_{1,df1}^T & \boldsymbol{0}_{R_{df1}}^T \end{bmatrix}^T$ and $\U_{2,df1}=\begin{bmatrix} \boldsymbol{0}_{R_{df1}}^T &
\tilde{\U}_{2,df1}^T \end{bmatrix}^T$, where $\tilde{\U}_{i,df1}$, $i\in\{1,2\}$ is a signal vector which contains $R_{df1}$ information bits of Tx$i$. Note that the vectors $\U_{1,cn1}$ and $\U_{1,df1}$ have the same length, hence, we have $R_{cn1} = 2R_{df1}$. This construction needs to satisfy 
\begin{align}
\ell_{cm1} + R_{cm2} + R_{cf1} + R_{cf2} + 2R_{cn1} + 2R_{cn2} + 2R_{df2}  + \ell_1 \leq q. \label{eq:SI_1}
\end{align}
\subsubsection*{Decoding at the relay}
The relay receives the sum of the top-most $n_s$ bits sent by Tx's. Supposing that the decoding at the relay is done reliably in time slot $k-1$, the relay knows $\U_{1,cn}[k-1]\oplus \U_{2,cn}[k-1]$ in the beginning of time slot $k$. Hence, it remove the interference caused by this sum before decoding process in the $k$th channel use.
In channel use $k$, the relay decodes $\U_{r,c}[k+1] = \U_{1,c}[k] \oplus \U_{2,c}[k]$, where $c=\{cf1,cf2,df1,df2,cn1,cn2\}$. This can be done reliably as long as the source-relay link is sufficiently strong. This can be formulated as follows
\begin{align}
\ell_{cm1}+ R_{cm2} + R_{cf1} + \ell_1 + R_{cf2} + 2R_{cn1} + 2R_{cn2} +2R_{df2} \leq n_s. \label{eq:Sch4_cond1}
\end{align}
Therefore, at the end of $k$th channel use, the relay knows $\U_{r,c}[k+1]$, where $c=\{cf1,cf2,df1,df2,cn1,cn2\}$.

\subsubsection*{Encoding at the relay}
In the $k$th channel use ($2 \leq k\leq n$), the relay constructs the following signal vector 
\begin{align*}
\X_r[k]=\begin{bmatrix}
\boldsymbol{0}_{\ell_2} \\ 
\boldsymbol{u}_{r,df1}[k] \\ \boldsymbol{u}_{r,df2}[k]\\  \boldsymbol{u}_{r,cf1}[k]\\ \boldsymbol{u}_{r,cf2}[k]\\ \boldsymbol{0}_{\ell_3}\\ \boldsymbol{u}_{r,cn1}[k] \\ \boldsymbol{u}_{r,cn2}[k]\\ \boldsymbol{0}_{r}
\end{bmatrix},
\end{align*}
where $r$ is chosen so that the length of $\X_r[k]$ is $q$.
Note that $\U_{r,c}[k+1] = \U_{1,c}[k-1]\oplus \U_{2,c}[k-1]$, with $c\in\{df1,df2,cf1,cf2,cn1,cn2\}$. 
It is worth mentioning that $\U_{1,df1}[k] \oplus \U_{2,df1}[k] = \begin{bmatrix}
\tilde{\U}_{1,df1}^T[k] & \tilde{\U}_{2,df1}^T[k]
\end{bmatrix}^T$ and $\U_{1,df2}[k] \oplus \U_{2,df2}[k] = \begin{bmatrix}
\tilde{\U}_{1,df2}^T[k] & \tilde{\U}_{2,df2}^T[k]
\end{bmatrix}^T$.
\subsubsection*{Decoding at the receiver side} Here, we present the decoding only for Rx$1$ since decoding at Rx$2$ is similar. Rx1 waits until the end of $n$th channel use. Next, it starts with the backward decoding.  Supposing that the decoding the received signal vector in the $n$th channel use is done successfully, Rx$1$ obtains
\begin{itemize}
\item $\U_{r,df1}[n] \rightarrow \tilde{\U}_{1,df1}[n-1], \tilde{\U}_{2,df1}[n-1]$
\item $\U_{r,df2}[n] \rightarrow \tilde{\U}_{1,df2}[n-1], \tilde{\U}_{2,df2}[n-1]$
\item $\U_{r,cf1}[n]$
\item $\U_{r,cf2}[n]$
\item $\U_{1,cn}[n-1]$.
\end{itemize}
Next, Rx$1$ starts decoding $\Y_1[n-1]$. It decodes first $\U_{1,cm1}[n-1]$ and $\U_{2,cm1}[n-1]$ as in the MAC while ignoring the remaining signals. This can be done successfully as long as the relay signal does not cause any interference. Hence, we write
\begin{align}
n_r-\ell_2 \leq n_c -\ell_{cm1} \quad \text{ if } 0<\ell_{cm1}. \label{eq:SI_5}
\end{align}
For decoding these common signal vectors, we consider an IC with the channels $n_{c,IC}= \ell_{cm1}$ and $n_{d,IC} = n_d - n_c + \ell_{cm1}$. The common signal vectors $\U_{1,cm1}$ and $\U_{2,cm1}$ can be decoded reliably as long as their length does not exceed $\min \lbrace \frac{n_{c,IC}}{2},n_{d,IC}\rbrace$. Hence, we write
\begin{align}
R_{cm1} = \min \left\lbrace\frac{\ell_{cm1}}{2},(n_d-n_c+ \ell_{cm1})^+\right\rbrace.
\end{align}
After removing the common signal vectors $\U_{1,cm1}$ and $\U_{2,cm1}$ from the received signal, Rx$1$ observes a superposition of $\X_1'$, $\X_2'$, and $\X_r'$ shown in Fig. \ref{Fig:B_4_Rx1_after_remov_Comm}, where we define
\begin{align*}
n_d^\prime &= (n_d - \ell_{cm1})^+ \\ 
n_c^\prime &= n_c - \ell_{cm1} \\ 
%n_s^\prime &= (n_s - \ell_{cm1})^+ \\ 
n_r^\prime &= n_r-\ell_2.
\end{align*}
%Due to the condition in \eqref{eq:SI_5} and since we focus on the case that $n_d<n_c\leq n_s$, we obtain $q' = n_s$.

\begin{figure}[t]
\centering
\begin{tikzpicture}[->,>=stealth',shorten >=1pt,auto,node distance=3cm,thick,scale=1]
\YprimeSchemeFour
\end{tikzpicture}
\caption{The received signal vector at Rx1 in the $k$th channel use ($2<k<n-1$) after removing the signal vectors ${\U}_{1,cm1}[k]$ and ${\U}_{2,cm2}[k]$. Note that $\U_{2,cn1}$ and $\U_{2,cn2}$ represent $\U_{2,cn1}[k-1]$ and $\U_{2,cn2}[k-1]$, respectively. Moreover, the time index of all remaining signals is $[k]$. }
\label{Fig:B_4_Rx1_after_remov_Comm}
\end{figure}
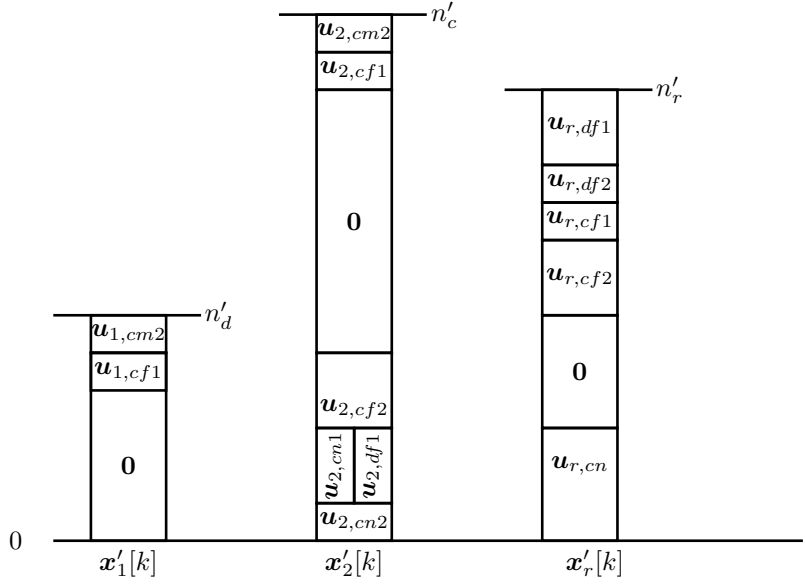
Next, Rx$1$ removes the interference caused by ${\U}_{1,df1}[n-1]$, $\U_{2,df1}[n-1]$, ${\U}_{1,df2}[n-1]$, and ${\U}_{2,df2}[n-1]$, since they are known at Rx$1$ due to the backward decoding. 
Then, it decodes $\U_{2,cm2}[n-1]$ and $\U_{2,cf1}[n-1]$. To do this following constraints are required
\begin{align}
R_{cm2} + R_{cf1} & \leq (n_c^\prime-n_r^\prime)^+ \label{eq:sch4_cond_3} \\ 
R_{cm2} + R_{cf1} & \leq n_c^\prime - n_d^\prime. \label{eq:sch4_cond_4}
\end{align}
Now, Rx$1$ constructs $\U_{1,cf1}[n-1]$ by adding $\U_{r,cf}[n]$ and $\U_{2,cf1}[n-1]$ which are both known at Rx$1$. Then it removes the interference caused by $\U_{1,cf1}[n-1]$ from the received signal. 

Next, Rx$1$ decodes $\U_{r,df1}[n-1]$, $\U_{r,df2}[n-1]$, $\U_{r,cf1}[n-1]$, and $\U_{r,cf2}[n-1]$. This can be done successfully as long as 
\begin{align}
\underbrace{2R_{df1}}_{R_{cn1}} + 2R_{df2} + R_{cf1} + R_{cf2} &\leq (n_r' -n_d')^+ \quad \text{ if } 0<R_{cm2} \\ 
2R_{df1} + 2R_{df2} + R_{cf1} + R_{cf2} & \leq [n_r' - (n_c' - R_{cm2} - R_{cf1} - \ell_1 )]^+ . \label{eq:SI_9}
\end{align}
\begin{remark}
Note that for the case that $R_{cm2} = 0$, an overlap between the relay CF signals and $\U_{1,cf2}[n-1]$ is avoided using the condition in \eqref{eq:SI_9} and since $n_d'<n_c'$.
\end{remark}
Next, Rx$1$ decodes $\U_{1,cm2}[n-1]$. To do this the following constraint needs to be satisfied
\begin{align}R_{cm2} & \leq [n_d' - (n_c' - R_{cm2} - R_{cf1} - \ell_1 )]^+.
\end{align}
Since $n_d'<n_c'$, Rx$1$ decodes the top-most bit of $\U_{2,cf2}[n-1]$ which is received without any interference from $\U_{1,cf2}[n-1]$. Then, it makes modulo 2 sum of this bit with the top-most bit of $\U_{r,cf}[n]$ (this is already known from decoding in $n$th channel use) to construct the top-most bit of $\U_{1,cf2}[n-1]$. Then it removes the interference caused by this bit. Rx$1$ repeats this decoding process until the whole signal vector $\U_{2,cf2}[n-1]$ is decoded. Therefore, Rx$1$ obtains $\U_{1,cf2}[n-1]$. 

Next, Rx$1$ decodes the top-most bit of $\U_{2,cn1}[n-2]\oplus\U_{r,cn1}[n-1]$. Again this bit is received on higher level than $\U_{1,cn1}[n-2]$ (which is sent from Tx$1$) since $n_d'<n_c'$. Hence, Rx$1$ can decode  the top-most bit of $\U_{1,cn1}[n-2]=\U_{2,cn1}[n-2]\oplus\U_{r,cn1}[n-1]$ and remove the interference caused this bit which is also sent by Tx$1$. Then it decodes the remaining bits of $\U_{2,cn1}[n-2]\oplus\U_{r,cn1}[n-1]$ similarly. Doing this, the whole CN signal vector is decoded as long as the CN signal vector from Tx$2$ is received at Rx$1$ and this is aligned with that of the relay. Hence, this constraint needs to be satisfied
\begin{align}
n_c' - R_{cm2} - R_{cf1} - \ell_1-R_{cf2}-R_{cn1} -R_{cn2} &\geq  0 \\ 
R_{cn1}+R_{cn2} &\leq n_r'.
\end{align}

To guarantee that the interference from $\U_{2,cn}[n-1]$ is not received at Rx$1$, we have
\begin{align}
n_c' - R_{cm2} - R_{cf1} - \ell_1-R_{cf2}-R_{cn1} -R_{cn2} - 2R_{df2} &\leq  0. \label{eq:SI_14}
\end{align}
At this step, we can set the parameters $\ell_2$ and $\ell_3$ as follows 
\begin{align}
%\ell_2 &= (n_r-n_c' + R_{cm2}+R_{cf1})^+ \\ 
\ell_2 &= [n_r-(n_c'-\ell_1+R_{cf2}+2R_{df1}+ 2R_{df2})]^+ \\ 
\ell_3 &= n_r' -2R_{df1} - 2R_{df2}-R_{cf1}-R_{cf2} -R_{cn1} - R_{cn2} .
\end{align}
By using this scheme, we achieve the sum-rate  
\begin{align}
n R_{\Sigma,\text{SI}} = 2(n-1)(R_{cm1} + R_{cm2} + R_{cf1}+ R_{df1}+R_{df2}+R_{cf2}+R_{cn1}+R_{cn2}).
\end{align}
Dividing the expression by $n$ and letting $n\to \infty$, we obtain the following sum-rate
\begin{align}
R_{\Sigma,\text{SI}} = 2(R_{cm1} + R_{cm2} + R_{cf1}+ R_{df1}+R_{df2}+R_{cf2}+R_{cn1}+R_{cn2}).
\end{align}
This sum-rate has to be maximized under the constraints in \eqref{eq:SI_1}-\eqref{eq:SI_14}. This is formulated as follows
\begin{align}
\max & \quad  R_{\Sigma,\text{SI}}  \label{eq:Scheme4_Opt} \\
\text{s.t.} & \quad \text{ \eqref{eq:SI_1}-\eqref{eq:SI_14} are satisfied} \notag\\
& \quad R_{cm1} , R_{cm2} , R_{cf1}, R_{df1},R_{df2},R_{cf2},R_{cn1},R_{cn2}, \ell_1, \ell_2 ,\ell_3\geq 0 \notag
\end{align}
The optimal parameters are given in Table \ref{Tab:SI_1} and \ref{Tab:SI_2}. Using these parameters, the sum-rate given in \eqref{eq:sum_rate_SI} is achieved. This shows the achievability of Theorem \ref{Theorem:capacity_LD_IRC} for the strong interference regime ($n_d<n_c$).

\begin{table*}
\centering
\begin{tabular}{|c||c|c|c|c|c|}
\hline
\multirow{2}{*}{Regime} & \multicolumn{3}{c|}{$n_c \leq n_r$} & \multicolumn{2}{c|}{$n_d\leq n_r \leq n_c$}   \\ \cline{2-6}
& $\max\{2n_c,n_r\}\leq n_s$ & $\max\{n_r,n_s\} \leq 2n_c$ & $\max\{n_s,2n_c\} \leq n_r$ & $n_r\leq \frac{n_s}{2}$ & $n_r >\frac{n_s}{2}$  \\ \hline
$\ell_{cm1}$ & 0 & 0 & 0 & 0 & 0\\ \hline
$R_{cm2}$ & 0 & 0 & 0 & 0 & 0\\ \hline
$R_{cf1}$ & 0 & 0 & 0 & $(n_r-n_s+n_c)^+$ & $n_c-n_r$ \\ \hline
$R_{cf2}$ & 0 & $\min\{\frac{n_r+n_c-n_s}{2},2n_c-n_s\}$ & $(2n_c-n_s)^+$ & 0 & $n_r-\frac{n_s}{2}$ \\ \hline
$R_{df1}$ & $\frac{\min\{n_r-n_c,n_c\}}{2}$ & $\frac{(n_r+n_s-3n_c)^+}{2}$ & $\min\{\frac{n_s-n_c}{2},\frac{n_c}{2}\}$ & 0 & 0 \\ \hline
$R_{df2}$ & $\frac{(n_r-2n_c)^+}{2}$ & 0& $\frac{(n_s-2n_c)^+}{2}$ & 0 & 0 \\ \hline
$R_{cn1}$ & \multicolumn{5}{c|}{ $2R_{df1}$}  \\ \hline
$R_{cn2}$ & $(2n_c-n_r)^+$ & $(n_s-n_c)-R_{cn1}$& 0 & $\min\{n_r,n_s-n_c\}$ &  $n_s-n_c$ \\ \hline
$\ell_1$ & 0 & $\frac{(3n_c-n_r-n_s)^+}{2}$ & 0 & $n_c-n_r$ & $n_c-\frac{n_s}{2}$ \\ \hline
$R_{\Sigma}$ & $n_r+n_c$ & $n_r+n_s-n_c$ & $n_s+n_c$ & $2n_r$& $n_s$ \\ \hline
\end{tabular}
\caption{Rate allocation parameters for the scheme SI when $n_d \leq n_r$. }
\label{Tab:SI_1}
\end{table*}

\begin{table*}
\centering
\begin{tabular}{|c||c|c|c|}
\hline
\multirow{2}{*}{Regime} & \multicolumn{3}{c|}{$n_r < n_d$} \\ \cline{2-4}
& $n_d \leq \min\{\frac{n_s}{2},\frac{n_r+n_c}{2}\}$  & $n_s\leq \min\{n_r+n_c,2n_d\}$  &  $n_r+n_c\leq \min\{n_s,2n_d\}$  \\ \hline
$\ell_{cm1}$ & 0  & $2n_c-n_s$  & $n_c-n_r$ \\ \hline
$R_{cm2}$ & $n_d - \min\{n_r,n_s-n_c\}$  & 0  & 0\\ \hline
$R_{cf1}$ & 0 & 0 & 0\\ \hline
$R_{cf2}$ & 0 & 0 & 0 \\ \hline
$R_{df1}$ & 0 & 0  & 0 \\ \hline
$R_{df2}$ & 0 & 0 & 0 \\ \hline
$R_{cn1}$ & \multicolumn{3}{c|}{$2R_{df1}$} \\ \hline
$R_{cn2}$ & $\min\{n_r,n_s-n_c\}$ & $n_s-n_c$ & $n_r$ \\ \hline
$\ell_1$  & $n_c-n_d$ & 0  & 0 \\ \hline
$R_{\Sigma}$ & $2n_d$ & $n_s$  & $n_r+n_c$ \\ \hline
\end{tabular}
\caption{Rate allocation parameters for the scheme SI when $n_r < n_d$. }
\label{Tab:SI_2}
\end{table*}

\subsection{Scheme II} Until now, the achievability of Theorem \ref{Theorem:capacity_LD_IRC} for the weak interference regime ($n_c<n_d$) and strong interference regime ($n_d<n_c$) has been shown. In what follows, we present a scheme which is optimal for the intermediate interference regime (II), in which $n_c=n_d$. The achievable sum-rate using this scheme is presented in the following proposition. 
\begin{proposition}\label{prop:RateII}
The achievable sum-rate with the scheme II for the IRC is given by
\begin{align}
R_\Sigma &=  \max\{n_d,\min\{n_r,n_s\}\}    \text{  if  }   n_c=n_d . \label{eq:R_sum_LB_SW} 
\end{align}
This proves the achievability of Theorem \ref{Theorem:capacity_LD_IRC} when $n_c=n_d$.
Note that this sum-rate expression coincides with the upper bounds given in Lemma \ref{Lemma:new_bounds}. 
\end{proposition}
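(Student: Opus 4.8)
The plan is to combine two elementary transmission strategies and keep whichever is better, then invoke the matching converse already established in Lemma~\ref{Lemma:new_bounds}. Since $n_c=n_d$, every receiver sees the desired and the interfering transmit vectors at exactly the same levels, so (ignoring the relay) the received signal at $\text{Rx}j$ is simply $\bS^{q-n_d}(\X_j[k]\oplus\X_l[k])$. Scheme II consists of these two modes together with the choice of the larger achievable value.

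Mode (a): ignore the relay and use orthogonal level allocation in the induced linear deterministic interference channel. I would let $\text{Tx}1$ occupy the top $R_1$ levels of its $n_d$-strong sub-channel and $\text{Tx}2$ the next $R_2$ levels, with $R_1+R_2\le n_d$ and the two level sets disjoint. Then at each receiver the bitwise sum $\X_1[k]\oplus\X_2[k]$ merely interleaves the two transmitters' bits without collision, so $\text{Rx}1$ reads off its $R_1$ bits and $\text{Rx}2$ its $R_2$ bits interference free. This achieves the sum-rate $R_1+R_2=n_d$ in every channel use.

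Mode (b): use the decode-forward building block of Section~\ref{Sec:BuildingBlocks}. In channel use $k$ the transmitters send $\U_{i,df}[k]$, split into disjoint pieces of total length $R_{df,1}+R_{df,2}$; the relay decodes both as in a MAC over its $n_s$-strong link, requiring $R_{df,1}+R_{df,2}\le n_s$, and forwards $\U_{r,df}[k+1]=\U_{1,df}[k]\oplus\U_{2,df}[k]$ over its $n_r$-strong link, requiring $R_{df,1}+R_{df,2}\le n_r$ for the combined word to be recoverable. Backward decoding at the receivers then proceeds as in the building block: in channel use $n$ only the relay is active, so each receiver recovers $\U_{1,df}[n-1]$ and $\U_{2,df}[n-1]$ from $\U_{r,df}[n]$; going backwards, in channel use $k$ each receiver subtracts the already known contribution $\X_1[k]\oplus\X_2[k]$ (the two DF pieces were decoded in channel use $k+1$) and is then left with $\bS^{q-n_r}\X_r[k]$ cleanly, from which $\U_{r,df}[k]$ and hence $\U_{1,df}[k-1],\U_{2,df}[k-1]$ follow. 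With the usual $(n-1)/n\to1$ loss from the silent channel uses, this achieves the sum-rate $\min\{n_s,n_r\}$.

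Taking the better of the two modes gives $R_\Sigma\ge\max\{n_d,\min\{n_r,n_s\}\}$, which meets the upper bound \eqref{eq:new_UB6_IRC_det} of Lemma~\ref{Lemma:new_bounds} valid precisely when $n_d=n_c$; hence \eqref{eq:R_sum_LB_SW} holds with equality. There is no real obstacle here — the case is genuinely simple because $n_c=n_d$ collapses the channel to a clean superposition. The only point needing a little care is verifying in Mode~(b) that, after subtracting the previously decoded transmit contributions, the relay's forwarded word arrives without residual interference; this is immediate from the level structure and the disjointness of the DF pieces, and is exactly what makes the pair of constraints $R_{df,1}+R_{df,2}\le\min\{n_s,n_r\}$ sufficient.
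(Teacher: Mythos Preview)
Your proof is correct and matches the one-line achievability sketch the paper gives immediately after Theorem~\ref{Theorem:capacity_LD_IRC} (``either ignoring the relay to achieve $n_d$, or using decode-forward at the relay to achieve $\min\{n_s,n_r\}$''). The detailed Scheme~II in the paper, however, takes a slightly different route: rather than running two separate modes with both users active and taking the better one, it silences Tx$2$ entirely and has Tx$1$ transmit, in every channel use, a common block of length $R_{cm}=n_d$ stacked on top of a DF block of length $R_{df}=\min\{(n_s-n_d)^+,(n_r-n_d)^+\}$. The relay decodes and forwards only the DF block; backward decoding at Rx$1$ recovers both blocks without overlap since $n_r-n_d\ge R_{df}$, giving $R_\Sigma=n_d+\min\{(n_s-n_d)^+,(n_r-n_d)^+\}=\max\{n_d,\min\{n_r,n_s\}\}$ directly as a single expression rather than as a max.

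Both arguments are equally valid here. Your two-mode version makes the trade-off between the direct link and the relay path explicit and reuses the DF building block with both users active, at the cost of having to argue orthogonality in Mode~(a) and the clean subtraction in Mode~(b). The paper's single-user scheme sidesteps all two-user interaction (no level-orthogonality argument, no MAC at the relay) and produces the target rate in one shot, which is arguably cleaner for this degenerate $n_c=n_d$ case but less in the spirit of the rate-splitting machinery used elsewhere in the paper.
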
 
In what follows, we present this scheme in details.
Compared to the previous schemes, in which both transmitters send in all channel uses, in this scheme, only one Tx is active. 
One can use time division multiplexing access (TDMA) and assign the first $\frac{n}{2}$ channel uses to Tx$1$ and the second $\frac{n}{2}$ channel uses to Tx$2$, to achieve the same individual rate at both users. 
However, since in this work, we are interested in the achievable sum-rate and not in the individual rate, we explain the scheme for the case that Tx$2$ is inactive in all channel uses. 

\subsubsection*{Encoding at transmitters} Tx$1$ generates in the $k$th channel use, the following signal vector
\begin{align}
\X_1[k]=\begin{bmatrix}
\U_{1,cm}[k] \\ \U_{1,df}[k] \\ \boldsymbol{0}_s
\end{bmatrix}, \quad k=1,\ldots,n,
\end{align}
where $\U_{1,cm}$ and $\U_{1,df}$ represent the common and DF signal vectors, respectively. The length of the zero vector $s$ is chosen such that the length of $\X_1[k]$ is $q$. The length of signal vectors $\U_{1,cm}[k]$ and $\U_{1,df}[k]$ are $R_{cm}$ and $R_{df}$, respectively, where 
\begin{align}
R_{cm} & = n_d \label{eq:SchemeII_1}\\
R_{df} & = \min\{(n_s-n_d)^+,(n_r-n_d)^+\} \label{eq:SchemeII_2}
\end{align}
%The length of the signal vectors $\U_{1,cm}[k]$ and $\U_{1,df}[k]$ are $n_d$ and $\min\{(n_s-n_d)^+,(n_r-n_d)^+\}$,\footnote{Since in this scheme, Tx$2$ is off, Tx$1$ does not share any bit level in $\U_{1,df}[k]$ with Tx$2$.} respectively. 
Moreover, $\U_{1,cm}[n]$ and $\U_{1,df}[n]$ are both zero vectors. Notice that Tx$2$ is silent.
\subsubsection*{Decoding at the relay}
In channel use $k=1,\ldots,n-1$, the relay receives the top-most $n_s$ bits of $\X_1[k]$. Note that the length of $\U_{1,df}[k]$ is chosen such that the relay is able to receive all bits in $\U_{1,df}[k]$. Therefore, the relay knows $\U_{1,df}[k]$ in the $(k+1)$th channel use.
\subsubsection*{Encoding at the relay}
In channel use $k=2,\ldots,n$, the relay sends
\begin{align}
\X_r[k] =  \begin{bmatrix}
\U_{r,df}[k] \\ \boldsymbol{0}_{r} \\ 
\end{bmatrix},
\end{align} 
where $\U_{r,df}[k] = \U_{1,df}[k-1]$ and $r$ is chosen such that the length of $\X_r[k]$ is $q$.
\subsubsection*{Decoding at the receiver side}
Now, we need to show that Rx$1$ is able to decode $\U_{1,cm}[k]$ and $\U_{1,df}[k]$ for all $k=1,\ldots,n-1$. To do this, we use backward decoding. Rx$1$ waits until the end of $n$th channel use. Supposing that decoding in the $n$th channel use is done successfully, Rx$1$ obtains $\U_{r,df}[n]$. Therefore, it knows $\U_{1,df}[n-1]$ before it starts processing the received signal vector in the $(n-1)$th channel use. In the $(n-1)$th channel use, Rx$1$ receives 
\begin{align}
\Y_1[n-1] &= \bS^{q-n_d} \X_{1}[n-1] \oplus \bS^{q-n_r} \X_{r}[n-1]\\
%\Y_1[n-1] 
&= \begin{bmatrix} \boldsymbol{0}_{q-\max\{n_d,n_r\}}\\
\U_{r,df}[n-1] \\ \boldsymbol{0}_{\ell_1} \\ \U_{1,cm}[n-1]
\end{bmatrix},
\end{align}
where $\ell_1= (n_r - n_d - R_{df})^+ $. Since $ 0\leq \ell_1$, the signal vectors $\U_{r,df}[n-1]$ and $\U_{1,cm}[n-1]$ are received without an overlap at Rx$1$. Hence, Rx$1$ decodes both these signal vectors and obtains $\U_{1,df}[n-2]$ and $\U_{1,cm}[n-1]$. Doing this decoding backward until the first channel use, Rx$1$ receives $\U_{1,cm}[k]$ and $\U_{1,df}[k]$ for all $k=1,\ldots,n-1$. Therefore, by using this scheme, we achieve
\begin{align}
R_{\Sigma,II} &= n_d+\min\{(n_s-n_d)^+,(n_r-n_d)^+\} \\ 
&= \max\{n_d, \min \{n_r,n_s\}\}
\end{align}
which shows the achievable sum-rate given in Proposition \ref{prop:RateII}. 
This scheme completes the achievability of Theorem \ref{Theorem:capacity_LD_IRC} together with schemes \WIone, \WItwo, \WIthree, and SI. 

Until now, we characterized the sum-capacity for the LD-IRC when $n_c<n_s$. The rest of the paper is dedicated to the GDoF characterization for the Gaussian case. 

\section{Upper Bounds for the Gaussian IRC}
\label{sec:UBGaussian-IRC}
In this section, we prove the converse of Theorem \ref{Theorem:GDoF_IRC}. To do this, we use the insights obtained from bounding the capacity of the LD-IRC in Section \ref{sec:UBLD-IRC} to establish the upper bounds for the GDoF of the Gaussian case. 
Before presenting the upper bounds on the capacity of Gaussian IRC, we present a lemma which will be required for establishing one of the upper bounds. 
\begin{lemma}\label{Lemma:new_UB4_2_Gauss}
If $ \Gamma^n = \frac{h_c}{\sqrt{Ph_r^2}}X_i^n+U_i^n$ and $\Delta^n=h_cX_i^n+h_rX_r^n+Z_j^n$, where $i,j\in\{1,2\}$, $i\neq j$, and $U_i\sim \mathcal{N}(0,1)$ is i.i.d. over the time and independent from all other random variables, then $h(\Gamma^n)-h(\Delta^n|W_j)$ is upper bounded as follows
\begin{align}
h\left(\Gamma^n\right) - h\left(\Delta^n|W_j\right) \leq n C\left(2+ \frac{h_c^2}{(h_c-h_r)^2}\right).
\end{align}
\end{lemma}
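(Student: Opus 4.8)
The plan is to combine the independence of $\Gamma^n$ from $W_j$ with a chain-rule identity that makes the $\tfrac12\log(2\pi e)$ terms cancel, and then control the remaining conditional entropy by a linear-MMSE-type estimate, exactly in the spirit of the genie bounds used for the LD-IRC. Since $\Gamma^n$ is a function of $W_i$ and the fresh noise $U_i^n$, both independent of $W_j$, we have $h(\Gamma^n)=h(\Gamma^n|W_j)$. Expanding $h(\Gamma^n,\Delta^n|W_j)$ in two ways then gives the identity
\begin{align}
h(\Gamma^n)-h(\Delta^n|W_j)=h(\Gamma^n|\Delta^n,W_j)-h(\Delta^n|\Gamma^n,W_j).
\end{align}
For the negative term I would condition additionally on $X_i^n$ and $X_r^n$ (which are determined by $(W_1,W_2)$ and the relay functions), turning $\Delta^n$ into a deterministic shift of $Z_j^n$, so $h(\Delta^n|\Gamma^n,W_j)\ge h(\Delta^n|\Gamma^n,W_j,X_i^n,X_r^n)=h(Z_j^n)=\tfrac n2\log(2\pi e)$. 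This cancellation against the $2\pi e$ hidden in the first term is precisely why the final bound is phrased with $C(\cdot)=\tfrac12\log(1+\cdot)$ rather than $\tfrac12\log(2\pi e\,\cdot)$.

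For the positive term I would single-letterize, $h(\Gamma^n|\Delta^n,W_j)\le\sum_k h(\Gamma[k]|\Delta[k])$, and for each $k$ estimate $\Gamma[k]$ by $\lambda\Delta[k]$ for a fixed $\lambda$, so that $h(\Gamma[k]|\Delta[k])\le h(\Gamma[k]-\lambda\Delta[k])\le\tfrac12\log\!\big(2\pi e\,\mathbb E[(\Gamma[k]-\lambda\Delta[k])^2]\big)$. Writing $a=h_c/\sqrt{Ph_r^2}$, we have $\Gamma[k]-\lambda\Delta[k]=(a-\lambda h_c)X_i[k]-\lambda h_rX_r[k]+U_i[k]-\lambda Z_j[k]$; using that $U_i$ and $Z_j$ are zero-mean and independent of everything, bounding the correlated $(X_i,X_r)$ contribution by Cauchy--Schwarz, and applying Jensen over $k$ together with the power constraints $\tfrac1n\sum_k\mathbb E[X_i[k]^2]\le P$ and $\tfrac1n\sum_k\mathbb E[X_r[k]^2]\le P$ yields
\begin{align}
h(\Gamma^n|\Delta^n,W_j)\le\tfrac n2\log\!\Big(2\pi e\big(P(|a-\lambda h_c|+|\lambda h_r|)^2+1+\lambda^2\big)\Big).
\end{align}
Combining with the lower bound on the negative term gives, for every $\lambda$,
\begin{align}
h(\Gamma^n)-h(\Delta^n|W_j)\le\tfrac n2\log\!\big(P(|a-\lambda h_c|+|\lambda h_r|)^2+1+\lambda^2\big),
\end{align}
and it remains to choose $\lambda$ well.

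The main obstacle is this last optimization, which must use the interference-limited hypothesis. For $h_c\le h_r$ I would take $\lambda=0$, so the bracket is $1+h_c^2/h_r^2\le 2\le 2+\tfrac{h_c^2}{(h_c-h_r)^2}$ (the latter being infinite when $h_c=h_r$). For $h_c>h_r$ I would take $\lambda=\min\{a/h_c,\;Pa(h_c-h_r)/(1+P(h_c-h_r)^2)\}$; substituting and using $Pa^2=h_c^2/h_r^2$ reduces the bracket either to $1+\tfrac{h_c^2/h_r^2}{1+P(h_c-h_r)^2}$ (interior case) or to $2+\tfrac1{Ph_r^2}$ (boundary case, using that the relevant parabola is decreasing on $[0,a/h_c]$). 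The hypothesis $\min\{h_c^2,h_r^2\}P>1$ then closes both: from $Ph_r^2\ge1$ one gets $(h_c-h_r)^2(1-Ph_r^2)\le 0\le h_r^2$, hence $\tfrac{h_c^2/h_r^2}{1+P(h_c-h_r)^2}\le\tfrac{h_c^2}{(h_c-h_r)^2}$; and since $|h_c-h_r|\le\max\{h_c,h_r\}\le\sqrt P\,h_ch_r$ (using $\sqrt P\min\{h_c,h_r\}\ge1$) one gets $\tfrac1{Ph_r^2}\le\tfrac{h_c^2}{(h_c-h_r)^2}$. In every case the bracket is at most $2+\tfrac{h_c^2}{(h_c-h_r)^2}$, so $h(\Gamma^n)-h(\Delta^n|W_j)\le\tfrac n2\log\!\big(2+\tfrac{h_c^2}{(h_c-h_r)^2}\big)\le nC\!\big(2+\tfrac{h_c^2}{(h_c-h_r)^2}\big)$, where the last step is just $\log(2+x)\le\log(1+(2+x))$. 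I expect all the difficulty to be concentrated in verifying these two elementary inequalities and in bookkeeping the piecewise choice of $\lambda$; the rest is routine entropy manipulation.
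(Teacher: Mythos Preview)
Your proof is correct and reaches the same intermediate reduction as the paper---both arguments arrive at bounding $h(\Gamma^n|\Delta^n,W_j)-h(U_i^n)$---but the two halves of the argument are carried out differently.

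For the reduction, the paper writes $h(\Gamma^n)-h(\Delta^n|W_j)=I(X_i^n;\Gamma^n)-I(X_i^n,X_r^n;\Delta^n|W_j)$, drops the nonnegative $I(X_r^n;\Delta^n|W_j,X_i^n)$, and then invokes the Markov chain $\Gamma^n\to X_i^n\to\Delta^n$ via data processing to replace $I(X_i^n;\Delta^n|W_j)$ by $I(\Gamma^n;\Delta^n|W_j)$. Your chain-rule identity $h(\Gamma^n|W_j)-h(\Delta^n|W_j)=h(\Gamma^n|\Delta^n,W_j)-h(\Delta^n|\Gamma^n,W_j)$ followed by $h(\Delta^n|\Gamma^n,W_j)\ge h(Z_j^n)$ is more direct and yields exactly the same thing. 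One small correction: $X_r^n$ is \emph{not} determined by $(W_1,W_2)$ alone (the relay encoding depends on $Y_r^{n-1}$, which carries the relay noise $Z_r^n$); your inequality is still valid because it follows purely from ``conditioning reduces entropy'' and from $Z_j^n$ being independent of $(\Gamma^n,W_j,X_i^n,X_r^n)$, but the parenthetical justification should be amended.

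For the single-letter bound, the paper appeals to the worst-case conditional-entropy lemma of Annapureddy--Veeravalli to pass to jointly Gaussian $(\Gamma_G,\Delta_G)$ with a free correlation $\rho_1$ between $X_i$ and $X_r$, computes the $2\times2$ covariance determinant, and then maximizes the resulting expression $g_1(P_1,P_r)+g_2(P_1,P_r,\rho_1)$ over $(P_1,P_r,\rho_1)$, obtaining $g_1\le 2$ and $g_2\le h_c^2/(h_c-h_r)^2$. Your route---a linear estimate $h(\Gamma[k]|\Delta[k])\le h(\Gamma[k]-\lambda\Delta[k])$, Cauchy--Schwarz on the $(X_i,X_r)$ cross term, Jensen over $k$, and a hand-optimized $\lambda$---is more elementary in that it avoids the external lemma and never introduces a correlation parameter; the price is the explicit two-regime case analysis in $\lambda$ and the two algebraic inequalities you identify, both of which you verify correctly using $P\min\{h_c^2,h_r^2\}>1$. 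Either approach proves the lemma; yours in fact lands on $\tfrac{n}{2}\log\big(2+h_c^2/(h_c-h_r)^2\big)$, which is slightly sharper than the stated $nC\big(2+h_c^2/(h_c-h_r)^2\big)$.
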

\begin{proof}
The proof is given in Appendix \ref{app:proof_Lemma_additional}.
\end{proof}
Now, we present the upper bounds on the capacity of the Gaussian IRC in the following lemma.
\begin{lemma}
\label{Lemma:UB_Gauss}
The GDoF of the Gaussian IRC is upper bounded by 
\begin{align}
\label{eq:new_UB1_IRC_Gauss}
d \leq & \max\{1,\min\{\beta,\gamma\}\}  \quad \text{ if } \alpha = 1\\
\label{eq:new_UB2_IRC_Gauss}
d \leq & \max\{1,\alpha,\beta\} + \max\{1,\alpha\}\\
\label{eq:new_UB3_IRC_Gauss}
d \leq & \beta+2\max\{1,\alpha\}-\alpha\\
\label{eq:new_UB4_IRC_Gauss}
d \leq & \max\{1,\alpha\}+\max\{1,\gamma\}\\
\label{eq:new_UB5_IRC_Gauss}
d \leq & 2\max\{\alpha,\beta,1-\max\{\alpha,\gamma\}\}+2(\gamma-\alpha)^+\\
\label{eq:new_UB6_IRC_Gauss}
d \leq & 2\max\{\alpha,\beta+1-\alpha\}  \quad \text{ if } \beta \leq \alpha \leq 1.
\end{align}
\end{lemma}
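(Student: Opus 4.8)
The plan is to prove the six bounds as the Gaussian counterparts of the new linear-deterministic bounds in Lemma~\ref{Lemma:new_bounds}, reusing the same genie and cut-set structure and then normalizing by $m_d$ and sending $m_d\to\infty$ in \eqref{eq:2_1}. The correspondence I would follow is: \eqref{eq:new_UB1_IRC_Gauss} is the high-SNR image of the cut-set bound \eqref{eq:new_UB6_IRC_det} in the degenerate case $\alpha=1$; \eqref{eq:new_UB2_IRC_Gauss} comes from the MAC cut-set bound \eqref{eq:new_UB4_IRC_det}; \eqref{eq:new_UB3_IRC_Gauss}, \eqref{eq:new_UB4_IRC_Gauss}, and \eqref{eq:new_UB5_IRC_Gauss} come from the genie-aided bounds \eqref{eq:new_UB1_IRC_det}, \eqref{eq:new_UB3_IRC_det}, and \eqref{eq:new_UB2_IRC_det} respectively (the latter two being, as noted in the proof of Lemma~\ref{Lemma:new_bounds}, tightenings of \cite[Theorems~3 and~4]{ChaabanSezgin_IT_IRC} valid when $\gamma>\alpha$); and \eqref{eq:new_UB6_IRC_Gauss} is the Gaussian analogue of the ETW-type bound \eqref{eq:new_UB5_IRC_det} \cite{EtkinTseWang,BreslerTse}. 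In every case I would begin from Fano's inequality $nR_i\le I(W_i;Y_i^n)+n\varepsilon_n$ with $\varepsilon_n\to0$, write the sum $n(R_1+R_2)$ as a telescoping combination of differential entropies, and use repeatedly that a Gaussian maximizes differential entropy under a covariance constraint, that conditioning does not increase entropy, and that $x_r[k]$ is a deterministic function of $y_r^{k-1}$ so that handing a receiver the relay observation (properly conditioned on its own past) adds no rate.

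For \eqref{eq:new_UB1_IRC_Gauss} and \eqref{eq:new_UB2_IRC_Gauss} I would bound $R_1+R_2$ by the capacity of the cut separating $\{\text{Tx}1,\text{Tx}2,\text{relay}\}$ from $\{\text{Rx}1,\text{Rx}2\}$ and its relevant sub-cuts; after the normalization in \eqref{eq:2_1} this yields the $\max\{1,\alpha\}$-type terms together with a relay contribution $\max\{1,\alpha,\beta\}$ (for \eqref{eq:new_UB2_IRC_Gauss}) or, in the degenerate case $\alpha=1$ where direct and cross links coincide, only $\min\{\beta,\gamma\}$ on top of the single P2P term, which gives \eqref{eq:new_UB1_IRC_Gauss}. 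For the genie bounds I would provide receiver~$j$ with a genie signal $S_i^n$ equal to (a suitably scaled, noise-added version of) the cross-link image of $x_i^n$, so that after it decodes $W_j$ and reconstructs its own contribution the receiver is left with enough to recover $W_i$ as well; the chain of entropy inequalities then reduces $n(R_1+R_2)$ to a sum of one- and two-link Gaussian terms (evaluating to the claimed $\beta+2\max\{1,\alpha\}-\alpha$, $\max\{1,\alpha\}+\max\{1,\gamma\}$, and $2\max\{\alpha,\beta,1-\max\{\alpha,\gamma\}\}+2(\gamma-\alpha)^+$ respectively) plus a residual term. For \eqref{eq:new_UB5_IRC_Gauss}, which is where the source-relay link being stronger than the cross link matters, that residual term is exactly of the form $h(\Gamma^n)-h(\Delta^n|W_j)$ with $\Gamma^n,\Delta^n$ as in Lemma~\ref{Lemma:new_UB4_2_Gauss}; invoking that lemma bounds it by the constant $C\!\left(2+h_c^2/(h_c-h_r)^2\right)$, which is $O(1)$ in $m_d$ and hence disappears in the GDoF limit, while the $2(\gamma-\alpha)^+$ term is the high-SNR image of the $2(n_s-n_c)^+$ term that captures the extra, cognitive-like information carried by the strong source-relay link.

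For \eqref{eq:new_UB6_IRC_Gauss}, valid under $\beta\le\alpha\le1$, I would adapt the ETW bound $C_{\Sigma}^{\text{IC}}\le 2\max\{m_d-m_c,m_c\}$ for the symmetric Gaussian IC \cite{EtkinTseWang,BreslerTse}: give each receiver the sub-noise-floor part of the interfering codeword together with the relay observation, and argue that because $\beta\le\alpha$ the relay delivers nothing the cross link does not already carry, so it only shifts the argument of the max by the $\beta+(1-\alpha)^+$ correction rather than changing the structure of the bound. Finally, in all six cases the last step is to divide by $m_d$, drop the $O(1)$ and $n\varepsilon_n$ terms, and substitute $n_d\to1$, $n_c\to\alpha$, $n_r\to\beta$, $n_s\to\gamma$ as dictated by \eqref{eq:param_def}; the detailed derivations would be collected in an appendix.

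The main obstacle I anticipate has two parts. The first is threading the relay's causality correctly through every inequality: each time $x_r^n$ or $y_r^n$ is used as side information at a receiver it must be conditioned on that receiver's own past observations so that no spurious rate is introduced, which is the Gaussian version of the careful bookkeeping that made \eqref{eq:new_UB2_IRC_det} and \eqref{eq:old_UB2_IRC_det} delicate in the deterministic model. The second is choosing the genie for \eqref{eq:new_UB5_IRC_Gauss} at precisely the right resolution so that the leftover entropy difference matches the hypothesis of Lemma~\ref{Lemma:new_UB4_2_Gauss} exactly and the $2(\gamma-\alpha)^+$ term emerges with the correct coefficient; getting this scaling slightly wrong either loses tightness or produces a term that does not vanish in the GDoF limit.
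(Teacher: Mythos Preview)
Your overall plan---mirror the LD genies and cuts in the Gaussian setting, then normalize---is the right high-level picture, but you have misassigned the role of Lemma~\ref{Lemma:new_UB4_2_Gauss} and, as a consequence, the mechanisms for \eqref{eq:new_UB5_IRC_Gauss} and \eqref{eq:new_UB6_IRC_Gauss} are swapped.

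In the paper, \eqref{eq:new_UB5_IRC_Gauss} is \emph{not} obtained by a fresh genie argument whose residual is controlled by Lemma~\ref{Lemma:new_UB4_2_Gauss}. Instead, the paper starts from the already-known bound of \cite[Theorem~4]{ChaabanSezgin_IT_IRC} and applies a \emph{channel-enhancement} step: the receiver noise variance is lowered to $c^2=\min\{1,h_c^2/h_s^2\}$, which is equivalent to replacing $(h_d,h_c,h_r)$ by $(\bar h_d,\bar h_c,\bar h_r)$ with the common factor $\max\{1,h_s/h_c\}$ while keeping $h_s$ fixed. After this enhancement the old bound already evaluates to $2\max\{\alpha,\beta,1-\max\{\alpha,\gamma\}\}+2(\gamma-\alpha)^+$, and the extra $(\gamma-\alpha)^+$ term is produced by the enhancement itself, not by an entropy residual. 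Your proposed route---choosing a genie at ``precisely the right resolution'' so that the leftover matches Lemma~\ref{Lemma:new_UB4_2_Gauss}---would not give you this bound, because $\Gamma^n$ and $\Delta^n$ in that lemma involve $h_c$ and $h_r$ only; the source-relay gain $h_s$ (hence $\gamma$) never appears there.

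Lemma~\ref{Lemma:new_UB4_2_Gauss} is actually the engine behind \eqref{eq:new_UB6_IRC_Gauss}. The genie is not ``the sub-noise-floor part of the interfering codeword together with the relay observation''; it is $S_i^n=\frac{h_c}{\sqrt{Ph_r^2}}X_i^n+U_i^n$, a scaled noisy copy of the \emph{own} transmit signal at exactly the level that makes the difference $h(S_i^n)-h(h_cX_i^n+h_rX_r^n+Z_j^n|W_j)$ match the hypothesis of Lemma~\ref{Lemma:new_UB4_2_Gauss}. That lemma then bounds this difference by a constant independent of $P$, which is what kills the cross term; the remaining conditional entropies $h(Y_j^n|S_j^n)$ give the $2\max\{\alpha,\beta+1-\alpha\}$. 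Your heuristic ``because $\beta\le\alpha$ the relay delivers nothing the cross link does not already carry'' is the intuition, but it is not a substitute for this specific genie-plus-lemma pairing.

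Two smaller mismatches: for \eqref{eq:new_UB3_IRC_Gauss} the paper's genie is a noisy copy of the \emph{relay output}, $S^n=h_rX_r^n+Z^n$ (with $s_2=(S^n,Y_1^n,W_1)$), and for \eqref{eq:new_UB4_IRC_Gauss} it is the \emph{relay input} $Y_r^n$ (with $s_2=(Y_r^n,W_1)$)---in neither case a ``cross-link image of $x_i^n$.'' Finally, your concern about threading causality through every inequality is largely moot here: once $Y_r^n$ or $h_rX_r^n+Z^n$ is handed over wholesale as side information, $X_r^n$ becomes a deterministic function of the conditioning and causality is absorbed automatically.
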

\begin{proof}
%The proof of this Lemma is given in Appendix \ref{app:known_UB_Gauss_proof}.
While the first two upper bounds are cut-set bounds, the remaining bounds are established by using genie-aided methods. The bounds given in \eqref{eq:new_UB4_IRC_Gauss} and \eqref{eq:new_UB5_IRC_Gauss} are inspired by similar bounds presented in \cite[Theorems 3,4]{ChaabanSezgin_IT_IRC} which are tightened for the case where $\alpha<\gamma$. The complete proof of this lemma is given in Appendix \ref{app:known_UB_Gauss_proof}.
\end{proof}

In addition to the upper bounds in Lemma \ref{Lemma:UB_Gauss}, some upper bounds are borrowed from \cite{ChaabanThesis,ChaabanSezgin_IT_IRC}. In the following lemma, we present these bounds. 
\begin{lemma}
\label{Lemma:known_UB1_Gauss}
(\cite{ChaabanSezgin_IT_IRC}) The GDoF of the Gaussian IRC is upper bounded by 
\begin{align}
d & \leq 2 \max\{1,\beta\} \label{eq:old_UB1_IRC_Gauss} \\ 
d & \leq \max\{1,\beta,\alpha\}+\max\{1,\alpha\}-\alpha+(\gamma-\max\{1,\alpha\})^+. \label{eq:old_UB2_IRC_Gauss}
\end{align}
\end{lemma}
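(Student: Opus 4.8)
The plan is to prove the two bounds separately, in each case mirroring the corresponding deterministic bound --- \eqref{eq:old_UB1_IRC_det} for the first and \eqref{eq:old_UB2_IRC_det} for the second --- and then carrying out the high-SNR bookkeeping so that the additive $O(\log P)$ terms vanish after dividing by $m_d$ and letting $m_d\to\infty$. Since the statement is explicitly attributed to \cite{ChaabanSezgin_IT_IRC,ChaabanThesis}, the cleanest write-up is to invoke those references; what follows is how I would reconstruct the arguments.

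For the first bound $d\le 2\max\{1,\beta\}$, I would bound each user's rate individually by the MAC cut separating $\{\text{Tx}i,\text{Relay}\}$ from Rx$i$. Starting from Fano, $nR_i\le I(W_i;Y_i^n)+n\epsilon_n$, and using $W_1\perp W_2$ gives $nR_i\le I(W_i;Y_i^n\mid W_j)+n\epsilon_n$ with $j\neq i$. Given $W_j$ the interfering codeword $X_j^n$ is known, so we may pass to $\tilde Y_i^n=h_dX_i^n+h_rX_r^n+Z_i^n$ without loss, and then
\[
I(W_i;\tilde Y_i^n\mid W_j)=h(\tilde Y_i^n\mid W_j)-h(\tilde Y_i^n\mid W_1,W_2)\le \sum_{k=1}^n h(\tilde Y_i[k])-h(Z_i^n),
\]
where the last inequality uses that conditioning on $(W_1,W_2,Z_r^{n-1})$ makes $\tilde Y_i^n$ a deterministic function plus $Z_i^n$ (the relay encoding is deterministic), so $h(\tilde Y_i^n\mid W_1,W_2)\ge h(Z_i^n)$. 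Each term $h(\tilde Y_i[k])-h(Z_i[k])$ is at most $\tfrac12\log\!\big(1+h_d^2P+h_r^2P\big)=\max\{m_d,m_r\}+O(1)$. Summing the bounds for $i=1,2$, dividing by $n$ and by $m_d$, and letting $m_d\to\infty$ yields $d\le 2\max\{1,\beta\}$.

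For the second bound, the Gaussian analogue of \eqref{eq:old_UB2_IRC_det}, I would reuse the genie-aided construction of \cite{ChaabanSezgin_IT_IRC}: give Rx$1$ the message $W_2$ together with a shifted/noisy version of Tx$1$'s signal that plays the role of "the part of $X_1$ visible only through the source--relay link". After removing the now-known contributions, one is left with differential-entropy differences that reproduce, at high SNR, the level count $\max\{n_d,n_c,n_r\}+(\max\{n_d,n_c\}-n_c)+(n_s-\max\{n_d,n_c\})^+$; bounding each Gaussian entropy from above by its maximum-variance value and each conditional entropy from below by the noise entropy gives the claim up to $O(\log P)$, which disappears in the GDoF limit. (In spirit this parallels the entropy estimate of Lemma \ref{Lemma:new_UB4_2_Gauss}, though a different auxiliary noise is needed here.)

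The main obstacle --- and the reason these are collected in a "borrowed" lemma rather than among the new bounds of Lemma \ref{Lemma:UB_Gauss} --- is the genie design for the second bound: the genie must be cheap enough (in entropy) for the intended receiver to "pay" so that the single-letterization closes, yet rich enough to retain the $(\gamma-\max\{1,\alpha\})^+$ contribution of the strong source--relay link, and the construction must be adapted to the case split $\alpha<1$ versus $\alpha\ge1$ (i.e.\ $n_c<n_d$ versus $n_c>n_d$). Tracking which auxiliary noise terms are required so that the bound is tight in the regime $\gamma>\alpha$ of interest is the delicate part; by comparison the high-SNR estimates themselves are routine.
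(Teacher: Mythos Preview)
Your proposal is essentially correct and aligned with the paper's own treatment: the paper does not reprove this lemma but merely records that the first bound is a cut-set bound and the second a genie-aided bound, deferring both to \cite{ChaabanSezgin_IT_IRC}. Your sketch of \eqref{eq:old_UB1_IRC_Gauss} via Fano plus conditioning on $W_j$ is exactly the single-user cut $\{\text{Tx}i,\text{Relay}\}\to\{\text{Rx}i\}$; the only nit is that since $X_i$ and $X_r$ can be correlated (through $Z_r$), the per-letter variance bound should read $(|h_d|+|h_r|)^2P$ rather than $h_d^2P+h_r^2P$, but this changes nothing at the GDoF scale. For \eqref{eq:old_UB2_IRC_Gauss} you correctly identify the structure (genie that exposes the excess source--relay strength) and, like the paper, defer the details to the original reference.
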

While the first upper bound is a cut-set bound, the second upper bound is derived by using the genie-aided method. The proof of these bounds are given in \cite{ChaabanSezgin_IT_IRC}.

Now, we need to show that the minimum of the upper bounds in \eqref{eq:new_UB1_IRC_Gauss}-\eqref{eq:old_UB2_IRC_Gauss}  coincide with the GDoF expression in Theorem \ref{Theorem:GDoF_IRC}. This can be shown similar to the linear deterministic case by keeping in mind that the channel parameters $n_d$, $n_c$, $n_r$, and $n_s$ in the LD-IRC are equivalent to $1$, $\alpha$, $\beta$, and $\gamma$ in the Gaussian IRC, respectively.

%%%%%%%%%%%%%%%%%%%%%%%%%%%%%%%%%%%%%%%%%%%%%%%
%%%%%%%%%%%%%%%%%%%%%%%%%%%%%%%%%%%%%%%%%%%%%%%
%%%%%%%%%%%%%%%%%%%%%%%%%%%%%%%%%%%%%%%%%%%%%%%
%%%%%%%%%%%%%%%%%%%%%%%%%%%%%%%%%%%%%%%%%%%%%%%
%%%%%%%%%%%%%%%%%%%%%%%%%%%%%%%%%%%%%%%%%%%%%%%

\section{GDoF Achieving Schemes}
\label{sec:GDoFAchievingSchemes}
In this section, we show the achievability of the GDoF in Theorem \ref{Theorem:GDoF_IRC}. This will be done by extending the achievablity schemes presented for the LD-IRC to the Gaussian case in a similar manner as in \cite{JafarVishwanath,HuangJafarCadambe}. To do this, we decompose the Gaussian channel into $N$ sub-channels which can be accessed at the receiver side successively by using successive decoding. 
This decomposition reduces the rate allocation problem to a sub-channel allocation problem which can be solved as in the LD-IRC.
 In what follows, we present the idea of the channel decomposition for the point-to-point (P2P) channel. Then, we present the transmission scheme for the Gaussian IRC. At the end, we present the the strategies used in the Gaussian IRC over the sub-channels.
\subsubsection{Point-to-point channel}
Consider a received signal over a point-to-point channel in $n$ channel uses $y^n=x^n+z^n$, where $x$, $y$, and $z$ are the inputs with a power constraint $P$ ($1<P$), output, and AWGN with unit variance, respectively. By decomposing the channel into $N$ sub-channels, the received signal $y$ can be rewritten as $y=\sum_{\ell=1}^N x_\ell + z$, where the power of $x_\ell$ is $\delta^\ell - \delta^{\ell-1}$ and its rate is $R_\ell$, where $\log \delta = \frac{1}{N}\log P$. Note that the power constraint is satisfied since $\sum_{\ell=1}^N \delta^\ell - \delta^{\ell-1} = P-1 < P$. 
Notice that the signal in the $\ell$th sub-channel is received on a higher power level than in the $(\ell-1)$th sub-channel.
Therefore, by doing successive decoding at the receiver, the receiver decodes $x_\ell^n$ while $x_{\ell-1}^n$, $\ldots$, $x_1^n$ are treated as noise. Hence, the following rate is achievable

\begin{align}
R_\ell &\leq \frac{1}{2}\log\left(1+\frac{p_\ell}{1+p_1+p_2+\ldots+p_{\ell-1}}\right) \notag \\ 
&= \frac{1}{2}\log\left(1+\frac{\delta^\ell-\delta^{\ell-1}
}{\delta^{\ell-1}}\right) \notag \\ 
&=\frac{1}{2N}\log(P).
\end{align}
Using this for all sub-channels, we obtain the sum-rate $\frac{1}{2}\log(P)$ which is approximately equal to the capacity of the P2P channel in the high SNR regime.

\subsubsection{Gaussian IRC} Now, we want to describe the transmission scheme for the Gaussian IRC. Suppose that Tx$1$ wants to send a message $W_1(b)$ to Rx$1$ in block $b$, where $b=1,\ldots,B$. To do this, Tx$1$ uses a nested-lattice codebook (\cite{NazerGastpar}, \cite{WilsonNarayananPfisterSprintson}, and \cite{ErezZamir}) $(\Lambda_{f},\Lambda_{c})$ with rate $R_s$ and unit power,
 to generate the codewords $x_{1,\ell}^n(b)$, where $\ell=1,\ldots,N$ and $\Lambda_{f}$, $\Lambda_{c}$ represent the fine, coarse lattices, respectively. The codeword $x_{1,\ell}^n(b)$ is given as follows
\begin{align}
x_{1,\ell}^n(b) = \sqrt{P_\ell}\left[\left(\lambda_{1,\ell}(b) - d_{1,\ell} \right) \mod \Lambda_{c}\right],
\end{align}
where $\lambda_{1,\ell}(b) \in \Lambda_{f}$ and $d_{1,\ell}$ is an $n$-dimensional random dither vector uniformly distributed over the fundamental Voronoi region $\mathcal{V}(\Lambda_{c})$. Note that the dither vector is assumed to be known at all nodes. 
Tx$1$ sends in the $\ell$th sub-channel $x_{1,\ell}^n(b)$. Hence, $x_1^n(b) = \sum_{\ell=1}^N x_{1,\ell}^n(b)$. 
Similar to the P2P channel, the power of $x_{1,\ell}^n(b)$ is $p_\ell = \delta^\ell - \delta^{\ell-1}$ and its rate is $R_{1,\ell}$, where $\log(\delta) = \frac{1}{N}\log(P)$. Note that, the transmit power by Tx$1$ satisfies the power constraint $P$. Tx$1$ can decide whether it sends over the $\ell$th sub-channel or not. Hence, $R_{1,\ell}\in\{0,R_s\}$, where $R_s$ represents the maximum achievable rate by using a sub-channel. The same is done by Tx$2$. 
Note that both Tx's use the same coarse and fine lattices for generating the code-words. 

Now, consider the relay side. The received signal at the relay in block $b$ is given by 
\begin{align}
y_r^n(b) = y_r^{\prime n}(b) + \sum_{\ell=1}^{N-N_s} h_s[x_{1,\ell}^n(b) + x_{2,\ell}^n(b)] + z_r^n(b),
\end{align} 
where $y_r^{\prime n}$ is the part which is received at the relay higher than the noise level. This part is the sum of the transmitted signals by both Tx's in the top-most $N_s$ sub-channels. Hence, we can write
\begin{align}
y_r^{\prime n}(b) = \sum_{\ell^\prime=1}^{N_s} y_{r,\ell^{\prime}}^{\prime n}(b)= \sum_{\ell=N-N_s+1}^N h_s[x_{1,\ell}^n(b) + x_{2,\ell}^n(b)].
\end{align}
To obtain $N_s$, consider the $(N-N_s+1)$th sub-channel. The signal in this sub-channel is received at the relay on the lowest power level which is still higher than the noise level. Therefore, we write
\begin{align}
1<\delta^{(N-N_s)}h_s^2. 
\end{align}
By solving this inequality, we obtain $N_s\leq \frac{\log(Ph_s^2)}{\log\delta}$. Since $N_s$ is the maximum number of the sub-channels received at the relay higher than the noise level, we obtain 
$N_s= \left\lfloor \frac{\log(Ph_s^2)}{\log(\delta)} \right\rfloor$.

In each block $b$, the relay decodes $y_r^{\prime n}(b)$. To do this, it decodes first the received signal in the highest sub-channel, i.e., ${y_{r,N_s}^{\prime n}(b)=h_s(x_{1,N}^n+x_{2,N}^n)}$. Hence, it decodes first the sum ${\lambda_{1,N}(b)+\lambda_{2,N}(b) \mod \Lambda_c}$ while it treats the interference caused by the lower sub-channels, i.e., $h_s(x_{1,\ell}^n(b) + x_{2,\ell}^n(b))$ ($\ell=1,\ldots,N-1$) as noise. 
After decoding ${\lambda_{1,N}(b)+\lambda_{2,N}(b) \mod \Lambda_c}$ successfully, the relay constructs ${x_{1,N}(b) + x_{2,N}(b)}$ as shown in \cite{Nazer_IZS2012}. Then it removes the interference caused by $h_s(x_{1,N}(b) + x_{2,N}(b))$. Next, it decodes $y_{r,(N_s-1)}^{\prime n}(b)$ by treating all signals received in lower sub-channels as noise. Proceeding this decoding successively, relay completes decoding $y_r^{\prime n}(b)$. Generally, the relay is able to decode the sum $\lambda_{1,\ell}(b)+\lambda_{2,\ell}(b) \mod \Lambda_c$ for all $\ell\in\{N-N_s+1,\ldots,N\}$, as long as \cite{NazerGastpar}
\begin{align}
R_{1,\ell}, R_{2,\ell}\leq R_s &\leq \frac{1}{2}\log\left(\frac{1}{2}+\frac{h_s^2p_\ell}{1+2h_s^2(p_{\ell-1}+p_{\ell-2}+\ldots+p_{1} )}\right)\\
&=\frac{1}{2}\log\left(\frac{1}{2}+\frac{h_s^2(\delta^{\ell}-\delta^{\ell-1})}{1+2h_s^2(\delta^{\ell-1}-1)}\right). \label{eq:LB_Gaussian_3}
\end{align}
The condition in \eqref{eq:LB_Gaussian_3} is written for the worst case which is the case when both transmitters share all the sub-channels. Suppose that the $\ell$th sub-channel is used only by one of the transmitters or some sub-channels from the first to the $(\ell-1)$th one are not used by both Tx's. 
Then, the rate constraint will be looser than that of in \eqref{eq:LB_Gaussian_3}. 
Now, using the fact that $1-2h_s^2< 1 \leq h_s^2\delta^{\ell-1}$ for all $\ell\in\{N-N_s+1,\ldots,N\}$, we tighten the condition in \eqref{eq:LB_Gaussian_3} as follows
\begin{align}
R_{1,\ell}, R_{2,\ell}\leq R_s &\leq \frac{1}{2}\log\left(\frac{1}{3}+\frac{h_s^2(\delta^{\ell}-\delta^{\ell-1})}{3h_s^2 \delta^{\ell-1}}\right) \\
&= \frac{1}{2}\log\left(\frac{\delta}{3}\right).
 \label{eq:LB_Gaussian_4}
\end{align}
After decoding the received signal in block $b$, the relay generates $x_{r,\ell}^n(b+1)$, $\ell=1,\ldots,N$ which is sent over the $\ell$th sub-channel in the $(b+1)$th block. This signal is generated by using the nested-lattice codebook $(\Lambda_f,\Lambda_c)$ as follows
\begin{align}
x_{r,\ell}^n(b+1) = \sqrt{P_{r,\ell}} \left[(\lambda_{r,\ell}(b+1)-d_{r,\ell})\mod \Lambda_c\right],
\end{align}
where $\lambda_{r,\ell}(b+1) \in \Lambda_{f}$ and $d_{r,\ell}$ is an $n$-dimensional random dither vector uniformly distributed over the fundamental Voronoi region $\mathcal{V}(\Lambda_{c})$.
The power and the rate of $x_{r,\ell}^n(b+1)$ is $p_{r,\ell}\leq p_\ell$ and $R_{r,\ell}$, respectively. Similar to encoding at the transmitter side, relay can decide whether it uses the $\ell$th sub-channel or not. Therefore, $R_{r,\ell}\in \{0,R_s\}$. The sent signal in the $(b+1)$th block by the relay is $x_{r}^n(b+1) = \sum_{\ell=1}^N x_{r,\ell}^n(b+1)$. The transmit power of the relay is given by $P^\prime = \sum_{\ell=1}^N p_{r,\ell}\leq P$.

Now, consider the decoding at the receiver side.  Here, we explain the decoding at Rx$1$. The same is done by Rx$2$. Rx$1$ starts with decoding at the end of block $B$. 
In the block $B$, Rx$1$ receives $y_1^n(B) = h_dx_1^n(B) + h_cx_2^n(B) + h_r x_r^n(B)+z_1^n(B)$. The signal which is received at Rx$1$ higher than noise power level is given by
\begin{align}
y_1^{\prime n}(B) = \sum_{\ell^\prime=1}^{N_m} y_{1,\ell^{\prime}}^{\prime n}(B)=
\sum_{\ell=N-N_d+1}^{N} h_d x_{1,\ell}^n(B) + \sum_{\ell=N-N_c+1}^{N} h_c x_{2,\ell}^n(B) + 
\sum_{\ell=N-N_r+1}^{N} h_rx_{r,\ell}^n(B),
\end{align}
where $N_d$, $N_c$, and $N_r$ are the number of sub-channels which are received at Rx$1$ from Tx$1$, Tx$2$, and the relay higher than the noise level, respectively. Moreover, $N_m$ is number of sub-channels which are observed at Rx$1$ higher than the noise level. Therefore, we obtain $N_d = \left\lfloor \frac{\log(Ph_d^2)}{\log(\delta)} \right\rfloor$, $N_c = \left\lfloor \frac{\log(Ph_c^2)}{\log(\delta)} \right\rfloor$, $N_r = \left\lfloor \frac{\log(Ph_r^2)}{\log(\delta)} \right\rfloor$, and $N_m=\max\{N_d,N_c,N_r\}$. 
This can be shown similar to obtaining $N_s$.
\begin{remark}
To guarantee that the sub-channels used by both Tx's are aligned at Rx$1$, we choose number of sub-channels $N$ such that it exists an $\ell\in\{1,\ldots,N\}$ where
\begin{align}
\begin{cases}
Ph_c^2 = h_d^2 \delta^\ell &\quad \text{if } h_c^2<h_d^2 \\
Ph_d^2 = h_c^2 \delta^\ell &\quad \text{if } h_d^2<h_c^2 
\end{cases}.
\end{align}
For aligning the sub-channels used by the relay and Tx's at Rx$1$, the relay needs to reduce its transmit power to $P^\prime$ given as follows
\begin{align}
P^\prime = \frac{\delta^{N_r}}{h_r^2} \leq P.
\end{align}
Notice that reducing the transmit power of the relay from $P$ to $P^\prime$ does not change the number of sub-channels received from relay at the Rx's over the noise level, i.e., $N_r=\left\lfloor \frac{\log(P'h_r^2)}{\log(\delta)}\right\rfloor$.
\end{remark}

Decoding the received signal in block $B$ is done at Rx$1$ in a successive manner. This is started with decoding $y_{1,N_m}^{\prime n}(B)$. After doing this, Rx$1$ removes the interference caused by $y_{1,N_m}^{\prime n}(B)$ and decodes $y_{1,N_m-1}^{\prime n}(B)$. This successive decoding is proceeded until end of decoding $y_{1,1}^{\prime n}(B)$. 
To write the rate constraint for successful decoding of $y_{1,\ell^\prime}^{\prime n}(B)$ for all $\ell^\prime \in  \{1,\ldots,N_m\}$, we consider the worst case which can occur. This is when for all $\ell^\prime \in \{1,\ldots,N_m\}$, $y^{\prime n}_{1,\ell^\prime}(B)$ contains three signals which are from Tx$1$, Tx$2$, and the relay. 
Suppose that Rx$1$ wants to decode $y^{\prime n}_{1,\ell^\prime}(B) = h_d x^n_{1,\ell_1}(B)+h_c x^n_{2,\ell_2}(B)+h_r x^n_{r,\ell_r}(B)$, where $ \ell^\prime \in\{1,\ldots,N_m\}$ and $\ell_1$, $\ell_2$, and $\ell_r$ are the index of the sub-channels used by Tx$1$, Tx$2$, and the relay which are received aligned at Rx$1$. Therefore, $h_d^2p_{\ell_1} = h_c^2 p_{\ell_2}=h_r^2p_{r,\ell_r} = p_{\ell^\prime}$, where $p_{\ell^\prime} = \delta^{\ell^\prime}-\delta^{\ell^\prime-1}$. Rx$1$ is able to decode ${h_d \lambda_{1,\ell_1}(B) + h_c \lambda_{2,\ell_2}(B) + h_r \lambda_{r,\ell_r}(B) \mod \Lambda_c}$ as long as \cite{NazerGastpar}
{\begin{align}
R_{\ell_1}, R_{\ell_2}, R_{r,\ell_r}  \leq R_s & \leq \frac{1}{2}\log\left(\frac{1}{3}+\frac{p_{\ell^\prime}}{1+3(p_{1} + p_2 + \ldots, p_{\ell^\prime-1}) + 3}\right) \\
& = \frac{1}{2}\log\left(\frac{1}{3}+\frac{\delta^{\ell^\prime}-\delta^{\ell^\prime-1}}{1+3(\delta^{\ell^\prime-1}-1) + 3}\right) \\
&=\frac{1}{2}\log\left(\frac{1}{3}+\frac{\delta^{\ell^\prime}-\delta^{\ell^\prime-1}}{1+3\delta^{\ell^\prime-1}}\right) \label{eq:LB_Gaussian_11}
\end{align}}
Since for all $\ell^\prime \in \{1,\ldots,N_m\}$, $1\leq \delta^{\ell^\prime-1}$, we tighten the condition in \eqref{eq:LB_Gaussian_11} and obtain 
\begin{align}
R_{\ell_1}, R_{\ell_2}, R_{r,\ell_r}  \leq R_s & \leq \frac{1}{2}\log\left(\frac{1}{4}+\frac{\delta^{\ell^\prime}-\delta^{\ell^\prime-1}}{4\delta^{\ell^\prime-1}}\right)\\
&= \frac{1}{2}\log\left(\frac{\delta}{4}\right).
 \label{eq:LB_Gaussian_12}
\end{align}
By considering both conditions in \eqref{eq:LB_Gaussian_4} and \eqref{eq:LB_Gaussian_12}, we conclude that the maximum achievable rate using one sub-channel is given by 
\begin{align}
R_s & =\frac{1}{2}\log\left(\frac{\delta}{4}\right). \label{eq:LB_Gaussian_13}
\end{align}
Note that $\delta$ has to be larger than $4$, which is equivalent to $4<P^\frac{1}{N}$. This is always satisfied when $P\to \infty$.

After decoding $y_1^n(B)$, Rx$1$ decodes the received signal in the block $(B-1)$, i.e., $y_1^n(B-1)$. Due to the backward decoding, Rx$1$ knows parts of $y_1^n(B-1)$ a priori. Hence, it removes first the interference caused by these parts before decoding $y_1^n(B-1)$. Then, it starts decoding $y_1^n(B-1)$ similar to $y_1^n(B)$. The backward decoding proceeds until the end of decoding $y_1^n(1)$.

Now, we discuss different strategies which can be used over each sub-channel. Consider the $\ell^\prime$th sub-channel at the receiver side in block $b$, i.e. $y^{\prime n}_{1,\ell^\prime}(b) = h_d x^n_{1,\ell_1}(b)+h_c x^n_{2,\ell_2}(b)+h_r x^n_{r,\ell_r}(b)$. After cancelling the interference caused by the a priori known signals (known due to the backward decoding or successive interference cancellation), this sub-channel can be used only for one of the following cases.
\begin{itemize}
\item Common signaling: Let suppose that Tx$1$ uses the $\ell$th sub-channel for sending the common signal. Since both receivers have to be able to decode this signal, this sub-channel has to be received at both receivers over the noise level. Therefore, we have
\begin{align}
N-\min\{N_d,N_c\}<\ell.
\end{align}
\item Private signaling: Compared to common signal, only the desired Rx needs to be able to decode the private signal. Therefore, if Tx$i$ ($i\in\{1,2\}$) sends over $\ell$th sub-channel a private signal, then the following condition needs to be satisfied for reliable decoding of the private signal at Rx$i$.
\begin{align}
N- N_d<\ell.
\end{align}
\item CF signaling: In CF signaling, relay is also involved in the communication. This is done as follows. Suppose that Tx's use $\ell$th sub-channel for transmitting the CF signal. This sub-channel must be in the top-most $N_s$ sub-channels. Unless the relay does not observe this sub-channel over the noise level. Therefore, we write
\begin{align}
N-N_s<\ell.
\end{align}
Using nested lattice code, the relay decodes in block $b$ the sum $\lambda_{1,\ell}(b) + \lambda_{2,\ell}(b)\mod \Lambda_c$. 
Next, the relay encodes this sum into $x_{r,\ell_r}(b+1)$ and sends it over $\ell_r$th sub channel in the next block. At the receiver side, Rx$1$ receives in block $b$, over sub-channel $N_d-(N-\ell)$, $N_c-(N-\ell)$, and $N_r-(N-\ell_r)$ the CF signal from Tx$1$, Tx$2$, and the relay, respectively. 
As we mentioned, before Rx$1$ starts decoding with the last block to the first one. Suppose that the decoding of received signal in block $B,B-1,\ldots, b+1$ is done successfully. 
Therefore, Rx$1$ knows $\lambda_{1,\ell}(b)+\lambda_{2,\ell}(b)\mod\Lambda_c$, since this sum is sent by the relay in the $(b+1)$th block. 
Using this sum, Rx$1$ can obtain the CF signals from both Tx's if it decodes either $\lambda_{1,\ell}(b)$ or $\lambda_{2,\ell}(b)$.
Depending on the channel strength, Rx$1$ decodes the CF signal which is received in the higher sub-channel and reconstructs the other one. 
For instance, suppose that the desired channel is stronger than the undesired channel ($N_c<N_d$). Then, Rx$1$ obtains $\lambda_{1,\ell}(b)$. Using $\lambda_{1}(b)+\lambda_{2}(b)\mod\Lambda_c$ which is known at the Rx$1$, it obtains $\lambda_{2,\ell}(b)$. Knowing $\lambda_{2,\ell}(b)$, Rx$1$ reconstructs $x_{2,\ell}(b)$ and removes the interference caused by $x_{2,\ell}(b)$.
Moreover, Rx$1$ decodes the relay CF signal sent in channel use $b$. This decoding can be done reliably as long as
\begin{align}
N-N_r&<\ell_r \label{eq:LB_Gaussian_17} \\ 
N-\max\{N_d,N_c\}&<\ell, \label{eq:LB_Gaussian_18}\\
N_c &\neq N_d \label{eq:LB_Gaussian_19} \\ 
  \max\{N_d,N_c\}+ \ell & \neq N_r + \ell_r.\label{eq:LB_Gaussian_20}
\end{align}
While conditions \eqref{eq:LB_Gaussian_17} and \eqref{eq:LB_Gaussian_18} guarantee that the CF signals sent by the relay and the Tx with stronger channel are received higher than the noise level at Rx$1$, conditions \eqref{eq:LB_Gaussian_19} and \eqref{eq:LB_Gaussian_20} avoid an overlap between the CF sub-channels of the Tx's and the CF sub-channels of the relay and the stronger Tx.

\item DF signaling: In this strategy, the relay needs to be able to decode both signals sent by Tx's separately. Therefore, Tx$1$ and Tx$2$ have to use different sub-channels for transmitting their DF signals. Suppose that Tx$1$ and Tx$2$ use the sub-channels $\ell_1$ and $\ell_2$ to send $x_{1,\ell_1}(b)$ and $x_{2,\ell_2}(b)$ in block $b$, respectively, where $\ell_1 \neq \ell_2$. Relay is able to observe both sub-channels over the noise level as long as 
\begin{align}
N-N_s< \min\{\ell_1,\ell_2\}.% \\
%N-N_s<\ell_2.
\end{align}
In next block $(b+1)$, the relay sends  $x_{1,\ell_1}(b)$ and $x_{2,\ell_2}(b)$ in sub-channels $\ell_{r1}$ and $\ell_{r2}$ ($\ell_{r1}\neq \ell_{r2}$), respectively. 
At the receiver side, Rx's use the backward decoding. 
Supposing that decoding received signal in blocks $B,B-1,\ldots,b+1$ is done successfully, Rx$1$ knows $x_{1,\ell_1}(b)$ and $x_{2,\ell_2}(b)$ since they are sent by relay in block $b+1$. Therefore, Rx$1$ removes the interference caused by these signals before decoding the received signal in block $b$. Next, Rx$1$ decodes $x_{1,\ell_1}(b-1)$ and $x_{2,\ell_2}(b-1)$. Note that these two signals are sent both by the relay. Decoding of these signals can be done successfully, as long as
\begin{align}
N-N_r< \min\{\ell_{r1},\ell_{r2}\}. %\\
%N-N_r<\ell_{r2}.
\end{align}
\item CN signaling: Using this strategy, Tx$i$ $i\in\{1,2\}$ sends $x_{i,\ell}(b)$ and $x_{i,\ell_F}(b)$ in block $b$ over sub-channels $\ell$ and $\ell_F$, respectively. It is worth mentioning that 
\begin{align}
x_{i,\ell_F}(b) = \sqrt{\frac{p_{\ell_F}}{p_\ell}} x_{i,\ell}(b+1).
\end{align}
In other words, in block $b$ the $\ell_F$th sub-channel is used for sending the same signal as in the $\ell$th sub-channel in block $(b+1)$. 
Suppose that the decoding at the relay has been done successfully in block $1$ to $b-1$. 
Hence, relay knows $x_{1,\ell}(b) + x_{1,\ell}(b)$ in beginning of block $b$. 
Therefore, in block $b$ relay removes the interference caused by $x_{1,\ell}(b) + x_{1,\ell}(b)$ and then it decodes $\lambda_{1,\ell_F}(b) + \lambda_{2,\ell_F}(b) \mod \Lambda_c$. 
This can be done successfully as long as
\begin{align}
N-N_s<\ell_F.
\end{align}
Knowing $\lambda_{1,\ell_F}(b) + \lambda_{2,\ell_F}(b) \mod \Lambda_c$ in block $b$, the relay constructs $\lambda_{1,\ell}(b+1) + \lambda_{2,\ell}(b+1) \mod \Lambda_c$. Next, the relay sends in block $b+1$ and sub-channel $\ell_r$, the following sum
\begin{align}
x_{r,\ell_r}^n(b+1) =  -\sqrt{p_{r,\ell_r}} \left[\lambda_{1,\ell}(b+1) + \lambda_{2,\ell}(b+1) \mod \Lambda_c\right],
%x_{r,\ell_r}^n(b+1) = -\sqrt{\frac{p_{r,\ell_r}}{p_\ell}}(x^n_{1,\ell}(b+1) + x^n_{2,\ell}(b+1)),
\end{align}
where $\ell_r=\ell+N_c-N_r$. This is equivalent to $P_{r,\ell_r} = P_\ell \frac{h_c^2}{h_r^2}$. The decoding at the destination is done backward. Suppose that decoding the received signal at Rx$1$ in blocks $B,B-1,\ldots, b+1$ is done successfully. Hence, Rx$1$ knows $\lambda_{1,\ell}(b+1)$ before decoding block $b$. Knowing $\lambda_{1,\ell}(b+1)$, Rx$1$ is able to reconstruct $x_{1,\ell_F}(b)$. Therefore, Rx$1$ removes first the interference of $x_{1,\ell_F}(b)$.
Depending on the channel strength, the decoding order can be changed. 

First, suppose that $N_d<N_c$. 
In this case, Rx$1$ decodes first the sum of $h_c x_{2,\ell}(b) + h_r x_{r,\ell_r}(b)$. 
Note that $\ell_r$ is chosen such that $h_c x_{2,\ell}(b)$ and $h_r x_{r,\ell_r}(b)$ are completely aligned over the same sub-channel. Therefore, over this sub-channel, Rx$1$ observes $h_c x^n_{2,\ell}(b)+h_r x_{r,\ell_r}^n(b)$. To decode $\lambda_{1,\ell}(b)$, Rx$1$ divides this sum with $h_c\sqrt{P_\ell}$ and adds the dither vector $d_{2,\ell}$ then it calculates the quantization error with respect to $\Lambda_c$. Therefore, it obtains 
\begin{align}
\left[
\left(\lambda_{2,\ell}(b) - d_{2,\ell}\right) \mod \Lambda_c - \left(\lambda_{1,\ell}(b) + \lambda_{2,\ell}(b)  \right) \mod \Lambda_c + d_{2\ell}  \right] \mod \Lambda_c  =   - \lambda_{1,\ell}(b)  \mod \Lambda_c .
\end{align}
In this way, Rx$1$ decodes $ \lambda_{1,\ell}(b)$. Knowing, $ \lambda_{1,\ell}(b)$, Rx$1$  reconstructs $x_{1,\ell}(b)$ and removes its interference. This decoding can be done successfully, as long as
\begin{align}
\ell+N_c-N_r &= \ell_r \label{eq:cond_CN_Rx1}\\ 
N-N_c &\leq \ell \quad \text{if } \quad N_d<N_c.\label{eq:cond_CN_Rx2}
\end{align}
While the condition in \eqref{eq:cond_CN_Rx1} guarantees that the relay CN signal is received over the same sub-channel as the undesired CN signal, the condition in \eqref{eq:cond_CN_Rx2} guarantees that these two aligned signals are received above the noise level.

Now, suppose that $N_c<N_d$. In this case, Rx$1$ decodes first $\lambda_{1,\ell}(b)$ (from $x_{1,\ell}(b)$) as in the P2P channel and then it removes the interference caused by $h_c x^n_{2,\ell}(b)+h_r x_{r,\ell_r}^n(b)$ which is observed in the sub-channel $N_c-(N-\ell)$. 
This interference cancellation is done by dividing the signal in sub-channel $N_c-(N-\ell)$ by $h_c\sqrt{P_\ell}$, adding $\lambda_{1,\ell}+d_{2,\ell}$ to it and finally calculating its quantization error with respect to $\Lambda_c$. This is given as follows.
\begin{align}
\left[
\left(\lambda_{2,\ell}(b) - d_{2,\ell}\right) \mod \Lambda_c - \left(\lambda_{1,\ell}(b) + \lambda_{2,\ell}(b)  \right) \mod \Lambda_c + \lambda_{1,\ell} + d_{2\ell}  \right] \mod \Lambda_c  =   0.
\end{align}
In this case ($N_c<N_d$), following conditions need to be satisfied.
\begin{align}
\ell+N_c-N_r &= \ell_r\\ 
N-N_d & \leq \ell \quad \text{if } \quad N_c<N_d.
\end{align}
\begin{remark}
Note that when $N_c = N_d$, CN signaling cannot be used. This is due to the fact that the relay CN signal neutralizes both undesired and desired CN signals. 
\end{remark}
\end{itemize}
The schemes which are explained in the LD-IRC are combinations of private  and common signaling, in addition to CF, DF, and CN relaying scheme. By using these strategies over the sub-channels in the same manner as it is shown for the LD-IRC, we achieve the upper bound for the GDoF. Notice that while in the LD-IRC, we optimize over the number of bits which should be use by each strategy, in the Gaussian case we optimize over number of sub-channels. Moreover, while in the LD-IRC using each bit level, we achieve one, in Gaussian IRC, by using one sub-channel, we achieve $R_s=\frac{1}{2}\log\left(\frac{\delta}{4}\right)$. Notice that the parameters $n_d$, $n_c$, $n_r$, and $n_s$ in the LD-IRC are equivalent to $N_d$, $N_c$, $N_r$, and $N_s$, in the Gaussian case, respectively. In Appendix \ref{app:Example_LD-Gaussian}, we show how the achievable sum-rate for the LD-IRC can be extended to the achievable GDoF.

%%%%%%%%%%%%%%%%%%%%%%%%%%%%%%%%%%%%%%%%%%%%%%%%%%
%%%%%%%%%%%%%%%%%%%%%%%%%%%%%%%%%%%%%%%%%%%%%%%%%%
%%%%%%%%%%%%%%%%%%%%%%%%%%%%%%%%%%%%%%%%%%%%%%%%%%
%%%%%%%%%%%%%%%%%%%%%%%%%%%%%%%%%%%%%%%%%%%%%%%%%%
%%%%%%%%%%%%%%%%%%%%%%%%%%%%%%%%%%%%%%%%%%%%%%%%%%
%%%%%%%%%%%%%%%%%%%%%%%%%%%%%%%%%%%%%%%%%%%%%%%%%%

\appendices
\section{Proof of the Upper Bounds for the LD-IRC (Lemma \ref{Lemma:new_bounds})}
\label{app:UB_LD_IRC_proof}

\subsection{Proof of \eqref{eq:new_UB6_IRC_det}}
\label{app:new_UB6_det_proof}
{The proof of the bound $C_{\text{det},\Sigma}\leq \max\{n_d,\min\{n_r,n_s\}\}$ follows from the cut-set bounds. Namely, consider the following cut-set bound $R_{\mathcal{S}\to\mathcal{S}^c} \leq \max_{P(\X_1,\X_2,\X_r)} I(\X_\mathcal{S};\Y_{\mathcal{S}^c}|\X_{\mathcal{S}^c})$ where $\mathcal{S}=\{\text{Tx}1,\text{Tx}2,\text{Relay}\}$ and $\mathcal{S}^c=\{\text{Rx}1,\text{Rx}2\}$. This bound yields
\begin{align}
\label{CSB1G}
R_\Sigma \leq \max_{P(\X_1,\X_2,\X_r)} I(\X_1,\X_2,\X_r;\Y_1,\Y_2).
\end{align}
The term $I(\X_1,\X_2,\X_r;\Y_1,\Y_2)$ can be upper bounded as follows:
\begin{align}
I(\X_1,\X_2,\X_r;\Y_1,\Y_2) & = H(\Y_1,\Y_2) - H(\Y_1,\Y_2|\X_1,\X_2,\X_r)\\
&\overset{(a)}{\leq} H(\Y_1,\Y_2)\\
&= H(\Y_1)+H(\Y_2|\Y_1),
\end{align}
where in step $(a)$, we used the fact that $\Y_1$ and $\Y_2$ are deterministic functions of $\X_1$, $\X_2$, and $\X_r$. Note that $\Y_1$ and $\Y_2$ are equal for the case where $n_d=n_c$. Thus, $H(\Y_2|\Y_1)=0$. It remains to maximize $H(\Y_1)$ with respect to the input distribution. Since $\Y_1$ is a binary vector of length $\max\{n_d,n_c,n_r\}=\max\{n_d,n_r\}$, $H(\Y_1)$ is maximized when the components of $\Y_1$ are i.i.d. Bern$(\nicefrac{1}{2})$, which corresponds to inputs $\X_1$, $\X_2$, and $\X_r$ distributed also according to an i.i.d. Bern$(\nicefrac{1}{2})$ distribution. Therefore, $H(\Y_1)\leq \max\{n_d,n_r\}$ leading to
\begin{align}
\label{CSB1ndeqnc}
R_\Sigma \leq \max\{n_d,n_r\}.
\end{align}}

{Using similar steps with the cut $\mathcal{S}=\{\text{Tx}1,\text{Tx}2\}$ and $\mathcal{S}^c=\{\text{Rx}1,\text{Rx}2,\text{Relay}\}$ leads to the bound $R_\Sigma\leq \max\{n_d,n_s\}$. Namely, with this cut, we have
\begin{align}
R_\Sigma\leq \max_{P(\X_1,\X_2,\X_r)} I(\X_1,\X_2;\Y_1,\Y_2,\Y_r|\X_r).
\end{align}
Note that
\begin{align}
I(\X_1,\X_2;\Y_1,\Y_2,\Y_r|\X_r)&= H(\Y_1,\Y_2,\Y_r|\X_r)-H(\Y_1,\Y_2,\Y_r|\X_r,\X_1,\X_2)\\
&= H(\Y_1,\Y_2,\Y_r|\X_r)\\
&= H(\Y_1|\X_r)+H(\Y_2|\X_r,\Y_1)+H(\Y_r|\X_r,\Y_1,\Y_2).
\end{align}
The first term $H(\Y_1|\X_r)$ can be bounded as follows
\begin{align}
H(\Y_1|\X_r)&=H(\bS^{q-n_d}\X_1+\bS^{q-n_c}\X_2|\X_r)\\
&\leq H(\bS^{q-n_d}\X_1+\bS^{q-n_c}\X_2)\\
&\leq n_d,
\end{align}
where the first inequality follows since conditioning does not increase entropy, and the second follows since $n_d=n_c$ and since the entropy is maximized by i.i.d. Bern$(\nicefrac{1}{2})$ inputs. The second term $H(\Y_2|\X_r,\Y_1)$ is zero since $\Y_1$ and $\Y_2$ are equal given $n_d=n_c$. Finally, the last term satisfies
\begin{align}
H(\Y_r|\X_r,\Y_1,\Y_2)&=H(\Y_r|\X_r,\Y_1)\\
&=H(\Y_r|\X_r,\Y_1\oplus \bS^{q-n_r}\X_r)\\
&\leq H(\Y_r|\Y_1\oplus \bS^{q-n_r}\X_r)\\
&=H(\bS^{q-n_s}\X_1\oplus \bS^{q-n_s}\X_2|\bS^{q-n_d}\X_1\oplus \bS^{q-n_c}\X_2)\\
&\leq (n_s-n_d)^+,
\end{align}
where the first inequality follows since conditioning does not increase entropy, and the second follows since the number of unknown bits of $\bS^{q-n_s}\X_1\oplus \bS^{q-n_s}\X_2$ given $\bS^{q-n_d}\X_1\oplus \bS^{q-n_c}\X_2$ where $n_d=n_c$ is $n_s-n_d$ if $n_d\leq n_s$ and zero otherwise, and the entropy of these bits is maximized by the i.i.d. Bern$(\frac{1}{2})$ distribution. Therefore, we can write
\begin{align}
\label{CSB2ndeqnc}
R_\Sigma\leq \max\{n_d,n_s\}.
\end{align}
}
Combining \eqref{CSB1ndeqnc} and \eqref{CSB2ndeqnc}, we get
\begin{align}
R_\Sigma\leq \max\{n_d,\min\{n_r,n_s\}\},
\end{align}
which is the desired bound given in \eqref{eq:new_UB6_IRC_det} in Lemma \ref{Lemma:new_bounds}.

\subsection{Proof of \eqref{eq:new_UB4_IRC_det}}
\label{app:new_UB4_det_proof}
{This bound is in fact derived from the cut-set bound given above in \eqref{CSB1G}. As before, we have 
\begin{align}
R_\Sigma \leq \max_{P(\X_1,\X_2,\X_r)} H(\Y_1)+H(\Y_2|\Y_1),
\end{align}
where $H(\Y_1)$ is maximized by i.i.d. Bern$(\nicefrac{1}{2})$ distributed inputs, leading to $H(\Y_1)\leq \max\{n_d,n_c,n_r\}$. The last term is non-zero, contrary to the $n_d=n_c$ case. To bound it, we can use $H(\Y_2|\Y_1)= H(\Y_2\oplus \Y_1|\Y_1)$ and the property that conditioning does not increase entropy to write
\begin{align}
H(\Y_2|\Y_1) &\leq  H(\Y_2\oplus \Y_1). \label{eq:new_UB4_det_proof2}
\end{align}
Notice that $\Y_2\oplus\Y_1$ given by $\bS^{q-n_d}\X_2\oplus \bS^{q-n_c}\X_1 \oplus \bS^{q-n_d}\X_1 \oplus \bS^{q-n_c}\X_2$ has $\max\{n_d,n_c\}$ non-zero components. Thus, the maximum value of $H(\Y_2\oplus \Y_1)$ is $\max\{n_d,n_c\}$ and is achieved when $\X_1$ and $\X_2$ are i.i.d. Bern$(\nicefrac{1}{2})$. Thus, $H(\Y_2|\Y_1)\leq \max\{n_d,n_c\}$. Consequently, we get 
\begin{align}
R_\Sigma \leq \max\{n_d,n_c,n_r\} + \max\{n_d,n_c\},
\end{align}
which concludes the proof of the upper bound \eqref{eq:new_UB4_IRC_det} in Lemma \ref{Lemma:new_bounds}.}

\subsection{Proof of \eqref{eq:new_UB1_IRC_det}}
\label{app:new_UB1_det}
For establishing the upper bound \eqref{eq:new_UB1_IRC_det} in Lemma~\ref{Lemma:new_bounds}, {we use a genie-aided approach. In a general genie-aided approach, the side-information $\boldsymbol{s}_1$ and $\boldsymbol{s}_2$ is given to receivers 1 and 2, respectively. Then, using Fano's inequality, we can write
\begin{align}
\label{FanoGenie}
n (R_\Sigma-\epsilon_n) \leq I(W_1;\Y_1^n,\boldsymbol{s}_1) + I(W_2;\Y_2^n,\boldsymbol{s}_2),
\end{align}
where $\epsilon_n\to 0$ as $n\to\infty$.}

For this particular case, we use $\boldsymbol{s}_1=\bS^{q-n_r}\X_r^n$ and $\boldsymbol{s}_2=(\bS^{q-n_r}\X_r^n,\Y_1^n,W_1)$. By using the chain rule and the independence of the different messages, we can rewrite the bound as 
\begin{align}
n (R_\Sigma-\epsilon_n) \leq & I(W_1;\bS^{q-n_r}\X_r^n) + I(W_1;\Y_1^n|\bS^{q-n_r}\X_r^n) \notag \\ &+I(W_2;\bS^{q-n_r}\X_r^n|W_1)+ I(W_2;\Y_1^n|\bS^{q-n_r}\X_r^n,W_1) +  I(W_2;\Y_2^n|\Y_1^n,\bS^{q-n_r}\X_r^n,W_1), \notag \\
= & I(W_1,W_2;\bS^{q-n_r}\X_r^n) + I(W_1,W_2;\Y_1^n|\bS^{q-n_r}\X_r^n)+ I(W_2;\Y_2^n|\Y_1^n,\bS^{q-n_r}\X_r^n,W_1). \label{eq:New_UB1_det_proof1}
\end{align}
Now we consider every term in \eqref{eq:New_UB1_det_proof1} separately. The first term in \eqref{eq:New_UB1_det_proof1} can be written as
\begin{align}
I(W_1,W_2;\bS^{q-n_r}\X_r^n) & = H(\bS^{q-n_r}\X_r^n) - H(\bS^{q-n_r}\X_r^n|W_1,W_2) \notag \\ 
&\stackrel{(a)}{=}H(\bS^{q-n_r}\X_r^n)\\
&\stackrel{(b)}{\leq} n\cdot n_r, \label{eq:New_UB1_det_proof2}
\end{align}
where $(a)$ follows since $H(\bS^{q-n_r}\X_r^n|W_1,W_2)=0$, and $(b)$ follows since $H(\bS^{q-n_r}\X_r^n)$ is the entropy of $n\cdot n_r$ binary random variables, and thus it is maximized when these random variables are is i.i.d. Bern$(\nicefrac{1}{2})$ distributed. 

The second term in \eqref{eq:New_UB1_det_proof1} can be upper bounded as follows
\begin{align}
I(W_1,W_2;\Y_1^n|\bS^{q-n_r}\X_r^n)& = H(\Y_1^n|\bS^{q-n_r}\X_r^n) - H(\Y_1^n|\bS^{q-n_r}\X_r^n,W_1,W_2)  \\
& \overset{(c)}{=} H(\bS^{q-n_d}\X_1^n \oplus \bS^{q-n_c}\X_2^n |\bS^{q-n_r}\X_r^n)  \\
& \overset{(d)}{\leq} H(\bS^{q-n_d}\X_1^n \oplus \bS^{q-n_c}\X_2^n)\\
& \stackrel{(e)}{\leq} n\cdot\max\{n_d,n_c\},
\label{eq:New_UB1_det_proof3}
\end{align}
where $(c)$ follows since $H(A|B)=H(A\oplus B|B)$ and since $\Y_1^n$ is deterministic given $W_1$ and $W_2$, $(d)$ follows since conditioning does not increase entropy, and $(e)$ follows since $\bS^{q-n_d}\X_1^n \oplus \bS^{q-n_c}\X_2^n$ consists of $n\cdot \max\{n_d,n_c\}$ binary random variables, and hence the maximizing distribution is the i.i.d. Bern$(\nicefrac{1}{2})$ distribution.

Finally, the third term in \eqref{eq:New_UB1_det_proof1} can be upper bound by 
\begin{align}
I(W_2;\Y_2^n|\Y_1^n,\bS^{q-n_r}\X_r^n,W_1) =&  H(\Y_2^n|\Y_1^n,\bS^{q-n_r}\X_r^n,W_1) - H(\Y_2^n|\Y_1^n,\bS^{q-n_r}\X_r^n,W_1,W_2) \notag \\ 
\overset{(f)}{=}& H(\bS^{q-n_d}\X_2^n|\bS^{q-n_c}\X_2^n,\bS^{q-n_r}\X_r^n,W_1) \notag \\
\overset{(g)}{\leq} & H(\bS^{q-n_d}\X_2^n|\bS^{q-n_c}\X_2^n)\\
\stackrel{(h)}{\leq} & n\cdot (n_d-n_c)^+,
\label{eq:New_UB1_det_proof4}
\end{align}
where $(f)$ follows since $H(A|B,C)=H(A\oplus f(C)|B\oplus f(C),C)$ for some function $f(\cdot)$, and since $\Y_2^n$ is deterministic given $W_1$ and $W_2$, and $(g)$ follows since conditioning does not increase entropy. Step $(h)$ follows since given $\bS^{q-n_c}\X_2^n$, the top-most $n_c$ bits of $\X_2^n$ are known and can be removed from $\bS^{q-n_d}\X_2^n$. Thus, $\bS^{q-n_d}\X_2^n$ has only $n\cdot (n_d-n_c)^+$ random components (the lower-most ones), whose entropy is maximized by the i.i.d. Bern$(\nicefrac{1}{2})$ distribution. 

Now, by substituting the expressions in \eqref{eq:New_UB1_det_proof2}, \eqref{eq:New_UB1_det_proof3}, and \eqref{eq:New_UB1_det_proof4} into \eqref{eq:New_UB1_det_proof1}, we obtain
\begin{align}
n(R_\Sigma-\epsilon_n) & \leq  n\cdot (n_r + 2\max\{n_d,n_c\} - n_c).
 \label{eq:New_UB1_det_proof7}
\end{align}
By dividing the expression by $n$ and letting $n\to \infty$, we get \eqref{eq:new_UB1_IRC_det}.

\subsection{Proof of \eqref{eq:new_UB3_IRC_det}}
\label{app:new_UB3_det_proof}
This is also a genie-aided upper bound. We set $\boldsymbol{s}_1=\Y_r^n$ and $\boldsymbol{s}_2=(\Y_r^n,W_1)$. Substituting in \eqref{FanoGenie}, we can write
\begin{align}
n(R_\Sigma-\epsilon_n) \leq & I(W_1;\Y_1^ n,\Y_r^n) +  I(W_2;\Y_2^n,\Y_r^n,W_1) \label{eq:new_UB3_det_proof1}\\
=&  I(W_1;\Y_r^n)+ I(W_1;\Y_1^n|\Y_r^n) +  I(W_2;\Y_r^n|W_1) + I(W_2;\Y_2^n|\Y_r^n,W_1) \notag\\
= & I(W_1,W_2;\Y_r^n) + I(W_1;\Y_1^n|\Y_r^n)+I(W_2;\Y_2^n|\Y_r^n,W_1), \label{eq:new_UB3_det_proof2}
\end{align}
where the equalities follow from the independence of the messages and the chain rule. The first term in \eqref{eq:new_UB3_det_proof2} can be bounded as follows
\begin{align}
I(W_1,W_2;\Y_r^n) =& H(\Y_r^n)- H(\Y_r^n|W_1,W_2)\notag \\
\overset{(a)}{=}& H(\Y_r^n)\\
\overset{(b)}{\leq}& n\cdot n_s, \label{eq:new_UB3_det_proof31}
\end{align}
where $(a)$ follows since $\Y_r^n$ is a deterministic function of $W_1$ and $W_2$, and $(b)$ follows since the entropy of $\Y_r^n$ is maximized by the i.i.d. Bern$(\nicefrac{1}{2})$ distribution. The second term in \eqref{eq:new_UB3_det_proof2} satisfies
\begin{align}
I(W_1;\Y_1^n|\Y_r^n) =& H(\Y_1^n|\Y_r^n)- H(\Y_1^n|\Y_r^n,W_1)\notag \\
\stackrel{(a)}{\leq} & H(\Y_1^n|\Y_r^n)\notag \\
\overset{(b)}{=}& H(\bS^{q-n_d}\X_1^n\oplus \bS^{q-n_c}\X_2^n|\Y_r^n)\\
\overset{(c)}{\leq}& H(\bS^{q-n_d}\X_1^n\oplus \bS^{q-n_c}\X_2^n)\\ \overset{(d)}{\leq}& n\cdot\max\{n_d,n_c\},
\label{eq:new_UB3_det_proof32}
\end{align}
where $(a)$ follows from the non-negativity of mutual information, $(b)$ follows since $H(A|B)=H(A\oplus f(B)|B)$ for some function $f(\cdot)$, $(c)$ follows since conditioning does not reduce entropy, and $(d)$ follows similar to \eqref{eq:New_UB1_det_proof3}. Finally, the last term in \eqref{eq:new_UB3_det_proof2} can be bounded as follows
\begin{align}
I(W_2;\Y_2^n|\Y_r^n,W_1) =& H(\Y_2^n|\Y_r^n,W_1) -H(\Y_2^n|\Y_r^n,W_1,W_2)\notag \\
\overset{(e)}{=}& H(\bS^{q-n_d}\X_2^n|\Y_r^n,W_1)\\
\overset{(f)}{\leq}& H(\bS^{q-n_d}\X_2^n|\Y_r^n,W_1,\X_1^n,\bS^{q-n_s}\X_2^n)\\
\overset{(g)}{\leq}& H(\bS^{q-n_d}\X_2^n|\bS^{q-n_s}\X_2^n)\\
\overset{(h)}{\leq}& n\cdot(n_d-n_s)^+, \label{eq:new_UB3_det_proof33}
\end{align}
where $(e)$ follows since $\Y_1^n$, $\Y_2^n$, and $\Y_r^n$ are deterministic functions of $W_1$ and $W_2$, $(f)$ follows since knowing $W_1$ and $\Y_r^n$, $\X_1^n$ and $\bS^{q-n_s}\X_2^n$ can be constructed, $(g)$ follows since conditioning does not reduce entropy, and $(h)$ follows similar to \eqref{eq:New_UB1_det_proof4}. Substituting \eqref{eq:new_UB3_det_proof31}, \eqref{eq:new_UB3_det_proof32}, and \eqref{eq:new_UB3_det_proof33} in \eqref{eq:new_UB3_det_proof2} leads to
\begin{align}
n(R_\Sigma-\epsilon_n) \leq & n\cdot (\max\{n_d,n_c\} + \max\{n_d,n_s\}). \label{eq:new_UB3_det_proof4}
\end{align}
By dividing \eqref{eq:new_UB3_det_proof4} by $n$ and letting $n\to\infty$, we obtain the upper bound in \eqref{eq:new_UB3_IRC_det} in Lemma~\ref{Lemma:new_bounds}.

\subsection{Proof of \eqref{eq:new_UB2_IRC_det}}
\label{app:new_UB2_det_proof}
To establish the upper bound in \eqref{eq:new_UB2_IRC_det}, we use the upper bound given in \cite[Theorem 4]{ChaabanSezgin_IT_IRC}. Writing this upper bound for the LD-IRC, we obtain
\begin{align}
C_{\mathrm{det},\Sigma}\leq 2\max\{n_c,n_r,(n_d-n_c)\}+2(n_s-n_c)^+. \label{eq:new_UB2_IRC_det_proof1}
\end{align}
Now, we enhance the Rx's observation by $(n_s-n_c)^+$ bits. {Doing this, we replace $n_d$, $n_c$, and $n_r$ by 
\begin{align*}
\bar n_d &= n_d+(n_s-n_c)^+, \\
\bar n_c &= n_c+(n_s-n_c)^+, \\
\bar n_r &= n_r+(n_s-n_c)^+,
\end{align*}
respectively. We keep the source-relay channel intact, i.e., $\bar n_s = n_s$. This operation is equivalent to} reducing the noise power at the Rx's in the Gaussian IRC. {The sum-capacity of this enhanced channel is upper bounded by
\begin{align}
2\max\{\bar n_c,\bar n_r,(\bar n_d-\bar n_c)\}+2(\bar n_s-\bar n_c)^+ \label{eq:new_UB2_IRC_det_proof2}
\end{align}
according to \eqref{eq:new_UB2_IRC_det_proof1}.} Since the capacity of the enhanced channel is an upper bound for the capacity of the original channel, we get the bound 
\begin{align}
C_{\mathrm{det},\Sigma}\leq 2\max\{\bar n_c,\bar n_r,(\bar n_d-\bar n_c)\}+2(\bar n_s-\bar n_c)^+ \label{eq:new_UB2_IRC_det_proof2}
\end{align}
Notice that $(\bar{n}_s-\bar{n}_c)^+ = 0$. Now, by substituting $\bar n_d$ , $\bar n_c$, $\bar n_r$, and $\bar n_s$ into \eqref{eq:new_UB2_IRC_det_proof2}, we obtain 
\begin{align}
C_{\mathrm{det},\Sigma}\leq 2\max\{n_c+(n_s-n_c)^+,n_r+(n_s-n_c)^+,(n_d-n_c)\}. \label{eq:new_UB2_IRC_det_proof3}
\end{align}
The expression in \eqref{eq:new_UB2_IRC_det_proof3} can be rewritten as 
\begin{align}
C_{\mathrm{det},\Sigma}\leq 2\max\{n_c,n_r,n_d-\max\{n_c,n_s\}\}+2(n_s-n_c)^+, \label{eq:new_UB2_IRC_det_proof4}
\end{align}
which completes the proof of \eqref{eq:new_UB2_IRC_det} in Lemma \ref{Lemma:new_bounds}.

\subsection{Proof of \eqref{eq:new_UB5_IRC_det}}
\label{app:new_UB5_det_proof}
This is also a genie-aided upper bound. Throughout this proof, we assume that $n_r\leq n_c\leq n_d$. Here, we set $\boldsymbol{s}_1=\bS^{q-(n_c-n_r)}\X_1^n$ and $\boldsymbol{s}_2=\bS^{q-(n_c-n_r)}\X_2^n$, and substitute them into \eqref{FanoGenie} to obtain
\begin{align}
n(R_\Sigma-\epsilon_n) \leq & I(W_1;\Y_1^n,\boldsymbol{s}_1) + I(W_2;\Y_2^n,\boldsymbol{s}_2) \notag\\
=& I(W_1;\boldsymbol{s}_1) + I(W_1;\Y_1^n|\boldsymbol{s}_1) + I(W_2;\boldsymbol{s}_2)+ I(W_2;\Y_2^n|\boldsymbol{s}_2) \notag\\
\overset{(a)}{=}& H(\boldsymbol{s}_1) + H(\Y_1^n|\boldsymbol{s}_1) - H(\bS^{q-n_c}\X_2^n\oplus \bS^{q-n_r}\X_r^n|W_1) \notag \\ 
&+ H(\boldsymbol{s}_2)+ H(\Y_2^n|\boldsymbol{s}_2) - H(\bS^{q-n_c}\X_1^n\oplus \bS^{q-n_r}\X_r^n|W_2), \notag
\end{align}
where in step $(a)$, we used the fact that knowing $W_i$, $\X_i$ is deterministic. Since $n_c$ is larger than $n_r$, the top-most $n_c-n_r$ bits of interference signal is received without any overlap with the relay signal.
Therefore, we can split $\X_i^n$ in the term $H(\bS^{q-n_c}\X_i^n\oplus \bS^{q-n_r}\X_r^n|W_{j})$ ($i,j\in\{1,2\}$, $i\neq j$) into two parts: one part without overlap with $\X_r^n$ and the other part with overlap with $\X_r^n$. Doing this, we obtain 
\begin{align}
n(R_\Sigma-\epsilon_n) \leq & H(\boldsymbol{s}_1) + H(\Y_1^n|\boldsymbol{s}_1) - H(\boldsymbol{s}_2, \X_{2,[n_c-n_r+1:n_c]}^n\oplus  \X_{r,[1:n_r]}^n|W_1) \notag \\ &+ H(\boldsymbol{s}_2)+ H(\Y_2^n|\boldsymbol{s}_2) - H(\boldsymbol{s}_1, \X_{1,[n_c-n_r+1:n_c]}^n\oplus  \X_{r,[1:n_r]}^n|W_2). \notag
\end{align}
Here, we used the fact that $\boldsymbol{S}^{q-n_r}\X_r^n = \X_{r,[1:n_r]}^n$.
Next, we use chain rule and the fact that the messages are independent of each other to obtain 
\begin{align}
n(R_\Sigma-\epsilon_n) \leq & H(\boldsymbol{s}_1) + H(\Y_1^n|\boldsymbol{s}_1)- H(\boldsymbol{s}_2) - H(\X_{2,[n_c-n_r+1:n_c]}^n\oplus  \X_{r,[1:n_r]}^n|W_1,\boldsymbol{s}_2) \notag \\ 
&+ H(\boldsymbol{s}_2)+ H(\Y_2^n|\boldsymbol{s}_2) - H(\boldsymbol{s}_1) -  H(\X_{1,[n_c-n_r+1:n_c]}^n\oplus \X_{r,[1:n_r]}^n|W_2, \boldsymbol{s}_1) \notag \\ 
\overset{(b)}{\leq} & H(\Y_1^n|\boldsymbol{s}_1)  + H(\Y_2^n|\boldsymbol{s}_2),
\end{align}
where in $(b)$, we used the non-negativity of entropy. Next, we replace $\boldsymbol{s}_1$ and $\boldsymbol{s}_2$ by their values, and use the given information bits to decrease the entropy as follows
\begin{align}
n(R_\Sigma-\epsilon_n) \leq & H(\bar \X_1^n \oplus \bS^{q-n_c}\X_2^n \oplus \bS^{q-n_r}\X_r^n |\bS^{q-(n_c-n_r)}\X_1)  + H(\bar \X_2^n \oplus \bS^{q-n_c}\X_1^n \oplus \bS^{q-n_r}\X_r^n|\bS^{q-(n_c-n_r)}\X_2), \label{eq:new_UB5_det_proof4}
\end{align}
where {$\bar \X_i= \bS^{q-n_d}\X_i \oplus (\bS^T)^{q-n_d+n_c-n_r} \bS^{q-(n_c-n_r)}\X_i$.\footnote{Note that $\bS^T$ is a shift-up matrix, and thus, multiplying $\bS^{q-(n_c-n_r)}\X_i$ by $(\bS^T)^{q-n_d+(n_c-n_r)}$ aligns it with $\bS^{q-n_d}\X_i$.} Therefore, $\bar \X_i$ can be written as 
\begin{align}
\bar \X_i = \begin{bmatrix}
\boldsymbol{0}_{q-n_d+n_c-n_r} \\ X_{i,q-n_d+n_c-n_r+1} \\ \vdots \\ X_{i,q},
\end{bmatrix},
\end{align}}
where $X_{i,l}$ represents the $l$th element of the binary random vector $\X_i$. {Note that due to the assumption $n_d\geq n_c$, then $n_d-n_c+n_r>0$ and the number of random components of $\bar \X_i$ is $n_d-n_c+n_r$.} Now, similar to \eqref{eq:New_UB1_det_proof3}, we can upper bound \eqref{eq:new_UB5_det_proof4} as 
\begin{align}
n(R_\Sigma-\epsilon_n) \leq & n\cdot (2\max\{n_d-n_c+n_r,n_c,n_r\})  \label{eq:new_UB5_det_proof2}
\end{align}
Using the assumption $n_r \leq n_c$, and dividing the upper bound by $n$ and letting $n\to \infty$, we obtain 
\begin{align}
R_\Sigma \leq & 2\max\{n_d-n_c+n_r,n_c\}, \quad \text{ if } n_r\leq n_c \leq n_d
\end{align}
which completes the proof of \eqref{eq:new_UB5_IRC_det}. With this, the proof of Lemma \ref{Lemma:new_bounds} is complete.

\section{Proof of Lemma \ref{Lemma:new_UB4_2_Gauss}}
\label{app:proof_Lemma_additional}
In this appendix, we prove Lemma \ref{Lemma:new_UB4_2_Gauss} for the case that $i=1$ and $j=2$. The other case can be proved similarly. Hence, is what follows, we want to show that $h(\Gamma^n)-h(\Delta^n|W_2)$ is upper bounded by $n C\left(2+ \frac{h_c^2}{(h_c-h_r)^2}\right)$, where $ \Gamma^n = \frac{h_c}{\sqrt{Ph_r^2}}X_1^n+U_1^n$, $\Delta^n=h_cX_1^n+h_rX_r^n+Z_2^n$, and $U_1\sim \mathcal{N}(0,1)$ is i.i.d. over the time and  independent from other random variables. To do this, we write
\begin{align}
h(\Gamma^n)-h(\Delta^n|W_2) \overset{(a)}{=}& h(\Gamma^n)-h(\Delta^n|W_2) - h(U_1^n) + h(Z_2^n|W_2)  \\
=& I(X_1^n;\Gamma^n)-I(X_1^n,X_r^n;\Delta^n|W_2)  \\ 
=& I(X_1^n;\Gamma^n)-I(X_1^n;\Delta^n|W_2) -I(X_r^n;\Delta^n|W_2,X_1^n)  \\ 
\overset{(b)}{\leq}& I(X_1^n;\Gamma^n)-I(X_1^n;\Delta^n|W_2)  \\ 
%\overset{}{=}& I(X_1^n;\Gamma^n)-I\left(\frac{h_c}{\sqrt{Ph_r^2}}X_1^n;\Delta^n|W_2\right)  \\
%\overset{(c)}{=}& I(X_1^n;\Gamma^n)- I(U_1^n;\Delta^n|W_2,\frac{h_c}{\sqrt{Ph_r^2}}X_1^n+U_1^n)-I\left(\frac{h_c}{\sqrt{Ph_r^2}}X_1^n+U_1^n;\Delta^n|W_2,U_1\right)  \\ 
%\overset{}{=}& I(X_1^n;\Gamma^n)-I\left(\frac{h_c}{\sqrt{Ph_r^2}}X_1^n+U_1^n,U_1^n;\Delta^n|W_2\right)   \\ 
%\overset{}{=}& I(X_1^n;\Gamma^n)-I\left(\Gamma^n;\Delta^n|W_2\right) - I(U_1^n;\Delta^n|W_2,\Gamma^n)  \\ 
\overset{(c)}{\leq}& I(X_1^n;\Gamma^n)-I\left(\Gamma^n;\Delta^n|W_2\right)    \\
\overset{}{=}& h(\Gamma^n) - h(U_1^n) - h(\Gamma^n) + h(\Gamma^n|\Delta^n,W_2)    \\
\overset{}{=}& h(\Gamma^n|\Delta^n,W_2) -  h(U_1^n) \\
\overset{}{\leq}& h(\Gamma^n|\Delta^n) -  h(U_1^n),
\end{align}
where, step $(a)$ follows since the distribution of $U_1^n,Z_2^n\sim \mathcal{N}(0,1)$ and $Z_2^n$ is independent of $W_2$, step $(b)$ follows from the fact that mutual information is non-negative, and in $(c)$, we used the fact that $\Gamma^n$, $X_1^n$, and $\Delta^n$ form a Markov chain, i.e., $\Gamma^n\to X_1^n \to  \Delta^n$. Therefore, $I(\Gamma^n;\Delta^n)\leq I(X_1^n;\Delta^n)$. 
Now, by using \cite[Lemma 1]{AnnapureddyVeeravalli} and the fact that $U_1$ is i.i.d. over the time, we write 
\begin{align}
h(\Gamma^n)-h(\Delta^n|W_2) \leq  n\left[h(\Gamma_G|\Delta_G) - h(U_{1})\right],
\end{align}
where the subscript $G$ indicates that the inputs are i.i.d. and Gaussian distributed, i.e., $X_{i,G}\sim\mathcal{N}(0,P_i)$, where $i\in\{1,2,r\}$ and $\Gamma_G$, $\Delta_G$ are corresponding signal. Notice that $X_{r,G}$ and $X_{1,G}$ are correlated signals with a correlation coefficient $\rho_1\in[-1,1]$. Hence, we obtain 
\begin{align}
n\left[h(\Gamma_G|\Delta_G) - h(U_{1})\right]
\leq & \frac{n}{2}\log\left(\frac{\begin{vmatrix}
1+\frac{P_1h_c^2}{Ph_r^2} & \frac{P_1h_c^2}{\sqrt{Ph_r^2}} + \frac{h_ch_r\sqrt{P_1P_r}\rho_1}{\sqrt{Ph_r^2}} \\ \frac{P_1h_c^2}{\sqrt{Ph_r^2}} + \frac{h_ch_r\sqrt{P_1P_r}\rho_1}{\sqrt{Ph_r^2}} & 1+P_1h_c^2+P_rh_r^2+2h_ch_r\sqrt{P_1P_r}\rho_1
\end{vmatrix}}{1+P_1h_c^2+P_rh_r^2+2h_ch_r\sqrt{P_1P_r}\rho_1} \right) \notag \\ 
= & \frac{n}{2} \log\left(1+\frac{\frac{P_1h_c^2}{Ph_r^2}+\frac{P_1^2h_c^4}{Ph_r^2}+\frac{P_1h_c^2P_rh_r^2}{Ph_r^2}+\frac{2P_1h_c^2h_ch_r\sqrt{P_1P_r}\rho_1}{Ph_r^2}}{1+P_1h_c^2+P_rh_r^2+2h_ch_r\sqrt{P_1P_r}\rho_1}\right. \notag \\ & \left. \qquad \phantom{\log}-\frac{ \frac{P_1^2h_c^4}{Ph_r^2} +  \frac{2P_1h_c^2h_ch_r\sqrt{P_1P_r}\rho_1}{Ph_r^2}  + \frac{h_c^2h_r^2P_1P_r\rho_1^2}{Ph_r^2}  }{1+P_1h_c^2+P_rh_r^2+2h_ch_r\sqrt{P_1P_r}\rho_1}\right)\notag \\
= & \frac{n}{2} \log\left(1+\frac{\frac{P_1h_c^2}{Ph_r^2}+\frac{P_1h_c^2P_rh_r^2}{Ph_r^2}-\frac{h_c^2h_r^2P_1P_r\rho_1^2}{Ph_r^2}  }{1+P_1h_c^2+P_rh_r^2+2h_ch_r\sqrt{P_1P_r}\rho_1}\right) \notag \\
\overset{(d)}{\leq} & \frac{n}{2} \log\left(1+ \underbrace{\frac{1}{Ph_r^2}\frac{P_1h_c^2}{1+(\sqrt{P_rh_r^2}-\sqrt{P_1h_c^2})^2}}_{g_1(P_1,P_r)}+\underbrace{\frac{1}{Ph_r^2}\frac{P_1h_c^2P_rh_r^2(1-\rho_1^2)}{1+P_rh_r^2+P_1h_c^2+2h_ch_r\sqrt{P_1P_r}
\rho_1}}_{g_2(P_1,P_r,\rho_1)}\right),\label{eq:app_Lemma_4}
\end{align}
where $(d)$ follows since $\rho_1\in[-1,1]$. To upper bound the expression in \eqref{eq:app_Lemma_4}, we upper bound the functions $g_1(P_1,P_r)$ and $g_2(P_1,P_r,\rho_1)$ separately, since $\log$ function is an increasing function. First, consider $g_1(P_1,P_r)$. 
By computing sign of the derivative of $g_1(P_1,P_r)$ with respect to $P_1$, we conclude that $g_1(P_1,P_r)$ has a maximum at $P_1=\frac{(1+P_rh_r^2)^2}{P_rh_r^2h_c^2}$.
Therefore, we obtain 
\begin{align}
g_1(P_1,P_r) &\leq g_1\left(\frac{(1+P_rh_r^2)^2}{P_rh_r^2h_c^2},P_r\right) \quad \text{ for } 0 \leq \beta \\
&=\frac{1+P_rh_r^2}{Ph_r^2}\\
&\overset{(e)}{\leq} 2, \label{eq:app_Lemma_4_2}
\end{align}
where step $(e)$ is followed since $P_r\leq P$ and $1\leq Ph_r^2$. Now, consider $g_2(P_1,P_r,\rho_1)$. Supposing that $u^2=\frac{1}{P_1}$ and $v^2=\frac{1}{P_r}$, we obtain
\begin{align}
g_2(P_1,P_r,\rho_1) &= \frac{h_c^2h_r^2(1-\rho_1^2)}{Ph_r^2} \frac{1}{u^2v^2+u^2 h_r^2+v^2h_c^2+2h_c h_r uv\rho_1} \notag \\ 
& \leq  \frac{h_c^2h_r^2(1-\rho_1^2)}{Ph_r^2} \frac{1}{u^2 h_r^2+v^2h_c^2+2h_c h_r uv\rho_1}.\label{eq:app_Lemma_4_3}
\end{align}
Maximizing the expression in \eqref{eq:app_Lemma_4_3} with respect to $u$ and $v$ is equivalent to minimizing its denominator with respect to $u$ and $v$. The denominator of \eqref{eq:app_Lemma_4_3} can be rewritten as 
\begin{align}
u^2 h_r^2+v^2h_c^2+2h_c h_r uv\rho_1 = \begin{bmatrix}
u & v
\end{bmatrix}\underbrace{\begin{bmatrix}
h_r^2 & \rho_1h_ch_r \\ \rho_1h_ch_r & h_c^2
\end{bmatrix}}_{\boldsymbol{A}} \begin{bmatrix}
u\\ v
\end{bmatrix}. \label{eq:app_Lemma_4_4}
\end{align}
Since $\rho_1^2\in[0,1]$, $\boldsymbol{A}$ is a positive semi-definite matrix. Therefore, \eqref{eq:app_Lemma_4_4} is minimized by the lowest value of $u$ and $v$, i.e., $\frac{1}{\sqrt{P}}$. By substituting $u=v=\frac{1}{\sqrt{P}}$ into \eqref{eq:app_Lemma_4_3}, we obtain 
\begin{align}
g_2(P_1,P_r,\rho_1) &\leq \frac{h_c^2(1-\rho_1^2)}{h_r^2+h_c^2+2h_ch_r
\rho_1} \\ 
&\overset{(f)}{\leq} \frac{h_c^2}{(h_c-h_r)^2},\label{eq:app_Lemma_4_5}
\end{align}
where in $(f)$, we used the fact that both $h_c$ and $h_r$ are positive. 
Now, by substituting \eqref{eq:app_Lemma_4_2} and \eqref{eq:app_Lemma_4_5} into \eqref{eq:app_Lemma_4}, the proof of Lemma \ref{Lemma:new_UB4_2_Gauss} in completed.

\section{Proof of the Upper bounds for Gaussian IRC (Lemma \eqref{Lemma:UB_Gauss})}
\label{app:known_UB_Gauss_proof}
\subsection{Proof of \eqref{eq:new_UB1_IRC_Gauss}}
In what follows, we establish the upper bound $d\leq \max\{1,\min\{\beta, \gamma\}\}$ for the case that $\alpha = 1$. To do this, we establish two bounds namely $\max\{1,\beta\}$ and $\max\{1,\gamma\}$. The minimum of these two bounds gives us the bound in \eqref{eq:new_UB1_IRC_Gauss}.

To establish the bound $d\leq \max\{1,\beta\}$, consider the following cut-set bound $R_{\mathcal S\to \mathcal S^c} \leq \max_{P(X_1,X_2,X_r)} I(X_\mathcal{S};Y_{\mathcal{S}^c}|X_{\mathcal{S}^c})$, where $\mathcal{S}=\{$Tx1, Tx2, Relay$\}$ and $\mathcal{S}=\{$Rx1, Rx2$\}$. Hence, we obtain
\begin{align}
R_\Sigma \leq \max_{P(X_1,X_2,X_r)} I(X_1,X_2,X_r;Y_1,Y_2) \label{eq:proof_UB_Gauss_1_1}
\end{align}
This term can be rewritten as
\begin{align}
I(X_1,X_2,X_r;Y_1,Y_2) & = h(Y_1,Y_2) - h(Y_1,Y_2|X_1,X_2,X_r) \\ 
& \overset{(a)}{=} h(Y_1,Y_2) - h(Z_1,Z_2),
\end{align}
where in step $(a)$, we used the fact that giving $X_1$, $X_2$, and $X_r$ all randomness of $Y_1$ and $Y_2$ is caused from the additive noises $Z_1$ and $Z_2$ which are independent from all other random variables. Now, by using the chain rule, we obtain 
\begin{align}
I(X_1,X_2,X_r;Y_1,Y_2) & = h(Y_1) + h(Y_2|Y_1) - h(Z_1)-h(Z_2) \\ 
&\overset{(b)}{=} h(Y_1) + h(Y_2-Y_1|Y_1) - h(Z_1)-h(Z_2) \\
&\overset{(c)}{\leq}h(Y_1) + h(Z_2- Z_1) - h(Z_1)-h(Z_2) \\
&\overset{(d)}{\leq} h(Y_{1G}) + h(Z_2- Z_1) - h(Z_1)-h(Z_2),
\end{align}
where step $(b)$ follows since $h(A-B|B)=h(A|B)$, in step $(c)$, we used the fact that conditioning does not increase the entropy and $Y_2-Y_1=Z_2-Z_1$ since $h_d=h_c$\footnote{Note that the condition $\alpha=1$ corresponds to $h_d=h_c$ since $h_d$ and $h_c$ are both real and positive.}. 
Step $(d)$ follows due to the fact that Gaussian distribution maximizes the differential entropy given the variance \cite{CoverThomas}. The subscript $G$ indicates that the inputs are i.i.d. and Gaussian distributed, i.e., $X_{iG} \sim \mathcal N(0,P_i)$, where $i\in\{1,2,r\}$. Therefore, we obtain
\begin{align}
I(X_1,X_2,X_r;Y_1,Y_2)& \leq  h(h_d X_{1G}+ h_c X_{2G} + h_r X_{rG}+ Z_1)+ h(Z_2-Z_1) - h(Z_{1}) - h(Z_2) \notag \\
& =  \frac{1}{2}\log\left(P_1h_d^2 + P_2 h_c^2 + P_r h_r^2 + 2\rho_1 \sqrt{P_1P_r} h_d h_r + 2\rho_2 \sqrt{P_2P_r} h_c h_r + 1\right)+ \frac{1}{2}\log(2)  \\ 
& \overset{(e)}{\leq}  C\left(2Ph_d^2+ P h_r^2 + 4 P h_d h_r \right) + C(1), \label{eq:proof_UB_Gauss_1_4}
\end{align}
where the correlation coefficient between $X_i$, $X_r$ is $\rho_i\in[-1,1]$ for $i=1,2$, in step $(e)$, we used the fact that $\log(x)$ function is an increasing function with respect to $x$ and $h_d=h_c$.
Due to \eqref{eq:proof_UB_Gauss_1_1}, the sum-rate is upper bounded by the expression in \eqref{eq:proof_UB_Gauss_1_4}.
Now, by dividing the sum-rate by $\frac{1}{2}\log(Ph_d^2)$ and letting $Ph_d^2\to \infty$, we obtain
\begin{align}
d\leq \max\{1,\beta\}. \label{eq:proof_UB_Gauss_1_5}
\end{align}
Now, we need to show that $d\leq \max\{1,\gamma\}$. To establish this upper bound, we use the cut $\mathcal{S}=\{\text{Tx1, Tx2}\}$ and $\mathcal{S}^c=\{\text{Rx1, Rx2, Relay}\}$. Hence, we can write
\begin{align}
R_\Sigma \leq \max_{P(X_1,X_2,X_r)} I(X_1,X_2;Y_1,Y_2,Y_r|X_r). \label{eq:proof_UB_Gauss_1_6}
\end{align}
The mutual information term can be rewritten as follows
\begin{align}
I(X_1,X_2;Y_1,Y_2,Y_r|X_r) &= h(Y_1,Y_2,Y_r|X_r) - h(Y_1,Y_2,Y_r|X_r,X_1,X_2) \\ 
&\overset{}{=} h(Y_1,Y_2,Y_r|X_r) - h(Z_1,Z_2,Z_r) \\ 
&\overset{}{=} h(Y_1|X_r) + h(Y_2|X_r,Y_1) + h(Y_r|X_r,Y_1,Y_2) - h(Z_1) -h(Z_2)- h(Z_r). \label{eq:proof_CUT_2_7}
\end{align}
First, by keeping in mind that $h_d=h_c$, we upper bound $ h(Y_1|X_r)-h(Z_1)$ as follows
\begin{align}
h(Y_1|X_r)-h(Z_1) &\leq h(h_d(X_1+X_2)+Z_1) - h(Z_1) \\ 
& = C(h_d^2(P_1+P_2))\\
& \leq C(2h_d^2P). \label{eq:proof_CUT_2_8}
\end{align}
Then, using the fact that for $\alpha=1$, $Y_1 = h_d (X_1+X_2)+h_rX_r+Z_1$ and $Y_2 = h_d (X_1+X_2)+h_rX_r+Z_2$, we upper bound $h(Y_2|X_r,Y_1) - h(Z_2)$ and obtain
\begin{align}
h(Y_2|X_r,Y_1) -h(Z_2) & \leq h(h_d(X_1+X_2)+Z_2|h_d(X_1+X_2) + Z_1) -h(Z_2) \\
& \overset{(f)}{\leq} h(Z_2-Z_1) -h(Z_2) \\ 
&=C(1).\label{eq:proof_CUT_2_9}
\end{align}
Step $(f)$ follows since $h(A-B|B) = h(A|B)$ and conditioning does not increase the entropy.
Finally, we upper bound $h(Y_r|X_r,Y_1,Y_2)-h(Z_r)$ as follows
\begin{align}
h(Y_r|X_r,Y_1,Y_2)-h(Z_r) & = h(Y_r|X_r,Y_1-h_rX_r,Y_2)-h(Z_r) \\
& \leq h(Y_r|Y_1-h_rX_r)-h(Z_r) \\
& \leq h(h_s(X_{1}+X_{2})+Z_r|h_d(X_{1}+X_{2})+Z_1)-h(Z_r) \\
& \overset{(g)}{\leq} h(h_s(X_{1G}+X_{2G})+Z_r|h_d(X_{1G}+X_{2G})+Z_1)-h(Z_r), \label{eq:proof_CUT_2_10}
\end{align}
where step $(g)$ follows from the fact that Gaussian distribution maximizes the conditional differential entropy for a given covariance matrix \cite[Lemma1]{Thomas}. Now, we define the variable $P_{12} = P_1+P_2$ to obtain 
\begin{align}
h(Y_r|X_r,Y_1,Y_2)-h(Z_r) & \leq \frac{1}{2}\log\left(\frac{\begin{vmatrix}
P_{12}h_s^2 + 1 & h_sh_dP_{12} \\ h_sh_dP_{12}  & P_{12}h_d^2 +1
\end{vmatrix}}{h_d^2P_{12} +1}\right) \\ 
&= C\left(\frac{P_{12}h_s^2}{P_{12}h_d^2+1}\right) \label{eq:proof_CUT_2_10}\\
& \overset{(h)}{\leq} C\left(\frac{2Ph_s^2}{2P h_d^2+1}\right). \label{eq:proof_CUT_2_11}
\end{align}
Step $(h)$ follows since the function in \eqref{eq:proof_CUT_2_10} is increasing in $P_{12}$ and $\max P_{12} =2P$.
%\begin{align}
%h(Y_r|X_r,Y_1,Y_2)-h(Z_r) \leq C\left(\frac{2Ph_s^2}{2P h_d^2+1}\right) \label{eq:proof_CUT_2_11}
%\end{align}

By substituting \eqref{eq:proof_CUT_2_8}, \eqref{eq:proof_CUT_2_9}, and \eqref{eq:proof_CUT_2_11} into \eqref{eq:proof_CUT_2_7}, we obtain
\begin{align}
I(X_1,X_2;Y_1,Y_2,Y_r|X_r) \leq C(2h_d^2P) + C(1) + C\left(\frac{2Ph_s^2}{2P h_d^2+1}\right). \label{eq:proof_CUT_2_12}
\end{align}
Due to \eqref{eq:proof_UB_Gauss_1_6}, the sum-rate is upper bounded by \eqref{eq:proof_CUT_2_12}. Similar to the previous case, we divide the upper bound for the sum-rate by $\frac{1}{2}\log(Ph_d^2)$ and let $Ph_d^2\to \infty$ to obtain
\begin{align}
d\leq 1+(\gamma-1)^+ \\ 
= \max\{1,\gamma\}\label{eq:proof_CUT_2_13}
\end{align}
Now by combining \eqref{eq:proof_UB_Gauss_1_5} and \eqref{eq:proof_CUT_2_13}, we complete the proof of \eqref{eq:new_UB1_IRC_Gauss}.

\subsection{Proof of \eqref{eq:new_UB2_IRC_Gauss}}
To establish an upper bound $d\leq \max\{1,\alpha,\beta\} + \max\{1,\alpha\}$, we use the cut-set bound in \eqref{eq:proof_UB_Gauss_1_1}. Hence, write
\begin{align}
R_\Sigma \leq \max_{P(X_1,X_2,X_r)} I(X_1,X_2,X_r;Y_1,Y_2). 
\end{align}
Next, by using the chain rule, we obtain
\begin{align}
R_\Sigma &\leq \max_{P(X_1,X_2,X_r)} I(X_1,X_2,X_r;Y_1) + I(X_1,X_2,X_r;Y_2|Y_1). \label{eq:proof_CUT_3_1}
\end{align}
First, we upper bound the mutual information $I(X_1,X_2,X_r;Y_1)$ as follows
\begin{align}
I(X_1,X_2,X_r;Y_1) &= h(Y_1) - h(Y_1|X_1,X_2,X_r) \\ 
&= h(h_dX_1+h_cX_2+h_rX_r+Z_1) - h(Z_1) \\
&\leq C(P(h_d^2+h_c^2+h_r^2)) \label{eq:proof_CUT_3_2}
\end{align}
Now, we upper bound the expression $I(X_1,X_2,X_r;Y_2|Y_1)$
\begin{align}
I(X_1,X_2,X_r;Y_2|Y_1) & \leq h(Y_2|Y_1) - h(Y_2|Y_1,X_1,X_2,X_r) \\ 
&= h(Y_2|Y_1) - h(Z_2) \\ 
&= h(Y_2-Y_1|Y_1) - h(Z_2) \\ 
&\leq h((h_d-h_c)X_2 + (h_c-h_d)X_1+Z_2-Z_1) - h(Z_2) \\ 
&\leq C\left(2P(h_d-h_c)^2+1\right) \label{eq:proof_CUT_3_3}
\end{align}
Now, by substituting  \eqref{eq:proof_CUT_3_2} and \eqref{eq:proof_CUT_3_3} into \eqref{eq:proof_CUT_3_1}, we upper bound the sum-rate as follows
\begin{align}
R_\Sigma \leq C(P(h_d^2+h_c^2+h_r^2))  +  C\left(2P(h_d-h_c)^2+1\right)
\end{align}
By dividing the expression by $\frac{1}{2}\log(Ph_d^2)$ and letting $P h_d^2\to \infty$, we obtain 
\begin{align}
d\leq \max\{1,\alpha,\beta\} + \max\{1,\alpha\},
\end{align}
which completes the proof.

\subsection{Proof of \eqref{eq:new_UB3_IRC_Gauss}}
The upper bound in \eqref{eq:new_UB3_IRC_Gauss} is established by using a genie-aided approach. In genie-aided bounds, we provide the side information $s_1$ and $s_2$ to Rx$1$ and Rx$2$, respectively. Next, we can use Fano's inequality, to  upper bound $R_\Sigma$ as follows
\begin{align}
n(R_\Sigma-\epsilon_n) \leq I(W_1;Y_1^n,s_1) + I(W_2;Y_2^n,s_2),
\end{align}
where $\epsilon_n\to 0$ when $n\to \infty$.
In this case, we set $s_1=S^n$ and $s_2 = (S^n,Y_1,W_1)$, where $S^n=h_rX_r^n+Z^n$ and $Z\sim \mathcal{N}(0,1)$ is a Gaussian noise independent of all other random variables and i.i.d. over the time. 
Now, by using the chain rule and the fact that $W_1$ is independent from $W_2$, we obtain
\begin{align}
n(R_\Sigma-\epsilon_n) & \leq I(W_1;S^n) +I(W_1;Y_1^n|S^n) + I(W_2;S^n|W_1) + I(W_2;Y_1^n|S^n,W_1) + I(W_2;Y_2^n|S^n,Y_1^n,W_1) \\
& \leq I(W_1,W_2;S^n) +I(W_1,W_2;Y_1^n|S^n)+ I(W_2;Y_2^n|S^n,Y_1^n,W_1). \label{eq:new_UB1_Gauss_proof_1}
\end{align}
Next, we consider each term in \eqref{eq:new_UB1_Gauss_proof_1} separately. The first term in \eqref{eq:new_UB1_Gauss_proof_1} can be rewritten as follows
\begin{align}
I(W_1,W_2;S^n) &= h(S^n)  - h(S^n|W_1,W_2)  \\
& \overset{(a)}{\leq}  h(S^n)  - h(S^n|W_1,W_2,X_r^n)  \\
& \overset{(b)}{=} h(S^n) - h(Z^n), \label{eq:new_UB1_Gauss_proof_2}
\end{align}
where $(a)$ follows from the fact that conditioning does not increase the entropy. In step $(b)$, we used the fact that knowing $X_r^n$, all randomness of $S^n$ is caused from $Z^n$. 
Now, by using \cite[corollary to Theorem 8.6.2]{CoverThomas} and the fact that $Z^n$ is i.i.d. over the time, we obtain 
\begin{align}
I(W_1,W_2;S^n) & \leq \sum_{t=1}^n h(S[t]) - h(Z[t]).
\label{eq:new_UB1_Gauss_proof_3}
\end{align}
%where in $(c)$ we dropped the conditions since this cannot decrease the entropy and $Z[t]$ is i.i.d. over the time. 
Due to the fact that differential entropy given a variance constraint is maximized by the Gaussian distribution \cite{CoverThomas}, we upper bound \eqref{eq:new_UB1_Gauss_proof_3} as follows
\begin{align}
I(W_1,W_2;S^n) & \leq n (h(h_rX_{rG}+Z)-h(Z)) \\
&= nC(P_rh_r^2) \\ 
&\leq nC(Ph_r^2),\label{eq:new_UB1_Gauss_proof_4}
\end{align}
where $X_{rG}\sim\mathcal{N}(0,P_r)$.
Now, we upper bound the second term in \eqref{eq:new_UB1_Gauss_proof_1} as follows.
\begin{align}
I(W_1,W_2;Y_1^n|S^n) &= h(Y_1^n|S^n) - h(Y_1^n|S^n,W_1,W_2)  \\
&\overset{(c)}{\leq} h(Y_1^n-S^n) - h(Y_1^n-S^n|S^n,W_1,W_2)  \\
&\overset{}{\leq} h(Y_1^n-S^n) - h(Z_1^n-Z^n|S^n,W_1,W_2,Z^n)  \\
&\overset{(d)}{\leq} h(Y_1^n-S^n) - h(Z_1^n). \label{eq:new_UB1_Gauss_proof_5}
\end{align}
Step $(c)$ follows from the fact that $h(A|B) = h(A-B|B)\leq h(A-B)$.   
%In step $(d)$, we dropped the condition on $S^n$ since this cannot decrease the entropy, step $(e)$, we used the fact that conditioning does not increase the entropy. 
In step $(d)$, we dropped all conditions in second term since $Z_1^n$ is independent of all other random variables. 
Similar to above, by using \cite[corollary to Theorem 8.6.2]{CoverThomas}, and the fact that additive noise is i.i.d. over time, and given the variance, Gaussian distribution maximizes the differential entropy, we obtain 
\begin{align}
I(W_1,W_2;Y_1^n|S^n) &\leq n( h(Y_{1G}-S_G) - h(Z_{1}) )  \\
&= n( h(h_dX_{1G} + h_c X_{2G} + Z_{1} - Z) - h(Z_{1}) )  \\
& =n C\left(P_1h_d^2+P_2h_c^2+1\right)  \\
& \leq n C\left(P(h_d^2+h_c^2)+1\right),\label{eq:new_UB1_Gauss_proof_6}
\end{align}
where $X_{iG}\sim\mathcal{N}(0,P_i)$, $i\in\{1,2\}$.
\iffalse
Using chain rule, we rewrite \eqref{eq:new_UB1_Gauss_proof_3} as follows
\begin{align}
I(W_1,W_2;Y_1^n|S^n) &\leq \sum_{t=1}^n h(Y_1[t]-S[t]|Y_1^{t-1}-S^{t-1}) - h(Z_1[t]|Z_1^{t-1})  \\
&\overset{(g)}{\leq} \sum_{t=1}^n h(Y_1[t]-S[t]) - h(Z_1[t])  \\ 
&\overset{(h)}{\leq} n( h(Y_{1G}-S_G) - h(Z_{1}) )  \\
&= n( h(h_dX_{1G} + h_c X_{2G} + Z_{1} - Z) - h(Z_{1}) )  \\
& =n C(P(h_d^2+h_c^2)+1)\label{eq:new_UB1_Gauss_proof_6}
\end{align}
In $(g)$, we dropped the condition in first term since this cannot decrease the entropy. Moreover, we dropped the conditions in second term in $(g)$ since $Z[t]$ is i.i.d. over the time. In step $(h)$, for the first term, we used the fact that Gaussian distribution maximizes the differential entropy given the variance of random variable. Note that the noises are all Gaussian distributed. 
\fi
Finally, we consider the third term in \eqref{eq:new_UB1_Gauss_proof_1}. Similar to the previous case, we upper bound the third term in \eqref{eq:new_UB1_Gauss_proof_1} as follows.
\begin{align}
I(W_2;Y_2^n|S^n,Y_1^n,W_1) =& h(Y_2^n|S^n,Y_1^n,W_1)-h(Y_2^n|S^n,Y_1^n,W_1,W_2)  \\
\overset{}{=}& h(h_dX_2^n+h_rX_r^n+Z_2^n|S^n,h_cX_2^n+h_r X_r^n+Z_1^n,W_1) - h(h_rX_r^n+Z_2^n|S^n,Y_1^n,W_1,W_2)  \\
\overset{}{=} & h(h_dX_2^n+h_rX_r^n+Z_2^n-S^n|S^n,h_cX_2^n+h_r X_r^n+Z_1^n-S^n,W_1)  \notag\\ & - h(h_rX_r^n+Z_2^n-S^n|S^n,Y_1^n,W_1,W_2)  \\
\overset{}{=} & h(h_dX_2^n+Z_2^n-Z^n|S^n,h_cX_2^n+Z_1^n-Z^n,W_1) - h(Z_2^n-Z^n|S^n,Y_1^n,W_1,W_2)  \\
\overset{(e)}{\leq} & h(h_dX_2^n+Z_2^n-Z^n|h_cX_2^n+Z_1^n-Z^n) - h(Z_2^n-Z^n|S^n,Y_1^n,W_1,W_2,Z^n)  \\
\overset{}{=} & h(h_dX_2^n+Z_2^n-Z^n|h_cX_2^n+Z_1^n-Z^n) - h(Z_2^n).  \label{eq:new_UB1_Gauss_proof_7}
\end{align}
%In step $(i)$, we used the fact that knowing $W_1$, all randomness of $Y_j^n$ ($j\in\{1,2\}$) is from $X_r^n$, $X_2^n$, and $Z_j^n$.
%Moreover, knowing $W_1$ and $W_2$, all randomness of $Y^n_j$ ($j\in\{1,2\}$) is from $X_r^n$ and $Z_j^n$. 
In step $(e)$, we dropped some conditions from first term and included an additional condition to the second term since conditioning does not increase the entropy. 
%In $(k)$, we dropped all conditions in second term, since $Z_2^n$ is independent from all other random variables.
%Now, we define $\tilde{Y}_{2}[t] = h_dX_{2}[t]+Z_2[t]-Z[t]$ and $\tilde{Y}_{1}[t] = h_cX_{1}[t]+Z_1[t]-Z[t]$ to obtain
%\begin{align}
%I(W_2;Y_2^n|S^n,Y_1^n,W_1) = & \sum_{t=1}^n h(\tilde{Y}_2[t]|\tilde{Y}_1[t],\tilde{Y}_2^{t-1},\tilde{Y}_1^{t-1}) - h(Z_2[t]|Z_2^{t-1})\notag \\ 
%\overset{(l)}{\leq} & \sum_{t=1}^n h(\tilde{Y}_2[t]|\tilde{Y}_1[t]) - h(Z_2[t]), \label{eq:new_UB1_Gauss_proof_8}
%\end{align}
%where in step $(l)$, we dropped the conditions in first term since this cannot decrease the entropy. Moreover, since $Z_2[t]$ is i.i.d. over the time, the conditions in second term of step $(l)$ can be dropped.
%Due to the fact that Gaussian distribution maximizes the differential entropy given a covariance matrix \cite{Thomas}, we can write
Now, we define $\tilde{Y}_{2G}[t] = h_dX_{2G}[t]+Z_2[t]-Z[t]$ and ${\tilde{Y}_{1G}[t] = h_cX_{2G}[t]+Z_1[t]-Z[t]}$, where ${X_{2G}\sim \mathcal{N}(0,P_2)}$. By using Lemma 1 in \cite{AnnapureddyVeeravalli} and the fact that $Z_2$ i.i.d. over the time, we obtain
\begin{align}
I(W_2;Y_2^n|S^n,Y_1^n,W_1) \leq & n(h(\tilde{Y}_{2G}|\tilde{Y}_{1G}) - h(Z_{2}))  \\
=&\frac{n}{2}\log\left(\frac{\begin{vmatrix}
P_2h_d^2+2 & P_2h_dh_c+1 \\ P_2h_dh_c+1 & P_2h_c^2+2
\end{vmatrix}}{2+P_2h_c^2}\right)  \\
\overset{}{\leq} & \frac{n}{2}\log\left(\frac{(P_2h_d^2+2)(P_2h_c^2+2)-(1+P_2h_ch_d)^2}{2+P_2h_c^2}\right)  \\
\overset{}{=} & nC\left(1+\frac{2P_2h_d(h_d-h_c)-1}{2+P_2h_c^2}\right) \label{eq:new_UB1_Gauss_proof_9}\\
\overset{(f)}{\leq} & nC\left(1+\frac{2P h_d^2+2P\max\{ h_c^2, h_d^2\}}{2+P h_c^2}\right).  \label{eq:new_UB1_Gauss_proof_10}
\end{align}
%where we used $\tilde{Y}_{jG}$, $j\in\{1,2\}$ to represent the random variable $\tilde{Y}_{j}$ when its input $X_2$ is Gaussian distributed, i.e., ${X_2\sim \mathcal{N}(0,P_2)}$.
%In step $(m)$, we used the fact that $\log(x)$ is an increasing function in $x$ and 
Step $(f)$ follows since the function in \eqref{eq:new_UB1_Gauss_proof_9} is increasing in $P_2$ and $P_2h_dh_c \leq P_2 \max\{ h_c^2, h_d^2\} $.
Substituting \eqref{eq:new_UB1_Gauss_proof_4}, \eqref{eq:new_UB1_Gauss_proof_6}, and \eqref{eq:new_UB1_Gauss_proof_10} into \eqref{eq:new_UB1_Gauss_proof_1}, and dividing the whole expression by $n$ and letting $n\to\infty$, we obtain
\begin{align}
R_\Sigma \leq C(Ph_r^2) + C(P(h_d^2+h_c^2)+1) +C\left(1+\frac{2P h_d^2+2P\max\{ h_c^2, h_d^2\}}{2+P h_c^2}\right)  .
\end{align}
Similar to above, by dividing the expression by $\frac{1}{2}\log(Ph_d^2)$ and letting $Ph_d^2\to \infty$, we obtain the following upper bound for the GDoF
\begin{align}
d &\leq \beta + \max\{1,\alpha\}  + \max\{1,\alpha\}-\alpha %\\ 
%& = \beta + \max\{1,\alpha\}  + (1-\alpha)^+,
\end{align}
which completes the proof.
\subsection{Proof of \eqref{eq:new_UB4_IRC_Gauss}}
\label{app:new_UB2_Gauss_proof}
We use the genie-aided method to establish this upper bound. In this case, we set $s_1=Y_r^n$ and $s_2=(Y_r^n,W_1)$. Now, by using Fano's inequality, we upper bound the sum-rate as follows
\begin{align}
n(R_\Sigma-\epsilon_n) \leq & I(W_1;Y_1^ n,Y_r^n) + I(W_2;Y_2^n,Y_r^n,W_1).  
\end{align}
Then, by using the chain rule and the fact that the messages are independent from each other, we obtain
\begin{align}
n(R_\Sigma-\epsilon_n) \leq  & I(W_1;Y_r^n)+ I(W_1;Y_1^n|Y_r^n) +  I(W_2;Y_r^n|W_1) + I(W_2;Y_2^n|Y_r^n,W_1) \\
 = & I(W_1,W_2;Y_r^n) + I(W_1;Y_1^n|Y_r^n)+I(W_2;Y_2^n|Y_r^n,W_1). \label{eq:new_UB2_Gauss_proof_1}  
\end{align}
Now, we consider each term in \eqref{eq:new_UB2_Gauss_proof_1} separately. First, we write the first term as follows
\begin{align}
I(W_1,W_2;Y_r^n) &= h(Y_r^n)-h(Y_r^n|W_1,W_2) \\
&\overset{(a)}{=}h(Y_r^n)-H(Z_r^n),
\end{align}
where $(a)$ follows since knowing $W_1$, $W_2$, all randomness of $Y_r^n$ is from $Z_r^n$. Moreover, we used the fact that $Z_r^n$ is independent from all other variables.
Now, by using \cite[corollary to Theorem 8.6.2]{CoverThomas} and the fact that $Z_r^n$ is i.i.d. over the time, we obtain 
\begin{align}
I(W_1,W_2;Y_r^n) & \leq \sum_{t=1}^n h(Y_r[t])-h(Z_r[t]).
\end{align}
Using the fact that the differential entropy is maximized by the Gaussian distribution given the variance \cite{CoverThomas}, we upper bound $I(W_1,W_2;Y_r^n)$ as follows
\begin{align}
I(W_1,W_2;Y_r^n) &\leq n[ h(h_s(X_{1G}+X_{2G})+Z_{r})-h(Z_{r})]  \\ 
&=n C\left((P_1+P_2)h_s^2\right)  \\
&\leq n C\left(2Ph_s^2\right), \label{eq:new_UB2_Gauss_proof_2} 
\end{align}
where $X_{iG}\sim \mathcal{N}(0,P_i)$ for $i\in\{1,2\}$.
Next, we consider the second term in \eqref{eq:new_UB2_Gauss_proof_1}. 
\begin{align}
I(W_1;Y_1^n|Y_r^n) \overset{(b)}{=} & I(W_1;Y_1^n|Y_r^n,X_r^n)  \\ 
= & h(Y_1^n|Y_r^n,X_r^n) - h(Y_1^n|Y_r^n,X_r^n,W_1)  \\ 
\overset{(c)}{=} & h(h_dX_1^n+h_cX_2^n+Z_1^n|Y_r^n,X_r^n) - h(h_cX_2^n+Z_1^n|Y_r^n,X_r^n,W_1)  \\
\overset{(d)}{\leq} & h(h_dX_1^n+h_cX_2^n+Z_1^n) - h(h_cX_2^n+Z_1^n|Y_r^n,X_r^n,W_1,X_2^n)  \\ 
\overset{(e)}{=} & h(h_dX_1^n+h_cX_2^n+Z_1^n) - h(Z_1^n).  
\end{align}
Step $(b)$ follows since encoding  at the relay is known, hence knowing $Y_r^n$, the signal $X_r^n$ can be reconstructed,  step $(c)$ follows since knowing $X_r^n$ all randomness of $Y_1^n$ is from $X_1^n$, $X_2^n$, and $Z_1^n$ and from $W_1$, $X_1^n$ can be reconstructed. In step $(d)$, we used the fact that conditioning does not increase the entropy. In step $(e)$, we dropped the conditions since $Z_1^n$ is independent from $Y_r^n,X_r^n,W_1$, and $X_2^n$. 
%Now, similar to previous case, we use chain rule, and the facts that conditioning does not increase the entropy, $Z_1$ is i.i.d. over time, and given the variance, Gaussian distribution maximizes entropy to upper bound 
Similar to above, by using \cite[corollary to Theorem 8.6.2]{CoverThomas} and the facts that $Z_1^n$ is i.i.d. over the time, and Gaussian distribution maximizes the differential entropy for a give variance,  we obtain 
%\begin{align}
%I(W_1;Y_1^n|Y_r^n) \leq 
%%& \sum_{t=1}^n h(h_dX_1[t]+h_cX_2[t]+Z_1[t]|h_dX_1^{t-1}+h_cX_2^{t-1}+Z_1^{t-1}) - h(Z_1[t]|Z_1^{t-1}) \notag \\
%%\overset{(f)}{\leq} 
% & \sum_{t=1}^n h(h_dX_1[t]+h_cX_2[t]+Z_1[t]) - h(Z_1[t]).
%\end{align}
%In $(f)$, we dropped the conditions similar to step $(b)$. 
%Similar to previous case, since Gaussian distribution maximizes the differential entropy given the variance, we obtain 
\begin{align}
I(W_1;Y_1^n|Y_r^n) \leq & n[ h(h_dX_{1G}+h_cX_{2G}+Z_1) - h(Z_1)]  \\
= & nC(P_1h_d^2+P_2h_c^2)\\
\leq & nC(P(h_d^2+h_c^2)) \label{eq:new_UB2_Gauss_proof_3} 
\end{align}
Finally, we bound the last term in \eqref{eq:new_UB2_Gauss_proof_1} as follows
\begin{align}
I(W_2;Y_2^n|Y_r^n,W_1) \overset{(f)}{=} & I(W_2;Y_2^n|Y_r^n,W_1,X_r^n)  \\
=&  h(Y_2^n|Y_r^n,W_1,X_r^n) -h (Y_2^n|Y_r^n,W_1,X_r^n,W_2)  \\
\overset{(g)}{=} & h(h_dX_2^n+Z_2^n|h_sX_2^n+Z_r^n,W_1,X_r^n) -h (Z_2^n|Y_r^n,W_1,X_r^n,W_2)  \\
\overset{(h)}{\leq} & h(h_dX_2^n+Z_2^n|h_sX_2^n+Z_r^n) -h (Z_2^n). 
\end{align}
Step $(f)$ follows since encoding  at the relay is known hence $X_r^n$ can be reconstructed from $Y_r^n$. 
In $(g)$, we used the fact that knowing $W_1$ and $W_2$, the randomness of $X_1^n$ and $X_2^n$ can be removed. 
Step $(h)$ follows since conditioning does not increase the entropy and $Z_2^n$ is independent of all other random variables. 
By using Lemma 1 in \cite{AnnapureddyVeeravalli} and the fact that $Z_2$ i.i.d. over the time, we obtain
%
%By using the chain rule and then dropping the conditions similar to steps $(b)$ and $(f)$, we upper bound $I(W_2;Y_2^n|Y_r^n,W_1)$ as follows
%\begin{align}
%I(W_2;Y_2^n|Y_r^n,W_1) \leq \sum_{t=1}^n h(h_dX_2[t]+Z_2[t]|h_sX_2[t]+Z_r[t]) -h (Z_2[t]).
%\end{align}
%Now, by using the fact that Gaussian distribution maximizes the conditional differential entropy given the covariance matrix \cite{Thomas}, we obtain
\begin{align}
I(W_2;Y_2^n|Y_r^n,W_1) \leq & n [h(h_dX_{2G}+Z_{2}|h_sX_{2G}+Z_{r}) -h (Z_{2})]  \\
= & \frac{n}{2}\log\left(\frac{\begin{vmatrix}
P_2h_d^2+1 & P_2h_sh_d \\ P_2h_sh_d & P_2h_s^2+1
\end{vmatrix}}{P_2h_s^2+1}\right)  \\
= & n C\left(\frac{P_2h_d^2}{1+P_2h_s^2}\right) \label{eq:new_UB2_Gauss_proof_4}  \\
\overset{(i)}{\leq} & n C\left(\frac{Ph_d^2}{1+Ph_s^2}\right). \label{eq:new_UB2_Gauss_proof_5} 
\end{align}
where $X_{2G}\sim \mathcal{N}(0,P_2)$. Step $(i)$ follows since the function in \eqref{eq:new_UB2_Gauss_proof_4} is increasing in $P_2$. 

Now, by substituting \eqref{eq:new_UB2_Gauss_proof_2}, \eqref{eq:new_UB2_Gauss_proof_3}, and \eqref{eq:new_UB2_Gauss_proof_5} in \eqref{eq:new_UB2_Gauss_proof_1}, we obtain 
\begin{align}
n(R_\Sigma-\epsilon_n) \leq n \left[ C\left(2Ph_s^2\right) + C(P(h_d^2+h_c^2)) + C\left(\frac{Ph_d^2}{1+Ph_s^2}\right) \right].
\end{align}
Now, by dividing the whole expression by $n$ and letting $n\to\infty$, we obtain
\begin{align}
R_\Sigma \leq  C\left(2Ph_s^2\right) + C(P(h_d^2+h_c^2)) + C\left(\frac{Ph_d^2}{1+Ph_s^2}\right).
\end{align}
In order to get an upper bound for the GDoF, we divide the sum-rate by $\frac{1}{2}\log(Ph_d^2)$ then we let $Ph_d^2\to\infty$. Hence, we have
\begin{align}
d & \leq  \gamma + \max\{1,\alpha\} +(1-\gamma)^+\\
&= \max\{1,\alpha\} + \max\{1,\gamma\}
\end{align}
This completes the proof of Lemma \eqref{eq:new_UB4_IRC_Gauss}.

\subsection{Proof of \eqref{eq:new_UB5_IRC_Gauss}}
To establish this upper bound, we use the upper bound given in \cite[Theorem 4]{ChaabanSezgin_IT_IRC}. This theorem bounds the capacity of the Gaussian IRC as follows
\begin{align}
R_\Sigma \leq 2C\left( (|h_c|+|h_r|)^2 P+4\max\left\lbrace\frac{h_d^2P}{1+Ph_c^2},h_r^2P \right\rbrace \right) + 2C\left( \frac{h_s^2}{h_c^2} \right). \label{eq:new_UB3_Gauss_proof_1}
\end{align}
Now, suppose that the noise variance at the Rx's is reduced to $c^2 = \min \left\lbrace 1,\frac{h_c^2}{h_s^2} \right \rbrace$. This enhances the channel. Therefore, an upper bound for the capacity of the new channel is an upper bound for the original channel. Reducing the noise variance to $c^2$ is equivalent to increasing the channel strength $h_d$, $h_c$, and $h_r$ while the noise variance is 1. Therefore, the upper bound in \eqref{eq:new_UB3_Gauss_proof_1} is upper bounded by 
\begin{align}
R_\Sigma \leq 2C\left( (|\bar h_c|+|\bar h_r|)^2 P+4\max\left\lbrace\frac{\bar h_d^2P}{1+P\bar h_c^2},\bar h_r^2P \right\rbrace \right) + 2C\left( \frac{\bar h_s^2}{\bar h_c^2} \right). \label{eq:new_UB3_Gauss_proof_2}
\end{align}
where $\bar h_d$, $\bar h_c$, $\bar h_r$, and $\bar h_s$ represent the channel values for the enhanced IRC. They are defined as follows
\begin{align}
\bar{h_d} = h_d \max \left\lbrace 1,\frac{h_s}{h_c}\right\rbrace, \quad \bar{h_c} = h_c \max \left\lbrace 1,\frac{h_s}{h_c}\right\rbrace,
\quad \bar{h_r} = h_r \max \left\lbrace 1,\frac{h_s}{h_c}\right\rbrace, \quad \bar{h_s} = h_s. \label{eq:new_UB3_Gauss_proof_3}
\end{align}
%Notice that since we enhanced the channel by reducing the noise variance only at Rx's, in equivalent setup $h_s$ is not changed. 
Now, we divide the expression in \eqref{eq:new_UB3_Gauss_proof_2} by $\frac{1}{2}\log_2(Ph_d^2)$ and then we let $Ph_d^2\to \infty$. Then, we obtain the following upper bound for the GDoF
\begin{align}
d \leq \lim_{m_d\to\infty} \frac{2\max\{\bar m_c, \bar m_r,(\bar m_d-\bar m_c)^+ \} + 2(\bar m_s-\bar m_c)^+}{m_d}, \label{eq:new_UB3_Gauss_proof_4}
\end{align}
where $\bar m_d$, $\bar m_c$, $\bar m_r$, and $\bar m_s$ are defined as follows
\begin{align}
\bar m_d = m_d+(m_s-m_c)^+, \quad \bar m_c = m_c+(m_s-m_c)^+, \quad\bar m_r = m_r+(m_s-m_c)^+, \quad \bar m_s = m_s. \label{eq:new_UB3_Gauss_proof_5}
\end{align}
By substituting \eqref{eq:new_UB3_Gauss_proof_5} into \eqref{eq:new_UB3_Gauss_proof_4}, we obtain 
\begin{align}
d \leq \lim_{m_d\to\infty} \frac{2\max\{m_c+(m_s-m_c)^+, m_r+(m_s-m_c)^+,( m_d- m_c)^+ \}}{m_d}. \label{eq:new_UB3_Gauss_proof_6}
\end{align}
Note that $(\bar m_s - \bar m_c)^+ = 0$. Now, we rewrite \eqref{eq:new_UB3_Gauss_proof_6} as follows 
\begin{align}
d \leq \lim_{m_d\to\infty} \frac{2\max\{m_c, m_r,m_d-\max\{m_c,m_s\} \}+2(m_s-m_c)^+}{m_d}. \label{eq:new_UB3_Gauss_proof_7}
\end{align}
Now, by using the definition of $m_d$, $m_c$, $m_r$, and $m_s$, we obtain the following upper bound for the GDoF of the IRC
\begin{align}
d \leq  2\max\{\alpha, \beta,1-\max\{\alpha,\gamma\} \} + 2(\gamma-\alpha)^+, %\label{eq:new_UB3_Gauss_proof_8}
\end{align}
which completes the proof of \eqref{eq:new_UB5_IRC_Gauss}.

\subsection{Proof of \eqref{eq:new_UB6_IRC_Gauss}}
\label{app:new_UB4_Gauss_proof}
To establish this upper bound, we use the genie-aided method with $s_1^n = S_1^n$ and $s_2^n=S_2^n$, where $S_1^n=\frac{ h_c}{\sqrt{Ph_r^2}}X_1^n+U_1^n$ and $S_2^n=\frac{ h_c}{\sqrt{Ph_r^2}}X_2^n+U_2^n$. Here $U_1$ and $U_2$ are both $\mathcal{N}(0,1)$ distributed. Moreover, they are independent of all other random variables and i.i.d. over time. Now, we use Fano's inequality to upper bound the sum-rate as follows
\begin{align}
n(R_\Sigma-\epsilon_n) \leq & I(W_1;Y_1^n,S_1^n) + I(W_2;Y_2^n,S_2^n) \notag \\
\overset{(a)}{=}& I(W_1;S_1^n) + I(W_1;Y_1^n|S_1^n) + I(W_2;S_2^n) + I(W_2;Y_2^n|S_2^n), \label{eq:new_UB4_Gauss_proof_1}
\end{align}
where step $(a)$ follows from the chain rule. Now, we proceed as follows
\begin{align}
n(R_\Sigma-\epsilon_n)  
%= & h(S_1^n) - h(S_1^n|W_1) + h(Y_1^n|S_1^n) - h(Y_1^n|S_1^n,W_1) + h(S_2^n) - h(S_2^n|W_2) + h(Y_2^n|S_2^n)- h(Y_2^n|S_2^n,W_2) \notag \\
\overset{(b)}{\leq} & h(S_1^n) - h(U_1^n) + h(Y_1^n|S_1^n) - h(h_cX_2^n + h_rX_r^n+Z_1|S_1^n,W_1) \notag \\ 
&+ h(S_2^n) - h(U_2^n) + h(Y_2^n|S_2^n)- h(h_cX_1^n + h_rX_r^n+Z_2|S_2^n,W_2) \notag \\ 
\overset{(c)}{=} & h(S_1^n) - h(U_1^n) + h(Y_1^n|S_1^n) - h(h_cX_2^n + h_rX_r^n+Z_1|W_1) \notag \\ 
&+ h(S_2^n) - h(U_2^n) + h(Y_2^n|S_2^n)- h(h_cX_1^n + h_rX_r^n+Z_2|W_2). \label{eq:new_UB4_Gauss_proof_2}
\end{align}
In step $(b)$ we used the fact knowing $W_i$, where $i\in\{1,2\}$, all randomness of $S_i^n$ is caused from $U_i^n$. Step $(c)$ follows since $U_i^n$ is independent of all other random variables.
Now, due to Lemma \ref{Lemma:new_UB4_2_Gauss}, the sum-rate is upper bounded as follows
\begin{align}
n(R_\Sigma-\epsilon_n)  
\leq &  2nC\left(2+ \frac{h_c^2}{(h_c-h_r)^2}\right)- h(U_1^n) + h(Y_1^n|S_1^n) - h(U_2^n) + h(Y_2^n|S_2^n).\label{eq:new_UB4_Gauss_proof_3}
\end{align}
Next, we use \cite[Lemma 1]{AnnapureddyVeeravalli} and the fact that $U_i^n$ is i.i.d. over the time, to write
\begin{align}
n(R_\Sigma-\epsilon_n)  
\leq &  n\left[2C\left(2+ \frac{h_c^2}{(h_c-h_r)^2}\right)- h(U_1) + h(Y_{1G}|S_{1G}) - h(U_{2}) + h(Y_{2G}|S_{2G})\right],
\end{align}
where the subscript $G$ indicates that the inputs are i.i.d. and Gaussian distributed, i.e., $X_{i,G}\sim\mathcal{N}(0,P_i)$, where $i\in\{1,2,r\}$ and $S_{1,G}$, $S_{2G}$, $Y_{1,G}$, and $Y_{2G}$ are corresponding signals. In what follows, we upper bound the expression $h(Y_{1G}|S_{1G})-h(U_1)$. Similarly, we can bound $h(Y_{2G}|S_{2G})-h(U_2)$. To this end, we write 
\begin{align}
h(Y_{1G}|S_{1G})-h(U_1) &= h(h_dX_{1G}+h_c X_{2G}+h_r X_{rG}+Z_1|\frac{h_c}{\sqrt{Ph_r^2}}X_{1G}+U_1) - h(U_1) \\ 
&= h(h_dX_{1G}+h_c X_{2G}+h_r X_{rG}+Z_1|h_dX_{1G}+ \frac{h_d \sqrt{Ph_r^2}}{h_c}U_1) - h(U_1)\\
&\overset{(d)}{\leq} h(h_c X_{2G}+h_r X_{rG}+Z_1- \frac{h_d \sqrt{Ph_r^2}}{h_c}U_1) - h(U_1)\\
&=C\left( P_2h_c^2 + P_r h_r^2  + \frac{h_d^2Ph_r^2}{h_c^2} + 2\rho_2 h_ch_r\sqrt{P_2P_r} \right),\label{eq:new_UB4_Gauss_proof_5}
\end{align}
where the parameter $\rho_2\in[-1, 1]$ is the correlation coefficient between $X_2$ and $X_r$. 
Step $(d)$ follows since $h(A-B|B) = h(A|B)$ and conditioning does not increase the entropy.
Since $\rho_2\in[-1 ,1]$ and $\log(x)$ is an increasing function in $x$, the expression in \eqref{eq:new_UB4_Gauss_proof_5} is upper bounded by
\begin{align}
h(Y_{1G}|S_{1G})-h(U_{1}) &\leq C\left( Ph_c^2 + P h_r^2  + \frac{h_d^2h_r^2}{h_c^2}P +  2h_ch_rP \right).\label{eq:new_UB4_Gauss_proof_6}
\end{align}
Similarly, we upper bound $h(Y_{2G}|S_{2G})-h(U_{2})$. Doing this, the sum rate is upper bounded by
\begin{align}
n(R_\Sigma-\epsilon_n) \leq & 2n\left[C\left(2+ \frac{h_c^2}{(h_c-h_r)^2}\right) + C\left( Ph_c^2 + P h_r^2  + \frac{h_d^2h_r^2}{h_c^2}P +  h_ch_rP \right)\right].\label{eq:new_UB4_Gauss_proof_7}
\end{align}
Now, by dividing the expression by $n$ and letting $n\to\infty$, we obtain 
\begin{align}
R_\Sigma \leq & 2\left[C\left(2+ \frac{h_c^2}{(h_c-h_r)^2}\right) + C\left( Ph_c^2 + P h_r^2  + \frac{h_d^2h_r^2}{h_c^2}P +  h_ch_rP \right)\right].\label{eq:new_UB4_Gauss_proof_8}
\end{align}
To obtain an upper bound for the GDoF, we divide the sum-rate by $\frac{1}{2}\log(Ph_d^2)$ and then we let $Ph_d^2 \to \infty$. Hence, we get 
\begin{align}
d \leq 2\max\{\alpha,\beta,(1+\beta-\alpha)^+\}.
\end{align}
Since in \eqref{eq:new_UB6_IRC_Gauss}, we have the condition $\beta \leq \alpha<1$, the upper bound is rewritten as 
\begin{align}
d \leq 2\max\{\alpha,\beta+1-\alpha\},
\end{align}
which completes the proof of \eqref{eq:new_UB6_IRC_Gauss}.

\section{Extension of the achievable sum-rate from LD-IRC to the achievable GDoF}
\label{app:Example_LD-Gaussian}

Suppose that by using a scheme in an LD-IRC, we achieve a following linear combination of $n_d$, $n_c$, $n_r$, and $n_s$
\begin{align}
R_\Sigma = k_d n_d + k_c n_c + k_r n_r + k_s n_s,
\end{align}
where $k_i \in \mathbb{Z}$ for $i\in\{d,c,r,s\}$. By using the sub-channel allocation in the Gaussian IRC as the rate allocation for the LD-IRC, and keeping in mind that $\frac{1}{2}\log\left(\frac{\delta}{4}\right)$ is achieved by using each sub-channel, we achieve
\begin{align}
R_\Sigma &= \frac{1}{2} \log\left(\frac{\delta}{4}\right)(k_d N_d + k_c N_c + k_r N_r + k_s N_s) \\ 
&=\frac{1}{2} \log(\delta)(k_d N_d + k_c N_c + k_r N_r + k_s N_s) -(k_d N_d + k_c N_c + k_r N_r + k_s N_s). \label{eq:exam1_1}
\end{align}
Using the definition of $N_d$, $N_c$, $N_r$, and $N_s$, we conclude that the following sum-rate is achievable as long as \eqref{eq:exam1_1} is achievable
\begin{align}
R_\Sigma =& \frac{1}{2} \log(\delta)  \left(k_d \frac{\log(Ph_d^2)}{\log(\delta)} + k_c \frac{\log(Ph_c^2)}{\log(\delta)} + k_r \frac{\log(Ph_r^2)}{\log(\delta)} + k_s \frac{\log(Ph_s^2)}{\log(\delta)} - (|k_d|+|k_c|+|k_r|+|k_s|) \right) \notag \\  &-\left(k_d \frac{\log(Ph_d^2)}{\log(\delta)} + k_c \frac{\log(Ph_c^2)}{\log(\delta)} + k_r \frac{\log(Ph_r^2)}{\log(\delta)} + k_s \frac{\log(Ph_s^2)}{\log(\delta)}\right) . \label{eq:exam1_2}
\end{align}
Now, by dividing the sum-rate by $\frac{1}{2}\log(Ph_d^2)$ and using the definition in \eqref{eq:param_def}, we can write
\begin{align}
\frac{R_\Sigma}{\frac{1}{2}\log(Ph_d^2)} =& \ \left(k_d  + k_c \alpha + k_r \beta + k_s \gamma\right) \left(1-\frac{1}{\log(\delta)}\right) -  (|k_d|+|k_c|+|k_r|+|k_s|)  \frac{\log(\delta)}{\log(Ph_d^2)}. \label{eq:exam1_3}
\end{align}
{To obtain the achievable GDoF, we need to let $Ph_d^2 \to \infty $ in \eqref{eq:exam1_3}. For a fixed $h_d^2$, this is equivalent to $P\to \infty$. Therefore, the term $\frac{1}{\log(\delta)} = \frac{N}{\log(P)}\to 0 $ for a fixed $N$. Therefore, we obtain 
\begin{align}
\lim_{Ph_d^2\to \infty}  \frac{R_\Sigma}{\frac{1}{2}\log(Ph_d^2)} &= \left(k_d  + k_c \alpha + k_r \beta + k_s \gamma\right) - (k_d+k_c+k_r+k_s)  \frac{\log(\delta)}{\log(Ph_d^2)} \\ 
&= \left(k_d  + k_c \alpha + k_r \beta + k_s \gamma\right) - (k_d+k_c+k_r+k_s)  \frac{\frac{1}{N}}{1+ \frac{\log(h_d^2)}{\log(P)}}.
 \label{eq:exam1_4}
\end{align}
For a fixed $h_d^2$, the term $\frac{\log(h_d^2)}{\log(P)} \to 0$. Notice that $N$ is a constant which can be chosen arbitrarily. Therefore, by choosing $N$ sufficiently large, the second term in  \eqref{eq:exam1_4} is negligible and we obtain the following achievable GDoF 
\begin{align}
d &= \left(k_d  + k_c \alpha + k_r \beta + k_s \gamma\right). \label{eq:exam1_5}
\end{align}}

\bibliography{myBib}		% use data in file "biblio.bib"
\end{document}